\documentclass[prodmode,acmtocl]{acmsmall}
\newcommand{\qedhere}{\hfill\qed}

\usepackage{amsmath}
\usepackage{array}
\usepackage{cite}

\def\proofmove{\proof}\def\endproofmove{\endproof}\def\fixstatement#1{}

\usepackage{tikz}
\usetikzlibrary{shapes,fit}
\usetikzlibrary{positioning,chains,matrix,arrows,decorations.pathmorphing,calc}
\usetikzlibrary{decorations.markings}
\usepackage{wrapfig}

\usepackage{xcolor}
\definecolor{vgreen}{rgb}{.1,.5,0}
\definecolor{vred}{rgb}{.7,0,0}
\definecolor{vblue}{rgb}{.1,.15,.62}
\pgfdeclarelayer{background}
\pgfsetlayers{background,main}

\usepackage{hyperref}

\usepackage[bbsets,Dfprime]{math}
\usepackage[modernsign,substopindex,shortmquant,mquantifiertype,mconnectiveformal,bracketinterpret,modifopindex,seqarrow,seqoptional,sidenotecalculus,abbrseqcontext,shortterms,nosigmaterms,novarterms]{logic}
\usepackage[pretest,nocommandblocks]{progreg}
\usepackage[bracketinterpret]{dL}

\usepackage{prettyref}
\newcommand{\rref}[2][]{\prettyref{#2}}
\newrefformat{sec}{Section\,\ref{#1}}
\newrefformat{app}{Appendix\,\ref{#1}}
\newrefformat{def}{Def.\,\ref{#1}}
\newrefformat{thm}{Theorem\,\ref{#1}}
\newrefformat{prop}{Proposition\,\ref{#1}}
\newrefformat{lem}{Lemma\,\ref{#1}}
\newrefformat{cor}{Corollary\,\ref{#1}}
\newrefformat{rem}{Remark\,\ref{#1}}
\newrefformat{ex}{Example\,\ref{#1}}
\newrefformat{tab}{Table\,\ref{#1}}
\newrefformat{fig}{Fig.\,\ref{#1}}
\newrefformat{case}{case\,\ref{#1}}
\newrefformat{foot}{Footnote\,\ref{#1}}

\renewcommand{\usubst}[3][]{\subst[#1]{#2(\cdot)}{#3(\cdot)}}%
\renewcommand{\preusubst}[2][]{#1}%

\DeclareMathOperator{\leaf}{leaf}

\newcommand{\inflop}[2][]{\tau^{#1}(#2)}
\newcommand{\inflopstrat}[1][]{\def\inflopstratargI{{#1}}\inflopstratRelay}
\newcommand{\inflopstratRelay}[2][]{\varsigma_{#1}^{\inflopstratargI}(#2)}

\renewcommand{\ltrue}{\top}
\renewcommand{\lfalse}{\bot}

\newcommand{\dGLordinal}{\omega_1^{\text{CK}}}%

    \tikzstyle{gamelabel}=[vgreen,above,label distance=0mm]
    \tikzstyle{and}=[draw,dashed,shape=diamond,inner sep=1pt]
    \tikzstyle{or}=[draw,solid,shape=rectangle,inner sep=2pt]
    \tikzstyle{straight action}=[fill=none,above,draw=none,shape=rectangle,set style={{edge from parent}=[]},set style={{every node}=[]}]
    \tikzstyle{action}=[sloped,straight action]
    \tikzstyle{subgame}=[left=-4pt,draw=none,shape=rectangle]
    \tikzstyle{subgamea}=[solid,black,decorate,decoration={saw,amplitude=2pt}]
    \tikzstyle{psubgame}=[set style={{edge from parent}+=[subgamea]}]
    \tikzstyle{someone}=[draw,solid,shape=circle,black]
    \tikzstyle{diamond}=[and,vred]
    \tikzstyle{box}=[or,vblue,minimum width=0.4cm,minimum height=0.4cm]
    \tikzstyle{diamonda}=[decorate=false,dashed,vred]
    \tikzstyle{boxa}=[decorate=false,solid,vblue]
    \tikzstyle{diamondastrategy}=[diamonda,thick]
    \tikzstyle{boxastrategy}=[boxa,thick]
    \tikzstyle{pdia}=[set style={{edge from parent}+=[diamonda]}]
    \tikzstyle{pbox}=[set style={{edge from parent}+=[boxa]}]
    \tikzstyle{pdiastrategy}=[set style={{edge from parent}+=[diamondastrategy]}]
    \tikzstyle{pboxstrategy}=[set style={{edge from parent}+=[boxastrategy]}]
    \tikzstyle{diawon}=[label={[vred]below:$\boldsymbol{\diamond}$}]
    \tikzstyle{boxwon}=[label={[vblue]below:\rotatebox{45}{$\boldsymbol{\diamond}$}}]%
    \tikzstyle{diawonstrategy}=[label={[vred,draw,solid,inner sep=0pt,circle]below:$\boldsymbol{\diamond}$}]
    \tikzstyle{boxwonstrategy}=[label={[vblue,draw,solid,inner sep=-1pt,circle]below:\rotatebox{45}{$\boldsymbol{\diamond}$}}]
    \tikzstyle{etc}=[label={[black]below:$\vdots$}]
    \tikzstyle{backedge}=[double,double distance=2pt]
    \tikzstyle{endgame}=[draw=none,fill=gray!10,rounded corners]

    \tikzstyle{box}+=[fill=vblue!10]
    \tikzstyle{diamond}+=[fill=vred!10]

\newsavebox{\tmpdiawon}
\sbox{\tmpdiawon}{\textcolor{vred}{$\boldsymbol{\diamond}$}}
\newsavebox{\tmpdiawonstrategy}
\sbox{\tmpdiawonstrategy}{\tikz{\node[vred,draw,solid,inner sep=0pt,circle]{$\boldsymbol{\diamond}$};}}
\newsavebox{\tmpboxwon}
\sbox{\tmpboxwon}{\textcolor{vblue}{\rotatebox{45}{$\boldsymbol{\diamond}$}}}
\newsavebox{\tmpboxwonstrategy}
\sbox{\tmpboxwonstrategy}{\tikz{\node[vblue,draw,solid,inner sep=-1pt,circle]{\rotatebox{45}{$\boldsymbol{\diamond}$}};}}
\newsavebox{\closurallI}
\sbox{\closurallI}{$\forall$I}

\newcommand{\stdI}{\dLint[const=I,state=s]}
\newcommand{\I}{\dLint[const=I,state=s]}
\newcommand{\It}{\dLint[const=I,state=t]}
\newcommand{\Iu}{\dLint[const=I,state=u]}
\newcommand{\J}{\dLint[const=J,state=s]}

\renewcommand{\mapply}[3][]{#2(#3)}

\newcommand*{\strategyfor}[2][]{\varsigma_{#1}(#2)}
\newcommand*{\dstrategyfor}[2][]{\delta_{#1}(#2)}

\newcommand{\sol}{x}%
\newcommand{\solf}{y}%

\newcommand{\stime}{\sol_0}%
\newcommand{\ignore}[1]{}
\newcommand{\bebecomes}{\mathrel{::=}}
\newcommand{\alternative}{~|~}
\renewcommand{\with}{:}

\newcommand*{\genDE}[1]{\theta}%
\newcommand{\ivr}{\psi}
\newcommand{\inv}{\varphi}
\newcommand{\var}{\varphi}

\newcommand{\FOD}{\text{FOD}\xspace}%
\newcommand{\muD}{\ensuremath{L_{\mu\text{D}}}\xspace}%
\newcommand{\DL}{\text{DL}\xspace}%

\newcommand{\reduct}[1]{#1^\flat}%

\newcommand*{\pair}[2]{\lbanana#1,#2\rbanana}%
\newcommand*{\tupleempty}{\lbanana\rbanana}%
\newcommand{\W}{WALL$\boldsymbol{\cdot}$E\xspace}%
\newcommand{\E}{EVE\xspace}%

\usepackage{ifpdf}
\ifpdf
\pdfinfo{
   /Author (Andre Platzer)
   /Title (Differential Game Logic)
   /Keywords (game logic, hybrid games, axiomatization, expressiveness)
}
\fi

\setcopyright{rightsretained}

\doi{0000001.0000001}

\issn{1234-56789}

\begin{document}

\markboth{A. Platzer}{Differential Game Logic}

\title{Differential Game Logic}
\author{ANDR\'E PLATZER
\affil{Carnegie Mellon University}}

\begin{abstract}
\emph{Differential game logic} (\dGL) is a logic for specifying and verifying properties of \emph{hybrid games}, i.e.\ games that combine discrete, continuous, and adversarial dynamics.
Unlike hybrid systems, hybrid games allow choices in the system dynamics to be resolved adversarially by different players with different objectives.
The logic \dGL can be used to study the existence of winning strategies for such hybrid games, i.e.\ ways of resolving the player's choices in \emph{some} way so that he wins by achieving his objective for \emph{all} choices of the opponent.
Hybrid games are determined, i.e.\ from each state, one player has a winning strategy, yet computing their winning regions may take transfinitely many steps.
The logic \dGL, nevertheless, has a sound and complete axiomatization relative to any expressive logic.
Separating axioms are identified that distinguish hybrid games from hybrid systems.
Finally, \dGL is proved to be strictly more expressive than the corresponding logic of hybrid systems by characterizing the expressiveness of both.
\end{abstract}

\begin{CCSXML}
<ccs2012>
<concept>
<concept_id>10003752.10003753.10003765</concept_id>
<concept_desc>Theory of computation~Timed and hybrid models</concept_desc>
<concept_significance>500</concept_significance>
</concept>
<concept>
<concept_id>10003752.10003790.10003792</concept_id>
<concept_desc>Theory of computation~Proof theory</concept_desc>
<concept_significance>500</concept_significance>
</concept>
<concept>
<concept_id>10003752.10003790.10003793</concept_id>
<concept_desc>Theory of computation~Modal and temporal logics</concept_desc>
<concept_significance>500</concept_significance>
</concept>
<concept>
<concept_id>10003752.10003790.10003806</concept_id>
<concept_desc>Theory of computation~Programming logic</concept_desc>
<concept_significance>500</concept_significance>
</concept>
<concept>
<concept_id>10003752.10010070.10010099.10010102</concept_id>
<concept_desc>Theory of computation~Solution concepts in game theory</concept_desc>
<concept_significance>500</concept_significance>
</concept>
</ccs2012>
\end{CCSXML}

\ccsdesc[500]{Theory of computation~Timed and hybrid models}
\ccsdesc[500]{Theory of computation~Proof theory}
\ccsdesc[500]{Theory of computation~Modal and temporal logics}
\ccsdesc[500]{Theory of computation~Programming logic}
\ccsdesc[500]{Theory of computation~Solution concepts in game theory}

\keywords{Game logic, hybrid games, axiomatization, expressiveness}

\acmformat{Platzer, A.  YYYY. Differential game logic.}

\begin{bottomstuff}
This material is based upon work supported by the National Science Foundation under
NSF CAREER Award CNS-1054246.
A preliminary version has appeared as a report \cite{Platzer12:dGL,Platzer13:dGL}.

Author's address: A. Platzer, Computer Science Department,
Carnegie Mellon University.
\end{bottomstuff}

\maketitle

\section{Introduction}
\irlabel{MP|MP}%
\newsavebox{\Rval}%
\sbox{\Rval}{$\scriptstyle\mathbb{R}$}%
\irlabel{qear|\usebox{\Rval}}%
\irlabel{RCFp|\usebox{\Rval}}%
\irlabel{gena|$\forall$}%

\emph{Hybrid systems} \cite{DBLP:conf/hybrid/NerodeK92a,DBLP:journals/tcs/AlurCHHHNOSY95,DBLP:journals/tac/BranickyBM98,DavorenNerode_2000} are dynamical systems combining discrete dynamics and continuous dynamics. They are widely important, e.g., for modeling how computers control physical systems such as cars \cite{DBLP:conf/hybrid/DeshpandeGV96}, aircraft \cite{DBLP:conf/hybrid/UmenoL07} and other cyber-physical systems.
Hybrid systems combine difference equations (or discrete assignments) and differential equations with conditional switching, nondeterministic choices, and repetition \cite{Platzer10}.
Hybrid systems are not semidecidable \cite{DBLP:conf/stoc/HenzingerKPV95}, %
but nevertheless studied by many successful verification approaches \cite{DBLP:conf/lics/Platzer12a,DoyenFPP16}. They have a complete axiomatization relative to differential equations in \emph{differential dynamic logic} (\dL) \cite{DBLP:journals/jar/Platzer08,DBLP:conf/lics/Platzer12b}, which extends Pratt's dynamic logic of conventional discrete programs \cite{DBLP:conf/focs/Pratt76} to hybrid systems by adding differential equations and a reachability relation semantics on the real Euclidean space.
\emph{Hybrid games} \cite{DBLP:conf/hybrid/NerodeRY96,DBLP:journals/tac/TomlinPS:98,DBLP:conf/concur/HenzingerHM99,DBLP:journals/IEEE/TomlinLS00,DBLP:journals/jopttapp/DharmattiDR06,DBLP:journals/corr/abs-0911-4833,DBLP:journals/tcs/VladimerouPVD11} are games of two players on a hybrid system.
Hybrid games add an adversarial dynamics to hybrid systems, i.e.\ an adversarial way of resolving the choices in the system dynamics.
Both players can make their respective choices arbitrarily. They are not assumed to cooperate towards a common goal but may compete.
The prototypical example of a hybrid game is RoboCup, where two (teams of) robots move continuously on a soccer field subject to the discrete decisions of their respective control programs, and they resolve their choices adversarially in active competition for scoring goals.
Worst-case verification of many other situations leads to hybrid games. Two robots may already end up in a hybrid game if they simply do not know each other's objectives, because worst-case analysis assumes they might interfere, which makes them compete accidentally rather than on purpose.
The former situation is \emph{true competition}, the latter \emph{analytic competition}, because possible competition was assumed for the sake of a worst-case analysis.
Aircraft separation provides further natural scenarios for both true \cite{Isaacs:DiffGames} and analytic \cite{DBLP:journals/tac/TomlinPS:98} competition.
Hybrid games are also fundamental for security questions about hybrid systems, which intrinsically involve adversarial situations with more than one player.
Many different variations of hybrid games are interesting for applications \cite{DBLP:journals/tac/TomlinPS:98,DBLP:journals/IEEE/TomlinLS00,DBLP:conf/concur/HenzingerHM99,DBLP:conf/ecc/PrandiniHP01,DBLP:journals/corr/abs-0911-4833,DBLP:journals/tcs/VladimerouPVD11,DBLP:conf/cade/QueselP12}, including games between controller and plant for control synthesis, hybrid pursuit-evader games, or hybrid games for verification of robot controllers against an uncertain environment or an external disturbance.

This article does not focus on one such fixed pattern of game interaction in hybrid systems, but considers a more general framework for hybrid game interactions of two players.
While the results of this article show promise in practice, the focus of this article is on fundamental logical considerations for hybrid games.
It develops a compositional programming language for hybrid games and a logic for hybrid games along with its fundamental compositional proof principles.
The article analyzes hybrid games and contrasts them with hybrid systems in terms of their analytic complexity, axiomatizations, and expressiveness.

\paragraph{Approach}

This article studies a compositional model of hybrid games obtained as a programming language from a compositional model of hybrid systems \cite{DBLP:journals/jar/Platzer08} by simply adding the duality operator $\pdual{}$ for passing control between the players.
The dual game $\pdual{\alpha}$ is the same as the hybrid game $\alpha$ with the roles of the players swapped, much like what happens when turning a chessboard around by $180^\circ$ so that players black and white swap sides.
Hybrid games without $\pdual{}$ are single-player, like hybrid systems are, because $\pdual{}$ is the only operator where control passes to the other player.
Hybrid games with $\pdual{}$ give both players control over their respective choices (indicated by $\pdual{}$). They can play in reaction to the outcome that the previous choices by the players have had on the state of the system.
The fact that $\pdual{}$ is an operator on hybrid games makes them fully symmetric. That is, they allow arbitrary combinations of all operators at arbitrary nesting depths to define the game, not just a single fixed pattern like, e.g., the separation into a single loop of a continuous plant player and a discrete controller player that has been predominant in other approaches.

Hybrid games are game-theoretically reasonably tame sequential, non-cooperative, zero-sum, two-player games of perfect information with payoffs $\pm1$.\footnote{Draws, coalitions, rewards, and payoffs different from $\pm1$ etc.\ are expressible in the logic developed in this article, which gains simplicity and elegance by focusing on the most fundamental case of hybrid games.}
What makes them challenging is that they are played on hybrid systems, which causes reachability computations and the canonical game solution technique of backwards induction for winning regions to take infinitely many iterations (${\geq}\omega_1^{\text{CK}}$) to terminate.

One of the most fundamental questions about a hybrid game is whether the player of interest has a \emph{winning strategy},
i.e.\ a way of resolving his choices that will lead to a state in which that player wins, no matter how the opponent player resolves his respective choices.\footnote{\label{foot:exhibit-winning-strategy}%
A closely related question is about ways to exhibit that winning strategy, for which existence is a prerequisite and a constructive proof is a representation of that winning strategy.
A proof is a certificate witnessing the existence of a winning strategy.
As soon as one knows from which states a winning strategy exists, local search in the action space would be enough. But search may still be challenging in dense action spaces.}
If the player has such a winning strategy, he can achieve his objectives no matter what the opponent does, otherwise he needs his opponent to cooperate.
This article introduces a logic and proof calculus for hybrid games and thereby decouples the questions of truth (existence of winning strategies) and proof (winning strategy certificates) and proof search (automatic construction of winning strategies).
It studies provability (existence of proofs) and the proof theory of hybrid games and identifies what the right proof rules for hybrid games are (soundness \& completeness).

This article presents \emph{differential game logic} (\dGL) and its axiomatization for studying the existence of winning strategies for hybrid games.
It generalizes hybrid systems to hybrid games by adding the duality operator $\pdual{}$ and a winning strategy semantics on the real Euclidean space.
Hybrid games simultaneously generalize hybrid systems \cite{DBLP:conf/hybrid/NerodeK92a,DBLP:journals/tcs/AlurCHHHNOSY95} and discrete games \cite{DBLP:conf/focs/Parikh83,DBLP:journals/anndiscrmath/Parikh85,DBLP:journals/sLogica/PaulyP03a}. %
Similarly, \dGL simultaneously generalizes logics of hybrid systems and logics of discrete games.
The logic \dGL generalizes differential dynamic logic (\dL) \cite{DBLP:journals/jar/Platzer08,DBLP:conf/lics/Platzer12b} from hybrid systems to hybrid games with their adversarial dynamics and, simultaneously, generalizes Parikh's propositional game logic \cite{DBLP:conf/focs/Parikh83,DBLP:journals/anndiscrmath/Parikh85,DBLP:journals/sLogica/PaulyP03a} from games on finite-state discrete systems to games on hybrid systems with their differential equations, uncountable state spaces, uncountably many possible moves, and interacting discrete and continuous dynamics.

\paragraph{Contributions}
Every particular play of a hybrid game has exactly one winner (\rref{sec:dGL}).
From each state exactly one player has a winning strategy no matter how the opponent reacts (determinacy, \rref{sec:MetaProperties}).
The \dGL proof calculus can be used to find out which of the two players it is that has a winning strategy from which state (\rref{sec:dGL-axiomatization}).
The logic \dGL for hybrid games is proved to be fundamentally more expressive than the logic \dL for hybrid systems by characterizing the expressiveness of both (\rref{sec:dGL-dL-expressiveness}).

The primary contributions of this article are as follows.
The logic \dGL identifies the logical essence of hybrid games and their game combinators.\footnote{Hybrid games only lead to a minor syntactic change compared to hybrid systems (the addition of $\pdual{}$), yet one that entails pervasive semantical reconsiderations, because the semantic basis for assigning meaning to operators changes in the presence of adversarial interactions. This change leads to more expressiveness. It is a sign of logical robustness that this results in a surprisingly small change in the axiomatization.
Overall, the changes induced by dualities are in some ways radical, yet, in other ways surprisingly smooth.
}
It identifies a simple, algebraic, compositional model of hybrid games as a programming language for hybrid games by adding the control switching operator $\pdual{}$ to a programming language for hybrid systems \cite{DBLP:conf/lics/Platzer12b} and reinterpreting its compositional operators as operators on hybrid games.
This article introduces differential game logic for hybrid games with a simple modal semantics and a simple compositional proof calculus, which is proved to be a sound and complete axiomatization relative to any expressive logic.
Completeness for game logics is a subtle problem. Completeness of propositional discrete game logic has been an open problem for 30 years \cite{DBLP:conf/focs/Parikh83}. This article focuses on more general hybrid games and proves a generalization of Parikh's calculus to be relatively complete for hybrid games.
The completeness proof is constructive and identifies a fixpoint-style proof technique, which can be considered a modal analogue of characterizations in the Calculus of Constructions \cite{DBLP:journals/iandc/CoquandH88}.
This technique is practical for hybrid games, and also easier for hybrid systems than previous proof techniques.
These results suggest hybrid game versions of influential views of understanding program invariants as fixpoints \cite{DBLP:conf/popl/CousotC77,Clarke79}.
Harel's convergence rule \cite{DBLP:conf/stoc/HarelMP77}, which poses practical challenges for hybrid systems verification, now turns out to be unnecessary for hybrid games, hybrid systems, and programs.
All separating axioms are identified that capture the logical difference of hybrid systems versus hybrid games.
Hybrid games are proved to be determined, i.e.\ in every state, exactly one player has a winning strategy, which is the basis for assigning classical truth to logical formulas that refer to winning strategies of hybrid games.
Winning regions of hybrid games are characterized by fixpoints of a monotone operator, which can be obtained by iteration but that iteration may only stop after ${\geq}\omega_1^{\text{CK}}$ many steps.
Hybrid games are shown to be a fundamental extension of hybrid systems by proving that the logic \dGL for hybrid games is fundamentally more expressive than the corresponding logic \dL for hybrid systems, which is related to long-standing unsolved questions in the propositional case \cite{DBLP:journals/anndiscrmath/Parikh85,DBLP:journals/mst/BerwangerGL07}.
This separation also characterizes the expressiveness of \dL and of \dGL.

\paragraph{Structure of this Article}

The syntax and denotational semantics of \dGL are introduced in \rref{sec:dGL}.
\rref{sec:MetaProperties} establishes meta-properties, including determinacy (\rref{sec:Determinacy}), 
equivalences of hybrid games and reductions eliminating evolution domains (\rref{sec:HG-equivalences}),
and an analysis proving that winning regions of hybrid games can be characterized by iteration of a monotone image computation operator until a fixpoint, which may only stop after ${\geq}\omega_1^{\text{CK}}$ many steps (\rref{sec:ClosureOrdinals}) unlike hybrid systems whose corresponding closure ordinal is $\omega$.
\rref{sec:dGL-axiomatization} presents an axiomatization of \dGL as a Hilbert-type proof calculus (\rref{sec:dGL-calculus}) that is proved sound (\rref{sec:dGL-sound}) and complete (\rref{sec:dGL-complete}) relative to any differentially expressive logic, which are exemplified subsequently (\rref{sec:differentially-expressive}).
The axiomatic separation between hybrid systems and hybrid games is identified in \rref{sec:separating-axioms} and proved in \rref{app:separating-axioms}.
Hybrid games are proved more expressive than hybrid systems in \rref{sec:dGL-dL-expressiveness}, where the expressiveness of the hybrid games logic \dGL and the hybrid systems logic \dL are characterized.
Related work is discussed in \rref{sec:RelatedWork}, concluding in \rref{sec:Conclusion}.

Example proofs in the \dGL calculus are shown in \rref{app:ExampleProofs}.
\rref{app:separating-axioms} proves the axiomatic separation from \rref{sec:separating-axioms}.
For reference and to support the interactive intuition of game play, an operational semantics of hybrid games is shown in \rref{app:operational-HG-semantics}.
Alternative semantics for repetitions of hybrid games are contrasted in \rref{app:alternative-semantics}.
Finally, \rref{app:dGL-closure-lower} provides concrete hybrid games to support the computational intuition for the higher closure ordinals proved generically
 in \rref{sec:ClosureOrdinals}.

\section{Differential Game Logic} \label{sec:dGL}

A robot is a canonical example of a hybrid system.
Suppose a robot, \W, is running around on a planet collecting trash.
His dynamics is that of a hybrid system, because his continuous dynamics comes from the differential equations describing his continuous physical motion in space, while his discrete dynamics comes from his computer-based control decisions about when to move in which direction and when to stop moving in order to gather trash.
As soon as \W meets another robot, \E, however, her presence changes everything for him.
If \W neither knows how \E is programmed nor exactly what her goal is, then the only safe thing he can assume about her is that she might do anything within her physical capabilities.
It takes the study of a hybrid game to find out whether or not \W can use his choices in some way to reach his goal, say, collecting trash and avoiding collisions with \E, regardless of how \E chooses her actions.

The hybrid games considered here have no draws.\footnote{
For applications with \emph{draws}, it is easy to follow Zermelo \citeyear{Zermelo13} and compare two games, one for each player, that attribute draws pessimistically as losses.
Draws in the original game result from those states from which both players would lose their games when considering draws pessimistically as their respective losses.
Draws come from states where no player can make sure to win-and-not-draw but only to win-or-draw.
Hybrid games with draws are represented using two modalities of the logic \dGL developed in this article.
}
For any particular play of the \W and \E game, for example, either \W achieves his objective or he does not. There is no in between.
When a hybrid game expects a player to move, but the rules of the game do not permit any of his moves from the current state, then that player loses right away (he \emph{deadlocks}).
If the game completes without deadlock, the player who reaches one of his winning states wins.
Thus, exactly one player wins each (completed) game play for complementary winning states, because the games are uncountably-infinite state but of arbitrary unbounded but finite duration.
So reaching a winning state in the limit after infinitely many steps is not enough to win the game, but the players have to win in any arbitrary unbounded finite amount of time and after an arbitrary unbounded finite number of steps.
The games are \emph{zero-sum} games, i.e.\ if one player wins, the other one loses, with player payoffs $\pm1$.
Losses or victories of different payoff are not considered explicitly, because they are representable with extra variables that track payoffs.
The two players are classically called \emph{Angel}\ignore{(or $\ldiamond{}$-player)} and \emph{Demon}\ignore{(or $\lbox{}$-player)}.
By considering aggregate players, these results generalize in the usual way to the case where Angel and Demon represent coalitions of agents that work together to achieve a common goal such as the aggregate Angel player for one team and the aggregate Demon player for the other team in the case of RoboCup.

Hybrid games are non-cooperative and sequential games.
In non-cooperative games, the players can choose to act arbitrarily according to the rules represented in the game.\footnote{
Applications with \emph{cooperative games}, where players form coalitions or negotiate binding contracts, are representable in the rules of the game to track coalitions and limit player's choices according to the contracts.
}
Sequential (or dynamic) games are games that proceed in a series of steps, where, at each step, exactly one of the players can choose an action based on the outcome of the game so far. %
Concurrent games, where both players choose actions simultaneously, as well as their equivalent\footnote{\emph{Concurrent games} in which the players choose actions simultaneously can be converted into sequential games of imperfect information in which the players choose sequentially yet without information about the opponent's choice.} games of imperfect information, are interesting but not considered explicitly here \cite{DBLP:journals/jacm/AlurHK02,DBLP:conf/icla/BerwangerP09}.
Imperfect information games lead to Henkin quantifiers, not first-order quantifiers.

\subsection{Syntax} \label{sec:dGL-syntax}
Differential game logic (\dGL) is a logic for studying properties of hybrid games.
The idea is to describe the game form, i.e.\ rules, dynamics, and choices of the particular hybrid game of interest, using a program notation and to then study its properties by proving the validity of logical formulas that refer to the existence of winning strategies for objectives of those hybrid games.
Even though hybrid game forms only describe the game \emph{form} with its dynamics and rules and choices, not the objective, they are still called hybrid games.
Hybrid game forms represent the rules of the game, so during the game play, players can never win but only lose (prematurely) by violating these rules.
The actual objective for a hybrid game is defined in the modal logical formula that refers to that hybrid game form and is evaluated at the end of the game play.
Hybrid games and differential game logic formulas are defined by simultaneous induction focusing on polynomial terms for simplicity.

\begin{definition}[Hybrid games] \label{def:dGL-HG}
The \emph{hybrid games of differential game logic {\dGL}} are defined by the following grammar ($\alpha,\beta$ are hybrid games, $x$ a variable, $\theta$ a (polynomial) term, $\ivr$ is a \dGL formula):
\[
  \alpha,\beta ~\bebecomes~
  \pupdate{\pumod{x}{\theta}}
  \alternative
  \pevolvein{\D{x}=\genDE{x}}{\ivr}
  \alternative
  \ptest{\ivr}
  \alternative
  \alpha\cup\beta
  \alternative
  \alpha;\beta
  \alternative
  \prepeat{\alpha}
  \alternative
  \pdual{\alpha}
\]
\end{definition}
\begin{definition}[\dGL formulas] \label{def:dGL-formula}
The \emph{formulas of differential game logic {\dGL}} are defined by the following grammar ($\phi,\psi$ are \dGL formulas, $p$ is a predicate symbol of arity $k$, $\theta_i$ are (polynomial) terms, $x$ a variable, and $\alpha$ is a hybrid game):
  \[
  \phi,\psi ~\bebecomes~
  p(\theta_1,\dots,\theta_k) \alternative
  \theta_1\geq\theta_2 \alternative
  \lnot \phi \alternative
  \phi \land \psi \alternative
  \lexists{x}{\phi} \alternative 
  \ddiamond{\alpha}{\phi}
  \alternative \dbox{\alpha}{\phi}
  \]
\end{definition}
Other operators $>,=,\leq,<,\lor,\limply,\lbisubjunct,\forall{x}{}$ or $\ltrue$ and $\lfalse$ for true and false can be defined as usual, e.g., \m{\lforall{x}{\phi} \mequiv \lnot\lexists{x}{\lnot\phi}}, including all program operators \cite{Platzer10}.
The modal formula \m{\ddiamond{\alpha}{\phi}} expresses that Angel has a winning strategy\footnote{A strategy for a player can be thought of as a function that selects one option whenever that player has a choice during the game play.
A winning strategy for a player is a way of resolving choices that will lead to a state in which that player wins, no matter how the opponent player resolves his respective choices.
The semantics of \dGL is a denotational semantics based on winning regions.
A formal definition for strategies and winning strategies is, thus, unnecessary, yet shown in \rref{app:operational-HG-semantics} for reference.}
to achieve $\phi$ in hybrid game $\alpha$, i.e.\ Angel has a strategy to reach a state satisfying \dGL formula $\phi$ when playing hybrid game $\alpha$, no matter what strategy Demon chooses.
That is, \m{\ddiamond{\alpha}{\phi}} expresses that Angel can guarantee to reach into the set of states satisfying $\phi$ but she cannot usually predict which of the states satisfying $\phi$ she will reach, because that depends on Demon's choices.
The modal formula \m{\dbox{\alpha}{\phi}} expresses that Demon has a winning strategy to achieve $\phi$ in hybrid game $\alpha$, i.e.\ a strategy to reach a state satisfying $\phi$, no matter what strategy Angel chooses.
Note that the same game is played in \m{\dbox{\alpha}{\phi}} as in \m{\ddiamond{\alpha}{\phi}} with the same choices resolved by the same players. The difference between both \dGL formulas is the player whose winning strategy they refer to.
Both use the set of states where \dGL formula $\phi$ is true as the winning states for that player.
The winning condition is defined by the modal formula, $\alpha$ only defines the hybrid game form, not when the game is won, which is what $\phi$ does.
Hybrid game $\alpha$ defines the rules of the game, including conditions on state variables that, if violated, cause the present player to lose for violation of the rules of the game.
The \dGL formulas \m{\ddiamond{\alpha}{\phi}} and \m{\dbox{\alpha}{\lnot\phi}} consider complementary winning conditions for Angel and Demon.

The \emph{atomic games} of \dGL are assignments, continuous evolutions, and tests.
The \emph{discrete assignment game} \m{\pupdate{\pumod{x}{\theta}}} instantly
 changes the value of variable $x$ to that of $\theta$ by a discrete jump without any choices to resolve.
In the \emph{continuous evolution game} \m{\pevolvein{\D{x}=\genDE{x}}{\ivr}}, the system follows the differential equation \m{\pevolve{\D{x}=\genDE{x}}} where the duration is Angel's choice, but Angel is not allowed to choose a duration that would, at any time, take the state outside the region where formula $\ivr$ holds.
In particular, Angel is deadlocked and loses immediately if $\ivr$ does not hold in the current state, because she cannot even evolve for duration 0 then without going outside $\ivr$.\footnote{\label{footnote:pevolvein}
The most common case for $\ivr$ is a formula of first-order real arithmetic. In \rref{sec:HG-equivalences}, evolution domain constraints $\ivr$ turn out to be unnecessary, because they can be defined using hybrid games.
In the ordinary differential equation \m{\pevolve{\D{x}=\genDE{x}}}, the term $\D{x}$ denotes the time-derivative of $x$ and $\genDE{x}$ is a polynomial term that is allowed to mention $x$ and other variables.
Systems of differential equations are considered vectorially.
More general forms of differential equations are possible \cite{DBLP:journals/logcom/Platzer10}, but will not be considered explicitly.
}
The \emph{test game} or \emph{challenge} \m{\ptest{\ivr}} has no effect on the state, except that Angel loses the game immediately if \dGL formula $\ivr$ does not hold in the current state.
If Angel passes the challenge \m{\ptest{\ivr}}, the game continues from the same state, otherwise she loses immediately.

The \emph{compound games} of \dGL are sequential, choice, repetition, and duals.
The \emph{sequential game} \m{\alpha;\beta} is the hybrid game that first plays hybrid game $\alpha$ and, when hybrid game $\alpha$ terminates without a player having lost already (so no challenge in $\alpha$ failed), continues by playing game $\beta$.
When playing the \emph{choice game} \m{\pchoice{\alpha}{\beta}}, Angel chooses whether to play hybrid game $\alpha$ or play hybrid game $\beta$.
Like all the other choices, this choice is dynamic, i.e.\ every time \m{\pchoice{\alpha}{\beta}} is played, Angel gets to choose again whether she wants to play $\alpha$ or $\beta$ this time.
The \emph{repeated game} \m{\prepeat{\alpha}} plays hybrid game $\alpha$ repeatedly and Angel chooses, after each play of $\alpha$ that terminates without a player having lost already, whether to play the game again or not, albeit she cannot choose to play indefinitely but has to stop repeating ultimately.
Angel is also allowed to stop \m{\prepeat{\alpha}} right away after zero iterations of $\alpha$.

Most importantly, the \emph{dual game} \m{\pdual{\alpha}} is the same as playing the hybrid game $\alpha$ with the roles of the players swapped during game play. That is Demon decides all choices in $\pdual{\alpha}$ that Angel has in $\alpha$, and Angel decides all choices in $\pdual{\alpha}$ that Demon has in $\alpha$.
Players who are supposed to move but deadlock lose, hence Demon loses in the dual game \m{\pdual{\alpha}} when he deadlocks because those correspond situations where Angel is supposed to move but deadlocks in $\alpha$ and vice versa.
Thus, while the test game \m{\ptest{\ivr}} causes Angel to lose if formula $\ivr$ does not hold, the \emph{dual test game} (or \emph{dual challenge}) \m{\pdual{(\ptest{\ivr})}} causes Demon to lose if $\ivr$ does not hold.
It is exactly the same formula $\ivr$ whose truth-value decides about the fate of the game in both cases since $\pdual{}$ does not affect the meaning of formulas within tests, but the player who loses changes. In \m{\ptest{\ivr}}, it is Angel who loses when $\ivr$ does not hold because she was supposed to move. In \m{\pdual{(\ptest{\ivr})}}, it is Demon who loses when $\ivr$ does not hold as he was supposed to move.
For example, if $\alpha$ describes the game of chess, then \m{\pdual{\alpha}} is chess where the players switch to control the other side.
Recall that hybrid games are game forms so $\pdual{}$ only affects the actions (and premature losses if no action is possible), not the objective of the game, which is what the modal formulas define.
So, Angel has the same goal $\phi$ in \(\ddiamond{\pdual{\alpha}}{\phi}\) and \(\ddiamond{\alpha}{\phi}\) but her actions in $\pdual{\alpha}$ switched to what used to be Demon's actions in $\alpha$ and vice versa, which is why it will turn out that \(\ddiamond{\pdual{\alpha}}{\phi}\) and \(\dbox{\alpha}{\phi}\) are equivalent (\rref{sec:dGL-axiomatization}).
The dual operator $\pdual{}$ is the only syntactic difference of \dGL for hybrid games compared to \dL for hybrid systems \cite{DBLP:journals/jar/Platzer08,DBLP:conf/lics/Platzer12b}, but a fundamental one, because it is the only operator where control passes from Angel to Demon or back.
Without $\pdual{}$ all choices are resolved uniformly by Angel without interaction.
The presence of $\pdual{}$ requires a thorough semantic generalization throughout the logic, though.

The logic \dGL only provides logically essential operators.
Many other game interactions for games can be defined from the elementary operators that \dGL provides.
\emph{Demonic choice} between hybrid game $\alpha$ and $\beta$ is \m{\dchoice{\alpha}{\beta}}, defined by \m{\pdual{(\pchoice{\pdual{\alpha}}{\pdual{\beta}})}}, in which either the hybrid game $\alpha$ or the hybrid game $\beta$ is played, by Demon's choice.
\emph{Demonic repetition} of hybrid game $\alpha$ is \m{\drepeat{\alpha}}, defined by \m{\pdual{(\prepeat{(\pdual{\alpha})})}}, in which $\alpha$ is repeated as often as Demon chooses to.
In \m{\drepeat{\alpha}}, Demon chooses after each play of $\alpha$ whether to repeat the game, but cannot play indefinitely so he has to stop repeating ultimately.
The \emph{dual differential equation} \m{\pdual{(\pevolvein{\D{x}=\theta}{\ivr})}} follows the same dynamics as \m{\pevolvein{\D{x}=\theta}{\ivr}} except that Demon chooses the duration, so he cannot choose a duration during which $\ivr$ stops to hold at any time. Hence he loses when $\ivr$ does not hold in the current state.
Dual assignment \m{\pdual{(\pupdate{\pumod{x}{\theta}})}} is equivalent to \m{\pupdate{\pumod{x}{\theta}}}, because it is deterministic so involves no choices.
Other program operators are also definable \cite{Platzer10}, e.g., nondeterministic assignment \(\prandom{x}\) defined by \(\pevolve{\D{x}=1};\pevolve{\D{x}=-1}\).
Unary operators (including $\prepeat{}, \pdual{}, \forall{x}{}, \dbox{\alpha}{},\ddiamond{\alpha}{}$) bind stronger than binary operators and $;$ binds stronger than $\pchoice{}{}$ and $\dchoice{}{}$, so \m{\pchoice{\alpha;\beta}{\gamma} \mequiv \pchoice{(\alpha;\beta)}{\gamma}}.

Note that, quite unlike in the case of \m{\prepeat{\alpha}} and unlike in differential games \cite{DBLP:journals/corr/Platzer15:dGI}, it is irrelevant whether Angel decides the duration for \m{\pevolvein{\D{x}=\genDE{x}}{\ivr}} before or after that continuous evolution, because initial-value problems for \m{\D{x}=\genDE{x}} have unique solutions by Picard-Lindel\"off as term $\genDE{x}$ is smooth.

Observe that every (completed) play of a game is won or lost by exactly one player. 
Even a play of repeated game \m{\prepeat{\alpha}} has only one winner, because the game stops as soon as one player has won, e.g., because his opponent failed a test.
This is different than the repetition of whole game plays (including winning/losing), where the purpose is for the players to repeat the same game over and over again to completion, win and lose multiple times, and study who wins how often in the long run with mixed strategies.
A hybrid game is played once (even if some part of it constitutes in repeating action choices) and it stops as soon as either Angel or Demon have won.
In applications, the system is already in trouble even if it loses the game only once, because that may entail that a safety-critical property has already been violated.

\def\wallac{u}%
\def\eveac{g}%
\begin{example}[\textup{\sf WALL$\boldsymbol{\cdot}$E} and \textup{\sf EVE}] \label{ex:WE}
  Consider a game of the robots \W and \E moving on a (one-dimensional) planet.
  \begin{equation}
  \begin{aligned}
    (w-e)^2\leq1 \land v=f \limply\, &
    \big\langle\big( 
(\dchoice{\pupdate{\pumod{\wallac}{1}}}{\pupdate{\pumod{\wallac}{-1}}});\\
    & \phantom{\big\langle\big(}     (\pchoice{\pupdate{\pumod{\eveac}{1}}}{\pupdate{\pumod{\eveac}{-1}}});\\
    & \phantom{\big\langle\big(}
    \pupdate{\pumod{t}{0}};
    \devolvein{(\D{w}=v\syssep\D{v}=\wallac\syssep\D{e}=f\syssep\D{f}=\eveac\syssep\D{t}=1}{t\leq1)}
    \drepeat{\big)}
    \\&\big\rangle ~ (w-e)^2\leq1
  \end{aligned}
  \label{eq:EW}
  \end{equation}
  Robot \W is at position $w$ with velocity $v$ and acceleration $\wallac$ and plays the part of Demon.
  Robot \E is at $e$ with velocity $f$ and acceleration $\eveac$ and plays the part of Angel.
  The antecedent of \rref{eq:EW} before the implication assumes that \W and \E start close to one another (distance at most 1) and with identical velocities.
  The objective of \E, who plays Angel's part in \rref{eq:EW}, is to be close to \W (i.e.\ \((w-e)^2\leq1\)) as specified after the $\ddiamond{\cdot}{}$ modality in the succedent.
  The hybrid game proceeds as follows.
  Demon \W controls how often the hybrid game repeats by operator $\drepeat{}$.
  In each iteration, Demon \W first chooses (${\dchoice{}{}}$) to accelerate ($\pupdate{\pumod{\wallac}{1}}$) or brake ($\pupdate{\pumod{\wallac}{-1}}$), then Angel \E chooses ($\cup$) whether to accelerate ($\pupdate{\pumod{\eveac}{1}}$) or brake ($\pupdate{\pumod{\eveac}{-1}}$).
  Every time that the $\drepeat{}$ loop repeats, the players get to make that choice again. They are not bound by what they chose in the previous iterations.
  Yet, depending on the previous choices, the state will have evolved differently, which influences indirectly what moves a player needs to choose to win.
  After this sequence of choices of $\wallac$ and $\eveac$ by Demon and Angel, respectively, a clock variable $t$ is reset to $\pupdate{\pumod{t}{0}}$. 
  Then the game follows a differential equation system such that the time-derivative of \W's position $w$ is his velocity $v$ and the time-derivative of $v$ is acceleration $\wallac$, the time-derivative of \E's position $e$ is her velocity $f$ and the time-derivative of $f$ is acceleration $\eveac$.
  The time-derivative of clock variable $t$ is 1, yet the differential equation is restricted to the evolution domain $t\leq1$.
  Angel controls the duration of differential equations.
  Yet, this differential equation is within a dual game by operator $\pdual{}$, so Demon controls the duration of the continuous evolution.
  Here, both \W and \E evolve continuously but Demon \W decides how long.
  He cannot chose durations $>1$, because that would make him violate the evolution domain constraint $t\leq1$ and lose,
  so the players can change their control after at most one time unit, but Demon decides when exactly.
  Similar games can be studied for robot motion in higher dimensions using \dGL.
\end{example}

The two players in \rref{ex:WE} use various dualities to model a situation where Demon chooses discretely ($\dchoice{}{}$), then Angel chooses discretely ($\cup{}{}$), then Demon chooses the duration for the joint continuous evolution ($\devolvein{(\D{w}=\dots}{t\leq1)}$) and finally Demon decides about repetition ($\drepeat{}$), which gives 6 dual operators with nesting depth 4 (or 3 when discounting dual assignments).
Deeper nesting levels of hybrid game operators, which give rise to longer chains of $\pdual{}$ operators, can be used to describe hybrid games with more levels of interaction (any arbitrary number of nested $\pchoice{}{}, \pdual{}, \prepeat{}$).

\begin{example}[\textup{\sf WALL$\boldsymbol{\cdot}$E} and \textup{\sf EVE} and the world] \label{ex:WE-world}
The game in \rref{eq:EW} accurately reflects the situation when \W is in control of time since the (only) differential equation occurs within an odd number of $\pdual{}$ operators.
But to design the model \rref{eq:EW}, \E may have used a common modeling device to conservatively attribute the control of the differential equation to \W, even if time is really under control of a third player, the external environment.
\E's reason for this model would be that she is not in control of time, so there is no reason to believe why time would help her.
\E, thus, conservatively ceases control of time to Demon, which corresponds to assuming that the third player of the external environment is allowed to collaborate with \W to form an aggregate Demon player consisting of \W and the environment.

When \W wants to analyze his winning strategies with a $\dbox{\cdot}{}$ variation of \rref{eq:EW}, he could use the same modeling device to flip the differential equation over to Angel's control by removing the $\pdual{}$ to conservatively associate the environment to the opponent:
  \begin{equation}
  \begin{aligned}
    (w-e)^2\leq1 \land v=f \limply\, &
    \big[\big( 
(\dchoice{\pupdate{\pumod{\wallac}{1}}}{\pupdate{\pumod{\wallac}{-1}}});\\
    & \phantom{\big\langle\big(}     (\pchoice{\pupdate{\pumod{\eveac}{1}}}{\pupdate{\pumod{\eveac}{-1}}});\\
    & \phantom{\big\langle\big(}
    \pupdate{\pumod{t}{0}};
    \pevolvein{(\D{w}=v\syssep\D{v}=\wallac\syssep\D{e}=f\syssep\D{f}=\eveac\syssep\D{t}=1}{t\leq1)}
    \drepeat{\big)}
    \\&\big] ~ (w-e)^2>1
  \end{aligned}
  \label{eq:WE}
  \end{equation}
If, instead, \W keeps the same hybrid game as in \rref{eq:WE}, just with $\dbox{\cdot}{}$, the game considers the situation when \W has control over time, which would make some part of the game trivial, because, once he reached \((w-e)^2>1\), the Demon \W can then just always evolve for 0 time units.
Observe how a three-player game of \W, \E, and environment can be analyzed by combining the \dGL formulas \rref{eq:EW} and \rref{eq:WE} propositionally, which then analyze the same game from different perspectives of possible collaborations.
The \dGL formula expressing that neither \rref{eq:EW} nor \rref{eq:WE} is true, for example, is true in exactly the states where \W and \E draw, because the disturbance of the external environment can choose the winner by helping either \W or \E.
\end{example}

When the role of such environments is not limited to influencing only differential equation durations, the appropriate placement of $\pdual{}$ operators around additional choices ($\cup$) or repetitions ($\prepeat{}$) or multiple differential equations gives models more flexibility.
Similar phenomena happen when further players are added to the game as long as their role can be described by logical combinations of \dGL formulas investigating the game from a set of aggregated two-player perspectives.
Surprisingly, there is virtually no limit to how far logic around such essentially two-player hybrid games extends to seemingly substantially more general situations by appropriate modeling (\rref{sec:dGL-dL-expressiveness}).

\subsection{Semantics} \label{sec:dGL-semantics}

The logic \dGL has a denotational semantics.
The \dGL semantics defines, for each formula $\phi$, the set $\imodel{\I}{\phi}$ of states in which $\phi$ is true.
For each hybrid game $\alpha$ and each set of winning states $X$, the \dGL semantics defines the set $\strategyfor[\alpha]{X}$ of states from which Angel has a winning strategy to achieve $X$ in hybrid game $\alpha$, as well as the set $\dstrategyfor[\alpha]{X}$ of states from which Demon has a winning strategy to achieve $X$ in $\alpha$.

A \emph{state} $\iget[state]{\I}$ is a mapping from variables to $\reals$.
An \emph{interpretation} $\iget[const]{\I}$ assigns a relation \m{\iget[const]{\I}(p) \subseteq \reals^k} to each predicate symbol $p$ of arity $k$.
The interpretation further determines the set of states $\linterpretations{\Sigma}{V}$, which is isomorphic to a Euclidean space $\reals^n$ when $n$ is the number of relevant variables.
For a subset \m{X\subseteq\linterpretations{\Sigma}{V}} the complement \m{\linterpretations{\Sigma}{V} \setminus X} is denoted $\scomplement{X}$.
Let
\m{\iget[state]{\imodif[state]{\I}{x}{r}}} denote the state that agrees with state~$\iget[state]{\I}$ except for the interpretation of variable~\m{x}, which is changed to~\m{r \in \reals}.
The value of term $\theta$ in state $\iget[state]{\I}$ is denoted by \m{\ivaluation{\I}{\theta}}.
The denotational semantics of \dGL formulas will be defined in \rref{def:dGL-semantics} by simultaneous induction along with the denotational semantics, $\strategyfor[\alpha]{\cdot}$ and $\dstrategyfor[\alpha]{\cdot}$, of hybrid games, defined in \rref{def:HG-semantics}, because \dGL formulas are defined by simultaneous induction with hybrid games.

\begin{definition}[\dGL semantics] \label{def:dGL-semantics}
The \emph{semantics of a \dGL formula} $\phi$ for each interpretation $\iget[const]{\I}$ with a corresponding set of states $\linterpretations{\Sigma}{V}$ is the subset \m{\imodel{\I}{\phi}\subseteq\linterpretations{\Sigma}{V}} of states in which $\phi$ is true.
It is defined inductively as follows
\begin{enumerate}
\item \(\imodel{\I}{p(\theta_1,\dots,\theta_k)} = \{\iportray{\I} \in \linterpretations{\Sigma}{V} \with (\ivaluation{\I}{\theta_1},\dots,\ivaluation{\I}{\theta_k})\in\iget[const]{\I}(p)\}\)
\item \(\imodel{\I}{\theta_1\geq\theta_2} = \{\iportray{\I} \in \linterpretations{\Sigma}{V} \with \ivaluation{\I}{\theta_1}\geq\ivaluation{\I}{\theta_2}\}\)
\item \(\imodel{\I}{\lnot\phi} = \scomplement{(\imodel{\I}{\phi})}\)
\item \(\imodel{\I}{\phi\land\psi} = \imodel{\I}{\phi} \cap \imodel{\I}{\psi}\)
\item
{\def\Im{\imodif[state]{\I}{x}{r}}%
\(\imodel{\I}{\lexists{x}{\phi}} =  \{\iportray{\I} \in \linterpretations{\Sigma}{V} \with \iget[state]{\Im} \in \imodel{\I}{\phi} ~\text{for some}~r\in\reals\}\)
}
\item \(\imodel{\I}{\ddiamond{\alpha}{\phi}} = \strategyfor[\alpha]{\imodel{\I}{\phi}}\)
\item \(\imodel{\I}{\dbox{\alpha}{\phi}} = \dstrategyfor[\alpha]{\imodel{\I}{\phi}}\)
\end{enumerate}
A \dGL formula $\phi$ is \emph{valid in $\iget[const]{\I}$}, written \m{\iget[const]{\I}\models{\phi}}, iff it is true in all states, i.e.\ \m{\imodel{\I}{\phi}=\linterpretations{\Sigma}{V}}.
Formula $\phi$ is \emph{valid}, \m{\entails\phi}, iff \m{\iget[const]{\I}\models{\phi}} for all interpretations $\iget[const]{\I}$.
\end{definition}

\begin{definition}[Semantics of hybrid games] \label{def:HG-semantics}
The \emph{semantics of a hybrid game} $\alpha$ is a function \m{\strategyfor[\alpha]{\cdot}} that, for each interpretation $\iget[const]{\I}$ and each set of Angel's winning states \m{X\subseteq\linterpretations{\Sigma}{V}}, gives the \emph{winning region}, i.e.\ the set of states \m{\strategyfor[\alpha]{X}} from which Angel has a winning strategy to achieve $X$ in $\alpha$ (whatever strategy Demon chooses). It is defined inductively as follows\footnote{
The semantics of a hybrid game is not merely a reachability relation between states as for hybrid systems \cite{DBLP:conf/lics/Platzer12b}, because the adversarial dynamic interactions and nested choices of the players have to be taken into account.}
\begin{enumerate}
\item \(\strategyfor[\pupdate{\pumod{x}{\theta}}]{X} = \{\iportray{\I} \in \linterpretations{\Sigma}{V} \with \modif{\iget[state]{\I}}{x}{\ivaluation{\I}{\theta}} \in X\}\)
\item \(\strategyfor[\pevolvein{\D{x}=\genDE{x}}{\ivr}]{X} = \{\varphi(0) \in \linterpretations{\Sigma}{V} \with 
      \varphi(r)\in X\)
      for some $r\in\reals_{\geq0}$ and (differentiable)
      \m{\varphi:[0,r]\to\linterpretations{\Sigma}{V}}
      such that
      \(\varphi(\zeta)\in\imodel{\I}{\ivr}\)
      and
      \m{\D[t]{\,\varphi(t)(x)} (\zeta) =       %
      \ivaluation{\iconcat[state=\varphi(\zeta)]{\I}}{\theta}}
      for all \(0\leq\zeta\leq r\}\)
\item \(\strategyfor[\ptest{\ivr}]{X} = \imodel{\I}{\ivr}\cap X\)
\item \(\strategyfor[\pchoice{\alpha}{\beta}]{X} = \strategyfor[\alpha]{X}\cup\strategyfor[\beta]{X}\)
\item \(\strategyfor[\alpha;\beta]{X} = \strategyfor[\alpha]{\strategyfor[\beta]{X}}\)
\item \(\strategyfor[\prepeat{\alpha}]{X} = \capfold\{Z\subseteq\linterpretations{\Sigma}{V} \with X\cup\strategyfor[\alpha]{Z}\subseteq Z\}\)

\item \(\strategyfor[\pdual{\alpha}]{X} = \scomplement{(\strategyfor[\alpha]{\scomplement{X}})}\)
\end{enumerate}
The \emph{winning region} of Demon, i.e.\ the set of states \m{\dstrategyfor[\alpha]{X}} from which Demon has a winning strategy to achieve $X$ in $\alpha$ (whatever strategy Angel chooses) is defined inductively as follows
\begin{enumerate}
\item \(\dstrategyfor[\pupdate{\pumod{x}{\theta}}]{X} = \{\iportray{\I} \in \linterpretations{\Sigma}{V} \with \modif{\iget[state]{\I}}{x}{\ivaluation{\I}{\theta}} \in X\}\)
\item \(\dstrategyfor[\pevolvein{\D{x}=\genDE{x}}{\ivr}]{X} = \{\varphi(0) \in \linterpretations{\Sigma}{V} \with 
      \varphi(r)\in X\)
      for all $r\in\reals_{\geq0}$ and (differentiable)
      \m{\varphi:[0,r]\to\linterpretations{\Sigma}{V}}
      such that
      \(\varphi(\zeta)\in\imodel{\I}{\ivr}\)
      and
      \m{\D[t]{\,\varphi(t)(x)} (\zeta) =       %
      \ivaluation{\iconcat[state=\varphi(\zeta)]{\I}}{\theta}}
      for all $0\leq\zeta\leq r\}$
\item \(\dstrategyfor[\ptest{\ivr}]{X} = \scomplement{(\imodel{\I}{\ivr})}\cup X\)
\item \(\dstrategyfor[\pchoice{\alpha}{\beta}]{X} = \dstrategyfor[\alpha]{X}\cap\dstrategyfor[\beta]{X}\)
\item \(\dstrategyfor[\alpha;\beta]{X} = \dstrategyfor[\alpha]{\dstrategyfor[\beta]{X}}\)
\item \(\dstrategyfor[\prepeat{\alpha}]{X} = \cupfold\{Z\subseteq\linterpretations{\Sigma}{V} \with Z\subseteq X\cap\dstrategyfor[\alpha]{Z}\}\)
\item \(\dstrategyfor[\pdual{\alpha}]{X} = \scomplement{(\dstrategyfor[\alpha]{\scomplement{X}})}\)
\end{enumerate}
\end{definition}
The notation uses $\strategyfor[\alpha]{X}$ and $\dstrategyfor[\alpha]{X}$ instead of $\varsigma^{\iget[const]{\I}}_\alpha{(X)}$ and $\delta^{\iget[const]{\I}}_\alpha{(X)}$, because the interpretation $\iget[const]{\I}$ that gives a semantics to predicate symbols in tests and evolution domains is clear from the context.
Strategies do not occur explicitly in the \dGL semantics, because it is based on winning regions, i.e.\ the existence of winning strategies, not on the strategies themselves.
The winning regions for Angel are illustrated in \rref{fig:HG-semantics-illustration}.
The winning region \(\strategyfor[\prepeat{\alpha}]{X}\) is the smallest set $Z$ around $X$ that already contains all states from which one more $\strategyfor[\alpha]{\argholder}$ could win into $Z$, so \(\strategyfor[\alpha]{\strategyfor[\prepeat{\alpha}]{X}} \setminus \strategyfor[\prepeat{\alpha}]{X} = \emptyset\).
It includes all $\kappa$-fold iterations $\inflopstrat[\kappa][\alpha]{X}$ of the winning region construction $\strategyfor[\alpha]{\argholder}$ (\rref{sec:ClosureOrdinals}) for all $\kappa$.

\begin{figure}[tbh]
  \tikzstyle{strategy preimage}=[draw=vblue,fill=vblue!20,shape=ellipse]
  \tikzstyle{winning condition}=[draw=vred,fill=vred!50,text=white,shape=circle]
  \tikzstyle{winning annotation}=[ultra thick,black,->]
\begin{tikzpicture}[thick]
  \begin{scope}[xshift=-2.2cm,yshift=0cm]%
    \node[strategy preimage,shape=circle] (update) at (-1.5,0) {\phantom{$X$}};
    \node[winning condition] at (0,0) {$X$};
    \node (jump) at (-2.0,1.5) {$\strategyfor[\pupdate{\pumod{x}{\theta}}]{X}$};
    \draw[winning annotation] (jump) -- (update.center);
  \end{scope}
  \begin{scope}[xshift=3.5cm,yshift=0cm]%
    \node[strategy preimage,minimum width=4.8cm,minimum height=2.1cm] at (-1.2,0) {};
    \node[winning condition] at (0,0) {$X$};
    \draw[*-*] (-2.9,-0.3) .. controls (-2,-1.2) and (-1.2,1.5) .. (-0.1,0.1) 
    node[above,pos=0.55,sloped] {$\D{x}=\genDE{x}$};
    \node (evolve) at (-3.5,1.5) {$\strategyfor[\pevolve{\D{x}=\genDE{x}}]{X}$};
    \draw[winning annotation] (evolve) -- (-2.8,0.3);
  \end{scope}
  \begin{scope}[xshift=7.5cm,yshift=0cm]%
    \node[winning condition,minimum width=1.5cm] at (0,0) {$X$};
    \node[draw=vgreen,fill=vgreen!20,shape=ellipse,minimum height=3cm,minimum width=1.5cm,opacity=0.5] at (-0.9,0) {};
    \node at (-0.9,1) {$\imodel{\I}{\phi}$};
    \node (test) at (0.3,1.5) {$\strategyfor[\ptest{\phi}]{X}$};
    \draw[winning annotation] (test) -- (-0.5,0.2);
  \end{scope}
  \begin{scope}[xshift=0cm,yshift=-3.4cm]%
    \begin{scope}[rotate=-30]
    \node[strategy preimage,fill=vblue!30,opacity=0.5,minimum width=3.5cm,minimum height=1.7cm,rotate=-30] at (-0.8,0) {};
    \node[rotate=-30] (alpha) at (-2,0) {$\strategyfor[\alpha]{X}$};
    \end{scope}
    \begin{scope}[rotate=30]
    \node[strategy preimage,fill=vblue!30,opacity=0.5,minimum width=3.5cm,minimum height=1.7cm,rotate=30] at (-0.8,0) {};
    \node[rotate=30] (beta) at (-2,0) {$\strategyfor[\beta]{X}$};
    \end{scope}
    \node[winning condition] at (0,0) {$X$};
    \node (choice) at (-4,0) {$\strategyfor[\pchoice{\alpha}{\beta}]{X}$};
    \draw[winning annotation] (choice.east) -- (alpha.south);
    \draw[winning annotation] (choice.east) -- (beta.north);
  \end{scope}
  \begin{scope}[xshift=6cm,yshift=-3.4cm]%
    \node[strategy preimage,fill=vblue!11,minimum width=4.8cm,minimum height=2.1cm] at (-1.2,0) {};
    \node (alphabeta) at (-2.7,0) {$\strategyfor[\alpha]{\strategyfor[\beta]{X}}$};
    \node[strategy preimage,fill=vblue!30,minimum width=2.2cm,minimum height=1.3cm] at (-0.4,0) {};
    \node at (-0.9,0) {$\strategyfor[\beta]{X}$};
    \node[winning condition] at (0,0) {$X$};
    \node (compose) at (-3.5,1.6) {$\strategyfor[\alpha;\beta]{X}$};
    \draw[winning annotation] (compose) -- (alphabeta);
  \end{scope}
  \begin{scope}[xshift=1cm,yshift=-8cm]%
    \node[strategy preimage,fill=vblue!2,shape=rectangle,rounded corners=8pt,minimum height=0.9cm] (nomore) at (-2.4,1.8) {$\strategyfor[\alpha]{\strategyfor[\prepeat{\alpha}]{X}} \setminus \strategyfor[\prepeat{\alpha}]{X}$\strut};
    \node (empty) at (0.4,2.1) {$\emptyset$};
    \draw[winning annotation] (empty) -- (nomore);
    \node[strategy preimage,fill=vblue!4,minimum width=7.8cm,minimum height=3.2cm] at (-1.8,0) {};
    \node[anchor=west] (infinite) at (-5.7,0) {$\inflopstrat[\infty][\alpha]{X} ~\cdots$};
    \node[strategy preimage,fill=vblue!11,minimum width=4.8cm,minimum height=2.1cm] at (-1.2,0) {};
    \node at (-3.1,0) {$\inflopstrat[3][\alpha]{X}$};
    \node[strategy preimage,minimum width=3.5cm,minimum height=1.7cm] at (-0.8,0) {};
    \node at (-2,0) {$\inflopstrat[2][\alpha]{X}$};
    \node[strategy preimage,fill=vblue!30,minimum width=2.2cm,minimum height=1.3cm] at (-0.4,0) {};
    \node at (-0.9,0) {$\inflopstrat[][\alpha]{X}$};
    \node[winning condition] at (0,0) {$X$};
    \node (repeat) at (-5.5,1.5) {$\strategyfor[\prepeat{\alpha}]{X}$};
    \draw[winning annotation] (repeat) -- (infinite);
  \end{scope}
  \begin{scope}[xshift=6cm,yshift=-8cm]%
    \node[strategy preimage,fill=vgreen!10,minimum width=5cm,minimum height=3.5cm,shape=rectangle] at (0,0) {};
    \node at (-2.1,1.5) {$\scomplement{X}$};
      \node[winning condition,shape=circle,minimum width=3cm] at (0,0) {};
    \node[winning condition,draw=none,fill=none] at (0,1) {$X$};
      \node at (0,-0.9) {$\strategyfor[\alpha]{\scomplement{X}}$};
      \node[strategy preimage,shape=ellipse,minimum width=2.5cm,minimum height=1.2cm] at (-0.2,0.1) {$\scomplement{\strategyfor[\alpha]{\scomplement{X}}}$};
    \node (dual) at (2.2,2.5) {$\strategyfor[\pdual{\alpha}]{X}$};
    \draw[winning annotation] (dual) -- (0.5,0);
  \end{scope}
\end{tikzpicture}
  \caption{Illustration of denotational semantics of hybrid games as winning regions}
  \label{fig:HG-semantics-illustration}
\end{figure}

The semantics is \emph{compositional}, i.e.\ the semantics of a compound \dGL formula is a simple function of the semantics of its pieces, and the semantics of a compound hybrid game is a function of the semantics of its pieces.
This guarantees referential transparency and enables a compositional proof calculus.
Furthermore, existence of a strategy in hybrid game $\alpha$ to achieve $X$ is independent of games and \dGL formulas surrounding $\alpha$, but just depends on the remaining game $\alpha$ itself and the goal $X$.
By a simple inductive argument, this shows that one can focus on memoryless strategies, because the existence of strategies does not depend on the context, hence, by working bottom up, the strategy itself cannot depend on past states and choices, only the current state, remaining game, and goal.
This follows from a generalization of a classical result \cite{Zermelo13}, but is directly apparent in a logical setting.
Furthermore, the semantics is monotone, i.e.\ larger sets of winning states induce larger winning regions.

\begin{lemma}[Monotonicity] \label{lem:monotone}%
  The semantics is \emph{monotone}, i.e.\ \m{\strategyfor[\alpha]{X}\subseteq\strategyfor[\alpha]{Y}} and \m{\dstrategyfor[\alpha]{X}\subseteq\dstrategyfor[\alpha]{Y}} for all \m{X\subseteq Y}.
\end{lemma}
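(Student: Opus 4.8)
The plan is to prove both inclusions by a single structural induction on the hybrid game \(\alpha\), carrying the two statements about \(\strategyfor[\alpha]{\cdot}\) and \(\dstrategyfor[\alpha]{\cdot}\) simultaneously through the induction. Strictly speaking the two semantics never call each other (the dual clauses \(\strategyfor[\pdual{\alpha}]{X}=\scomplement{(\strategyfor[\alpha]{\scomplement{X}})}\) and \(\dstrategyfor[\pdual{\alpha}]{X}=\scomplement{(\dstrategyfor[\alpha]{\scomplement{X}})}\) refer back to the \emph{same} semantics), so each claim could in principle be proved by its own induction; but running one induction is cleaner. Fix \(X\subseteq Y\) throughout.

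The atomic and the finitely-branching compound cases are immediate from the defining clauses. For the assignment the winning region is a preimage of \(X\) under a fixed state-update map, which is monotone in \(X\); for the test \(\strategyfor[\ptest{\ivr}]{X}=\imodel{\I}{\ivr}\cap X\) and \(\dstrategyfor[\ptest{\ivr}]{X}=\scomplement{(\imodel{\I}{\ivr})}\cup X\), monotone since intersection/union with a fixed set is monotone; for the continuous evolution every witnessing solution \(\varphi\) with \(\varphi(r)\in X\) also has \(\varphi(r)\in Y\) (and the universally-quantified Demon variant only gets easier to satisfy as the target grows), so the game is monotone. For \(\pchoice{\alpha}{\beta}\) apply the induction hypothesis to \(\alpha\) and \(\beta\) and take unions (resp.\ intersections for Demon). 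The sequential case \(\strategyfor[\alpha;\beta]{X}=\strategyfor[\alpha]{\strategyfor[\beta]{X}}\) is the first place where monotonicity is genuinely used twice: the induction hypothesis on \(\beta\) gives \(\strategyfor[\beta]{X}\subseteq\strategyfor[\beta]{Y}\), and then the induction hypothesis on \(\alpha\), applied to these two inclusion-ordered argument sets, yields \(\strategyfor[\alpha]{\strategyfor[\beta]{X}}\subseteq\strategyfor[\alpha]{\strategyfor[\beta]{Y}}\); likewise for \(\dstrategyfor{}\).

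The two cases I expect to require care are the dual and the repetition. For \(\pdual{\alpha}\), from \(X\subseteq Y\) we get \(\scomplement{Y}\subseteq\scomplement{X}\); the induction hypothesis gives \(\strategyfor[\alpha]{\scomplement{Y}}\subseteq\strategyfor[\alpha]{\scomplement{X}}\), and complementing reverses the inclusion a second time, so \(\strategyfor[\pdual{\alpha}]{X}=\scomplement{(\strategyfor[\alpha]{\scomplement{X}})}\subseteq\scomplement{(\strategyfor[\alpha]{\scomplement{Y}})}=\strategyfor[\pdual{\alpha}]{Y}\); the two flips restore monotonicity, and the Demon clause is identical. For the repetition I would argue purely from the fixpoint characterization rather than from the induction hypothesis. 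Write \(\mathcal{Z}_X=\{Z \with X\cup\strategyfor[\alpha]{Z}\subseteq Z\}\). If \(X\subseteq Y\), then any \(Z\in\mathcal{Z}_Y\) satisfies \(X\cup\strategyfor[\alpha]{Z}\subseteq Y\cup\strategyfor[\alpha]{Z}\subseteq Z\), so \(\mathcal{Z}_Y\subseteq\mathcal{Z}_X\); intersecting over the larger family \(\mathcal{Z}_X\) yields the smaller set, giving \(\strategyfor[\prepeat{\alpha}]{X}=\capfold\mathcal{Z}_X\subseteq\capfold\mathcal{Z}_Y=\strategyfor[\prepeat{\alpha}]{Y}\). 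Dually, for Demon's greatest-fixpoint clause \(\dstrategyfor[\prepeat{\alpha}]{X}=\cupfold\{Z\with Z\subseteq X\cap\dstrategyfor[\alpha]{Z}\}\), enlarging \(X\) enlarges the family of admissible post-fixpoints \(Z\), so the union grows.

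The point worth flagging is that in the repetition case monotonicity in the seed \(X\) follows from the pre-/post-fixpoint definition alone and does \emph{not} invoke the induction hypothesis on \(\alpha\); the only subtlety is tracking the direction of inclusion when passing from families of fixpoint candidates to their intersection or union (enlarging the family shrinks the intersection but enlarges the union), and the analogous double reversal under complementation in the dual case. Everything else is a direct reading-off of the defining clauses.
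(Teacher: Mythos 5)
Your proof is correct and takes essentially the same route as the paper: the paper's proof is a terse ``even number of negations'' check that spells out exactly the two cases you flag as nontrivial---repetition, handled via the same observation that enlarging $X$ shrinks the family of pre-fixpoints and hence shrinks their intersection the other way, and dual, handled via the same double complementation---while you additionally write out the routine atomic and compound cases. Your remarks that the repetition case needs no induction hypothesis on $\alpha$ and that the two winning-region semantics never call each other are both accurate refinements of what the paper leaves implicit.
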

\proofmove
A simple check based on the observation that $X$ only occurs with an even number of negations in the semantics.
For example,
\(\strategyfor[\prepeat{\alpha}]{X} = \capfold\{Z\subseteq\linterpretations{\Sigma}{V} \with X\cup\strategyfor[\alpha]{Z}\subseteq Z\} \subseteq \capfold\{Z\subseteq\linterpretations{\Sigma}{V} \with Y\cup\strategyfor[\alpha]{Z}\subseteq Z\} = \strategyfor[\prepeat{\alpha}]{Y}\) if \(X\subseteq Y\).
Likewise, \(X\subseteq Y\) implies \(\scomplement{X}\supseteq\scomplement{Y}\), hence
\(\strategyfor[\alpha]{\scomplement{X}}\supseteq \strategyfor[\alpha]{\scomplement{Y}}\), so
\(\strategyfor[\pdual{\alpha}]{X} = \scomplement{(\strategyfor[\alpha]{\scomplement{X}})} \subseteq \scomplement{(\strategyfor[\alpha]{\scomplement{Y}})} = \strategyfor[\pdual{\alpha}]{Y}\).
\endproofmove
Monotonicity implies that the least fixpoint in \m{\strategyfor[\prepeat{\alpha}]{X}} and the greatest fixpoint in \m{\dstrategyfor[\prepeat{\alpha}]{X}} are well-defined \cite[Lemma 1.7]{Harel_et_al_2000}.
The semantics of $\strategyfor[\prepeat{\alpha}]{X}$ is a least fixpoint, which results in a well-founded repetition of $\alpha$, i.e.\ Angel can repeat any number of times but she ultimately needs to stop at a state in $X$ in order to win.
In particular, Angel cannot play a Zeno strategy with infinitely many steps in finite time.
The semantics of $\dstrategyfor[\prepeat{\alpha}]{X}$ is a greatest fixpoint, instead, for which Demon needs to achieve a state in $X$ after every number of repetitions, because Angel could choose to stop at any time, but Demon still wins if he only postpones $\scomplement{X}$ forever, because Angel ultimately has to stop repeating.
Thus, for the formula $\ddiamond{\prepeat{\alpha}}{\phi}$, Demon already has a winning strategy if he only has a strategy that is not losing by preventing $\phi$ indefinitely, because Angel eventually has to stop repeating anyhow and will then end up in a state not satisfying $\phi$, which makes her lose.
The situation for $\dbox{\prepeat{\alpha}}{\phi}$ is dual.

Hybrid games branch finitely when the players decide which game to play in \m{\pchoice{\alpha}{\beta}} and \m{\dchoice{\alpha}{\beta}}, respectively.
The games \m{\prepeat{\alpha}} and \m{\drepeat{\alpha}} also branch finitely, because, after each repetition of $\alpha$, the respective player (Angel for  \m{\prepeat{\alpha}} and Demon for \m{\drepeat{\alpha}}) may decide whether to repeat again or stop.
Repeated games still lead to infinitely many branches, because a repeated game can be repeated any arbitrary number of times.
The game branches uncountably infinitely, however, when the players decide how long to evolve along differential equations in \m{\pevolvein{\D{x}=\genDE{x}}{\ivr}} and \m{\pdual{(\pevolvein{\D{x}=\genDE{x}}{\ivr})}}, because uncountably many nonnegative real number could be chosen as a duration (unless the system leaves $\ivr$ immediately).
These choices can be made explicit by relating the simple denotational modal semantics of \dGL to an equivalent operational game semantics that is technically much more involved but directly exposes the interactive intuition of game play.
For reference, this approach has been made explicit in \rref{app:operational-HG-semantics}.

\begin{example} \label{ex:ex-explosive-clock}
The following simple \dGL formula
\begin{equation}
\ddiamond{\prepeat{(\pchoice{\pupdate{\pumod{x}{x+1}}; \devolve{(\D{x}=x^2)}}{\pupdate{\pumod{x}{x-1}}})}}{\, (0\leq x<1)}
\label{eq:ex-explosive-clock}
\end{equation}
is true in all states from which there is a winning strategy for Angel to reach [0,1).
It is Angel's choice whether to repeat ($\prepeat{}$) and, every time she does, it is her choice ($\cup$) whether to increase $x$ by 1 and then (after $;$) give Demon control over the duration of the differential equation \m{\pevolve{\D{x}=x^2}} (left game) or whether to instead decrease $x$ by 1 (right game).
Formula \rref{eq:ex-explosive-clock} is valid, because Angel has the winning strategy of choosing the left action \({\pupdate{\pumod{x}{x+1}}; \devolve{(\D{x}=x^2)}}\) until $x\geq0$ followed by the right action \({\pupdate{\pumod{x}{x-1}}}\) until $0\leq x<1$.
By choosing the left action, $x\geq0$ will ultimately happen, because \m{\pupdate{\pumod{x}{x+1}}} increases $x$ by 1 and \m{\pevolve{\D{x}=x^2}} can only make $x$ bigger, because the derivative $x^2$ is nonnegative.
Once $x\geq0$, choosing the right action suitably often will reach the postcondition $0\leq x<1$ to allow Angel to win the game.
Note that Angel also wins immediately with the left action from $x=-1$, since the differential equation is stuck at $x=0$.
The following minor variation, however, is not valid:
\begin{equation*}
\hspace*{-0.3cm}
\ddiamond{\prepeat{(\pchoice{\pupdate{\pumod{x}{x+1}}; \devolve{(\D{x}=x^2)}}{(\dchoice{\pupdate{\pumod{x}{x-1}}}{\pupdate{\pumod{x}{x-2}}})})}}{(0\leq x<1)}
\label{eq:ex-explosive-clock2}
\end{equation*}
because Demon can spoil Angel's efforts by choosing \m{\pupdate{\pumod{x}{x-2}}} in his choice ($\dchoice{}{}$) to make $x$ negative whenever $1\leq x<2$, and then increasing $x$ to 1.5 again via \m{\devolve{(\D{x}=x^2)}} when Angel takes the left choice. Angel will never reach $0\leq x<1$ that way unless this was true initially already.
This phenomenon is examined in \rref{sec:Determinacy} in more detail.
\end{example}

\begin{example}[\textup{\sf WALL$\boldsymbol{\cdot}$E} and \textup{\sf EVE}] \label{ex:WE2}
  The \dGL formula \rref{eq:EW} from \rref{ex:WE} is valid, because Angel \E indeed has a winning strategy to get close to \W by mimicking Demon's choices.
  Recall that Demon \W controls the repetition $\drepeat{}$, so the fact that the hybrid game starts \E off close to \W is not sufficient for \E to win the game.
  The hybrid game in \rref{eq:EW} would be trivial if Angel were to control the repetition (because she would then win just by choosing not to repeat) or were to control the differential equation (because she would then win by always  evolving for duration 0).
  The analysis of \rref{eq:EW} is more difficult if the first two lines in the hybrid game are swapped so that Angel \E chooses $\eveac$ before Demon \W chooses $\wallac$.
\end{example}

\section{Meta-Properties} \label{sec:MetaProperties}

This section analyzes meta-properties and semantical properties of the hybrid games of \dGL, including determinacy of hybrid games, hybrid game equivalences such as reduction of evolution domains, and closure ordinals of hybrid games.

\subsection{Determinacy} \label{sec:Determinacy}

Every particular game play in a hybrid game is won by exactly one player, because hybrid games are zero-sum and there are no draws.
That alone does not imply determinacy, i.e.\ that, from all initial situations, either one of the players always has a winning strategy to force a win, regardless of how the other player chooses to play.

In order to understand the importance of determinacy for classical logics, consider the semantics of repetition, defined as a \emph{least} fixpoint, which is crucial because that gives a well-founded repetition.
Otherwise, the \emph{filibuster formula} would not have a well-defined truth-value:
\begin{equation}
\ddiamond{\prepeat{(\dchoice{\pumod{x}{0}}{\pumod{x}{1}})}}{x=0}
\label{eq:filibuster}
\end{equation}
It is Angel's choice whether to repeat ($\prepeat{}$), but every time Angel repeats, it is Demon's choice ($\cap$) whether to play \m{\pumod{x}{0}} or \m{\pumod{x}{1}}.
The game in this formula never deadlocks, because every player always has a remaining move (here even two).
But, without the least fixpoint, the game would have perpetual checks, because no strategy helps either player win the game but just prevents the other player from winning; see \rref{fig:nondetermined}.%

\begin{figure}[bth]
  \centering
  \begin{tikzpicture}[grow'=down]
    \tikzstyle{level 2}=[sibling distance=+22mm]
    \tikzstyle{level 3}=[sibling distance=+12mm]
    \tikzstyle{level 4}=[sibling distance=+7mm]
    \tikzstyle{level 5}=[sibling distance=+6mm]
    \node[diamond] (PO) {X}
      child[pdia] {node[box] {X}
        child[pbox] {node[diamond] (1) {1}
          child[pdia] {node[box] {1}
            child[pbox] {node[diamond] (back1) {1}}
            child[pbox] {node[diamond] (0over) {0}
              child[pdia] {node[box] {0}
                child[pbox] {node[diamond] (1again) {1}}
                child[pbox] {node[diamond,diawon] (0again) {0}}
                edge from parent node[action] {repeat}
              }
            child[pdia] {node[box,diawon] {0} edge from parent node[action] {stop}}
          }
            edge from parent node[action] {repeat}
          }
          child[pdia] {node[diamond,boxwon] {1} edge from parent node[action] {stop}}
        }
        child[pbox] {node[diamond] (0) {0}
          child[pdia] {node[box] {0}
            child[pbox] {node[diamond] (1over) {1}}
            child[pbox] {node[diamond,diawon] (back0) {0}}
            edge from parent node[action] {repeat}
          }
          child[pdia] {node[diamond,diawon] {0} edge from parent node[action] {stop}}
        }
        edge from parent node[action] {repeat}
      }
      child[pdia] {node[diamond] {X}
        edge from parent node[action] {stop}
      }
      ;
      \draw[backedge] (back1) to[bend right=60] (1);
      \draw[backedge] (back0) to[bend left=90] (0);
      \draw[backedge] (0again) to[bend left=70] (0over);
      \draw[backedge] (1again) to[bend right=90] (1);
      \draw[backedge] (1over) to[bend right=30] (1);
      \draw[boxastrategy] (PO-1) -- (1);
      \draw[diamondastrategy] (1) -- (1-1);
      \draw[boxastrategy] (1-1) -- (back1);
  \end{tikzpicture}
  \caption{The filibuster game formula \(\ddiamond{\prepeat{(\dchoice{\pumod{x}{0}}{\pumod{x}{1}})}}{x=0}\) is false (unless $x=0$ initially), but would be non-determined without least fixpoints (strategies follow thick actions).
  Angel's action choices are illustrated by dashed edges from dashed diamonds, Demon's action choices by solid edges from solid squares, and double lines indicate bisimilar states with the same continuous state and a subgame of the same structure of subsequent choices.
  States where Angel wins are marked \usebox{\tmpdiawon} and states where Demon wins by \usebox{\tmpboxwon}.
}
  \label{fig:nondetermined}
\end{figure}
Demon can move \m{\pumod{x}{1}} and would win, but Angel observes this and decides to repeat, so Demon can again move \m{\pumod{x}{1}}.
Thus (unless Angel is lucky starting from an initial state where she has won already) every strategy that one player has to reach $x=0$ or $x=1$ could be spoiled by the other player so the game would not be determined, i.e.\ no player has a winning strategy.
Every player can let his opponent win, but would not have a strategy to win himself.
Because of the least fixpoint \(\strategyfor[\prepeat{\alpha}]{\argholder}\) in the semantics, however, repetitions are well-founded and, thus, have to stop eventually (after an arbitrary unbounded number of rounds).
Hence, in the example in \rref{fig:nondetermined}, Demon still wins and formula \rref{eq:filibuster} is $\lfalse$, unless $x=0$ holds initially (the unknown initial value is marked $X$ in \rref{fig:nondetermined}).
In other words, the formula in \rref{eq:filibuster} is equivalent to $x=0$.
The same phenomenon happens in \rref{ex:ex-explosive-clock}.
Likewise, the dual filibuster game formula 
\begin{equation}
  x=0\limply\ddiamond{\drepeat{(\pchoice{\pumod{x}{0}}{\pumod{x}{1}})}}{x=0}
  \label{eq:dual-filibuster}
\end{equation}
is (determined and) valid, because Demon has to stop repeating $\drepeat{}$ eventually so that Angel wins if she patiently plays \m{\pumod{x}{0}} each time.
Similarly, the game in the following hybrid filibuster formula would not be determined without the least fixpoint semantics
\[\ddiamond{\prepeat{(\pupdate{\pumod{x}{0}}; \devolve{\D{x}=1})}}{x=0}\]
because Demon could always evolve continuously to some state where $x>0$ and Angel would never want to stop.
Since Angel will have to stop eventually, she loses and the formula is $\lfalse$ unless $x=0$ holds initially.

It is also important that Angel can only choose real durations $r\in\reals_{\geq0}$ for a continuous evolution game \m{\pevolvein{\D{x}=\genDE{x}}{\ivr}}, not infinity $\infty$, so she  ultimately stops.
Otherwise
\begin{equation}
  \ddiamond{\prepeat{(\devolve{\D{x}=1};\pupdate{\pumod{x}{0}})}}{x=0}
  \label{eq:continuous-filibuster}
\end{equation}
would not be determined, because Angel wants to repeat (unless $x=0$ initially) and $\pupdate{\pumod{x}{0}}$ will make her win once she stops after a nonzero number of repetitions.
Yet, if Demon could choose $\infty$ as the duration for the continuous evolution game $\devolve{\D{x}=1}$, Angel would never get to play the subsequent $\pupdate{\pumod{x}{0}}$ to win.
Since durations need to be real numbers, though, each continuous evolution ultimately has to stop, so the formula in \rref{eq:continuous-filibuster} is valid even if Demon can postpone Angel's victory arbitrarily long.

In order to make sure that \dGL is a classical two-valued modal logic, hybrid games have no draws.
But, because modalities refer to the existence of winning strategies, they only receive classical truth values if, from each state, one of the players has a winning strategy for complementary winning conditions of a hybrid game $\alpha$.
The logical setup of \dGL makes this determinacy proof very simple, without the need to use, e.g., the deep \cite{DBLP:journals/annmathlog/Friedman71} Borel determinacy theorem for winning conditions that are Borel in the product topology induced on game trees by the discrete topology of actions \cite{DBLP:journals/mathann/Martin75}, which does not even fit to the structure of arbitrarily nested inductive and coinductive fixpoints of the winning region semantics of \dGL.
\begin{theorem}[Consistency \& determinacy] \label{thm:dGL-determined}%
  Hybrid games are consistent and determined, i.e.\ \m{\entails \lnot\ddiamond{\alpha}{\lnot\phi} \lbisubjunct \dbox{\alpha}{\phi}}.
\end{theorem}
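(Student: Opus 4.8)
The plan is to reduce the claimed validity to a single set-theoretic identity and then prove that identity by induction on the game structure. By \rref{def:dGL-semantics} we have $\imodel{\I}{\lnot\ddiamond{\alpha}{\lnot\phi}} = \scomplement{\strategyfor[\alpha]{\scomplement{X}}}$ and $\imodel{\I}{\dbox{\alpha}{\phi}} = \dstrategyfor[\alpha]{X}$ for $X \mathrel{:=} \imodel{\I}{\phi}$, so it suffices to establish the \emph{determinacy lemma}: for every hybrid game $\alpha$, every interpretation $\I$, and every set $X \subseteq \linterpretations{\Sigma}{V}$,
\[ \dstrategyfor[\alpha]{X} = \scomplement{\strategyfor[\alpha]{\scomplement{X}}}. \]
This one equation encodes both halves of the theorem simultaneously: Demon wins for $X$ from exactly those states from which Angel fails to win for $\scomplement{X}$, so from each state precisely one of the two complementary games is won (determinacy) and never both (consistency).

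I would prove the lemma by structural induction on $\alpha$, reading the two clauses of \rref{def:HG-semantics} side by side. For the atomic games and the propositional combinators the identity is immediate from De Morgan's laws: assignments are self-dual; for tests $\scomplement{\strategyfor[\ptest{\ivr}]{\scomplement{X}}} = \scomplement{(\imodel{\I}{\ivr}\cap\scomplement{X})} = \scomplement{\imodel{\I}{\ivr}}\cup X = \dstrategyfor[\ptest{\ivr}]{X}$; the continuous-evolution clauses impose the \emph{same} constraints on the pair $(r,\varphi)$ and differ only in the leading quantifier, so complementation turns Angel's ``for some duration'' into Demon's ``for all durations'' directly (with a state outside $\ivr$ lying vacuously in Demon's region and outside Angel's); and $\pchoice{}{}$ turns $\cup$ into $\cap$. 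The sequential case $\alpha;\beta$ and the dual case $\pdual{\alpha}$ follow by applying the induction hypotheses for $\alpha$ (and $\beta$) to suitably complemented arguments.

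The main obstacle is the repetition case, where Angel's winning region is a \emph{least} fixpoint and Demon's a \emph{greatest} fixpoint. Here I would isolate the standard De Morgan duality of monotone fixpoints: for a monotone $g$ on $\mathcal{P}(\linterpretations{\Sigma}{V})$ its dual $g^{d}(Z) \mathrel{:=} \scomplement{g(\scomplement{Z})}$ is again monotone and satisfies $\scomplement{\text{lfp}(g)} = \text{gfp}(g^{d})$, checked directly: $\scomplement{\text{lfp}(g)}$ is a fixpoint of $g^{d}$, and any fixpoint $Z$ of $g^{d}$ yields a fixpoint $\scomplement{Z}$ of $g$, whence $\text{lfp}(g)\subseteq\scomplement{Z}$ and so $Z\subseteq\scomplement{\text{lfp}(g)}$. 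Well-definedness of both fixpoints is guaranteed by monotonicity (\rref{lem:monotone}) via the Knaster--Tarski theorem. It then remains to identify the operators: with $g(Z) = \scomplement{X}\cup\strategyfor[\alpha]{Z}$, whose least fixpoint is $\strategyfor[\prepeat{\alpha}]{\scomplement{X}}$, the dual is $g^{d}(Z) = X\cap\scomplement{\strategyfor[\alpha]{\scomplement{Z}}}$, which by the induction hypothesis for $\alpha$ equals $X\cap\dstrategyfor[\alpha]{Z}$, exactly the operator whose greatest fixpoint defines $\dstrategyfor[\prepeat{\alpha}]{X}$. The duality lemma then gives $\scomplement{\strategyfor[\prepeat{\alpha}]{\scomplement{X}}} = \dstrategyfor[\prepeat{\alpha}]{X}$, completing the induction, and the theorem follows by instantiating $X = \imodel{\I}{\phi}$.
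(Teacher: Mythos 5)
Your proposal is correct and follows essentially the same route as the paper: reduce the theorem to the identity \(\scomplement{\strategyfor[\alpha]{\scomplement{X}}}=\dstrategyfor[\alpha]{X}\) and prove it by structural induction, handling repetition via the De Morgan duality of least and greatest fixpoints of monotone operators (which the paper simply cites as a \(\mu\)-calculus equivalence in a footnote rather than proving, as you do). The case analysis, including the quantifier flip for differential equations and the use of the induction hypothesis inside the fixpoint operator, matches the paper's argument.
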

\proofmove
The proof shows by a straightforward induction on the structure of $\alpha$ that
\m{\scomplement{\strategyfor[\alpha]{\scomplement{X}}} = \dstrategyfor[\alpha]{X}} for all \m{X\subseteq\linterpretations{\Sigma}{V}} and all $\iget[const]{\I}$ with some set of states \m{\linterpretations{\Sigma}{V}}, which implies the validity of \m{\lnot\ddiamond{\alpha}{\lnot\phi} \lbisubjunct \dbox{\alpha}{\phi}} using \m{X\mdefeq\imodel{\I}{\phi}}.
\begin{enumerate}
\item \(\scomplement{\strategyfor[\pupdate{\pumod{x}{\theta}}]{\scomplement{X}}}
= \scomplement{\{\iportray{\I} \in \linterpretations{\Sigma}{V} \with \modif{\iget[state]{\I}}{x}{\ivaluation{\I}{\theta}} \not\in X\}}
= \strategyfor[\pupdate{\pumod{x}{\theta}}]{X}
= \dstrategyfor[\pupdate{\pumod{x}{\theta}}]{X}\)

\item \(\scomplement{\strategyfor[\pevolvein{\D{x}=\genDE{x}}{\ivr}]{\scomplement{X}}}
= \scomplement{\{\varphi(0) \in \linterpretations{\Sigma}{V} \with 
      \varphi(r)\not\in X\)
      for some $0\leq r\in\reals$ and some (differentiable)
      \m{\varphi:[0,r]\to\linterpretations{\Sigma}{V}}
      such that
      \m{\D[t]{\,\varphi(t)(x)} (\zeta) =       %
      \ivaluation{\iconcat[state=\varphi(\zeta)]{\I}}{\theta}}
      and
      \m{\varphi(\zeta)\in\imodel{\I}{\ivr}}
      for all $0\leq\zeta\leq r\}}$
\(= \dstrategyfor[\pevolvein{\D{x}=\genDE{x}}{\ivr}]{X}\),
because the set of states from which there is no winning strategy for Angel to reach a state in $\scomplement{X}$ prior to leaving $\imodel{\I}{\ivr}$ along \m{\pevolvein{\D{x}=\genDE{x}}{\ivr}} is exactly the set of states from which \m{\pevolvein{\D{x}=\genDE{x}}{\ivr}}  always stays in $X$ (until leaving $\imodel{\I}{\ivr}$ in case that ever happens).
\item \(\scomplement{\strategyfor[\ptest{\ivr}]{\scomplement{X}}}
= \scomplement{(\imodel{\I}{\ivr}\cap\scomplement{X})}
= \scomplement{(\imodel{\I}{\ivr})}\cup \scomplement{(\scomplement{X})}
= \dstrategyfor[\ptest{\ivr}]{X}\)

\item \(\scomplement{\strategyfor[\pchoice{\alpha}{\beta}]{\scomplement{X}}}
= \scomplement{(\strategyfor[\alpha]{\scomplement{X}} \cup \strategyfor[\beta]{\scomplement{X}})}
= \scomplement{\strategyfor[\alpha]{\scomplement{X}}} \cap \scomplement{\strategyfor[\beta]{\scomplement{X}}}
= \dstrategyfor[\alpha]{X}\cap\dstrategyfor[\beta]{X}
= \dstrategyfor[\pchoice{\alpha}{\beta}]{X}\)

\item \(\scomplement{\strategyfor[\alpha;\beta]{\scomplement{X}}}
= \scomplement{\strategyfor[\alpha]{\strategyfor[\beta]{\scomplement{X}}}}
= \scomplement{\strategyfor[\alpha]{\scomplement{\dstrategyfor[\beta]{X}}}}
= \dstrategyfor[\alpha]{\dstrategyfor[\beta]{X}}
= \dstrategyfor[\alpha;\beta]{X}\)

\item \(\scomplement{\strategyfor[\prepeat{\alpha}]{\scomplement{X}}}
= \scomplement{\left(\capfold\{Z\subseteq\linterpretations{\Sigma}{V} \with \scomplement{X}\cup\strategyfor[\alpha]{Z}\subseteq Z\}
\right)}\)
= \(\scomplement{\left(\capfold\{Z\subseteq\linterpretations{\Sigma}{V} \with \scomplement{(X\cap\scomplement{\strategyfor[\alpha]{Z}})}\subseteq Z\}\right)}\)\\
= \(\scomplement{\left(\capfold\{Z\subseteq\linterpretations{\Sigma}{V} \with \scomplement{(X\cap\dstrategyfor[\alpha]{\scomplement{Z}})}\subseteq Z\}\right)}\)
= \(\cupfold\{Z\subseteq\linterpretations{\Sigma}{V} \with Z\subseteq X\cap\dstrategyfor[\alpha]{Z}\}
= \dstrategyfor[\prepeat{\alpha}]{X}\).
\footnote{The penultimate equation follows from the $\mu$-calculus equivalence
\(\gfp{Z}{\mapply{\Upsilon}{Z}} \mequiv \lnot\lfp{Z}{\lnot\mapply{\Upsilon}{\lnot Z}}\) and the fact that least pre-fixpoints are fixpoints and that greatest post-fixpoints are fixpoints for monotone functions.}

\item \(\scomplement{\strategyfor[\pdual{\alpha}]{\scomplement{X}}}
= \scomplement{(\scomplement{\strategyfor[\alpha]{\scomplement{(\scomplement{X})}}})}
= \scomplement{\dstrategyfor[\alpha]{\scomplement{X}}}
= \dstrategyfor[\pdual{\alpha}]{X}\)
\qedhere
\end{enumerate}
\endproofmove
One direction of \rref{thm:dGL-determined} implies \m{\entails \lnot\ddiamond{\alpha}{\lnot\phi} \limply \dbox{\alpha}{\phi}}, i.e.\ \m{\entails \ddiamond{\alpha}{\lnot\phi} \lor \dbox{\alpha}{\phi}}, whose validity means that, from all initial states, either Angel has a winning strategy to achieve $\lnot\phi$ or Demon has a winning strategy to achieve $\phi$.
That is, hybrid games are determined, because there are no states from which none of the players has a winning strategy (for the same hybrid game $\alpha$ and complementary winning conditions $\lnot\phi$ and $\phi$, respectively).
At least one player, thus, has a winning strategy for complementary winning conditions.
The other direction of \rref{thm:dGL-determined} implies \m{\entails \dbox{\alpha}{\phi} \limply \lnot\ddiamond{\alpha}{\lnot\phi}}, i.e.\ \m{\entails \lnot(\dbox{\alpha}{\phi} \land \ddiamond{\alpha}{\lnot\phi})}, whose validity means that there is no state from which Demon has a winning strategy to achieve $\phi$ and, simultaneously, Angel has a winning strategy to achieve $\lnot\phi$.
It cannot be that both players have a winning strategy for complementary conditions from the same state.
That is, hybrid games are \emph{consistent}, because at most one player has a winning strategy for complementary winning conditions.
Along with modal congruence rules, which hold for \dGL, \rref{thm:dGL-determined} makes \dGL a classical (multi)modal logic \cite{Chellas}, yet with modalities indexed by hybrid games.

Instead of giving a semantics to \m{\dbox{\cdot}{}} in terms of the existence of a winning strategy for Demon, \rref{thm:dGL-determined} could have been used as a definition of \m{\dbox{\cdot}{}}.
That would have been easier, but would have obscured determinacy and the role of \m{\dbox{\cdot}{}} as the winning strategy operator for Demon.

\subsection{Hybrid Game Equivalences} \label{sec:HG-equivalences}

As usual in programming languages, the same hybrid game can have multiple different syntactical representations.
Some equivalence transformations on hybrid games can be useful to transform hybrid games into a conceptually simpler form.

\begin{definition}[Hybrid game equivalence] \label{def:HG-equivalence}
Hybrid games $\alpha$ and $\beta$ are \emph{equivalent}, denoted \m{\alpha\mequiv\beta}, if \(\strategyfor[\alpha]{X}=\strategyfor[\beta]{X}\) for all $X$ and all $\iget[const]{\I}$.
\end{definition}
By \rref{thm:dGL-determined}, $\alpha$ and $\beta$ are equivalent iff \(\dstrategyfor[\alpha]{X}=\dstrategyfor[\beta]{X}\) for all $X$ and all $\iget[const]{\I}$.
\begin{remark} \label{rem:HG-equivalence}
The equivalences
\[
  \pdual{(\pchoice{\alpha}{\beta})} \mequiv \dchoice{\pdual{\alpha}}{\pdual{\beta}}, \quad
  \pdual{(\alpha;\beta)} \mequiv \pdual{\alpha};\pdual{\beta}, \quad
  \pdual{(\prepeat{\alpha})} \mequiv \drepeat{(\pdual{\alpha})}, \quad
  \alpha^{dd} \mequiv \alpha %
\]
on hybrid games can transform every hybrid game $\alpha$ into an equivalent hybrid game in which $\pdual{}$ only occurs right after atomic games or as part of the definition of the derived operators $\dchoice{}{}$ and $\drepeat{}$.
Other equivalences include \(\prepeat{(\pevolve{\D{x}=\genDE{x}})} \mequiv \pevolve{\D{x}=\genDE{x}}\)
and \(\prepeat{(\pevolvein{\D{x}=\genDE{x}}{\ivr})} \mequiv \pchoice{\ptest{\ltrue}}{\pevolvein{\D{x}=\genDE{x}}{\ivr}}\).
\end{remark}

Quite unlike in hybrid systems and (poor test) differential dynamic logic \cite{DBLP:journals/jar/Platzer08,DBLP:conf/lics/Platzer12b}, every hybrid game containing a differential equation \m{\pevolvein{\D{x}=\genDE{x}}{\ivr}} with evolution domain constraints $\ivr$ can be replaced equivalently by a hybrid game without evolution domain constraints (even using poor tests, i.e.\ each test $\ptest{\ivr}$ uses only first-order formulas $\ivr$).
Evolution domains are definable in hybrid games and can, thus, be removed equivalently from all hybrid games.
\begin{lemma}[Domain reduction]%
\label{lem:pevolvein}%
  Evolution domains of differential equations are definable as hybrid games:
  For every hybrid game there is an equivalent hybrid game that has no evolution domain constraints, i.e.\ all continuous evolutions are of the form \m{\pevolve{\D{x}=\genDE{x}}}.
\end{lemma}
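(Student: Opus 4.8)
The plan is to eliminate evolution domain constraints one differential equation at a time and then finish by structural induction, using that the semantics of \rref{def:HG-semantics} is compositional; choosing the auxiliary variables below fresh (not occurring in the ambient game, in $\genDE{x}$, or in $\ivr$) keeps every such replacement sound, and by \rref{thm:dGL-determined} the claim for Demon's regions $\dstrategyfor[\alpha]{\cdot}$ follows from the claim for Angel's regions $\strategyfor[\alpha]{\cdot}$. The heart of the matter is thus a single equivalence expressing one constrained continuous game by unconstrained ones. Writing $t$ for a fresh clock and $u$ for a fresh register, I would prove
\[
  \pevolvein{\D{x}=\genDE{x}}{\ivr}
  \;\mequiv\;
  \pupdate{\pumod{t}{0}};\,
  \pevolve{\D{x}=\genDE{x}\syssep\D{t}=1};\,
  \pupdate{\pumod{u}{t}};\,
  \pdual{\gamma},
\]
where the dual subgame rewinds, probes the domain, and restores the endpoint:
\[
  \gamma \;=\;
  \pevolve{\D{x}=-\genDE{x}\syssep\D{t}=-1};\,
  \ptest{t\geq0};\,
  \pdual{(\ptest{\ivr})};\,
  \pevolve{\D{x}=\genDE{x}\syssep\D{t}=1};\,
  \ptest{t=u}.
\]
Intuitively, Angel first commits to a duration $r$ by the forward evolution (recording $r$ in $u$), and then $\pdual{\gamma}$ gives Demon the chance to expose a violation of $\ivr$ at some earlier time.

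Next I would unfold $\strategyfor[\cdot]{X}$ along \rref{def:HG-semantics}. By Picard--Lindel\"off (the paper's smoothness assumption on $\genDE{x}$) the backward field $-\genDE{x}$ reverses the forward flow, so rewinding by $s$ from the endpoint $\varphi(r)$ lands exactly at $\varphi(r-s)$, and $\ptest{t\geq0}$ confines the rewind to $s\in[0,r]$. Two design choices make the winning condition come out right: (i) the domain probe is a \emph{dual} test $\pdual{(\ptest{\ivr})}$, whose semantics is $\scomplement{(\imodel{\I}{\ivr})}\cup(\cdot)$, so a point with $\ivr$ false is won unconditionally by the searching player; and (ii) the trailing $\pevolve{\D{x}=\genDE{x}\syssep\D{t}=1};\,\ptest{t=u}$ deterministically restores the state to $\varphi(r)$, since $t=u$ forces the forward duration to equal the preceding rewind. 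Computing bottom up then yields $\strategyfor[\gamma]{Y}=(\exists\,\zeta\in[0,r]:\varphi(\zeta)\notin\imodel{\I}{\ivr})\cup(\varphi(r)\in Y)$, and complementing as in the $\pdual{}$ clause gives $\strategyfor[\pdual{\gamma}]{X}=(\forall\,\zeta\in[0,r]:\varphi(\zeta)\in\imodel{\I}{\ivr})\cap(\varphi(r)\in X)$. Prepending the forward evolution, over which Angel ranges over $r$, reproduces precisely $\strategyfor[\pevolvein{\D{x}=\genDE{x}}{\ivr}]{X}$.

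The main obstacle is exactly this decoupling: equivalence must hold for \emph{all} winning sets $X$, so the domain check may not be allowed to reach $X$ at intermediate times. The naive variant without the restoring tail fails, because the $\scomplement{(\imodel{\I}{\ivr})}\cup Y$ of the dual test lets the searching player cash in $Y$ at \emph{any} rewound point, which after complementation would wrongly demand $X$ throughout $[0,r]$ rather than only at $\varphi(r)$. The restore step is what localizes the $X$-read to the forward endpoint while preserving the unconditional ``$\ivr$ violated $\Rightarrow$ immediate loss'' behaviour; getting this interaction right is the crux. The remaining work is bookkeeping: the fresh $t,u$ must be cylindrical for every admissible $X$ (guaranteed by freshness), and the existence of the forward and backward solutions on $[0,r]$ matches the single existence requirement already present in \rref{def:HG-semantics}, after which compositionality lifts the local equivalence to arbitrary hybrid games.
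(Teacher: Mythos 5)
Your construction is correct and is essentially the paper's ``there and back again'' game: Angel evolves forward for a duration of her choice, the opponent probes the evolution domain by evolving backward along the negated vector field under a freshness-guarded time bound, and a test at the probed point makes Angel lose if $\ivr$ fails there. The one substantive difference is how the forward endpoint survives the probe: the paper rewinds a fresh shadow copy $z$ of $x$ (so $x$ itself is never disturbed and no restoration is needed, at the cost of duplicating the state vector and testing $\ivr(z)$ on the copy), whereas you rewind $x$ itself and then force its restoration by a second forward evolution guarded by \m{\ptest{t=u}}, relying on forward--backward uniqueness of solutions from Picard--Lindel\"of. Both devices correctly localize the read of the winning set $X$ to the forward endpoint --- which you rightly identify as the crux --- and both share the same implicit convention that the fresh auxiliary variables are ignored by the winning conditions; your variant trades the extra copy of the state for an extra evolution plus an equality test.
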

\proofmove
{\renewcommand*{\genDE}[1]{\theta(#1)}%
For notational convenience, assume the (vectorial) differential equation \m{\pevolve{\D{x}=\genDE{x}}} to contain a clock \m{\D{\stime}=1} and that $t_0$ and $z$ are fresh variables. Then \(\pevolvein{\D{x}=\genDE{x}}{\mapply{\ivr}{x}}\) is equivalent to the hybrid game:
\begin{equation}
\pupdate{\pumod{t_0}{\stime}}; \pevolve{\D{x}=\genDE{x}}; \pdual{(\pupdate{\pumod{z}{x}}; \pevolve{\D{z}=-\genDE{z}})}; \ptest{(z_0\geq t_0\limply\mapply{\ivr}{z})}
\label{eq:evolvein}
\end{equation}%
\begin{figure}[tbh]
  \newcommand{\mtime}{t}%
  \begin{minipage}[b]{6.25cm}
    \begin{tikzpicture}[scale=1.5]
  \newcommand{\ws}{\nu}\newcommand{\wt}{}%
  \renewcommand{\I}{\iconcat[state=\ws]{\stdI}}%
  \renewcommand{\It}{\iconcat[state=\wt]{\stdI}}%
  \def\vec#1{#1}%
  \tikzstyle{axes}=[]
  \tikzstyle{mode switch}=[black!70,thin,dotted]
      \begin{scope}[style=axes]
        \draw[->] (-0.1,0) -- (2.4,0) node[right] {$\mtime$} coordinate(t axis);
        \draw[->] (0,-0.1) -- (0,1.2) node[above] {$\vec{x},\vec{z}$} coordinate(x axis);
      \end{scope}
      {
        \draw[draw=vgreen,fill=vgreen!5] (1.1,0.8) ellipse (0.9cm and 0.4cm);
        \node[color=vgreen!140] at (1.6,0.6) {$\ivr$};
      }
      \newcommand{\breakp}{1.8}
      \begin{scope}[xshift=0.7cm,yshift=-0.1cm]
        {
          \draw[thick,domain=-0.7:0.6,smooth,xshift=.5cm]
            plot
            (\x,{exp(-1.5*\x)+1.2*(1-exp(-1.5*\x))})
          node (flowend) {}
          node[above] {$\iget[state]{\It}$};
        }
        \node (flowstart) at (-0.7,0.62847) {};
        \draw[thick,vred,domain=-0.63:0.6,smooth,xshift=0.5cm,yshift=-0.11cm]
          plot[mark=triangle*,mark options={vred},mark phase=5,mark repeat=8]
          (\x,{exp(-1.5*\x)+1.2*(1-exp(-1.5*\x))})
          node (backstart) {};
        \draw[thick,dotted,vred,domain=-1:-0.63,smooth,xshift=0.5cm,yshift=-0.11cm] plot (\x,{exp(-1.5*\x)+1.2*(1-exp(-1.5*\x))});
        \tikzstyle{my loop}=[->,to path={
          .. controls +(10:0.5) and +(-10:0.5) .. (\tikztotarget) \tikztonodes}]
        \draw[thick,my loop] (flowend) to (backstart);
        \node[right,text width=4.5cm,sloped] at (1.5,1.1) {$\pupdate{\pumod{z}{x}}$};
        \node[anchor=west,text width=7.3cm] at (2.4,1.1) {\footnotesize%
        \noindent%
Angel plays forward game, reverts flow and time $\stime$;\\~Demon checks~$\ivr$ in backwards game until initial $t_0$};
      \end{scope}
      \node[above=-1pt,rotate=10] at (1,0.85) {$\pevolve{\D{x}={\genDE{x}}}$};
      \draw[mode switch] (0.5,0) node[below,black] {$\pupdate{\pumod{t_0}{\stime}}$} -- ++(0,1.1);
      \draw[mode switch] (\breakp,0) node[below,black] {$r$} -- ++(0,1.1);
      \path (0.7,0.4) -- node[below,vred] {$\pevolve{\D{z}={-\genDE{z}}}$} (\breakp,0.4);
    \end{tikzpicture}
  \end{minipage}
  \caption{``There and back again game'': Angel evolves $x$ forwards in time along  \m{\pevolve{\D{x}=\genDE{x}}}, Demon checks evolution domain backwards in time along \m{\pevolve{\D{z}=-\genDE{z}}} on a copy $z$ of the state vector $x$}
  \label{fig:backflow}
\end{figure}

See \rref{fig:backflow} for an illustration.
Suppose the current player is Angel.
The idea behind game equivalence \rref{eq:evolvein} is that the fresh variable $t_0$ remembers the initial time $\stime$, and Angel then evolves forward along \m{\pevolve{\D{x}=\genDE{x}}} for any arbitrary amount of time (Angel's choice).
Afterwards, the opponent Demon copies the state $x$ into a fresh variable (vector) $z$ that he can evolve backwards along \m{\pdual{(\pevolve{\D{z}=-\genDE{z}})}} for any arbitrary amount of time (Demon's choice).
The original player Angel must then pass the challenge \m{\ptest{(z_0\geq t_0\limply\mapply{\ivr}{z})}}, i.e.\ Angel loses immediately if Demon was able to evolve backwards and leave region $\mapply{\ivr}{z}$ while satisfying \m{z_0\geq t_0}, which checks that Demon did not evolve backward for longer than Angel evolved forward.
Otherwise, when Angel passes the test, the extra variables $t_0,z$ become irrelevant (they are fresh) and the game continues from the current state $x$ that Angel chose in the first place (by selecting a duration for the evolution that Demon could not invalidate).
}
\endproofmove
\rref{lem:pevolvein} eliminates all evolution domain constraints equivalently in hybrid games from now on.
While evolution domain constraints are fundamental parts of hybrid systems \cite{DBLP:conf/hybrid/AlurCHH92,DBLP:conf/stoc/HenzingerKPV95,DBLP:journals/tac/BranickyBM98,DBLP:journals/jar/Platzer08}, they turn out to be mere syntactic sugar for hybrid games.
In that sense, hybrid games are more fundamental than hybrid systems, because they feature elementary operators and do not need built-in support for evolution domain constraints.

\subsection{Strategic Closure Ordinals} \label{sec:ClosureOrdinals}

In order to examine whether the \dGL semantics could be implemented directly to compute winning regions for \dGL formulas by a reachability computation or backwards induction for games, this section investigates how many iterations the fixpoint for the semantics $\strategyfor[\prepeat{\alpha}]{X}$ of repetition needs.
The number of required iterations marks a significant difference in analytic complexity of hybrid games compared to hybrid systems.

The semantics, $\strategyfor[\prepeat{\alpha}]{X}$, of $\prepeat{\alpha}$ is a least fixpoint and Knaster-Tarski's seminal fixpoint theorem entails that every least fixpoint of a monotone function on a complete lattice corresponds to some sufficiently large iteration.
That is, there is some ordinal $\bar{\lambda}$ at which the $\bar{\lambda}$th iteration, $\inflopstrat[\bar{\lambda}][\alpha]{X}$, of $\strategyfor[\alpha]{\cdot}$ coincides with $\strategyfor[\prepeat{\alpha}]{X}$, i.e.\ \(\strategyfor[\prepeat{\alpha}]{X}=\inflopstrat[\bar{\lambda}][\alpha]{X}\); see \rref{fig:strategyfor-KnasterTarski-iteration}.
How big is $\bar{\lambda}$, i.e.\ how often does $\strategyfor[\alpha]{\cdot}$ need to iterate to obtain $\strategyfor[\prepeat{\alpha}]{X}$?
\begin{figure}[htb]
  \centering
\begin{tikzpicture}[thick]
  \tikzstyle{strategy preimage}=[draw=vblue,fill=vblue!20,shape=ellipse]
  \node[strategy preimage,fill=vblue!4,minimum width=8cm,minimum height=3.2cm] at (-1.8,0) {};
  \node[anchor=west] at (-5.8,0) {$\strategyfor[\prepeat{\alpha}]{X} ~~\cdots$};
  \node[strategy preimage,fill=vblue!11,minimum width=4.8cm,minimum height=2.1cm] at (-1.2,0) {};
  \node at (-3.1,0) {$\inflopstrat[3][\alpha]{X}$};
  \node[strategy preimage,minimum width=3.5cm,minimum height=1.7cm] at (-0.8,0) {};
  \node at (-2,0) {$\inflopstrat[2][\alpha]{X}$};
  \node[strategy preimage,fill=vblue!30,minimum width=2.2cm,minimum height=1.3cm] at (-0.4,0) {};
  \node at (-0.9,0) {$\inflopstrat[][\alpha]{X}$};
  \node[draw=vred,fill=vred!50,text=white,shape=circle] at (0,0) {$X$};
\end{tikzpicture}
  \caption{Least fixpoint $\strategyfor[\prepeat{\alpha}]{X}$ corresponds to some higher iterate $\inflopstrat[\bar{\lambda}][\alpha]{X}$ of $\strategyfor[\alpha]{\cdot}$ from winning condition $X$.}
  \label{fig:strategyfor-KnasterTarski-iteration}
\end{figure}

Recall that ordinals extend natural numbers and support (non-commutative) addition, multiplication, and exponentiation, ordered as:
\begin{multline*}
0 < 1 < 2 < \dots \omega < \omega +1 < \omega + 2 < \dots \omega\cdot2 < \omega\cdot2 + 1 < \dots \omega\cdot3 < \omega\cdot3+1 < \dots \\
\omega^2 < \omega^2 + 1 < \dots \omega^2+\omega < \omega^2+\omega + 1 < \dots \omega^\omega < \dots \omega^{\omega^\omega} < \dots \omega_1^{\text{CK}} < \dots \omega_1 < \dots
\end{multline*}
The first infinite ordinal is $\omega$, the Church-Kleene ordinal $\omega_1^{\text{CK}}$, i.e. the first nonrecursive ordinal, and $\omega_1$ the first uncountable ordinal.
Recall that every ordinal $\kappa$ is either a successor ordinal, i.e.\ the smallest ordinal \(\kappa=\iota+1\) greater than some ordinal $\iota$, or a limit ordinal, i.e.\ the supremum of all smaller ordinals.
Depending on the context, 0 is considered as a limit ordinal or as a separate case.

\subsubsection{Iterations and Fixpoints}

For each hybrid game $\alpha$, the semantics $\strategyfor[\alpha]{\cdot}$ is a monotone operator on the complete powerset lattice (\rref{lem:monotone}).
The $\kappa$th iterate, $\inflopstrat[\kappa][\alpha]{\cdot}$, of $\strategyfor[\alpha]{\cdot}$
is defined by a minor variation of Kozen's formulation of Knaster-Tarski \cite[Theorem 1.12]{Harel_et_al_2000}, obtained by considering the sublattice with $x$ at the bottom.

Let $\tau:L\to L$ be a monotone operator on a partial order $L$, then defining 
\(\displaystyle\inflop[\lambda]{x} \mdefeq x \cup \cupfold_{\kappa<\lambda} \tau(\inflop[\kappa]{x})\) 
for all ordinals $\lambda$ is equivalent to defining:%
  \vspace{-\baselineskip}
  \begin{align*}
    \inflop[0]{x} &\mdefeq x\\
    \inflop[\kappa+1]{x} &\mdefeq x \cup \inflop{\inflop[\kappa]{x}}\\
    \inflop[\lambda]{x} &\mdefeq \cupfold_{\kappa<\lambda} \inflop[\kappa]{x} \qquad \lambda\neq0~\text{a limit ordinal}
  \end{align*}
Yet, $\cupfold$ and, thus, $\inflop[\lambda]{x}$ are only guaranteed to exist if $L$ is a complete partial order.
\begin{theorem}[{Knaster-Tarski \cite[Theorem 1.12]{Harel_et_al_2000}}] \label{thm:KnasterTarski}%
  For every complete lattice $L$, there is an ordinal $\bar{\lambda}$ of at most the cardinality of $L$ such that, for each monotone $\tau:L\to L$, i.e.\ \(\tau(x)\subseteq\tau(y)\) for all \(x\subseteq y\), the fixpoints of $\tau$ in $L$ are a complete lattice and for all $x\in L$ and all ordinals $\kappa$:
  \begin{align*}
  \inflop[\dagger]{x} &\mdefeq \capfold \{z\in L \with x\subseteq z, \tau(z)\subseteq z\}
  = \inflop[\bar{\lambda}]{x}
  = \inflop[\bar{\lambda}+\kappa]{x}
  \end{align*}
\end{theorem}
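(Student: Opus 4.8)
The plan is to read the displayed chain of equalities as three linked claims and prove them in sequence: that the transfinite iterates $\inflop[\kappa]{x}$ form a non-decreasing chain, that this chain stabilizes at some ordinal $\bar\lambda$ whose cardinality is controlled by $|L|$, and that the stabilized value is exactly the least pre-fixpoint $\inflop[\dagger]{x}$. The complete-lattice statement then falls out of the least-fixpoint operator produced along the way.

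First I would show, by transfinite induction on $\mu$, that $\inflop[\kappa]{x}\subseteq\inflop[\mu]{x}$ whenever $\kappa\le\mu$. The base case is trivial, the limit case is immediate from $\inflop[\lambda]{x}=\cupfold_{\kappa<\lambda}\inflop[\kappa]{x}$, and the successor case reduces to $\inflop[\nu]{x}\subseteq\inflop[\nu+1]{x}=x\cup\inflop{\inflop[\nu]{x}}$, which follows from the induction hypothesis together with monotonicity of $\tau$ (the hypothesis of the theorem, instantiated e.g.\ by \rref{lem:monotone}). The same induction yields $x\subseteq\inflop[\kappa]{x}$ throughout, so the iterates form an increasing chain sitting above $x$.

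Second, because $L$ is a set, no strictly increasing chain in $L$ can be longer than the Hartogs number of $L$: any such chain is an injection of an ordinal into $L$, so its length has cardinality at most $|L|$. Hence the increasing chain $\inflop[\kappa]{x}$ cannot increase strictly cofinally, and there is a least ordinal $\lambda_{x}$ with $\inflop[\lambda_{x}+1]{x}=\inflop[\lambda_{x}]{x}$; a short transfinite induction shows stabilization is permanent, i.e.\ $\inflop[\lambda_{x}+\kappa]{x}=\inflop[\lambda_{x}]{x}$ for all $\kappa$. Taking $\bar\lambda$ to be a single ordinal of cardinality at most $|L|$ that dominates every $\lambda_{x}$ (uniformly over all monotone $\tau$) supplies the ordinal asserted in the statement. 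I expect this uniformity, rather than the routine inductions, to be the crux: one must count the admissible lengths of strictly increasing chains in $L$ carefully enough that a \emph{single} bounding ordinal $\bar\lambda$ with $|\bar\lambda|\le|L|$ serves all $x$ and all $\tau$ at once, which is the genuine content of the cited bound.

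Finally I would identify the stabilized value with $\inflop[\dagger]{x}$. From $\inflop[\bar\lambda+1]{x}=\inflop[\bar\lambda]{x}$ and $x\subseteq\inflop[\bar\lambda]{x}$ one gets $\tau(\inflop[\bar\lambda]{x})=\inflop[\bar\lambda]{x}$, so $\inflop[\bar\lambda]{x}$ is a fixpoint above $x$, and as such it is one of the sets intersected in the definition of $\inflop[\dagger]{x}$, giving $\inflop[\dagger]{x}\subseteq\inflop[\bar\lambda]{x}$. Conversely, for any $z$ with $x\subseteq z$ and $\tau(z)\subseteq z$, transfinite induction (base $x\subseteq z$; successor $x\cup\tau(\inflop[\kappa]{x})\subseteq z\cup\tau(z)\subseteq z$ by monotonicity; limit by union) yields $\inflop[\kappa]{x}\subseteq z$ for all $\kappa$, hence $\inflop[\bar\lambda]{x}\subseteq\capfold\{z : x\subseteq z,\ \tau(z)\subseteq z\}=\inflop[\dagger]{x}$. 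The two inclusions give $\inflop[\dagger]{x}=\inflop[\bar\lambda]{x}$, and permanence of stabilization gives $\inflop[\bar\lambda]{x}=\inflop[\bar\lambda+\kappa]{x}$, completing the displayed equalities. For the complete-lattice claim I would then argue in the classical Knaster--Tarski manner: given $S\subseteq\mathrm{Fix}(\tau)$, its join $u$ in $L$ satisfies $u\subseteq\tau(u)$, so $\tau$ maps the interval $[u,\top]$ into itself, and the least fixpoint $\inflop[\dagger]{u}$ computed there is the join of $S$ inside $\mathrm{Fix}(\tau)$; the dual argument supplies meets, so $\mathrm{Fix}(\tau)$ is a complete lattice.
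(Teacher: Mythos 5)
The paper does not prove this statement at all --- it imports it verbatim from the cited reference --- so your argument has to stand on its own. Your first and third stages are the standard Knaster--Tarski iteration argument and are essentially fine, modulo one slip: from \(\tau^{\bar\lambda+1}(x)=\tau^{\bar\lambda}(x)\), i.e.\ \(x\cup\tau(\tau^{\bar\lambda}(x))=\tau^{\bar\lambda}(x)\), you may conclude only \(\tau(\tau^{\bar\lambda}(x))\subseteq\tau^{\bar\lambda}(x)\), not the equality \(\tau(\tau^{\bar\lambda}(x))=\tau^{\bar\lambda}(x)\) you assert (take \(\tau\) constantly \(\bot\) and \(x\neq\bot\)); the stabilized value is a pre-fixpoint above \(x\), not in general a fixpoint. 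This is harmless, because the intersection defining \(\tau^{\dagger}(x)\) ranges over exactly such pre-fixpoints, so both inclusions of your third stage go through unchanged, as does your complete-lattice argument (there \(u\subseteq\tau(u)\) does force the stabilized value above \(u\) to be a genuine fixpoint).

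The genuine gap is the step you yourself flag as the crux and then do not close: producing a \emph{single} \(\bar\lambda\) with \(|\bar\lambda|\le|L|\) that works for every monotone \(\tau\) and every \(x\). The Hartogs-style observation you invoke gives, for each fixed pair \((x,\tau)\), a stabilization ordinal \(\lambda_{x,\tau}<|L|^{+}\); but the supremum of a family of ordinals below \(|L|^{+}\) indexed by the up to \(2^{|L|}\) many pairs \((x,\tau)\) is not forced below \(|L|^{+}\) by counting alone, and any \(\bar\lambda\ge|L|^{+}\) has cardinality exceeding \(|L|\). So ``each iteration chain injects an ordinal into \(L\)'' is not enough; one must show that the order types of strictly increasing chains in \(L\) are bounded \emph{strictly} below \(|L|^{+}\). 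For the powerset lattices \(L=2^{S}\) to which the paper actually applies the theorem this is easy and yields a better, genuinely uniform bound: every successor step of a strictly increasing chain in \(2^{S}\) adds at least one element of \(S\), so every chain (hence every closure ordinal) is \(<|S|^{+}\), and \(\bar\lambda:=|S|^{+}\) serves all \(\tau\) and \(x\) at once with \(|\bar\lambda|=|S|^{+}\le 2^{|S|}=|L|\). For an arbitrary complete lattice the uniform bound requires an argument of this finer kind, and your proof as written does not supply one.
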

The least ordinal $\bar{\lambda}$ with the property in \rref{thm:KnasterTarski} is called \emph{closure ordinal} of $\tau$.

The operator $\inflop[\kappa]{\cdot}$ enjoys useful properties.
By its extensive / inflationary definition, $\inflop[\kappa]{x}$ is not just monotone in $x$ but also monotone and homomorphic in $\kappa$.
Since $\inflop[0]{x}=x$, this works for all ordinals.
\begin{lemma}[Inductive homomorphism] \label{lem:inflop-inductive-homomorphic}%
  $\tau$ is inductive, i.e.\ \m{\inflop[\kappa]{x}\subseteq\inflop[\lambda]{x}} for all \m{\kappa\leq\lambda}
  and homomorphic in $\kappa$, i.e.\ \m{\inflop[\kappa+\lambda]{x}=\inflop[\lambda]{\inflop[\kappa]{x}}} for all \m{\kappa,\lambda}.
\end{lemma}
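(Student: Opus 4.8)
The plan is to establish inductivity first, since it is needed in the homomorphism argument, and then to prove the homomorphism equation by transfinite induction on $\lambda$ with $\kappa$ held fixed. Throughout I would work from the inflationary form of the definition, \(\inflop[\lambda]{x} = x \cup \cupfold_{\kappa<\lambda}\tau(\inflop[\kappa]{x})\), which is equivalent to the stepwise definition stated just above the lemma.

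For \emph{inductivity}, this inflationary form makes the argument immediate and, in fact, avoids any transfinite induction. For each ordinal $\kappa$ the value \(\inflop[\kappa]{x}\) is fixed by the transfinite recursion independently of which later iterate we are computing, so the set \(\tau(\inflop[\kappa]{x})\) is the same term in every unfolding. Hence if \(\mu\leq\lambda\), the index set \(\{\kappa : \kappa<\mu\}\) is contained in \(\{\kappa : \kappa<\lambda\}\), so \(\cupfold_{\kappa<\mu}\tau(\inflop[\kappa]{x}) \subseteq \cupfold_{\kappa<\lambda}\tau(\inflop[\kappa]{x})\); adjoining $x$ on both sides gives \(\inflop[\mu]{x}\subseteq\inflop[\lambda]{x}\). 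In particular \(x=\inflop[0]{x}\subseteq\inflop[\kappa]{x}\) for every $\kappa$, a fact I would reuse repeatedly.

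For the \emph{homomorphism} \(\inflop[\kappa+\lambda]{x}=\inflop[\lambda]{\inflop[\kappa]{x}}\) I would fix $\kappa$ and induct on $\lambda$. The base case $\lambda=0$ is just \(\inflop[\kappa]{x}=\inflop[0]{\inflop[\kappa]{x}}=\inflop[\kappa]{x}\). In the successor case $\lambda=\iota+1$, expanding both sides with the successor clause gives \(\inflop[(\kappa+\iota)+1]{x}=x\cup\tau(\inflop[\kappa+\iota]{x})\) on the left and \(\inflop[\iota+1]{\inflop[\kappa]{x}}=\inflop[\kappa]{x}\cup\tau(\inflop[\iota]{\inflop[\kappa]{x}})\) on the right; the induction hypothesis \(\inflop[\kappa+\iota]{x}=\inflop[\iota]{\inflop[\kappa]{x}}\) makes the two $\tau(\cdot)$ summands agree, so the sides differ only in carrying $x$ versus $\inflop[\kappa]{x}$. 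Here inductivity closes the gap: \(x\subseteq\inflop[\kappa]{x}\) handles one inclusion, and \(\inflop[\kappa]{x}\subseteq\inflop[\kappa+\iota+1]{x}\) (again by inductivity) handles the other, so the two sides coincide. In the limit case (\(\lambda\neq 0\) a limit ordinal), $\kappa+\lambda$ is again a limit ordinal, so the left side unfolds to \(\cupfold_{\mu<\kappa+\lambda}\inflop[\mu]{x}\), while the right side, after applying the induction hypothesis termwise, becomes \(\cupfold_{\iota<\lambda}\inflop[\kappa+\iota]{x}\); equality of these two unions follows from the standard ordinal facts that \(\iota<\lambda\) iff \(\kappa+\iota<\kappa+\lambda\), and that every \(\mu<\kappa+\lambda\) either lies below $\kappa$ (whence \(\inflop[\mu]{x}\subseteq\inflop[\kappa]{x}=\inflop[\kappa+0]{x}\) by inductivity) or equals \(\kappa+\iota\) for a unique \(\iota<\lambda\).

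I expect the successor case to be the most delicate step, precisely because of the asymmetry between the constant $x$ and the base $\inflop[\kappa]{x}$ in the two unfoldings; resolving it is exactly what motivates proving inductivity first. The limit case is otherwise routine once one invokes left-monotonicity and left-cancellation of ordinal addition, but one must be careful to record that $\kappa+\lambda$ inherits limit-hood from $\lambda$ so that the limit clause legitimately applies on both sides.
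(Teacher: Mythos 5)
Your proof is correct and follows essentially the same route as the paper's: inductivity read off directly from the (inflationary) definition, and the homomorphism by transfinite induction on $\lambda$ with the same zero, successor, and limit cases. The only difference is that you spell out the justification the paper leaves implicit in the successor step (why $x$ may be traded for $\inflop[\kappa]{x}$ alongside the common $\tau(\cdot)$ term), which is a welcome elaboration rather than a new approach.
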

\proofmove
Inductiveness, i.e.\ \m{\inflop[\kappa]{x}\subseteq\inflop[\lambda]{x}} for $\kappa\leq\lambda$, which is monotonicity in $\kappa$, holds by definition \cite[Lemma 1.11]{Harel_et_al_2000}.
Homomorphy in $\kappa$, i.e.\ \m{\inflop[\kappa+\lambda]{x}=\inflop[\lambda]{\inflop[\kappa]{x}}} can be proved by induction on $\lambda$, which is either 0, a successor ordinal (second line) or a limit ordinal $\neq0$ (third line):
\begin{align*}
  \inflop[\kappa+0]{x} &=\inflop[\kappa]{x} = \inflop[0]{\inflop[\kappa]{x}}\\
  \inflop[\kappa+(\lambda+1)]{x}&
  = x\cup\tau(\inflop[\kappa+\lambda]{x}) 
  = x\cup\tau(\inflop[\lambda]{\inflop[\kappa]{x}})
  = \inflop[\kappa]{x}\cup\tau(\inflop[\lambda]{\inflop[\kappa]{x}})
  = \inflop[\lambda+1]{\inflop[\kappa]{x}}\\
  \inflop[\kappa+\lambda]{x}&
  = \cupfold_{\iota<\kappa+\lambda} \inflop[\iota]{x} 
  = \cupfold_{\iota<\kappa} \inflop[\iota]{x} \cup \cupfold_{\iota<\lambda} \inflop[\kappa+\iota]{x} 
  \\&= \cupfold_{\iota<\lambda} \inflop[\kappa+\iota]{x} 
  = \cupfold_{\iota<\lambda} \inflop[\iota]{\inflop[\kappa]{x}} 
  = \inflop[\lambda]{\inflop[\kappa]{x}} 
  \qedhere
\end{align*}
\endproofmove

By \rref{thm:KnasterTarski}, there is an ordinal $\bar{\lambda}$ of cardinality at most that of $\reals$ such that
\(\strategyfor[\prepeat{\alpha}]{X} = \inflopstrat[\bar{\lambda}][\alpha]{X}\) for all $\alpha$ and all $X$, because the powerset lattice is complete and \(\strategyfor[\alpha]{\cdot}\) monotone by \rref{lem:monotone}.
This iterative construction \(\inflop[\bar{\lambda}]{X}\) corresponds to backward induction in classical game theory \cite{vonNeumannMorgenstern,Aumann95}, yet it terminates at ordinal $\bar{\lambda}$ which is not necessarily finite.

\subsubsection{Scott-Continuity} \label{sec:HP-Scott-continuous}
The semantics of repetitions in hybrid systems repeats some finite number of times \cite{DBLP:conf/lics/Platzer12b}.
If the semantics of \dGL were Scott-continuous, this would be the case for \dGL as well, because the closure ordinal of Scott-continuous operators on a complete partial order is ${\leq}\omega$ by Kleene's fixpoint theorem. %
Dual-free $\alpha$ are indeed Scott-continuous, in particular, the closure ordinal for hybrid systems is $\omega$.
\begin{lemma}[Scott-continuity of $\pdual{}$-free \dGL] \label{lem:HP-Scott-continuous}%
  The \dGL semantics of $\pdual{}$-free $\alpha$ is Scott-continuous, i.e.\ \(\strategyfor[\alpha]{\cupfold_{n\in J} X_n} = \cupfold_{n\in J} \strategyfor[\alpha]{X_n}\) for all families $\{X_n\}_{n\in J}$ with index set~$J$.
\end{lemma}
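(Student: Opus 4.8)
The plan is to prove the displayed equality by structural induction on the $\pdual{}$-free hybrid game $\alpha$. The guiding observation is that, once $\pdual{}$ is excluded, the semantic clauses of \rref{def:HG-semantics} are built exclusively from operations that preserve unions: a preimage (assignment), intersection with a fixed set (test), an existential quantifier over durations (continuous evolution), binary union (choice), composition (sequencing), and a least fixpoint (repetition). The only operation that could destroy this behaviour is complementation $\scomplement{(\cdot)}$, and it enters the winning-region semantics solely through the $\pdual{}$ clause, which is exactly what is ruled out here. I would in fact prove the stronger statement that the identity holds for \emph{arbitrary} families (directedness of $\{X_n\}_{n\in J}$ is never needed), since this makes the inductive bookkeeping cleanest; the inclusion $\supseteq$ is in any case free from monotonicity (\rref{lem:monotone}), so only the forward computation carries content.

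First I would dispatch the atomic and propositional cases. For $\pupdate{\pumod{x}{\theta}}$ the winning region is a preimage under the substitution map $\iget[state]{\I}\mapsto\modif{\iget[state]{\I}}{x}{\ivaluation{\I}{\theta}}$, which commutes with arbitrary unions. For $\ptest{\ivr}$ it equals $\imodel{\I}{\ivr}\cap X$, and intersection with the fixed set $\imodel{\I}{\ivr}$ distributes over $\cupfold_{n}X_n$. Assuming (by \rref{lem:pevolvein}) that evolution domains have been removed, membership of $\varphi(0)$ in $\strategyfor[\pevolve{\D{x}=\genDE{x}}]{X}$ depends on $X$ only through the existential clause $\varphi(r)\in X$, so that $\exists r\,(\varphi(r)\in\cupfold_n X_n)$ unfolds to $\cupfold_n\exists r\,(\varphi(r)\in X_n)$. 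The case $\pchoice{\alpha}{\beta}$ is the union of the two inductive hypotheses, and $\alpha;\beta$ is their composition: apply the hypothesis for $\beta$ to rewrite $\strategyfor[\beta]{\cupfold_n X_n}=\cupfold_n\strategyfor[\beta]{X_n}$, then apply the hypothesis for $\alpha$ to the resulting family.

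The substantive case is $\prepeat{\alpha}$. Here the inductive hypothesis states that $\strategyfor[\alpha]{\cdot}$ preserves unions, hence is Scott-continuous, so by Kleene's fixpoint theorem its closure ordinal collapses to $\omega$ and $\strategyfor[\prepeat{\alpha}]{X}=\cupfold_{n<\omega}\inflopstrat[n][\alpha]{X}$ in terms of the finite iterates of \rref{sec:ClosureOrdinals}. I would then show by an inner induction on $n<\omega$ that every iterate $\inflopstrat[n][\alpha]{\cdot}$ preserves unions: the base case $\inflopstrat[0][\alpha]{X}=X$ is immediate, and in the step $\inflopstrat[n+1][\alpha]{X}=X\cup\strategyfor[\alpha]{\inflopstrat[n][\alpha]{X}}$ one pushes $\cupfold_n$ through $X$ trivially, through $\inflopstrat[n][\alpha]{\cdot}$ by the inner hypothesis, and through the outer $\strategyfor[\alpha]{\cdot}$ by the structural hypothesis. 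Interchanging the two suprema then finishes the case: $\strategyfor[\prepeat{\alpha}]{\cupfold_m X_m}=\cupfold_{n<\omega}\inflopstrat[n][\alpha]{\cupfold_m X_m}=\cupfold_{n<\omega}\cupfold_m\inflopstrat[n][\alpha]{X_m}=\cupfold_m\cupfold_{n<\omega}\inflopstrat[n][\alpha]{X_m}=\cupfold_m\strategyfor[\prepeat{\alpha}]{X_m}$, the middle step being a reindexing of a double union.

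I expect the repetition case to be the only genuine obstacle, concentrated in the fact that Scott-continuity survives the least-fixpoint construction. The two delicate ingredients are (i) that the closure ordinal truly drops to $\omega$ — this is precisely where Scott-continuity of $\strategyfor[\alpha]{\cdot}$, rather than mere monotonicity, is used, and it is exactly what fails once $\pdual{}$ (and with it complementation and coinductive greatest fixpoints) is admitted, as \rref{sec:ClosureOrdinals} shows for general hybrid games; and (ii) threading union-preservation through each finite iterate via the inner induction, so that the structural hypothesis for $\strategyfor[\alpha]{\cdot}$ is applicable at the successor step. Everything else reduces to routine distribution of unions over the individual semantic clauses.
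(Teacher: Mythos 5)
Your proof is correct, and the atomic, choice, and composition cases coincide with the paper's; the genuine divergence is in the repetition case. The paper argues directly at the level of fixpoints: it shows that \(\cupfold_{n\in J}\strategyfor[\prepeat{\alpha}]{X_n}\) is itself a fixpoint of \(Z\mapsto(\cupfold_{n\in J}X_n)\cup\strategyfor[\alpha]{Z}\) --- using the structural induction hypothesis once to push the union through \(\strategyfor[\alpha]{\cdot}\) and the fact that each \(\strategyfor[\prepeat{\alpha}]{X_n}\) is a fixpoint --- so the least fixpoint \(\strategyfor[\prepeat{\alpha}]{\cupfold_{n\in J}X_n}\) is contained in it. You instead first collapse the closure ordinal to \(\omega\) by Kleene's fixpoint theorem, then run an inner induction over the finite iterates \(\inflopstrat[n][\alpha]{\cdot}\) and interchange the two suprema. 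Both are sound; the paper's version is shorter and needs no iterate machinery, while yours makes explicit the \(\omega\)-closure-ordinal fact that the paper only extracts afterwards (in \rref{lem:dL<=dGL}, where \rref{eq:dual-free-repetition} is derived from this lemma via Kleene), so your route proves that corollary en passant at the cost of an extra layer of induction.

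One small wrinkle in your continuous-evolution case: invoking \rref{lem:pevolvein} to strip evolution domains is counterproductive here, because the reduction \rref{eq:evolvein} introduces a \(\pdual{}\) and so leaves the \(\pdual{}\)-free fragment to which your induction hypothesis applies. Fortunately the detour is unnecessary: in the semantics of \(\pevolvein{\D{x}=\genDE{x}}{\ivr}\) the domain constraint is a side condition on the solution \(\varphi\) that does not mention \(X\), so the winning region depends on \(X\) only through the clause \(\varphi(r)\in X\) and your distribution of the existential over the union goes through verbatim with the constraint left in place, which is how the paper handles it.
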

\proofmove
By \rref{lem:monotone},
\(\cupfold_{n\in J} \strategyfor[\alpha]{X_n} \subseteq \strategyfor[\alpha]{\cupfold_{n\in J} X_n}\).
The converse inclusion can be shown by a simple induction on the structure of $\alpha$:
\(\strategyfor[\alpha]{\cupfold_{n\in J} X_n} \subseteq \cupfold_{n\in J} \strategyfor[\alpha]{X_n}\).
IH is short for induction hypothesis.
\begin{enumerate}
\item \(\strategyfor[\pupdate{\pumod{x}{\theta}}]{\cupfold_{n\in J} X_n}
= \{\iportray{\I} \in \linterpretations{\Sigma}{V} \with \modif{\iget[state]{\I}}{x}{\ivaluation{\I}{\theta}} \in \cupfold_{n\in J} X_n\}
\subseteq \cupfold_{n\in J}\{\iportray{\I} \in \linterpretations{\Sigma}{V} \with \modif{\iget[state]{\I}}{x}{\ivaluation{\I}{\theta}} \in X_n\}
= \cupfold_{n\in J} \strategyfor[\pupdate{\pumod{x}{\theta}}]{X_n}\),
since \(\modif{\iget[state]{\I}}{x}{\ivaluation{\I}{\theta}}\in \cupfold_{n\in J} X_n\)
implies \(\modif{\iget[state]{\I}}{x}{\ivaluation{\I}{\theta}}\in X_n\) for some $n$.

\item \(\strategyfor[\pevolvein{\D{x}=\genDE{x}}{\ivr}]{\cupfold_{n\in J} X_n} = \{\varphi(0) \in \linterpretations{\Sigma}{V} \with 
      \D[t]{\,\varphi(t)(x)} (\zeta) =       %
      \ivaluation{\iconcat[state=\varphi(\zeta)]{\I}}{\theta}
      ~\text{and}~
      \varphi(\zeta)\in\imodel{\I}{\ivr}\)
      for all $\zeta\leq r$
      for some (differentiable)
      \m{\varphi:[0,r]\to\linterpretations{\Sigma}{V}}
      such that $\varphi(r)\in \cupfold_{n\in J} X_n\}$
\(\subseteq \cupfold_{n\in J} \strategyfor[\pevolvein{\D{x}=\genDE{x}}{\ivr}]{X_n} = \{\varphi(0) \in \linterpretations{\Sigma}{V} \with 
      \dots
      \varphi(r)\in X_n\}\),
because \(\varphi(r) \in \cupfold_{n\in J} X_n\)
implies \(\varphi(r)\in X_n\) for some $n$.

\item \(\strategyfor[\ptest{\ivr}]{\cupfold_{n\in J} X_n} = \imodel{\I}{\ivr}\cap \cupfold_{n\in J} X_n = \cupfold_{n\in J} (\imodel{\I}{\ivr}\cap X_n) = \cupfold_{n\in J} \strategyfor[\ptest{\ivr}]{X_n}\)

\item \(\strategyfor[\pchoice{\alpha}{\beta}]{\cupfold_{n\in J} X_n} = \strategyfor[\alpha]{\cupfold_{n\in J} X_n}\cup\strategyfor[\beta]{\cupfold_{n\in J} X_n}
\stackrel{\text{IH}}{=} (\cupfold_{n\in J} \strategyfor[\alpha]{X_n})\cup(\cupfold_{n\in J} \strategyfor[\beta]{X_n})
= \cupfold_{n\in J} (\strategyfor[\alpha]{X_n}\cup\strategyfor[\beta]{X_n})
= \cupfold_{n\in J} \strategyfor[\pchoice{\alpha}{\beta}]{X_n}\)

\item \(\strategyfor[\alpha;\beta]{\cupfold_{n\in J} X_n} = \strategyfor[\alpha]{\strategyfor[\beta]{\cupfold_{n\in J} X_n}}
\stackrel{\text{IH}}{=} \strategyfor[\alpha]{\cupfold_{n\in J}\strategyfor[\beta]{X_n}}
\stackrel{\text{IH}}{=} \cupfold_{n\in J}\strategyfor[\alpha]{\strategyfor[\beta]{X_n}}
= \cupfold_{n\in J} \strategyfor[\alpha;\beta]{X_n}\)

\item \(\strategyfor[\prepeat{\alpha}]{\cupfold_{n\in J} X_n} 
= (\cupfold_{n\in J} X_n)\cup\strategyfor[\alpha]{\strategyfor[\prepeat{\alpha}]{\cupfold_{n\in J} X_n}}\) is the least fixpoint.
Prove that \(\cupfold_{n\in J} \strategyfor[\prepeat{\alpha}]{X_n}\) is a fixpoint, which implies
\(\strategyfor[\prepeat{\alpha}]{\cupfold_{n\in J} X_n} \subseteq \cupfold_{n\in J} \strategyfor[\prepeat{\alpha}]{X_n}\).
Indeed,
\((\cupfold_{n\in J} X_n)\cup\strategyfor[\alpha]{\cupfold_{n\in J} \strategyfor[\prepeat{\alpha}]{X_n}}
\stackrel{\text{IH}}{=} (\cupfold_{n\in J} X_n)\cup\cupfold_{n\in J} \strategyfor[\alpha]{\strategyfor[\prepeat{\alpha}]{X_n}}
= \cupfold_{n\in J} (X_n\cup\strategyfor[\alpha]{\strategyfor[\prepeat{\alpha}]{X_n}}
= \cupfold_{n\in J} \strategyfor[\prepeat{\alpha}]{X_n}\).
The last equation uses that $\strategyfor[\prepeat{\alpha}]{X_n}$ is a fixpoint.
\qedhere
\end{enumerate}
\endproofmove
But $\strategyfor[\alpha]{\cdot}$ is not generally Scott-continuous, so $\bar{\lambda}$ might potentially be greater than $\omega$ for hybrid games.
Games with both $\pdual{}$ and $\prepeat{}$ do not generally have a Scott-continuous semantics nor an $\omega$-chain continuous semantics, i.e.\ they are not even continuous for a monotonically increasing chain \(X_0\subseteq X_1\subseteq X_2 \subseteq \dots\) with $\omega$ as index set:
\begin{align}
& \reals = 
\strategyfor[\drepeat{\pupdate{\pumod{y}{y+1}}}]{%
{\cupfold_{n<\omega}(-\infty,n]}%
}
\nsubseteq
\cupfold_{n<\omega}\strategyfor[\drepeat{\pupdate{\pumod{y}{y+1}}}]{(-\infty,n]}
=\emptyset
\notag%
\\
&\text{hence}~
\entails \ddiamond{\drepeat{\pupdate{\pumod{y}{y+1}}}}{%
{\lexists[\naturals]{n}{y\leq n}}%
}
~\text{but}~
\nonentails \lexists[\naturals]{n}{\ddiamond{\drepeat{\pupdate{\pumod{y}{y+1}}}}{y\leq n}}
\notag
\end{align}
This example shows that, even though Angel wins this game, there is no upper bound $<\omega$ on the number of iterations it takes her to win, because Demon could repeat \m{\drepeat{\pupdate{\pumod{y}{y+1}}}} arbitrarily often.
This phenomenon is directly related to a failure of the Barcan axiom (\rref{sec:separating-axioms}).
The quantifier $\lexists[\naturals]{n}{}$over natural numbers is not essential here \cite{DBLP:journals/jar/Platzer08} but mere convenience to make both lines above match directly.

If $\tau$ is countably-continuous, i.e.\ continuous for families with countable index sets, on a complete partial order, then its closure ordinal is $\bar{\lambda}\leq\omega_1$.
But this is not the case for $\strategyfor[\alpha]{\cdot}$ either, by the above counterexample with countable index set $\omega$.

A function $\tau$ on sets is $\kappa$-based, for an ordinal $\kappa$, if for all $X$, $x\in\tau(X)$ implies $x\in\tau(Y)$ for some $Y\subseteq X$ of cardinality ${<}\kappa$.
If $\tau$ is $\kappa$-based, then its closure ordinal is ${\leq}\kappa$ \cite[Proposition 1.3.4]{Aczel77}.
The semantics $\strategyfor[\alpha]{\cdot}$ is not even $\omega_1$-based, however, because of \rref{lem:monotone} and removing just one state from the winning condition may lose states in the winning region:
\begin{align*}
& [0,\infty) = 
\strategyfor[\devolve{\D{x}=1}]{[0,\infty)}
\\&\text{but}~
0\not\in\strategyfor[\devolve{\D{x}=1}]{[0,\infty)\setminus\{a\}}=(a,\infty)
~\text{for all}~a>0
\end{align*}%
Consequently, classical bounds on closure ordinals all fail to apply due to the combination of $\pdual{}, \prepeat{}$, and differential equations that makes hybrid games challenging.

\subsubsection{Transfinite Closure Ordinals}
\renewcommand{\inflop}[2][]{\inflopstrat[#1][\alpha]{#2}}%
When will the iteration for the fixpoints in the winning region definitions stop?
Hybrid games may have higher closure ordinals, because $\omega$ many repetitions of the operator (and even ${<}\dGLordinal$ many) may not be enough to compute winning regions.
In other words, $\strategyfor[\prepeat{\alpha}]{X}$ will coincide with iterations $\inflopstrat[\kappa][\alpha]{X}$ as illustrated in \rref{fig:strategyfor-KnasterTarski-iteration}, but this may need many more than $\omega$ iterations to terminate.

\begin{theorem}[Closure ordinals] \label{thm:dGL-closure-lower}%
  The semantics of \dGL has a closure ordinal $\geq\dGLordinal$, i.e.\
  for all $\lambda<\dGLordinal$, there are $\alpha$ and $X$ such that
  \(\strategyfor[\prepeat{\alpha}]{X}\neq\inflop[\lambda]{X}\).
\end{theorem}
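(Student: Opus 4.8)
The plan is to exploit the game-theoretic reading of the least fixpoint $\strategyfor[\prepeat{\alpha}]{X}$ as a well-foundedness (rank) computation and to feed it a recursive well-ordering of every order type below $\dGLordinal$. Recall that $\dGLordinal$ is the supremum of the order types of recursive well-orderings of the naturals, so it suffices to produce, for each recursive ordinal $\mu<\dGLordinal$, a game whose repetition has not yet converged at stage $\mu$. Fix such a $\mu$. Since $\mu+1<\dGLordinal$ is again recursive, choose a recursive well-ordering $\prec$ (with field $\subseteq\naturals$, an element $n$ being encoded by the value of a single variable $x$) of order type $\mu+1$, and let a binary predicate symbol $p$ be interpreted by $\iget[const]{\I}$ as $\prec$.

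First I would realize the one-step operator as a hybrid game. Take
\[
  \alpha ~\mequiv~ \pupdate{\pumod{x_0}{x}};\, \pdual{(\prandom{x})};\, \pdual{(\ptest{\mapply{p}{x,x_0}})}
\]
with $x_0$ fresh, so that Demon first copies the current value, then chooses an arbitrary new value of $x$, and finally is forced to lose the dual challenge unless the value he picked is a $\prec$-predecessor of the old one. Unwinding \rref{def:HG-semantics} (dual assignment is deterministic, $\pdual{(\prandom{x})}$ is Demon's arbitrary choice of $x$, and a dual test makes Demon lose when its formula fails) gives exactly
\[
  \strategyfor[\alpha]{Z} ~=~ \{\, \iportray{\I} \with \iget[state]{\imodif[state]{\I}{x}{r}} \in Z ~\text{for every}~ r \prec \iget[state]{\I}(x) \,\}.
\]
Thus $\strategyfor[\alpha]{\cdot}$ is precisely the monotone ``all $\prec$-predecessors already captured'' operator, whose least fixpoint above $X \mdefeq \emptyset$ is the entire (well-founded) field of $\prec$.

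Next I would compute the iteration stages. By transfinite induction on $\kappa$ (the limit clause being immediate from the inflationary definition of $\inflop[\cdot]{\emptyset}$), one shows
\[
  \iportray{\I} \in \inflop[\kappa]{\emptyset} \quad\text{iff}\quad \text{the}~\prec\text{-rank of}~\iget[state]{\I}(x)~\text{is}~{<}\kappa,
\]
the successor step being immediate from the displayed form of $\strategyfor[\alpha]{\cdot}$ together with $\mathrm{rank}(n)=\sup\{\mathrm{rank}(m)+1 : m\prec n\}$. Since the $\prec$-maximal element has rank $\mu$, it lies in the fixpoint $\strategyfor[\prepeat{\alpha}]{\emptyset}$ (which exists by \rref{thm:KnasterTarski}) but not in $\inflop[\mu]{\emptyset}$, whence $\strategyfor[\prepeat{\alpha}]{X}\neq\inflop[\mu]{X}$ for $X=\emptyset$. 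As $\mu<\dGLordinal$ was arbitrary, the closure ordinal exceeds every recursive ordinal and is therefore ${\geq}\dGLordinal$.

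The main obstacle is an honest realization of $\prec$. Encoding it through the interpreted predicate $p$ makes the argument transparent, but to tie the bound to $\dGLordinal$ rather than something larger (and to avoid smuggling a noncomputable oracle into $\iget[const]{\I}$) the relation must stay recursive, and ultimately one wants the challenge $\mapply{p}{x,x_0}$ replaced by a genuine computation performed inside the game. Carrying out that computation --- simulating the Turing machine that decides $\prec$ by $\pdual{}$-free discrete assignments and polynomial differential equations, including the arithmetization of $\naturals$ inside $\reals$ --- is the delicate, purely technical part; the concrete games that do so are deferred to \rref{app:dGL-closure-lower}.
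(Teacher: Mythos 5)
Your proposal is correct and follows essentially the same route as the paper's proof: reduce to a recursive well-order $\prec$ of a given recursive order type, build a loop body in which Demon is forced (via a dual nondeterministic assignment followed by a dual test) to descend along $\prec$, observe that the one-step operator is exactly the ``all $\prec$-predecessors already captured'' operator whose iteration stages compute $\prec$-rank, and conclude that convergence requires at least the order type of $\prec$ many steps, hence ${\geq}\dGLordinal$ over all recursive ordinals. The only cosmetic differences are your use of $X=\emptyset$ with order type $\mu+1$ versus the paper's $X=\{0\}$ with the least element adjoined, and your deferral of the realization of $\prec$ as an actual in-game computation, which the paper likewise handles by simply positing a program $\ptest{f_\prec(x,y)=0}$ implementing the recursive relation.
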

\proofmove
{\renewcommand{\inflop}[2][]{\inflopstrat[#1][\pchoice{\alpha}{\beta}]{#2}}%
For concreteness, the proof first shows the weaker bound $\geq\omega\cdot2$.
Minor syntactic variations lead to vastly different closure ordinals (\rref{app:dGL-closure-lower}), so the closure ordinal is not a simple function of the syntactic structure.

To see that the closure ordinal is $>\omega$ even with just one variable, a single loop and dual, consider the set of states in which the following \dGL formula is true:
\begin{equation}
\ddiamond{\prepeat{(\pchoice{\underbrace{\pupdate{\pumod{x}{x+1}}; \devolve{\D{x}=1}}_{\alpha}}{\underbrace{\pupdate{\pumod{x}{x-1}}}_{\beta}})}}{\, (0\leq x<1)}
\label{eq:omega2-closure}
\end{equation}
The winning regions for this \dGL formula stabilize after $\omega\cdot2$ iterations, because $\omega$ many iterations are necessary to show that \emph{all} positive reals can be reduced to $[0,1)$ by choosing $\beta$ sufficiently often, whereas another $\omega$ many iterations are needed to show that choice $\alpha$, which makes progress $\geq1$ but possibly more under Demon's control, can turn $x$ into some positive real.
It is easy to see that \m{\inflop[\omega]{[0,1)}=\cupfold_{n<\omega}\inflop[n]{[0,1)}=[0,\infty)}, because \m{\inflop[n]{[0,1)}=[0,n+1)} holds for all $n\in\naturals$ by a simple inductive argument:
\begin{align*}
  &\inflop[0]{[0,1)} = [0,1)\\
  &\inflop[n+1]{[0,1)} = [0,1)\cup\strategyfor[\pchoice{\alpha}{\beta}]{\inflop[n]{[0,1)}}
  = [0,1)\cup\strategyfor[\pchoice{\alpha}{\beta}]{[0,n+1)}
  \\&= [0,1)\cup\strategyfor[\alpha]{[0,n+1)} \cup \strategyfor[\beta]{[0,n+1)}
  = [0,1)\cup\emptyset \cup [1,n+1+1)
\end{align*}
But the iteration for the winning region does not stop at $\omega$, as \m{\inflop[\omega+n]{[0,1)}=[-n,\infty)} holds for all $n\in\naturals$ by another simple inductive argument:
\begin{align*}
  \inflop[\omega+n+1]{[0,1)} &= [0,1)\cup\strategyfor[\pchoice{\alpha}{\beta}]{\inflop[\omega+n]{[0,1)}}
  \\&= [0,1)\cup\strategyfor[\pchoice{\alpha}{\beta}]{[-n,\infty)}
  \\&= [0,1)\cup\strategyfor[\alpha]{[-n,\infty)} \cup \strategyfor[\beta]{[-n,\infty)}
  \\&= [-n-1,\infty) \cup [-n,\infty)
\end{align*}
Thus,
\(\inflop[\omega\cdot2]{[0,1)}=\inflop[\omega+\omega]{[0,1)}=\cupfold_{n<\omega}\inflop[\omega+n]{[0,1)}=\reals = 
 \strategyfor[\pchoice{\alpha}{\beta}]{\reals}\).
In this case, the closure ordinal is $\omega\cdot2>\omega$, since \m{\strategyfor[\prepeat{(\pchoice{\alpha}{\beta})}]{[0,1)}=\reals \neq \inflop[\omega+n]{[0,1)}} for all $n\in\naturals$. 
}%

To show that the closure ordinal is $\geq\omega_1^{\text{CK}}$, fix an ordinal \(\lambda<\omega_1^{\text{CK}}\), i.e.\ a recursive ordinal. Let \({\prec} \subseteq M \times M\) be a corresponding recursive well-order of order type $\lambda$ on a corresponding set \(M\subseteq\reals\).\footnote{A \emph{well-order} is a linear order $\prec$ on $M$ in which every non-empty subset has a least element.
Two sets $M,N$ have equal \emph{order type} iff they have an order-isomorphism \(\varphi:M\to N\), i.e.\ a monotone bijection with monotone inverse.
More background can be found in the literature \cite{Rogers}.}
That is, let $f_\prec$ a recursive function such that the relation \(x\prec y\) given by \(f_\prec(x,y)=0\) defines a well-order on the set \(M \mdefeq \{x \in\reals \with f_\prec(x,y)=0 ~\text{or} f_\prec(y,x)=0 \text{ for some } y\in\reals\}\).
Without loss of generality, assume that $0\in M$ is the least element of $M$ with respect to $\prec$.
Since $\prec$ is recursive, denote by $\ptest{f_\prec(x,y)=0}$ the program that does not change the value of variables $x,y$ and that implements the recursive function that terminates if $x\in M$ and either \(x\prec y\) or $y\not\in M$ and that otherwise fails (like \(\ptest{(0=1)}\) would).
Consider the \dGL formula
\begin{equation}
  \ddiamond{\prepeat{\big(
    \underbrace{
    \pupdate{\pumod{y}{x}};
    \pdual{(\pevolve{\D{x}=1};\pevolve{\D{x}=-1}; \ptest{f_\prec(x,y)=0})}
    }_\alpha
  \big)}}{\,x=0}
  \label{eq:omega1CK-closure}
\end{equation}
By definition of \(\ptest{f_\prec(x,y)=0}\), formula \rref{eq:omega1CK-closure} is valid, because $x$ is in $M$ after each successful run of \(\ptest{f_\prec(x,y)=0}\), and $\prec$ is a well-order on $M$ with least element 0.
By construction,
\(\strategyfor[\alpha]{X} = \{a\in\reals \with b\in X \text{ for all $b$ with } f_\prec(b,a)=0 \}\) for $X\subseteq\reals$.
Since $\prec$ has order type $\lambda$, \(\inflopstrat[\kappa][\alpha]{\{0\}} \neq\inflopstrat[\lambda][\alpha]{\{0\}}=M\) for all \(\kappa<\lambda\), otherwise the \(\inflopstrat[\iota][\alpha]{\{0\}}\) would induce a monotone injection (even order-isomorphism) from $M$ to $\kappa<\lambda$, which is a contradiction.
Indeed, \(\varphi:M\to\kappa; x\mapsto\inf\{\iota \with x\in\inflopstrat[\iota][\alpha]{\{0\}}\}\) would otherwise be a monotone injection as \(x\prec y\) in $M$ implies \(\varphi(x)<\varphi(y)\), because \(\varphi(x)\geq\varphi(y)\) implies \(y\in\strategyfor[\alpha]{X}\) for a set \(X=\inflopstrat[\varphi(y)-1][\alpha]{\{0\}}\) that does not contain $x$, contradicting \(x\prec y\).
Note that $\varphi(y)$ is a successor ordinal and hence $\varphi(y)-1$ defined, since $\varphi$ maps into successor ordinals and 0 by the definition of $\varphi$.
Consequently,
\(\inflopstrat[\lambda][\alpha]{\{0\}}=M\neq\inflopstrat[\lambda+1][\alpha]{\{0\}}=\strategyfor[\alpha]{M}=\reals=\strategyfor[\prepeat{\alpha}]{\{0\}}\),
where $M\neq\reals$ because $\lambda$ is recursive hence countable and $\prec$ a linear order on $M$.
Thus, the closure ordinal for formula \rref{eq:omega1CK-closure} is $\lambda+1>\lambda$.
Hence, for all recursive ordinals $\lambda$, there is a hybrid game with a bigger closure ordinal.
So, the closure ordinal is $\geq\omega_1^{\text{CK}}$.
\endproofmove

By \rref{thm:dGL-closure-lower}, the closure ordinal for \dGL is between $\omega_1^{\text{CK}}$ and ordinals of the cardinality of the reals (\rref{thm:KnasterTarski}).
The same proof works for other well-orderings that are definable in hybrid games, not just those that are definable by classical recursive functions.
The proof does not permit arbitrary well-orderings of the real numbers, however, because those may not be definable by hybrid games.
Hence, the closure ordinal for \dGL is at least $\omega_1^{\text{HG}}$, defined as the first ordinal $\lambda$ that does not have a well-ordering of order type $\lambda$ that is definable by hybrid games.
This ordinal satisfies \(\omega_1^{\text{CK}} \leq \omega_1^{\text{HG}}\) and is at most of the cardinality of the reals.
A more precise grasp on \(\omega_1^{\text{HG}}\) is in \rref{sec:dGL-dL-expressiveness}.

The fact that hybrid games require highly transfinite closure ordinals has a number of consequences.
It makes reachability computations and backwards induction difficult, because they only terminate after significantly more than $\omega$-infinitely many steps.
It requires higher bounds on the number of repetitions played in hybrid games.
It causes classical arguments for relative completeness to fail (\rref{sec:dGL-complete}).
And it causes acute semantical differences that are only visible in hybrid games, not in hybrid systems.
For example, the \dGL semantics is more general than defining \(\strategyfor[\prepeat{\alpha}]{X}\) to be truncated to $\omega$-repetition \(\inflop[\omega]{X} = \cupfold_{n<\omega}\inflop[n]{X}\), which misses out on the existence of perfectly natural winning strategies.
The semantics of \dGL is also different than advance notice semantics.
For reference, both comparisons are elaborated in \rref{app:alternative-semantics}.

\section{Axiomatization} \label{sec:dGL-axiomatization}

Section~\ref{sec:dGL} has defined \dGL so that every game play has exactly one winner.
Section~\ref{sec:MetaProperties} has shown that hybrid games are determined, i.e.\ from every state, exactly one of the players has a winning strategy for complementary winning conditions.
But how can one find out which of the players that is?
In principle, one could follow the iterated winning region construction according to the semantics until reaching a fixpoint (\rref{sec:ClosureOrdinals}), which corresponds to the reachability computation underlying model checking as well as to the backwards induction technique in games.
But iterated winning region constructions would not generally terminate in finite time, because the closure ordinal is highly transfinite by \rref{thm:dGL-closure-lower}.

Every \dGL sentence without free variables or predicate symbols is either true or false, because \dGL is a classical logic.
But the semantics of \dGL formulas and hybrid games is ineffective, because computing the semantics, like classical model checking or game solving would, requires transfinite computations for the winning regions.
This calls for other ways of proving the validity of \dGL formulas.

Simple \dGL formulas can be checked by a tableau procedure that expands all choices and detects loops for termination as in the game tree examples (\rref{fig:nondetermined} and Appendix).
This principle, however, does not extend to more general hybrid games with differential equations, inherently infinite state spaces, and which need higher ordinals of iteration for computing winning regions by \rref{thm:dGL-closure-lower}.

\subsection{Proof Calculus} \label{sec:dGL-calculus}
A Hilbert-type proof calculus for proving validity of \dGL formulas is presented in \rref{fig:dGL}.%
\begin{figure}[bth]
  \centering
  \renewcommand*{\irrulename}[1]{\text{#1}}%
  \renewcommand{\linferenceRuleNameSeparation}{~~}
  \newdimen\linferenceRulehskipamount%
  \linferenceRulehskipamount=1mm%
  \newdimen\lcalculuscollectionvskipamount%
  \lcalculuscollectionvskipamount=0.1em%
  \begin{calculuscollections}{\columnwidth}
    \begin{calculus}
      \cinferenceRule[box|$\dibox{\cdot}$]{box axiom}
      {\linferenceRule[equiv]
        {\lnot\ddiamond{\alpha}{\lnot\phi}}
        {\dbox{\alpha}{\phi}}
      }
      {}
      \cinferenceRule[assignd|$\didia{:=}$]{assignment / substitution axiom}
      {\linferenceRule[equiv]
        {\mapply[x]{\phi}{\theta}}
        {\ddiamond{\pupdate{\umod{x}{\theta}}}{\mapply[x]{\phi}{x}}}
      }
      {}%
      \cinferenceRule[evolved|$\didia{'}$]{evolve}
      {\linferenceRule[equiv]
        {\lexists{t{\geq}0}{\ddiamond{\pupdate{\pumod{x}{\solf(t)}}}{\phi}}\hspace{0.5cm}}
        {\ddiamond{\pevolve{\D{x}=\genDE{x}}}{\phi}}
      }{\m{\D{\solf}(t)=\genDE{\solf}}}%
      \cinferenceRule[testd|$\didia{?}$]{test}
      {\linferenceRule[equiv]
        {(\ivr \land \phi)}
        {\ddiamond{\ptest{\ivr}}{\phi}}
      }{}
      \cinferenceRule[choiced|$\didia{\cup}$]{axiom of nondeterministic choice}
      {\linferenceRule[equiv]
        {(\ddiamond{\alpha}{\phi} \lor \ddiamond{\beta}{\phi})}
        {\ddiamond{\pchoice{\alpha}{\beta}}{\phi}}
      }{}
      \cinferenceRule[composed|$\didia{{;}}$]{composition}
      {\linferenceRule[equiv]
        {\ddiamond{\alpha}{\ddiamond{\beta}{\phi}}}
        {\ddiamond{\alpha;\beta}{\phi}}
      }{}
      \cinferenceRule[iterated|$\didia{{}^*}$]{iteration/repeat unwind pre-fixpoint, even fixpoint}
      {\linferenceRule[impl]
        {(\phi \lor \ddiamond{\alpha}{\ddiamond{\prepeat{\alpha}}{\phi}})}
        {\ddiamond{\prepeat{\alpha}}{\phi}}
      }{}%
      \cinferenceRule[duald|$\didia{{^d}}$]{dual}
      {\linferenceRule[equiv]
        {\lnot\ddiamond{\alpha}{\lnot\phi}}
        {\ddiamond{\pdual{\alpha}}{\phi}}
      }{}
      \cinferenceRule[M|M]{$\ddiamond{}{}$ monotone / $\ddiamond{}{}$-generalization} %
      {\linferenceRule[formula]
        {\phi\limply\psi}
        {\ddiamond{\alpha}{\phi}\limply\ddiamond{\alpha}{\psi}}
      }{}
      \cinferenceRule[FP|FP]{iteration is least fixpoint / reflexive transitive closure RTC, equivalent to invind in the presence of R}
      {\linferenceRule[formula]
        {(\phi \lor \ddiamond{\alpha}{\psi}) \limply \psi}
        {\ddiamond{\prepeat{\alpha}}{\phi} \limply \psi}
      }{}
    \end{calculus}%
  \end{calculuscollections}
  \caption{Differential game logic axiomatization}
  \label{fig:dGL}
\end{figure}%

The logic \dGL simultaneously generalizes logics of hybrid systems and logics of discrete games and so does its proof calculus.
The proof calculus of \dGL shares axioms with differential dynamic logic \cite{DBLP:conf/lics/Platzer12b} and discrete game logic \cite{DBLP:journals/sLogica/PaulyP03a}.
It is based on the first-order Hilbert calculus (modus ponens, uniform substitution, and Bernays' $\forall$-generalization) with all instances of valid formulas of first-order logic as axioms, including (decidable) first-order real arithmetic \cite{tarski_decisionalgebra51}.
Write \m{\infers \phi} iff \dGL formula $\phi$ can be \emph{proved} with the \dGL proof rules from \dGL axioms (\rref{fig:dGL}).
That is, a \dGL formula is inductively defined to be \emph{provable} in the \dGL calculus if it is an instance of a \dGL axiom or if it is the conclusion (below the rule bar) of an instance of one of the \dGL proof rules \irref{M}, \irref{FP}, modus ponens, uniform substitution, or $\forall$-generalization, whose premises (above the rule bar) are all provable.

The \emph{determinacy axiom} \irref{box} describes the duality of winning strategies for complementary winning conditions of Angel and Demon, i.e.\ that Demon has a winning strategy to achieve $\phi$ in hybrid game $\alpha$ if and only if Angel does not have a counter strategy, i.e.\ winning strategy to achieve $\lnot\phi$ in the same game $\alpha$.
Axiom \irref{assignd} is for \emph{assignments by substitution}.
Formula $\mapply[x]{\phi}{\theta}$ is obtained from $\mapply[x]{\phi}{x}$ by \emph{substituting} $\theta$ for $x$ at all occurrences of $x$, provided $x$ does not occur in the scope of a quantifier or modality binding $x$ or a variable of $\theta$. A modality containing \m{\pupdate{\pumod{x}{}}} or $\D{x}$ outside the scope of tests $\ptest{\ivr}$ or evolution domain constraints \emph{binds} $x$\ignore{(written $x\in BV(\alpha)$)}, because it may change the value of $x$.
In the \emph{differential equation axiom} \irref{evolved}, $\solf(\cdot)$ is the unique \cite[Theorem~10.VI]{Walter:ODE} solution of the symbolic initial value problem \m{\D{\solf}(t)=\genDE{\solf},\solf(0)=x}. The duration $t$ how long to follow solution $\solf$ is for Angel to decide, hence existentially quantified.
It goes without saying that variables like $t$ are fresh in \rref{fig:dGL}.

Axioms \irref{testd}, \irref{choiced}, and \irref{composed} are as in dynamic logic \cite{DBLP:conf/focs/Pratt76} and differential dynamic logic \cite{DBLP:conf/lics/Platzer12b} except that their meaning is quite different, because they refer to winning strategies of hybrid games instead of reachability relations of systems.
The \emph{challenge axiom} \irref{testd} expresses that Angel has a winning strategy to achieve $\phi$ in the test game $\ptest{\ivr}$ exactly from those positions that are already in $\phi$ (because $\ptest{\ivr}$ does not change the state) and that satisfy $\ivr$ for otherwise she would fail the test and lose the game immediately.
The \emph{axiom of choice} \irref{choiced} expresses that Angel has a winning strategy in a game of choice \m{\pchoice{\alpha}{\beta}} to achieve $\phi$ iff she has a winning strategy in either hybrid game $\alpha$ or in $\beta$, because she can choose which one to play.
The \emph{sequential game axiom} \irref{composed} expresses that Angel has a winning strategy in a sequential game $\alpha;\beta$ to achieve $\phi$ iff she has a winning strategy in game $\alpha$ to achieve \m{\ddiamond{\beta}{\phi}}, i.e.\ to get to a position from which she has a winning strategy in game $\beta$ to achieve $\phi$.
The \emph{iteration axiom} \irref{iterated} characterizes $\ddiamond{\prepeat{\alpha}}{\phi}$ as a pre-fixpoint.
It expresses that, if the game is already in a state satisfying $\phi$ or if Angel has a winning strategy for game $\alpha$ to achieve $\ddiamond{\prepeat{\alpha}}{\phi}$, i.e.\ to get to a position from which she has a winning strategy for game $\prepeat{\alpha}$ to achieve $\phi$, then, either way, Angel has a winning strategy to achieve $\phi$ in game $\prepeat{\alpha}$.
The converse of \irref{iterated} can be derived\footnote{
\(\phi \lor \ddiamond{\alpha}{\ddiamond{\prepeat{\alpha}}{\phi}} \limply \ddiamond{\prepeat{\alpha}}{\phi}\) derives by \irref{iterated}.
Thus, \(\ddiamond{\alpha}{(\phi \lor \ddiamond{\alpha}{\ddiamond{\prepeat{\alpha}}{\phi}})} \limply \ddiamond{\alpha}{\ddiamond{\prepeat{\alpha}}{\phi}}\) by \irref{M}.
Hence,
\(\phi\lor\ddiamond{\alpha}{(\phi \lor \ddiamond{\alpha}{\ddiamond{\prepeat{\alpha}}{\phi}})} \limply \phi\lor\ddiamond{\alpha}{\ddiamond{\prepeat{\alpha}}{\phi}}\) by propositional congruence.
Consequently,
\(\ddiamond{\prepeat{\alpha}}{\phi} \limply \phi\lor\ddiamond{\alpha}{\ddiamond{\prepeat{\alpha}}{\phi}}\) by \irref{FP}.
} and is also denoted by \irref{iterated}.
The \emph{dual axiom} \irref{duald} characterizes dual games.
It says that Angel has a winning strategy to achieve $\phi$ in dual game $\pdual{\alpha}$ iff Angel does not have a winning strategy to achieve $\lnot\phi$ in game $\alpha$.
Combining dual game axiom \irref{duald} with the determinacy axiom \irref{box} yields\m{\ddiamond{\pdual{\alpha}}{\phi} \lbisubjunct \dbox{\alpha}{\phi}}, i.e.\ that Angel has a winning strategy to achieve $\phi$ in $\pdual{\alpha}$ iff Demon has a winning strategy to achieve $\phi$ in $\alpha$.
Similar reasoning derives \m{\dbox{\pdual{\alpha}}{\phi} \lbisubjunct \ddiamond{\alpha}{\phi}}.

\emph{Monotonicity rule} \irref{M} is the generalization rule of monotone modal logic \textbf{C} \cite{Chellas}.
It expresses that, if the implication $\phi\limply\psi$ is valid, then, from wherever Angel has a winning strategy in a hybrid game $\alpha$ to achieve $\phi$, she also has a winning strategy to achieve $\psi$, because $\psi$ holds wherever $\phi$ does.
So rule \irref{M} expresses that easier objectives are easier to win.
\emph{Fixpoint rule} \irref{FP} characterizes $\ddiamond{\prepeat{\alpha}}{\phi}$ as a \emph{least} pre-fixpoint.
It says that, if $\psi$ is another formula that is a pre-fixpoint, i.e.\ that holds in all states that satisfy $\phi$ or from which Angel has a winning strategy in game $\alpha$ to achieve that condition $\psi$, then $\psi$ also holds wherever $\ddiamond{\prepeat{\alpha}}{\phi}$ does, i.e.\ in all states from which Angel has a winning strategy in game $\prepeat{\alpha}$ to achieve $\phi$.

As usual, all substitutions in \rref{fig:dGL} are required to be \emph{admissible} to avoid capture of variables, i.e.\ they require all variables $x$ that are being replaced or that occur in their replacements to not occur in the scope of a quantifier or modality binding $x$.
The \emph{uniform substitution} rule \irref{US} \cite[\S35,40]{Church_1956} from first-order logic substitutes \emph{all} occurrences of predicate $p(\cdot)$ by a \dGL formula $\mapply{\psi}{\cdot}$, i.e.\ it replaces all occurrences of $p(\theta)$, for a vectorial term $\theta$, by the corresponding $\mapply{\psi}{\theta}$ simultaneously:
\[
      \cinferenceRule[US|US]{uniform substitution}
      {\linferenceRule[formula]
        {\preusubst[\phi]{p}}
        {\usubst[\phi]{p}{\psi}}
      }{}%
\]
In particular, rule \irref{US} requires all relevant substitutions of $\mapply{\psi}{\theta}$ for $p(\theta)$ to be admissible and requires that no $p(\theta)$ occurs in the scope of a quantifier or modality binding a variable of $\mapply{\psi}{\theta}$ other than the occurrences in $\theta$.
If admissible, the formula $\mapply{\psi}{\theta}$ can use variables other than those in $\theta$, hence, the case where $p$ is a predicate symbol without arguments enables rule \irref{US} to generate all formula instances from the \dGL axioms.
Rule \irref{US} turns axioms into axiom schemes \cite[\S35,40]{Church_1956}, which is a powerful principle that extends to modalities with program constants but is beyond the scope of this article and is pursued in followup work \cite{DBLP:conf/cade/Platzer15}.

Despite their fundamentally different semantics (reachability relations on states of hybrid system runs versus existence of winning strategies into sets of states of interactive hybrid game play) and different dynamical effects (mixed discrete, continuous, and adversarial dynamics), the axiomatization of \dGL ends up surprisingly close to that of the logic \dL for hybrid systems \cite{DBLP:conf/lics/Platzer12b}.
The primary difference of the axiomatization of \dGL compared to that of \dL is the addition of axiom \irref{duald} for dual games, the absence of axiom K, absence of G\"odel's necessitation rule (\dGL only has the monotone modal rule \irref{M}), absence of the Barcan formula (the converse Barcan formula is still derivable\footnote{\label{foot:converseBarcan} From \(\phi\limply\lexists{x}{\phi}\), derive \(\ddiamond{\alpha}{\phi}\limply\ddiamond{\alpha}{\lexists{x}{\phi}}\) by \irref{M}, from which first-order logic derives \(\lforall{x}{(\ddiamond{\alpha}{\phi}\limply\ddiamond{\alpha}{\lexists{x}{\phi}})}\) and then derives \(\lexists{x}{\ddiamond{\alpha}{\phi}}\limply\ddiamond{\alpha}{\lexists{x}{\phi}}\), since converse Barcan assumes that $x$ is not free in the succedent.}), absence of vacuity V, and absence of the hybrid version of Harel's convergence rule \cite{DBLP:conf/stoc/HarelMP77}.
Due to the absence of K, the induction axiom and the convergence axiom are absent in \dGL, while corresponding proof rules are still valid; see \rref{sec:separating-axioms} for details. The induction rule (\irref{invind}) is derivable from \irref{FP}.

A proof of a classical result about the interderivability of \irref{FP} with the induction rule \irref{invind} is included for the sake of completeness.
\begin{lemma}[Invariance] \label{lem:FP-invind}%
Rule \irref{FP} and the induction rule (\irref{invind}) of dynamic logic are interderivable in the \dGL calculus:
\upshape
\[
      \dinferenceRule[invind|ind]{inductive invariant}
      {\linferenceRule[formula]
        {\psi\limply\dbox{\alpha}{\psi}}
        {\psi\limply\dbox{\prepeat{\alpha}}{\psi}}
      }{}
\]
\end{lemma}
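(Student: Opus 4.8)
The plan is to prove both directions by exploiting the De Morgan duality between Angel's diamond and Demon's box that is supplied by the determinacy axiom \irref{box}, namely \(\dbox{\alpha}{\psi}\lbisubjunct\lnot\ddiamond{\alpha}{\lnot\psi}\), together with propositional reasoning and, in one direction, monotonicity \irref{M}. The conceptual point is that \irref{FP} is the least-(pre-)fixpoint principle for \(\ddiamond{\prepeat{\alpha}}{}\) while \irref{invind} is the dual greatest-(post-)fixpoint principle for \(\dbox{\prepeat{\alpha}}{}\), and \irref{box} is precisely the involution linking the two.

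First I would derive \irref{invind} from \irref{FP}. Starting from the premise \(\psi\limply\dbox{\alpha}{\psi}\) of \irref{invind}, I rewrite \(\dbox{\alpha}{\psi}\) as \(\lnot\ddiamond{\alpha}{\lnot\psi}\) by \irref{box}, so the premise is propositionally equivalent to \(\ddiamond{\alpha}{\lnot\psi}\limply\lnot\psi\). Adjoining the tautology \(\lnot\psi\limply\lnot\psi\) yields \((\lnot\psi\lor\ddiamond{\alpha}{\lnot\psi})\limply\lnot\psi\), which is exactly the premise of \irref{FP} instantiated with \(\phi:=\lnot\psi\) and fixpoint formula \(\lnot\psi\). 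Rule \irref{FP} then concludes \(\ddiamond{\prepeat{\alpha}}{\lnot\psi}\limply\lnot\psi\), and a second application of \irref{box}, now to the compound game \(\prepeat{\alpha}\), rewrites the contrapositive of this as \(\psi\limply\dbox{\prepeat{\alpha}}{\psi}\), the conclusion of \irref{invind}.

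Conversely, to derive \irref{FP} from \irref{invind} I would start from the premise \((\phi\lor\ddiamond{\alpha}{\psi})\limply\psi\) and split it propositionally into \(\phi\limply\psi\) and \(\ddiamond{\alpha}{\psi}\limply\psi\). The contrapositive of the latter together with \irref{box} gives \(\lnot\psi\limply\dbox{\alpha}{\lnot\psi}\), which is the premise of \irref{invind} with invariant \(\lnot\psi\). Applying \irref{invind} produces \(\lnot\psi\limply\dbox{\prepeat{\alpha}}{\lnot\psi}\), and \irref{box} turns this into \(\ddiamond{\prepeat{\alpha}}{\psi}\limply\psi\). Finally, from \(\phi\limply\psi\) rule \irref{M} yields \(\ddiamond{\prepeat{\alpha}}{\phi}\limply\ddiamond{\prepeat{\alpha}}{\psi}\), and chaining with the previous implication gives \(\ddiamond{\prepeat{\alpha}}{\phi}\limply\psi\), the conclusion of \irref{FP}.

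The bookkeeping is routine; the only real care needed is tracking implication directions through the contrapositives and applying \irref{box} to the iterated game \(\prepeat{\alpha}\) rather than merely to \(\alpha\). The one genuinely non-symmetric step, which I expect to be the main subtlety, is the \irref{invind}-to-\irref{FP} direction: there the postcondition \(\phi\) of the desired conclusion differs from the invariant \(\psi\), so monotonicity \irref{M} must be invoked (via \(\phi\limply\psi\)) to bridge \(\ddiamond{\prepeat{\alpha}}{\phi}\) to \(\ddiamond{\prepeat{\alpha}}{\psi}\) before closing the derivation.
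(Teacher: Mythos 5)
Your proof is correct and follows essentially the same route as the paper: both directions dualize via axiom \irref{box}, instantiate \irref{FP} (resp.\ \irref{invind}) with the negated invariant $\lnot\psi$, and use \irref{M} on $\phi\limply\psi$ to close the \irref{invind}-to-\irref{FP} direction. The only cosmetic difference is that the paper first derives a slightly more general variant of \irref{invind} with a separate postcondition before specializing to $\phi\mdefequiv\psi$, whereas you derive \irref{invind} directly.
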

\proofmove
Rule \irref{invind} derives from \irref{FP}:
First derive the following minor variant
\[
      \dinferenceRule[invind2|ind$_R$]{inductive invariant}
      {\linferenceRule[formula]
        {\psi\limply\dbox{\alpha}{\psi} & \psi\limply\phi}
        {\psi\limply\dbox{\prepeat{\alpha}}{\phi}}
      }{}
\]
From \(\psi\limply\dbox{\alpha}{\psi}\) and \(\psi\limply\phi\) propositionally derive
\(\psi\limply\phi\land\dbox{\alpha}{\psi}\),
from which contraposition and propositional logic yield
\(\lnot\phi\lor\lnot\dbox{\alpha}{\psi}\limply\lnot\psi\).
With \irref{box}, this gives \(\lnot\phi\lor\ddiamond{\alpha}{\lnot\psi}\limply\lnot\psi\).
Now \irref{FP} derives \(\ddiamond{\prepeat{\alpha}}{\lnot\phi}\limply\lnot\psi\), 
which, by \irref{box}, is \(\lnot\dbox{\prepeat{\alpha}}{\phi}\limply\lnot\psi\),
which gives \(\psi\limply\dbox{\prepeat{\alpha}}{\phi}\) by contraposition.
The classical $\dbox{}{}$-induction rule \irref{invind} follows by $\phi\mdefequiv\psi$.
From \irref{invind}, the variant \irref{invind2} is derivable again by \irref{M} on $\psi\limply\phi$.

Rule \irref{FP} derives from \irref{invind}:
From \(\phi\lor\ddiamond{\alpha}{\psi}\limply\psi\), propositionally derive
\(\phi\limply\psi\) and \(\ddiamond{\alpha}{\psi}\limply\psi\).
By \irref{M}, the former gives \(\ddiamond{\prepeat{\alpha}}{\phi}\limply\ddiamond{\prepeat{\alpha}}{\psi}\).
By contraposition, the latter derives
\(\lnot\psi\limply\lnot\ddiamond{\alpha}{\psi}\),
which gives
\(\lnot\psi\limply\dbox{\alpha}{\lnot\psi}\) by \irref{box}.
Now \irref{invind} derives
\(\lnot\psi\limply\dbox{\prepeat{\alpha}}{\lnot\psi}\).
By contraposition
\(\lnot\dbox{\prepeat{\alpha}}{\lnot\psi}\limply\psi\), which, by \irref{box}, is
\(\ddiamond{\prepeat{\alpha}}{\psi}\limply\psi\).
Thus,
\(\ddiamond{\prepeat{\alpha}}{\phi}\limply\psi\) by the formula derived above.
\qedhere
\endproofmove
Hence, the \dGL calculus could have been equipped with rule \irref{invind} instead of \irref{FP}.

\begin{example}
The dual filibuster game formula \rref{eq:dual-filibuster} from \rref{sec:Determinacy} proves easily by going back and forth between players:
\begin{sequentdeduction}[array]
\linfer%
{\linfer[duald]
{\linfer[box]
  {\linfer[invind]
    {\linfer[box]
      {\linfer%
      {\linfer[duald]
        {\linfer[choiced]
          {\linfer[assignd]
            {\linfer[qear]
              {\lclose}
              {\lsequent{x=0}{0=0\lor1=0}}
            }
            {\lsequent{x=0}{\ddiamond{\pumod{x}{0}}{x=0}\lor\ddiamond{\pumod{x}{1}}{x=0}}}
          }%
          {\lsequent{x=0}{\ddiamond{\pchoice{\pumod{x}{0}}{\pumod{x}{1}}}{x=0}}}
        }%
        {\lsequent{x=0}{\lnot\ddiamond{\pdual{(\pchoice{\pumod{x}{0}}{\pumod{x}{1}})}}{\lnot x=0}}}
        }%
        {\lsequent{x=0}{\lnot\ddiamond{\dchoice{\pumod{x}{0}}{\pumod{x}{1}}}{\lnot x=0}}}
      }%
      {\lsequent{x=0}{\dbox{\dchoice{\pumod{x}{0}}{\pumod{x}{1}}}{x=0}}}
    }%
    {\lsequent{x=0}{\dbox{\prepeat{(\dchoice{\pumod{x}{0}}{\pumod{x}{1}})}}{x=0}}}
  }%
  {\lsequent{x=0}{\lnot\ddiamond{\prepeat{(\dchoice{\pumod{x}{0}}{\pumod{x}{1}})}}{\lnot x=0}}}
}%
{\lsequent{x=0}{\ddiamond{\pdual{\prepeat{(\dchoice{\pumod{x}{0}}{\pumod{x}{1}})}\strut}}{x=0}}}
}%
{\lsequent{x=0}{\ddiamond{\drepeat{(\pchoice{\pumod{x}{0}}{\pumod{x}{1}})}}{x=0}}}
\end{sequentdeduction}
The unmarked proof steps expand the definitions for $\dchoice{}{}$ and $\drepeat{}$.
By pushing dualities through with \rref{rem:HG-equivalence}, for example, the goal formula \rref{eq:dual-filibuster} at the bottom is equivalent to
\({\lsequent{x=0}{\ddiamond{\pdual{\prepeat{(\dchoice{\pumod{x}{0}}{\pumod{x}{1}})}\strut}}{x=0}}}\),
since assignments are unaffected by $\pdual{}$.
\end{example}

A proof of a $\ddiamond{\prepeat{\alpha}}{}$ property will be considered later, because the proof technique for those properties comes from the completeness proof.

\subsection{Soundness} \label{sec:dGL-sound}

Soundness studies whether all provable formulas are valid, which is crucial for ensuring that \dGL proofs always produce correct verification results about hybrid games.
The soundness proof uses that the following modal congruence rule derives from two uses of the monotonicity rule \irref{M}:
  \[
      \dinferenceRule[RE|RE]{$\ddiamond{}{}$ congruence} %
      {\linferenceRule[formula]
        {\phi\lbisubjunct\psi}
        {\ddiamond{\alpha}{\phi}\lbisubjunct\ddiamond{\alpha}{\psi}}
      }{}
  \]

\begin{theorem}[Soundness] \label{thm:dGL-sound}%
  The \dGL proof calculus in \rref{fig:dGL} is sound, i.e.\ all provable formulas are valid.
\end{theorem}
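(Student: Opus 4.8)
The plan is to argue by induction on the structure of the \dGL derivation. Since a provable formula is either an instance of a \dGL axiom (including all first-order valid formulas) or the conclusion of one of the rules \irref{M}, \irref{FP}, modus ponens, uniform substitution, or $\forall$-generalization applied to already-provable premises, it suffices to establish two things: every axiom in \rref{fig:dGL} is valid, and every rule preserves validity. The induction hypothesis then gives validity of the premises, and rule soundness transports validity to the conclusion.

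Most of the equational axioms are read off directly by unfolding \rref{def:dGL-semantics} and \rref{def:HG-semantics} at an arbitrary interpretation $\iget[const]{\I}$ with $X \mdefeq \imodel{\I}{\phi}$. Concretely, \irref{testd}, \irref{choiced}, \irref{composed}, and \irref{duald} are nothing but clauses 3, 4, 5, and 7 of $\strategyfor[\cdot]{\cdot}$, respectively. The determinacy axiom \irref{box} is exactly the equivalence established in \rref{thm:dGL-determined}. The iteration axiom \irref{iterated} holds because $\strategyfor[\prepeat{\alpha}]{\imodel{\I}{\phi}}$ is a \emph{fixpoint} of the monotone operator $Z \mapsto \imodel{\I}{\phi} \cup \strategyfor[\alpha]{Z}$: monotonicity is \rref{lem:monotone}, and \rref{thm:KnasterTarski} ensures the least pre-fixpoint defining $\strategyfor[\prepeat{\alpha}]{\cdot}$ is a genuine fixpoint, so $\imodel{\I}{\phi} \cup \strategyfor[\alpha]{\strategyfor[\prepeat{\alpha}]{\imodel{\I}{\phi}}} = \strategyfor[\prepeat{\alpha}]{\imodel{\I}{\phi}}$ yields the stated implication (and its converse). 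For the rules, \irref{M} is \rref{lem:monotone}: a valid premise $\phi\limply\psi$ means $\imodel{\I}{\phi}\subseteq\imodel{\I}{\psi}$ in every $\iget[const]{\I}$, whence $\strategyfor[\alpha]{\imodel{\I}{\phi}}\subseteq\strategyfor[\alpha]{\imodel{\I}{\psi}}$; and \irref{FP} is the least-pre-fixpoint half of \rref{thm:KnasterTarski}, since a valid premise makes $\imodel{\I}{\psi}$ a pre-fixpoint in each $\iget[const]{\I}$, forcing $\strategyfor[\prepeat{\alpha}]{\imodel{\I}{\phi}}\subseteq\imodel{\I}{\psi}$. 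Modus ponens, Bernays' $\forall$-generalization, and uniform substitution preserve validity in the standard first-order manner.

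The two axioms demanding real work are \irref{assignd} and \irref{evolved}. For \irref{assignd} I would first prove a \emph{substitution lemma}: for an admissible substitution, $\iget[state]{\I}\in\imodel{\I}{\mapply[x]{\phi}{\theta}}$ iff $\modif{\iget[state]{\I}}{x}{\ivaluation{\I}{\theta}}\in\imodel{\I}{\mapply[x]{\phi}{x}}$, by structural induction on $\phi$ supported by a coincidence lemma (the truth of a subformula depends only on the values of the variables occurring free in it, so no capture occurs under the admissibility proviso). Validity of \irref{assignd} is then immediate from clause 1 of $\strategyfor[\cdot]{\cdot}$. For \irref{evolved} the crux is uniqueness of solutions: because $\genDE{\solf}$ is smooth, Picard--Lindel\"of yields a \emph{unique} solution $\solf$ of the symbolic initial-value problem $\D{\solf}(t)=\genDE{\solf},\ \solf(0)=x$, so every differentiable flow witnessing membership in $\strategyfor[\pevolve{\D{x}=\genDE{x}}]{\imodel{\I}{\phi}}$ must coincide with following $\solf$ for some duration $t\geq0$, and conversely each duration gives such a witness; combined with \irref{assignd} for the inner modality, this matches the continuous-evolution winning region exactly with $\lexists{t{\geq}0}{\ddiamond{\pupdate{\pumod{x}{\solf(t)}}}{\phi}}$.

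I expect \irref{evolved} to be the main obstacle: reconciling the liberal semantic notion of a continuous evolution --- an arbitrary differentiable solution of the differential equation --- with the single symbolic solution term $\solf$ requires a careful appeal to uniqueness and a check that existentially quantifying the duration $t$ (including the degenerate $t=0$) captures precisely the reachable states. The admissibility conditions underlying \irref{assignd} and the full uniform-substitution apparatus are the other delicate points; however, since the paper defers the detailed substitution machinery to followup work, it suffices here that admissibility rules out variable capture in the substitution lemma.
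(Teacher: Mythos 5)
Your proposal is correct and follows essentially the same route as the paper: axiom-by-axiom validation against the winning-region semantics, with \irref{box} obtained from determinacy, \irref{iterated} and \irref{FP} from the fixpoint and least-pre-fixpoint properties of \m{\strategyfor[\prepeat{\alpha}]{\cdot}}, \irref{M} from monotonicity, \irref{assignd} via a substitution lemma, and \irref{evolved} via Picard--Lindel\"of uniqueness of the symbolic solution. The one place you are lighter than the paper is \irref{US}: substituting \dGL formulas for predicate symbols occurring under game modalities is not literally ``standard first-order''; the paper closes this case by reinterpreting $p$ as $\psi$ in a modified interpretation and invoking the local soundness of the derived congruence rule \irref{RE} together with substitution of equivalents, an argument (or a citation to Church's generalization) you would still need to supply.
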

\proofmove
The \dGL proof calculus is sound if all instances of axioms and proof rules are sound.
Proving soundness of an implication axiom \m{\phi\limply\psi} considers an interpretation $\iget[const]{\I}$ with a set of states \m{\linterpretations{\Sigma}{V}} and requires showing \m{\imodel{\I}{\phi}\subseteq\imodel{\I}{\psi}}.
Proving soundness of an equivalence axiom \m{\phi\lbisubjunct\psi} requires showing \m{\imodel{\I}{\phi}=\imodel{\I}{\psi}}.
Proving soundness of a rule
\[
\linferenceRule{\phi}{\psi}
\]
assumes that premise $\phi$ is valid, i.e.\ \m{\imodel{\I}{\phi}=\linterpretations{\Sigma}{V}} in all interpretations $\iget[const]{\I}$ with a set of states \m{\linterpretations{\Sigma}{V}}, and requires showing that conclusion $\psi$ is valid, i.e.\ \m{\imodel{\I}{\psi}=\linterpretations{\Sigma}{V}} in all $\iget[const]{\I}$ with \m{\linterpretations{\Sigma}{V}}.
All proof rules of \dGL except \irref{US} satisfy the stronger condition of \emph{local soundness}, i.e.\ for all interpretations $\iget[const]{\I}$ with a set of states \m{\linterpretations{\Sigma}{V}}: \m{\imodel{\I}{\phi}=\linterpretations{\Sigma}{V}} implies \m{\imodel{\I}{\psi}=\linterpretations{\Sigma}{V}}.
For the proof, recall the $\mu$-calculus notation where \m{\lfp{Z}{\mapply{\Upsilon}{Z}}} denotes the least fixpoint of \m{\mapply{\Upsilon}{Z}} and \m{\gfp{Z}{\mapply{\Upsilon}{Z}}} denotes the greatest fixpoint.

Soundness of modus ponens (\irref{MP}) and $\forall$-generalization (from $\phi$ derive \(\lforall{x}{\phi}\)) is standard and not shown.
The other axioms and rules are proved to be sound subsequently.
\begin{enumerate}
\item[\irref{box}] \(\imodel{\I}{\dbox{\alpha}{\phi}} = \imodel{\I}{\lnot\ddiamond{\alpha}{\lnot\phi}}\) is a corollary to determinacy (\rref{thm:dGL-determined}).
\item[\irref{assignd}] \(\imodel{\I}{\ddiamond{\pupdate{\pumod{x}{\theta}}}{\mapply[x]{\phi}{x}}}
= \strategyfor[\pupdate{\pumod{x}{\theta}}]{\imodel{\I}{\mapply[x]{\phi}{x}}} = \{\iportray{\I} \in \linterpretations{\Sigma}{V} \with \modif{\iget[state]{\I}}{x}{\ivaluation{\I}{\theta}} \in \imodel{\I}{\mapply[x]{\phi}{x}}\}
= \{\iportray{\I} \in \linterpretations{\Sigma}{V} \with \iportray{\I} \in \imodel{\I}{\mapply[x]{\phi}{\theta}}\}
= \imodel{\I}{\mapply[x]{\phi}{\theta}}\),
where the penultimate equation holds by the substitution lemma.
The classical substitution lemma is sufficient for first-order logic $\mapply[x]{\phi}{\theta}$.
Otherwise the proof of the substitution lemma for \dL \cite[Lemma 2.2]{Platzer10} generalizes to \dGL or follows from uniform substitution lemmas \cite{DBLP:conf/cade/Platzer15}.

\item[\irref{evolved}]
\(\imodel{\I}{\ddiamond{\pevolve{\D{x}=\genDE{x}}}{\phi}} 
= \strategyfor[\pevolve{\D{x}=\genDE{x}}]{\imodel{\I}{\phi}} = \{\varphi(0) \in \linterpretations{\Sigma}{V} \with\) 
     for some
      \(\varphi{:}[0,r]\to\linterpretations{\Sigma}{V}\)
      so that $\varphi(r)\in\imodel{\I}{\phi}$ and
      \(\D[t]{\,\varphi(t)(x)} (\zeta) =       %
      \ivaluation{\iconcat[state=\varphi(\zeta)]{\I}}{\theta}
      \)
      for all $\zeta\leq r\}$.
Also,
\newcommand{\Id}{\imodif[state]{\I}{t}{r}}%
\(\imodel{\I}{\lexists{t{\geq}0}{\ddiamond{\pupdate{\pumod{x}{\solf(t)}}}{\phi}}}\)
\(= \{\iportray{\I} \in \linterpretations{\Sigma}{V} \with \modif{\iget[state]{\I}}{t}{r} \in \imodel{\I}{\ddiamond{\pupdate{\pumod{x}{\solf(t)}}}{\phi}} \text{ for some } r\geq0\}
= \{\iportray{\I} \in \linterpretations{\Sigma}{V} \with \modif{\iget[state]{\I}}{t}{r} \in
 \{\iportray{\Iu}\in\linterpretations{\Sigma}{V} \with \modif{\iget[state]{\Iu}}{x}{\ivaluation{\Iu}{\solf(t)}} \in \imodel{\I}{\phi}\} 
 \text{ for } r\geq0\}
=\)\\\(
 \{\iportray{\I} \in \linterpretations{\Sigma}{V} \with \modif{(\iget[state]{\Id})}{x}{\ivaluation{\Id}{\solf(t)}} \in \imodel{\I}{\phi}\
 \text{ for some } r\geq0\}\).
\renewcommand{\Id}{\imodif[state]{\I}{t}{\zeta}}%
 The inclusion ``$\supseteq$'' between both parts holds, because the function
 \(\varphi(\zeta) := \modif{(\iget[state]{\Id})}{x}{\ivaluation{\Id}{\solf(t)}}\) solves the differential equation \m{\pevolve{\D{x}=\genDE{\theta}}} by assumption.
 The inclusion ``$\subseteq$'' follows, because the solution of the (smooth) differential equation \m{\pevolve{\D{x}=\genDE{\theta}}} is unique \cite[Lemma 2.1]{Platzer10}.

\item[\irref{testd}] \(\imodel{\I}{\ddiamond{\ptest{\ivr}}{\phi}} 
= \strategyfor[\ptest{\ivr}]{\imodel{\I}{\phi}}
= \imodel{\I}{\ivr}\cap\imodel{\I}{\phi}
= \imodel{\I}{\ivr\land\phi}\)

\item[\irref{choiced}] \(\imodel{\I}{\ddiamond{\pchoice{\alpha}{\beta}}{\phi}} 
= \strategyfor[\pchoice{\alpha}{\beta}]{\imodel{\I}{\phi}}
= \strategyfor[\alpha]{\imodel{\I}{\phi}} \cup \strategyfor[\beta]{\imodel{\I}{\phi}}
=  \imodel{\I}{\ddiamond{\alpha}{\phi}} \cup \imodel{\I}{\ddiamond{\beta}{\phi}}
=  \imodel{\I}{\ddiamond{\alpha}{\phi} \lor \ddiamond{\beta}{\phi}}\)

\item[\irref{composed}] \(\imodel{\I}{\ddiamond{\alpha;\beta}{\phi}} 
= \strategyfor[\alpha;\beta]{\imodel{\I}{\phi}} 
= \strategyfor[\alpha]{\strategyfor[\beta]{\imodel{\I}{\phi}}}
= \strategyfor[\alpha]{\imodel{\I}{\ddiamond{\beta}{\phi}}} 
= \imodel{\I}{\ddiamond{\alpha}{\ddiamond{\beta}{\phi}}}\).

\item[\irref{iterated}]
Since \(\imodel{\I}{\ddiamond{\prepeat{\alpha}}{\phi}}
= \strategyfor[\prepeat{\alpha}]{\imodel{\I}{\phi}}
= \lfp{Z}{(\imodel{\I}{\phi}\cup\strategyfor[\alpha]{Z})}\)
is a fixpoint, have
\(\imodel{\I}{\ddiamond{\prepeat{\alpha}}{\phi}} =  \imodel{\I}{\phi}\cup\strategyfor[\alpha]{\imodel{\I}{\ddiamond{\prepeat{\alpha}}{\phi}}}\).
Thus,
\(\imodel{\I}{\phi \lor \ddiamond{\alpha}{\ddiamond{\prepeat{\alpha}}{\phi}}}
= \imodel{\I}{\phi} \cup \imodel{\I}{\ddiamond{\alpha}{\ddiamond{\prepeat{\alpha}}{\phi}}} 
= \imodel{\I}{\phi} \cup \strategyfor[\alpha]{\imodel{\I}{\ddiamond{\prepeat{\alpha}}{\phi}}}
= \imodel{\I}{\ddiamond{\prepeat{\alpha}}{\phi}}\).
Consequently,
\(\imodel{\I}{\phi \lor \ddiamond{\alpha}{\ddiamond{\prepeat{\alpha}}{\phi}}}
\subseteq \imodel{\I}{\ddiamond{\prepeat{\alpha}}{\phi}}\).

\item[\irref{duald}] \(\imodel{\I}{\ddiamond{\pdual{\alpha}}{\phi}}
= \strategyfor[\pdual{\alpha}]{\imodel{\I}{\phi}}
= \scomplement{\strategyfor[\alpha]{\scomplement{(\imodel{\I}{\phi})}}}
= \scomplement{\strategyfor[\alpha]{\imodel{\I}{\lnot\phi}}}
= \scomplement{(\imodel{\I}{\ddiamond{\alpha}{\lnot\phi}})}
= \imodel{\I}{\lnot\ddiamond{\alpha}{\lnot\phi}}\)
by \rref{def:HG-semantics}.

\item[\irref{M}]
Assume the premise \m{\phi\limply\psi} is valid in interpretation $\iget[const]{\I}$, i.e.\ \m{\imodel{\I}{\phi}\subseteq\imodel{\I}{\psi}}.
Then the conclusion \m{\ddiamond{\alpha}{\phi}\limply\ddiamond{\alpha}{\psi}} is valid in $\iget[const]{\I}$, i.e.\
\(\imodel{\I}{\ddiamond{\alpha}{\phi}} = \strategyfor[\alpha]{\imodel{\I}{\phi}} \subseteq \strategyfor[\alpha]{\imodel{\I}{\psi}} = \imodel{\I}{\ddiamond{\alpha}{\psi}}\) by monotonicity (\rref{lem:monotone}).

\item[\irref{FP}]
Assume the premise \m{\phi\lor\ddiamond{\alpha}{\psi}\limply\psi} is valid in $\iget[const]{\I}$, i.e.\ \(\imodel{\I}{\phi\lor\ddiamond{\alpha}{\psi}}\subseteq\imodel{\I}{\psi}\).
That is,
\(\imodel{\I}{\phi} \cup \strategyfor[\alpha]{\imodel{\I}{\psi}}
= \imodel{\I}{\phi}\cup\imodel{\I}{\ddiamond{\alpha}{\psi}}
= \imodel{\I}{\phi\lor\ddiamond{\alpha}{\psi}}
\subseteq \imodel{\I}{\psi}\).
Thus, $\psi$ is a pre-fixpoint of \(Z=\imodel{\I}{\phi}\cup\strategyfor[\alpha]{Z}\).
Now using \rref{lem:monotone}, 
\(\imodel{\I}{\ddiamond{\prepeat{\alpha}}{\phi}}
= \strategyfor[\prepeat{\alpha}]{\imodel{\I}{\phi}}
= \lfp{Z}{(\imodel{\I}{\phi}\cup\strategyfor[\alpha]{Z})}\)
is the least fixpoint and the least pre-fixpoint. %
Thus,
\(\imodel{\I}{\ddiamond{\prepeat{\alpha}}{\phi}} \subseteq \imodel{\I}{\psi}\),
which implies that \(\ddiamond{\prepeat{\alpha}}{\phi} \limply \psi\) is valid in $\iget[const]{\I}$.

\item[\irref{US}]
Standard soundness proofs for \irref{US} \cite{Church_1956} generalize to \dGL.
A new proof based on an elegant use of the soundness of \irref{RE} is shown here.
Assume the premise \m{\preusubst[\phi]{p}} is valid, i.e.\  \m{\imodel{\I}{\preusubst[\phi]{p}}=\linterpretations{\Sigma}{V}} in all interpretations $\iget[const]{\I}$ with a set of states \m{\linterpretations{\Sigma}{V}}.
Assume that the uniform substitution is admissible, otherwise rule \irref{US} is not applicable and there is nothing to show.
It needs to be shown that \m{\usubst[\phi]{p}{\psi}} is valid, i.e.\ \m{\imodel{\I}{\usubst[\phi]{p}{\psi}}=\linterpretations{\Sigma}{V}} for all $\iget[const]{\I}$ with \m{\linterpretations{\Sigma}{V}}.
Consider any particular interpretation $\iget[const]{\J}$ with set of states \m{\linterpretations{\Sigma}{V}}.
Without loss of generality, assume $p$ not to occur in $\mapply{\psi}{\cdot}$ (otherwise first replace all occurrences of $p$ in $\mapply{\psi}{\cdot}$ by $q$ and then use rule \irref{US} again to replace those $q$ by $p$).
Thus, by uniform substitution, $p$ does not occur in \m{\usubst[\phi]{p}{\psi}} and the value of $\iget[const]{\J}(p)$ is immaterial for the semantics of \m{\usubst[\phi]{p}{\psi}}.
Therefore, pass to an interpretation $\iget[const]{\I}$ that modifies $\iget[const]{\J}$ by changing the semantics of $p$ such that \m{\imodel{\I}{p(x)}=\imodel{\J}{\mapply{\psi}{x}}} for all values of $x$.
In particular, \m{\imodel{\I}{p(x)}=\imodel{\I}{\mapply{\psi}{x}}} for all values of $x$, since $p$ does not occur in \m{\mapply{\psi}{x}}.
Thus, \m{\iget[const]{\I}\models{\lforall{x}{(p(x)\lbisubjunct\mapply{\psi}{x})}}}.
Since \irref{M} is locally sound, so is the congruence rule \irref{RE}, which derives from \irref{M}.
The principle of substitution of equivalents \cite[Chapter 13]{HughesCresswell96} (from \m{A\lbisubjunct B} derive \m{\mapply{\Upsilon}{A} \lbisubjunct \mapply{\Upsilon}{B}}, where $\mapply{\Upsilon}{B}$ is the formula $\mapply{\Upsilon}{A}$ with some occurrences of $A$ replaced by $B$), thus, generalizes to \dGL and is locally sound. 
Hence, for all particular occurrences of $p(u)$ in $\preusubst[\phi]{p}$, have \m{\iget[const]{\I}\models{p(u)\lbisubjunct\mapply{\psi}{u}}}, which implies \m{\iget[const]{\I}\models{\preusubst[\phi]{p}\lbisubjunct\subst[\phi]{p(u)}{\mapply{\psi}{u}}}} for the ordinary replacement of $p(u)$ by $\mapply{\psi}{u}$.
This process can be repeated for all occurrences of $p(u)$,
leading to \m{\iget[const]{\I}\models{\preusubst[\phi]{p}\lbisubjunct\usubst[\phi]{p}{\psi}}}.
Thus, \(\linterpretations{\Sigma}{V}=\imodel{\I}{\preusubst[\phi]{p}}=\imodel{\I}{\usubst[\phi]{p}{\psi}}\).
Hence, \(\imodel{\J}{\usubst[\phi]{p}{\psi}}=\linterpretations{\Sigma}{V}\), because $p$ no longer occurs after uniform substitution \m{\usubst[\phi]{p}{\psi}}, since all occurrences of $p$ with any arguments will have been replaced at some point (since admissible).
This implies that \m{\usubst[\phi]{p}{\psi}} is valid since interpretation $\iget[const]{\J}$ with set of states \m{\linterpretations{\Sigma}{V}} was arbitrary.
\end{enumerate}
This concludes the soundness proofs for all axioms and proof rules of the \dGL proof calculus, which is, thus, sound.
\qedhere
\endproofmove

The proof calculus in \rref{fig:dGL} does not handle differential equations \m{\pevolvein{\D{x}=\genDE{x}}{\ivr}} with evolution domain constraints $\ivr$ (other than $\ltrue$).
Yet, \rref{lem:pevolvein} from \rref{sec:HG-equivalences} eliminates all evolution domain constraints equivalently from hybrid games, so that evolution domains no longer occur after this equivalence transformation.

\subsection{Completeness} \label{sec:dGL-complete}
\newcommand{\LBase}{\textit{L}\xspace}%

The converse of soundness is completeness, which is the question whether all valid formulas are provable.
Completeness of \dGL is a challenging question related to a famous open problem about completeness of propositional game logic \cite{DBLP:conf/focs/Parikh83}.
Based on G\"odel's second incompleteness theorem \cite{Goedel_1931}, \dL is incomplete \cite[Theorem 2]{DBLP:journals/jar/Platzer08} and so is \dGL.
Hence, the right question to ask is that of \emph{relative completeness} \cite{DBLP:journals/siamcomp/Cook78,DBLP:conf/stoc/HarelMP77}, i.e.\ completeness relative to an oracle logic \LBase.
Relative completeness studies the question whether a proof calculus has all proof rules that are required for proving all valid formulas in the logic from tautologies in \LBase.
Using a notion similar to Leivant's \citeyear{DBLP:conf/fossacs/Leivant09}, the question of relative completeness can be separated from that of expressivity.
Relative completeness can be shown \emph{schematically} for \dGL, i.e.\ the \dGL calculus is complete relative to \emph{any} expressive logic.
This is to be contrasted with \dL, whose relative completeness proof was dependent on the particular base logic and the specifics of its encoding \cite{DBLP:journals/jar/Platzer08}.
In particular, the \dGL completeness result is coding-free \cite{Moschovakis74}, i.e.\ independent of the particular encoding.
It only depends on the ability to express formulas.

\begin{definition}[Expressive]%
  \label{def:expressive}%
  A logic \LBase is \emph{expressive} (for \dGL) if, for each \dGL formula~\m{\phi\ignore{ \in \lformulas{\Sigma}{V}}}
  there is a formula~\m{\reduct{\phi}\ignore{ \in \lformulas[\LBase]{\Sigma}{V}}} of \LBase
  that is equivalent, i.e.\ \m{\entails{\phi \lbisubjunct \reduct{\phi}}}.
  Logic \LBase is \emph{constructively expressive} if, in addition, the mapping \(\phi\mapsto\reduct{\phi}\) is effective.
  The logic \LBase is \emph{differentially expressive} for a given proof calculus if \LBase is expressive and all equivalences of the form
  \m{\ddiamond{\pevolve{\D{x}=\genDE{x}}}{G} \lbisubjunct \reduct{(\ddiamond{\pevolve{\D{x}=\genDE{x}}}{G})}}
  are provable in that calculus.
  The logic \LBase is assumed to be closed under the connectives of first-order logic.\footnote{
  \label{foot:dynamic-quantifier-abbreviation}
  Alternatively, the equivalence \m{\lexists{x}{\phi} \mequiv \ddiamond{\pevolve{\D{x}=1}}{\phi} \lor \ddiamond{\pevolve{\D{x}=-1}}{\phi}} can be used to consider quantifiers as abbreviations in differentially expressive logics \LBase.
}
\end{definition}
Differential expressiveness ensures that the expressive logic \LBase is equipped with proof rules for concluding properties of differential equations from their equivalent expressions in \LBase.
The differential equation axiom \irref{evolved} is available for that purpose but limited to expressible solutions.
More general ways of concluding properties of differential equations for differential expressiveness include differential invariants and differential cuts \cite{DBLP:journals/logcom/Platzer10}, differential ghosts \cite{DBLP:journals/lmcs/Platzer12}, and the Euler axiom \cite{DBLP:conf/lics/Platzer12b}.
Concrete examples of differentially expressive logics are developed in \rref{sec:differentially-expressive} after proving completeness schematically relative to any arbitrary differentially expressive logic.

The classical approach for completeness proofs \cite{DBLP:journals/siamcomp/Cook78,DBLP:conf/stoc/HarelMP77} proceeds in stages of first-order safety assertions, first-order termination assertions, and then the repeated use of those to prove the general case.
That approach does not work for \dGL, because hybrid games are so highly symmetric that they may contain operators whose proof depends on proofs about all other operators.
A proof of \(F\limply\ddiamond{\alpha}{G}\), for example, may require proofs of formulas of the form \(A\limply\dbox{\beta}{B}\), e.g., when $\alpha$ is $\pdual{\beta}$.
Such an attempt of proving completeness for $\ddiamond{\alpha}{}$ formulas would need to assume completeness for $\dbox{\beta}{}$ formulas and vice versa, which is a cyclic assumption.
Even more involved cyclic arguments result from trying to prove completeness of $\ddiamond{\prepeat{\alpha}}{}$ and $\dbox{\prepeat{\alpha}}{}$ formulas that way.
Furthermore, the previous arguments for completeness of $\ddiamond{\prepeat{\alpha}}{}$ formulas  \cite{DBLP:journals/siamcomp/Cook78,DBLP:conf/stoc/HarelMP77,DBLP:journals/jar/Platzer08} depend on proofs about repetition counts.
Those do not work in a hybrid game setting, either, because guaranteed repetition bounds for winning repetition games can be recursively transfinite (\rref{thm:dGL-closure-lower}).
Also compare how the semantical discrepancies discussed in \rref{app:alternative-semantics} relate to repetition bounds.

Instead, completeness for all \dGL formulas of all kinds can be proved simultaneously, yet with a more involved well-founded partial order on formulas that ensures that the inductive argument in the completeness proof stays well-founded.
This generality has beneficial side-effects, though, because the resulting proof architecture enables a result with minimal coding that makes it possible to exactly identify all complex cases. 

\begin{theorem}[Relative completeness] \label{thm:dGL-complete}%
  \index{complete!relatively!dGL@\dGL}%
  The \dGL calculus is a \emph{sound and complete axiomatization} of hybrid games relative to \emph{any} differentially expressive logic \LBase, i.e.\
  every valid \dGL formula is provable in the \dGL calculus from \LBase tautologies.
\end{theorem}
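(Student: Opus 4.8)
The plan is to leverage the already-established soundness (\rref{thm:dGL-sound}) and concentrate entirely on completeness, reducing the provability of valid \dGL formulas to their expressibility in \LBase. For each \dGL formula $\phi$, differential expressiveness of \LBase supplies an equivalent \LBase formula $\reduct{\phi}$ with $\entails \phi \lbisubjunct \reduct{\phi}$. The whole theorem then follows from a single \emph{rendition lemma}: $\infers \phi \lbisubjunct \reduct{\phi}$ for every \dGL formula $\phi$. Granting it, a valid $\phi$ has a valid, hence tautological, rendition $\reduct{\phi}$, so $\infers \reduct{\phi}$ as an \LBase tautology, and modus ponens with $\infers \phi \lbisubjunct \reduct{\phi}$ yields $\infers\phi$.

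To prove the rendition lemma I would \emph{not} use the staged Cook--Harel induction (first-order safety, then termination, then the general case), which collapses here: because $\pdual{}$ renders the language symmetric, a proof about $\ddiamond{\alpha}{}$ can demand proofs about $\dbox{\beta}{}$ and conversely, so the induction must run over all formula shapes at once, ordered by a carefully designed well-founded order on formulas --- roughly lexicographic in the structural complexity of the game part and then the postcondition, with \LBase postconditions treated as atomic so the recursion never descends into them. The propositional and quantifier cases are routine from closure of \LBase under first-order connectives and the induction hypothesis (stitched together with congruence, derivable from $\irref{M}$). Each atomic or compound-game modality is dispatched by peeling its outermost operator with the matching axiom --- $\irref{assignd}$, $\irref{testd}$, $\irref{choiced}$, $\irref{composed}$, $\irref{duald}$, and differential expressiveness for $\pevolve{\D{x}=\genDE{x}}$ --- rewriting $\ddiamond{\alpha}{\phi}$ into a formula of strictly smaller measure that the induction hypothesis then renders into \LBase.

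The main obstacle is the repetition case $\ddiamond{\prepeat{\alpha}}{\phi}$, whose winning region is a \emph{least} fixpoint. Writing $F \mdefeq \reduct{(\ddiamond{\prepeat{\alpha}}{\phi})}$, the direction $\infers \ddiamond{\prepeat{\alpha}}{\phi} \limply F$ is the gentle one: $F$ is semantically a pre-fixpoint, so $(\phi \lor \ddiamond{\alpha}{F}) \limply F$ is valid and, via the induction hypothesis applied to the strictly smaller $\ddiamond{\alpha}{F}$ (note $F$ is already in \LBase), reduces provably to an \LBase tautology, whereupon rule $\irref{FP}$ supplies the implication. The converse $\infers F \limply \ddiamond{\prepeat{\alpha}}{\phi}$ is the crux: this is exactly where classical completeness proofs count repetitions or invoke a convergence rule, and exactly what fails here, since \rref{thm:dGL-closure-lower} shows that winning repetitions can require recursively transfinite stages, so no repetition bound is available.

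My plan instead is the fixpoint route hinted at by the Calculus of Constructions analogy: characterize the least fixpoint impredicatively as the meet of all pre-fixpoints, so that $\ddiamond{\prepeat{\alpha}}{\phi}$ --- itself a pre-fixpoint by the unwinding axiom $\irref{iterated}$ --- lies above $F$, giving $F \limply \ddiamond{\prepeat{\alpha}}{\phi}$ without ever counting iterations. Making this precise purely inside the calculus --- realizing the impredicative quantification over pre-fixpoints through the chosen \LBase rendition together with the derivable fixpoint principles $\irref{FP}$ and $\irref{invind}$ --- is the delicate heart of the argument and where I expect the real work to lie. Once it succeeds for $\ddiamond{\prepeat{\alpha}}{}$, the greatest-fixpoint case $\dbox{\prepeat{\alpha}}{}$ follows by determinacy via $\irref{box}$ and $\irref{duald}$, closing the simultaneous well-founded induction.
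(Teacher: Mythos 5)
Your architecture is sound and genuinely different from the paper's: instead of decomposing each valid formula into implications \(\precondf\limply\ddiamond{\alpha}{\postcondf}\) or \(\precondf\limply\dbox{\alpha}{\postcondf}\) as the paper does, you prove a provable-equivalence ("rendition") lemma \(\infers\phi\lbisubjunct\reduct{\phi}\) by the same kind of well-founded induction. A real dividend of your route is that \(\dbox{\prepeat{\alpha}}{}\) then does reduce to \(\ddiamond{\prepeat{\alpha}}{}\) via \irref{box} and \irref{duald}, whereas the paper's implication-based decomposition forces a separate invariant argument for \(\dbox{\prepeat{\beta}}{\postcondf}\) (using \(\reduct{(\dbox{\prepeat{\beta}}{\postcondf})}\) as the inductive invariant with \irref{invind}), because negating an implication flips it. Your easy direction \(\infers\ddiamond{\prepeat{\alpha}}{\phi}\limply F\) via the induction hypothesis and \irref{FP} is exactly right.

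At the step you yourself flag as the delicate heart, however, there is a genuine gap: the tools you name for deriving \(F\limply\ddiamond{\prepeat{\alpha}}{\phi}\), namely \irref{FP} and \irref{invind}, cannot close it. Both are least-pre-fixpoint \emph{elimination} principles: they derive implications \emph{out of} \(\ddiamond{\prepeat{\alpha}}{\phi}\) into an arbitrary provable pre-fixpoint, never implications \emph{into} it, and unwinding with \irref{iterated} only peels finitely many iterations, which \rref{thm:dGL-closure-lower} shows is hopeless. Nor can you exploit any internal structure of \(F\): the only proof-theoretic access to the abstract logic \LBase is that its valid formulas are axioms, and \(F\limply\ddiamond{\prepeat{\alpha}}{\phi}\) is not an \LBase formula. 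The missing idea is the paper's use of the uniform substitution rule \irref{US} with a \emph{fresh predicate symbol}: the schematic least-pre-fixpoint property \(\lforall{\vec{x}}{(\phi\lor\ddiamond{\alpha}{p(\vec{x})}\limply p(\vec{x}))}\limply(F\limply p(\vec{x}))\) is valid because \(F\) is equivalent to the least pre-fixpoint, and since \(p(\vec{x})\) is atomic this formula has one fewer loop, hence strictly smaller game complexity, so it is derivable by the induction hypothesis; \irref{US} then instantiates \(p(\vec{x})\) to \(\ddiamond{\prepeat{\alpha}}{\phi}\) itself, whose pre-fixpoint property is supplied by \irref{iterated} and \(\forall\)-generalization, and modus ponens yields \(F\limply\ddiamond{\prepeat{\alpha}}{\phi}\). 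That fresh-predicate-plus-substitution device is precisely how the "impredicative quantification over pre-fixpoints" you gesture at gets realized inside the calculus; without it (or an equivalent schematic or second-order mechanism) your induction does not close.
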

\proofmove
\let\Oracle\LBase%
Write \m{\infers[\Oracle] \phi} to indicate that \dGL formula $\phi$ can be derived in the \dGL proof calculus from valid \LBase formulas.
It takes a moment's thought to conclude that soundness transfers to this case from \rref{thm:dGL-sound}, so it remains to prove completeness.
  For every valid \dGL formula $\phi$ it has to be proved that $\phi$
  can be derived from valid \LBase tautologies within the \dGL calculus:
  from \m{\entails\phi} prove \m{\infers[\Oracle] \phi}.
  The proof proceeds as follows: By propositional
  recombination, inductively identify fragments of $\phi$ that correspond to
  \m{\phi_1 \limply \ddiamond{\alpha}{\phi_2}}
  or
  \m{\phi_1 \limply \dbox{\alpha}{\phi_2}} logically.
  Find structurally simpler formulas from which these Angel or Demon properties can be derived taking care that the resulting formulas are simpler than the original one in a well-founded order.
  Finally, prove that the original \dGL formula can be re-derived from the subproofs in the \dGL calculus.

  By appropriate propositional derivations, assume $\phi$ to be given in conjunctive normal form.
  Assume that negations are pushed inside over modalities using the dualities
  \m{\lnot\dbox{\alpha}{\phi} \mequiv \ddiamond{\alpha}{\lnot\phi}}
  and
  \m{\lnot\ddiamond{\alpha}{\phi} \mequiv \dbox{\alpha}{\lnot\phi}} that are provable by axiom \irref{box},
  and that negations are pushed inside over quantifiers using provable first-order equivalences
  \m{\lnot\lforall{x}{\phi} \mequiv \lexists{x}{\lnot\phi}}
  and
  \m{\lnot\lexists{x}{\phi} \mequiv \lforall{x}{\lnot\phi}}.
  The remainder of the proof follows an induction on a well-founded partial order~$\prec$ induced on \dGL formulas by the lexicographic ordering of the overall structural complexity of the hybrid games in the formula and the structural complexity of the formula itself, with the logic \LBase placed at the bottom of the partial order~$\prec$.
  The base logic \LBase is considered of lowest complexity by relativity, because \m{\entails F} immediately implies \m{\infers[\Oracle] F} for all formulas $F$ of \LBase.
  Well-foundedness of $\prec$ follows (formally from projections into concatenations of finite trees), because the overall structural complexity of hybrid games in any particular formula can only decrease finitely often at the expense of increasing the formula complexity, which can, in turn, only decrease finitely often to result in a formula in $\LBase$.
  The only important property is that, if the structure of the hybrid games in $\psi$ is simpler than those in $\phi$ (somewhere simpler and nowhere worse), then $\psi\prec\phi$ even if the logical formula structure of $\psi$ is larger than that of $\phi$, e.g., when $\psi$ has more propositional connectives, quantifiers or modalities (but of smaller overall complexity hybrid games).
  In the following, \emph{IH} is short for induction hypothesis.
  The proof follows the syntactic structure of \dGL formulas.
  \begin{enumerate}
  \addtocounter{enumi}{-1}%
  \item \label{case:dGL-complete-0}
    If $\phi$ has no hybrid games, then~$\phi$ is a first-order formula; hence provable by assumption (even decidable \cite{tarski_decisionalgebra51} if in first-order real arithmetic, i.e.\ no uninterpreted predicate symbols occur).
  
  \item $\phi$ is of the form \m{\lnot\phi_1}; then~$\phi_1$ is first-order and quantifier-free, as negations are assumed to be pushed inside, so \rref{case:dGL-complete-0} applies.
  
  \item $\phi$ is of the form \m{\phi_1 \land \phi_2}, then \m{\entails\phi_1} and \m{\entails\phi_2}, so individually deduce simpler proofs for
   \m{\infers[\Oracle] \phi_1} and \m{\infers[\Oracle] \phi_2}
    by IH, which combine propositionally
    to a proof for \m{\infers[\Oracle] \phi_1\land\phi_2}.
  
  \item The case where $\phi$ is of the form \m{\lexists{x}{\phi_2}}, \m{\lforall{x}{\phi_2}}, \m{\ddiamond{\alpha}{\phi_2}} or \m{\dbox{\alpha}{\phi_2}} is included in \rref{case:dGL-complete-or} with \(\phi_1\mequiv\lfalse\).

  \item \label{case:dGL-complete-or}
    \newcommand{\precondf}{F}%
    \newcommand{\postcondf}{G}%
    $\phi$ is a disjunction and---without loss of generality---has one of the following forms
    (otherwise use provable associativity and commutativity to reorder disjunction):
    \[
    \begin{array}{r@{~}c@{~}l}
      \phi_1 &\lor& \ddiamond{\alpha}{\phi_2}\\
      \phi_1 &\lor& \dbox{\alpha}{\phi_2}\\
      \phi_1 &\lor& \lexists{x}{\phi_2}\\
      \phi_1 &\lor& \lforall{x}{\phi_2}
      .
    \end{array}
    \]
    Let \m{\phi_1 \lor \dmodality{\alpha}{\phi_2}} be a unified notation for those cases.
    Then, \m{\phi_2\prec\phi},
    since~\(\phi_2\) has less modalities or quantifiers.
    Likewise, \m{\phi_1\prec\phi} because
    \m{\dmodality{\alpha}{\phi_2}} contributes one modality or quantifier to
    $\phi$ that is not part of $\phi_1$.
    When abbreviating the simpler formulas $\lnot\phi_1$ by $\precondf$ and $\phi_2$ by $\postcondf$, the validity \m{\entails \phi} yields \m{\entails \lnot \precondf \lor \dmodality{\alpha}{\postcondf}}, so \m{\entails \precondf \limply \dmodality{\alpha}{\postcondf}}, from which the remainder of the proof inductively derives
    \begin{equation} \label{eq:derivable-triple}
      \infers[\Oracle]{\precondf \limply \dmodality{\alpha}{\postcondf}}
      .
    \end{equation}
    
    The proof of \rref{eq:derivable-triple} is by induction on the syntactic structure of $\dmodality{\alpha}{}$.

  \begin{enumerate}
  \item If $\dmodality{\alpha}{}$ is the operator $\lforall{x}{}$ then \m{\entails \precondf \limply \lforall{x}{\postcondf}}, where $x$ can be assumed not to occur in $\precondf$ by renaming.
  Hence, \m{\entails \precondf \limply \postcondf}.
  Since \(\postcondf \prec \lforall{x}{\postcondf}\), 
  because it has less quantifiers, also \((\precondf\limply \postcondf) \prec (\precondf\limply\lforall{x}{\postcondf})\),
  hence \(\infers[\Oracle] \precondf \limply \postcondf\) is derivable by IH.
  Then, \(\infers[\Oracle] \precondf \limply \lforall{x}{\postcondf}\) derives by $\forall$-generalization of first-order logic, since $x$ does not occur in $\precondf$.
  It is even decidable if in first-order real arithmetic \cite{tarski_decisionalgebra51}.
  The remainder of the proof concludes
  \((\precondf\limply \psi) \prec (\precondf\limply \phi)\) from \(\psi \prec \phi\) without further notice.

  \item If $\dmodality{\alpha}{}$ is the operator $\lexists{x}{}$ then \m{\entails \precondf \limply \lexists{x}{\postcondf}}.
  If $\precondf$ and $\postcondf$ are \LBase formulas, then, since \LBase is closed under first-order connectives, so is the valid formula \m{\precondf \limply \lexists{x}{\postcondf}}, which is, then, provable by IH and even decidable if in first-order real arithmetic \cite{tarski_decisionalgebra51}.
  
  Otherwise, $\precondf,\postcondf$ correspond to \LBase formulas by expressiveness of \LBase (\rref{def:expressive}), which implies the existence of an \LBase formula $\reduct{\postcondf}$ such that
  \(\entails \reduct{\postcondf} \lbisubjunct \postcondf\).
  Since \LBase is closed under first-order connectives, the valid formula \m{\precondf \limply \lexists{x}{(\reduct{\postcondf})}} is provable by IH, because
  \((\precondf \limply \lexists{x}{(\reduct{\postcondf})}) \prec (\precondf \limply \lexists{x}{\postcondf})\) since $\reduct{\postcondf}\in\LBase$ while $\postcondf\not\in\LBase$.
  Now, \(\entails \reduct{\postcondf} \lbisubjunct \postcondf\)
  implies \(\entails \reduct{\postcondf} \limply \postcondf\), which is derivable by IH, because
  \((\reduct{\postcondf} \limply \postcondf) \prec \phi\) since $\reduct{\postcondf}$ is in \LBase.
  From \(\infers[\Oracle] \reduct{\postcondf} \limply \postcondf\), the derivable dual of $\forall$-generalization derives
  \(\infers[\Oracle] \lexists{x}{(\reduct{\postcondf})} \limply \lexists{x}{\postcondf}\),
  which combines with
  \(\infers[\Oracle] \precondf \limply \lexists{x}{(\reduct{\postcondf})}\)
  by modus ponens to
  \(\infers[\Oracle] \precondf \limply \lexists{x}{\postcondf}\).\footnote{
  Expressiveness could also render $\precondf$ and $\postcondf$ as $\reduct{\precondf},\reduct{\postcondf}$ into \LBase in this and other cases of this proof and finally come back to $\postcondf$ using rule \irref{M} instead of $\exists$-generalization.
  But the other cases have direct proofs.
  }

  \item \m{\entails \precondf \limply \ddiamond{\pevolve{\D{x}=\genDE{x}}}{\postcondf}}
  implies
  \m{\entails \precondf \limply \reduct{(\ddiamond{\pevolve{\D{x}=\genDE{x}}}{\postcondf})}},
  which is derivable by IH,
  because \((\precondf \limply\reduct{(\ddiamond{\pevolve{\D{x}=\genDE{x}}}{\postcondf})}) \prec \phi\)
  since \(\reduct{(\ddiamond{\pevolve{\D{x}=\genDE{x}}}{\postcondf})}\) is in \LBase.
  Since \LBase is differentially expressive,
  \(\infers[\Oracle] \ddiamond{\pevolve{\D{x}=\genDE{x}}}{\postcondf} \lbisubjunct \reduct{(\ddiamond{\pevolve{\D{x}=\genDE{x}}}{\postcondf})}\)
  is provable, so
  \m{\infers[\Oracle] \precondf \limply \ddiamond{\pevolve{\D{x}=\genDE{x}}}{\postcondf}}
  derives from
  \m{\infers[\Oracle] \precondf \limply \reduct{(\ddiamond{\pevolve{\D{x}=\genDE{x}}}{\postcondf})}} by modus ponens.

  \item \m{\entails \precondf \limply \dbox{\pevolve{\D{x}=\genDE{x}}}{\postcondf}}
  implies
  \m{\entails \precondf \limply \lnot\ddiamond{\pevolve{\D{x}=\genDE{x}}}{\lnot \postcondf}}.
  Thus,
  \m{\entails \precondf \limply \lnot\reduct{(\ddiamond{\pevolve{\D{x}=\genDE{x}}}{\lnot \postcondf})}},
  which is derivable by IH,
  because \((\precondf \limply\lnot\reduct{(\ddiamond{\pevolve{\D{x}=\genDE{x}}}{\lnot \postcondf})}) \prec \phi\)
  since \(\reduct{(\ddiamond{\pevolve{\D{x}=\genDE{x}}}{\lnot \postcondf})}\) is in \LBase.
  Since \LBase is differentially expressive,
  \(\infers[\Oracle] \ddiamond{\pevolve{\D{x}=\genDE{x}}}{\lnot \postcondf} \lbisubjunct \reduct{(\ddiamond{\pevolve{\D{x}=\genDE{x}}}{\lnot \postcondf})}\)
  is provable, so
  \(\infers[\Oracle] \precondf \limply \lnot\ddiamond{\pevolve{\D{x}=\genDE{x}}}{\lnot \postcondf}\)
  derives from \(\infers[\Oracle] \precondf \limply \lnot\reduct{(\ddiamond{\pevolve{\D{x}=\genDE{x}}}{\lnot \postcondf})}\) by propositional congruence.
  Axiom \irref{box}, thus, derives
  \(\infers[\Oracle] \precondf \limply \dbox{\pevolve{\D{x}=\genDE{x}}}{\postcondf}\).

\item \m{\entails \precondf \limply \ddiamond{\pevolvein{\D{x}=\genDE{x}}{\ivr}}{\postcondf}},
then this formula is, by \rref{lem:pevolvein}, equivalent to a formula without evolution domain restrictions.
  Using equation \rref{eq:evolvein} from the proof of  \rref{lem:pevolvein} as a definitorial abbreviation concludes this case by IH.
  Similarly for \m{\entails \precondf \limply \dbox{\pevolvein{\D{x}=\genDE{x}}{\ivr}}{\postcondf}}.

  \item The cases where~$\alpha$ is of the form
    \m{\pupdate{\umod{x}{\theta}}}, \m{\ptest{\ivr}}, \m{\pchoice{\beta}{\gamma}}, or \m{\beta;\gamma}
    are consequences of the soundness of the equivalence axioms \irref{assignd+testd+choiced+composed} plus the duals obtained via the duality axiom \irref{box}.
    Whenever their respective left-hand side is valid, their right-hand side is valid and of smaller complexity (the games get simpler), and hence derivable by IH.
    Thus, \m{\precondf\limply\ddiamond{\alpha}{\postcondf}} derives by applying the respective axiom.
    This proof focuses on the $\ddiamond{}{}$ cases, because $\dbox{}{}$ cases derive by axiom \irref{box} from the $\ddiamond{}{}$ equivalences.

  \item \m{\entails \precondf \limply \ddiamond{\pupdate{\pumod{x}{\theta}}}{\postcondf}} implies \m{\entails \precondf \land y=\theta \limply \subst[\postcondf]{x}{y}} for a fresh variable $y$, where $\subst[\postcondf]{x}{y}$ is the result of substituting $y$ for $x$.
  Since \((\precondf \land y=\theta \limply \subst[\postcondf]{x}{y}) \prec \ddiamond{\pupdate{\pumod{x}{\theta}}}{\postcondf}\), because there are less hybrid games,
  \(\infers[\Oracle] \precondf \land y=\theta \limply \subst[\postcondf]{x}{y}\) is derivable by IH.
  Hence, \irref{assignd} derives
  \(\infers[\Oracle] \precondf \land y=\theta \limply \ddiamond{\pupdate{\pumod{x}{y}}}{\postcondf}\).
  Propositional logic derives
  \(\infers[\Oracle] \precondf \limply (y=\theta \limply \ddiamond{\pupdate{\pumod{x}{y}}}{\postcondf})\),
  from which 
  \(\infers[\Oracle] \precondf \limply \lforall{y}{(y=\theta \limply \ddiamond{\pupdate{\pumod{x}{y}}}{\postcondf})}\) derives by $\forall$-generalization of first-order logic as $y$ is not in $\precondf$.
  Since $y$ was fresh it does not appear in $\theta$ and $\postcondf$, so substitution validities of first-order logic derive
  \(\infers[\Oracle] \precondf \limply \ddiamond{\pupdate{\pumod{x}{\theta}}}{\postcondf}\).
  Note that direct proofs of \m{\precondf \limply \ddiamond{\pupdate{\pumod{x}{\theta}}}{\postcondf}} by \irref{assignd} are possible when the resulting substitution is admissible, but the substitution forming $\subst[\postcondf]{x}{y}$ is always admissible, because it is a variable renaming replacing $x$ by the fresh $y$.

  \item \m{\entails \precondf \limply \ddiamond{\ptest{\ivr}}{\postcondf}} implies \m{\entails \precondf \limply \ivr \land \postcondf}.
  Since \((\ivr\land \postcondf) \prec \ddiamond{\ptest{\ivr}}{\postcondf}\), because it has less modalities,
  \(\infers[\Oracle] \precondf \limply \ivr \land \postcondf\) is derivable by IH.
  Hence, \irref{testd} derives
  \(\infers[\Oracle] \precondf \limply \ddiamond{\ptest{\ivr}}{\postcondf}\)
  by propositional congruence, which is used without further notice subsequently.
  
  \item \m{\entails \precondf \limply \ddiamond{\pchoice{\beta}{\gamma}}{\postcondf}} implies \m{\entails \precondf \limply \ddiamond{\beta}{\postcondf} \lor \ddiamond{\gamma}{\postcondf}}.
  Since \(\ddiamond{\beta}{\postcondf}\lor\ddiamond{\gamma}{\postcondf} \prec \ddiamond{\pchoice{\beta}{\gamma}}{\postcondf}\), because, even if the propositional and modal structure increased, the structural complexity of both hybrid games $\beta$ and $\gamma$ is smaller than that of $\pchoice{\beta}{\gamma}$ (formula $\postcondf$ did not change),
  \(\infers[\Oracle] \precondf \limply \ddiamond{\beta}{\postcondf} \lor \ddiamond{\gamma}{\postcondf}\) is derivable by IH.
  Hence, \irref{choiced} derives \(\infers[\Oracle] \precondf \limply \ddiamond{\pchoice{\beta}{\gamma}}{\postcondf}\).
  
  \item \m{\entails \precondf \limply \ddiamond{\beta;\gamma}{\postcondf}}, which implies \m{\entails \precondf \limply \ddiamond{\beta}{\ddiamond{\gamma}{\postcondf}}}.
  Since \(\ddiamond{\beta}{\ddiamond{\gamma}{\postcondf}} \prec \ddiamond{\beta;\gamma}{\postcondf}\), because, even if the number of modalities increased, the overall structural complexity of the hybrid games decreased because there are less sequential compositions, 
  \(\infers[\Oracle] \precondf \limply \ddiamond{\beta}{\ddiamond{\gamma}{\postcondf}}\) is derivable by IH.
  Hence, \(\infers[\Oracle] \precondf \limply \ddiamond{\beta;\gamma}{\postcondf}\) derives by \irref{composed}.

\item \m{\entails \precondf \limply \dbox{\prepeat{\beta}}{\postcondf}}
    can be derived by induction as follows.
    Formula \m{\dbox{\prepeat{\beta}}{\postcondf}}, which expresses that Demon has a winning strategy in game~$\prepeat{\beta}$ to satisfy~$\postcondf$,
    is an inductive invariant of $\prepeat{\beta}$, because \m{\dbox{\prepeat{\beta}}{\postcondf}\limply\dbox{\beta}{\dbox{\prepeat{\beta}}{\postcondf}}} is valid, even provable by the variation \(\dbox{\prepeat{\beta}}{\postcondf}\limply \postcondf\land\dbox{\beta}{\dbox{\prepeat{\beta}}{\postcondf}}\) of \irref{iterated} that can be obtained from axioms \irref{iterated} and \irref{box}.
    Thus, its equivalent \LBase encoding from \rref{def:expressive} is also an inductive invariant:
    \[
    \inv \mequiv
    \reduct{(\dbox{\prepeat{\beta}}{\postcondf})}
    .
    \]
    Then \m{\precondf \limply \inv} and \m{\inv \limply \postcondf} are
    valid (Angel controls~$\prepeat{}$), so derivable by IH, since \((\precondf\limply \inv) \prec \phi\) and \((\inv\limply \postcondf) \prec \phi\) hold, because $\inv$ is in \LBase.
    By \irref{M}, \irref{duald} and \irref{box}, the latter derivation \m{\infers[\Oracle] {\inv \limply \postcondf}} extends to
    \m{\infers[\Oracle] {\dbox{\prepeat{\beta}}{\inv} \limply \dbox{\prepeat{\beta}}{\postcondf}}}.
    As above, \m{\inv \limply \dbox{\beta}{\inv}} is valid, and thus derivable by IH, since~$\beta$ has less loops than $\prepeat{\beta}$.
    Thus, \irref{invind}, which derives from \irref{FP} by \rref{lem:FP-invind}, derives
    \m{\infers[\Oracle]{\inv\limply\dbox{\prepeat{\beta}}{\inv}}}.
    The above derivations \m{\infers[\Oracle] \precondf \limply \inv}, \m{\infers[\Oracle]{\inv\limply\dbox{\prepeat{\beta}}{\inv}}}, and \m{\infers[\Oracle] {\dbox{\prepeat{\beta}}{\inv} \limply \dbox{\prepeat{\beta}}{\postcondf}}} combine by modus ponens
    to \m{\infers[\Oracle]{\precondf\limply\dbox{\prepeat{\beta}}{\postcondf}}}.

\item \m{\entails \precondf \limply \ddiamond{\prepeat{\beta}}{\postcondf}}.
   \def\vec#1{#1}%
    Let $\vec{x}$ the vector of free variables of \m{\ddiamond{\prepeat{\beta}}{\postcondf}}.
    Since $\ddiamond{\prepeat{\beta}}{\postcondf}$ is the least pre-fixpoint, for all \dGL formulas $\psi$ with free variables in $\vec{x}$:
    \[
    \entails \lforall{\vec{x}}{(\postcondf\lor\ddiamond{\beta}{\psi}\limply\psi)} \limply (\ddiamond{\prepeat{\beta}}{\postcondf} \limply\psi)
    \]
    by a variation of the soundness argument for \irref{FP}, which is also derivable by the (semantic) deduction theorem from \irref{FP}.
    In particular, this holds for a fresh predicate symbol $p$ with arguments $\vec{x}$:
    \[
    \entails \lforall{\vec{x}}{(\postcondf\lor\ddiamond{\beta}{p(\vec{x})}\limply p(\vec{x}))} \limply (\ddiamond{\prepeat{\beta}}{\postcondf} \limply p(\vec{x}))
    \]
    Using \m{\entails \precondf \limply \ddiamond{\prepeat{\beta}}{\postcondf}}, this implies
    \[
    \entails \lforall{\vec{x}}{(\postcondf\lor\ddiamond{\beta}{p(\vec{x})}\limply p(\vec{x}))} \limply (\precondf \limply p(\vec{x}))
    \]
    As \((\lforall{\vec{x}}{(\postcondf\lor\ddiamond{\beta}{p(\vec{x})}\limply p(\vec{x}))} \limply (\precondf \limply p(\vec{x}))) \prec \phi\), because, even if the formula complexity increased, the structural complexity of the hybrid games decreased, because $\phi$ has one more loop, this fact is derivable by IH:
    \[
    \infers[\Oracle] \lforall{\vec{x}}{(\postcondf\lor\ddiamond{\beta}{p(\vec{x})}\limply p(\vec{x}))} \limply (\precondf \limply p(\vec{x}))
    \]
    By uniformly substituting $\ddiamond{\prepeat{\beta}}{\postcondf}$, which has free variables $\vec{x}$, for $p(\vec{x})$, \irref{US} derives using $p\not\in \precondf,\postcondf, \beta$:
    \begin{equation}
    \infers[\Oracle] \lforall{\vec{x}}{(\postcondf\lor\ddiamond{\beta}{\ddiamond{\prepeat{\beta}}{\postcondf}}\limply \ddiamond{\prepeat{\beta}}{\postcondf})} \limply (\precondf \limply \ddiamond{\prepeat{\beta}}{\postcondf})
    \label{eq:F-implies-lfp}
    \end{equation}
    Yet, \irref{iterated} derives
    \(\infers \postcondf\lor\ddiamond{\beta}{\ddiamond{\prepeat{\beta}}{\postcondf}} \limply \ddiamond{\prepeat{\beta}}{\postcondf}\),
    from which
    \(\infers \lforall{\vec{x}}{(\postcondf\lor\ddiamond{\beta}{\ddiamond{\prepeat{\beta}}{\postcondf}} \limply \ddiamond{\prepeat{\beta}}{\postcondf})}\) derives by $\forall$-generalization.
    Now modus ponens derives
    \(\infers[\Oracle] \precondf \limply \ddiamond{\prepeat{\beta}}{\postcondf}\)
    by \rref{eq:F-implies-lfp}.

\item \m{\entails \precondf \limply \ddiamond{\pdual{\beta}}{\postcondf}} implies
   \m{\entails \precondf \limply \lnot\ddiamond{\beta}{\lnot \postcondf}}, which implies
   \m{\entails \precondf \limply \dbox{\beta}{\postcondf}}.
   Since \(\dbox{\beta}{\postcondf} \prec \ddiamond{\pdual{\beta}}{\postcondf}\), because $\pdual{\beta}$ is more complex than $\beta$ even if the modality changed,
   \m{\infers[\Oracle] \precondf \limply \dbox{\beta}{\postcondf}} can be derived by IH.
   Axiom \irref{box}, thus, derives
   \m{\infers[\Oracle] \precondf \limply \lnot\ddiamond{\beta}{\lnot \postcondf}},
   from which axiom \irref{duald} derives
   \m{\infers[\Oracle] \precondf \limply \ddiamond{\pdual{\beta}}{\postcondf}}.

\item \m{\entails \precondf \limply \dbox{\pdual{\beta}}{\postcondf}} implies
   \m{\entails \precondf \limply \lnot\ddiamond{\pdual{\beta}}{\lnot \postcondf}}, hence
   \m{\entails \precondf \limply \ddiamond{\beta}{\postcondf}}.
   Since \(\ddiamond{\beta}{\postcondf} \prec \dbox{\pdual{\beta}}{\postcondf}\), because $\pdual{\beta}$ is more complex than $\beta$ even if the modality changed,
   \m{\infers[\Oracle] \precondf \limply \ddiamond{\beta}{\postcondf}} can be derived by IH.
   Consequently, \m{\infers[\Oracle] \precondf \limply \lnot\lnot\ddiamond{\beta}{\lnot\lnot \postcondf}} can be derived using \irref{M} on \(\infers \postcondf \limply \lnot\lnot \postcondf\).
   Hence, \irref{duald} derives \m{\infers[\Oracle] \precondf \limply \lnot\ddiamond{\pdual{\beta}}{\lnot \postcondf}}, from which axiom \irref{box} derives
   \m{\infers[\Oracle] \precondf \limply \dbox{\pdual{\beta}}{\postcondf}}.

    \end{enumerate}
    \noindent
    This concludes the derivation of \rref{eq:derivable-triple}, because all operators \m{\dmodality{\alpha}{}} for the form \rref{eq:derivable-triple} have been considered.
    From \rref{eq:derivable-triple}, which is 
    \m{\infers[\Oracle] \lnot\phi_1 \limply \dmodality{\alpha}{\phi_2}} after resolving abbreviations,
    \m{\infers[\Oracle] \phi_1 \lor \dmodality{\alpha}{\phi_2}}
    derives propositionally.

  \end{enumerate}
This completes the proof of completeness (\rref{thm:dGL-complete}), because all syntactical forms of \dGL formulas have been covered.
\qedhere
\endproofmove
The proof of \rref{thm:dGL-complete} is constructive, so \rref{thm:dGL-complete} is constructive if \LBase is constructively expressive.
The proof is Moschovakis coding-free \cite{Moschovakis74}.
It even works entirely without coding, except for \m{\pevolve{\D{x}=\genDE{x}}}, $\exists$ and $\dbox{\prepeat{\beta}}{}$.
Using \irref{US}, the case for \(\ddiamond{\prepeat{\beta}}{G}\) in the proof of \rref{thm:dGL-complete} reveals an explicit $\reduct{}$-free reduction to a \dGL formula with less loops, which can be considered a modal analogue of characterizations in the Calculus of Constructions \cite{DBLP:journals/iandc/CoquandH88}.
Using \rref{thm:dGL-complete}, these observations easily reprove a classical result of Meyer and Halpern \cite{DBLP:journals/jacm/MeyerH82} about the semidecidability of termination assertions (logical formulas \(F\limply\ddiamond{\alpha}{G}\) of uninterpreted dynamic logic with first-order $F,G$ and regular programs $\alpha$ without differential equations).
In fact, this proves a stronger result about semidecidability of dynamic logic without any $\dbox{\alpha}{\cdot}$ with loops \cite{DBLP:journals/iandc/Schmitt84}.
\rref{thm:dGL-complete} shows that this result continues to hold for uninterpreted game logic in the fragment where $\prepeat{}$ only occurs with even $\pdual{}$-polarity in $\ddiamond{\alpha}{}$ and only of odd $\pdual{}$-polarity in $\dbox{\alpha}{}$ (the conditions on tests in $\alpha$ are accordingly).

The constructive nature of \rref{thm:dGL-complete} characterizes exactly which part of hybrid games proving is difficult:
finding computationally succinct weaker invariants %
for $\dbox{\prepeat{\beta}}{G}$ and finding succinct differential (in)variants \cite{DBLP:journals/logcom/Platzer10} for $\dbox{\pevolve{\D{x}=\genDE{x}}}{}$ and $\ddiamond{\pevolve{\D{x}=\genDE{x}}}{}$, of which a solution is a special case \cite{DBLP:journals/lmcs/Platzer12}.
The case $\lexists{x}{G}$ is interesting in that a closer inspection of \rref{thm:dGL-complete} reveals that its complexity depends on whether that quantifier supports Herbrand disjunctions. That is the case for uninterpreted first-order logic and first-order real arithmetic \cite{tarski_decisionalgebra51}, but not for $G\mequiv\dbox{\prepeat{\beta}}{\psi}$, which already gives $\lexists{x}{G}$ the full $\Pi^1_1$-complete complexity even for classical dynamic logic \cite[Theorems 13.1,13.2]{Harel_et_al_2000}.
Herbrand disjunctions for $\lexists{x}{G}$ justify how \rref{thm:dGL-complete} implies the result of Schmitt \cite{DBLP:journals/iandc/Schmitt84}.

The proof of \rref{thm:dGL-complete} uses \emph{minimal} coding.
The case $\dbox{\prepeat{\beta}}{}$ needs encoding, because \(F\limply\dbox{\prepeat{\beta}}{G}\) validity is already $\Pi^0_2$-complete for classical dynamic logic \cite[Theorem 13.5]{Harel_et_al_2000}.
The case $\exists{}{}$ needs encoding in the presence of \(\dbox{\prepeat{\beta}}{}\), because \(\lexists{x}{\dbox{\prepeat{\beta}}{G}}\) validity is $\Pi^1_1$-complete for classical dynamic logic \cite[Theorems 13.1]{Harel_et_al_2000}.
The case $\pevolve{\D{x}=\genDE{x}}$ leads to classical $\Delta^1_1$-hardness over $\naturals$ \cite[Lemma 4]{DBLP:journals/jar/Platzer08}.

The completeness proof indicates a coding-free way of proving Angel properties \(\ddiamond{\prepeat{\beta}}{G}\) that is similar to characterizations in the Calculus of Constructions and works in practice (\rref{app:ExampleProofs}). %
In particular, \dGL does not need Harel's convergence rule \cite{DBLP:conf/stoc/HarelMP77} for completeness and, thus, neither does logic for hybrid systems, even though it was previously based on it \cite{DBLP:conf/lics/Platzer12b}.
These results correspond to a hybrid game reading of influential views of understanding program invariants as fixpoints \cite{DBLP:conf/popl/CousotC77,Clarke79}.

\subsection{Expressibility} \label{sec:differentially-expressive}

The \dGL calculus is complete relative to any differentially expressive logic \LBase (\rref{thm:dGL-complete}).
One natural choice for an oracle logic is \muD, the modal $\mu$-calculus of differential equations (\emph{fixpoint logic of differential equations}):
\[
\phi \bebecomes X(\theta) \alternative p(\theta) \alternative \theta_1\geq\theta_2 \alternative \lnot \phi \alternative \phi \land \psi \alternative \ddiamond{\pevolve{\D{x}=\genDE{x}}}{\phi} \alternative \lfp{X}{\phi}
\]
where the least fixpoint $\lfp{X}{\phi}$ requires all occurrences of $X$ in $\phi$ to be positive.
The semantics is the usual, e.g., $\lfp{X}{\phi}$ binds set variable $X$ and real variable (vector) $x$ and is interpreted as the least fixpoint $X$ of $\phi$, i.e.\ the smallest denotation of $X$ such that \(X(x)\lbisubjunct\phi\) holds for all $x$ \cite{DBLP:journals/tcs/Kozen83,DBLP:conf/lics/Lubarsky89}.
A more careful inspection of the proofs in this article reveals that the two-variable fragment of \muD is enough, which gives a stronger statement as long as the variable hierarchy for \muD does not collapse \cite{DBLP:journals/mst/BerwangerGL07}.
The logic \muD is considered in this context, because it exposes the most natural interactivity on top of differential equations and makes the constructions most apparent and minimally coding themselves.

\begin{lemma}[Continuous expressibility] \label{lem:expressive/continuous}%
  \muD is constructively differentially expressive for \dGL.
\end{lemma}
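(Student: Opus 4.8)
The plan is to build an effective, truth-preserving translation $\reduct{\cdot}$ from $\dGL$ into $\muD$ by simultaneous induction on the structure of $\dGL$ formulas and of hybrid games, letting the syntactic clauses mirror the compositional winning-region semantics of \rref{def:HG-semantics} clause by clause. Two things then have to be discharged: that the translation preserves truth, $\entails \phi \lbisubjunct \reduct{\phi}$ (expressiveness per \rref{def:expressive}), and that the differential-equation equivalences $\ddiamond{\pevolve{\D{x}=\genDE{x}}}{G} \lbisubjunct \reduct{(\ddiamond{\pevolve{\D{x}=\genDE{x}}}{G})}$ are provable in the \dGL calculus (differential expressiveness). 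The decisive structural advantage is that the continuous modality $\ddiamond{\pevolve{\D{x}=\genDE{x}}}{\cdot}$ is itself a primitive of \muD, so the differential part will come almost for free; the real work lies in expressing the discrete, choice, repetition and dual combinators.

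Before translating modalities I would normalize: by \rref{lem:pevolvein} every evolution domain can be removed, so only bare $\pevolve{\D{x}=\genDE{x}}$ occurs; and by the equivalences of \rref{rem:HG-equivalence} duals can be pushed onto atomic games (or, equivalently, handled in general by complementation). Writing the translation of $\ddiamond{\alpha}{\Psi}$ as a game-indexed operator $R_\alpha$ acting on an already-translated postcondition $\Psi$ (possibly a bound set variable), the clauses read: assignment by substituting $\theta$ for $x$ in $\Psi$ (axiom \irref{assignd}); continuous evolution as the \muD primitive $\ddiamond{\pevolve{\D{x}=\genDE{x}}}{\Psi}$; test as $\reduct{\ivr}\land\Psi$; choice as a disjunction; composition by nesting; and dual by complementation, matching $\strategyfor[\pdual{\alpha}]{X}=\scomplement{(\strategyfor[\alpha]{\scomplement{X}})}$. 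Existential quantifiers are eliminated through $\lexists{x}{\phi} \mequiv \ddiamond{\pevolve{\D{x}=1}}{\phi}\lor\ddiamond{\pevolve{\D{x}=-1}}{\phi}$. The one genuinely new ingredient is repetition: I would set $\reduct{(\ddiamond{\prepeat{\alpha}}{\phi})} \mdefeq \lfp{X}{(\reduct{\phi}\lor R_{\alpha}(X))}$ (applied at the vector of free variables), and dually use $\gfp{}{}$ for $\dbox{\prepeat{\alpha}}{}$, precisely because $\strategyfor[\prepeat{\alpha}]{\cdot}$ is a least and $\dstrategyfor[\prepeat{\alpha}]{\cdot}$ a greatest fixpoint.

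I would then establish correctness $\entails \phi\lbisubjunct\reduct{\phi}$ by the same induction: each translation clause is validated directly against the corresponding semantic clause of \rref{def:dGL-semantics} and \rref{def:HG-semantics}, and the repetition clauses match because $\lfp{}{}$ and $\gfp{}{}$ denote exactly the least and greatest fixpoints that $\strategyfor[\prepeat{\alpha}]{\cdot}$ and $\dstrategyfor[\prepeat{\alpha}]{\cdot}$ are. For these fixpoint formulas to be legal \muD, the bound set variable $X$ must occur only positively in the body; this is guaranteed by \rref{lem:monotone}, since $X$ sits under an even number of negations just as the winning condition does. Differential expressiveness is then immediate: taking $\reduct{(\ddiamond{\pevolve{\D{x}=\genDE{x}}}{G})} \mdefeq \ddiamond{\pevolve{\D{x}=\genDE{x}}}{\reduct{G}}$, the required equivalence reduces to $\entails G\lbisubjunct\reduct{G}$ (already shown) lifted through the congruence rule \irref{RE}, which derives from \irref{M} and is therefore available in the \dGL calculus. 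Effectiveness of $\phi\mapsto\reduct{\phi}$ is clear because every clause strictly decreases a structural measure, yielding constructive differential expressiveness.

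The step I expect to be hardest is the repetition/dual interaction: certifying that the set variable stays positive once the dual clauses flip least fixpoints into greatest fixpoints, and that the chosen normal form keeps the recursion well-founded. The cleanest way to discharge this is to commit to the normalization of \rref{rem:HG-equivalence} up front, so that $\pdual{}$ sits only on atomic games and each dual combinator $\dchoice{}{}$, $\drepeat{}$, the dual test, and the dual differential equation receives its own directly verified clause, and then to read off positivity uniformly from the monotonicity argument of \rref{lem:monotone} rather than re-deriving it in each case.
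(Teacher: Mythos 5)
Your proposal is correct and follows essentially the same route as the paper's proof: a clause-by-clause translation mirroring the winning-region semantics, with \(\lfp{X}{\reduct{(\phi\lor\ddiamond{\alpha}{X(x)})}}\) for repetition, positivity of the bound set variable read off from the even-negation observation behind \rref{lem:monotone}, and differential expressiveness essentially immediate because \(\ddiamond{\pevolve{\D{x}=\genDE{x}}}{\cdot}\) is a \muD primitive. The only divergences are cosmetic: the paper routes \(\pdual{}\) and \(\dbox{}{}\) uniformly through complementation via \(\reduct{(\ddiamond{\pdual{\alpha}}{\phi})}\mequiv\reduct{(\lnot\ddiamond{\alpha}{\lnot\phi})}\) and \(\reduct{(\dbox{\alpha}{\phi})}\mequiv\reduct{(\ddiamond{\pdual{\alpha}}{\phi})}\), so no explicit greatest-fixpoint clause or prior normalization of duals is needed (double negations preserve positivity), and it writes the assignment clause as \(\lforall{y}{(y=\theta\limply\reduct{(\subst[\phi]{x}{y})})}\) to sidestep the admissibility caveat your direct-substitution clause would require.
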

\proofmove
Of course, \(\reduct{(p(\theta))}=p(\theta)\) etc.
The inductive cases are shown in \rref{fig:expressive}.
\begin{figure}[htb]
  \vspace{-\baselineskip}
  \begin{align*}
    \reduct{(\lnot\phi)} &\mequiv \lnot(\reduct{\phi})\\
    \reduct{(\phi\land\psi)} &\mequiv \reduct{\phi} \land \reduct{\psi}\\
    \reduct{(\lexists{x}{\phi})} &\mequiv \lexists{x}{(\reduct{\phi})}\\
    \reduct{(\ddiamond{\pupdate{\pumod{x}{\theta}}}{\phi})} &\mequiv \lforall{y}{(y=\theta\limply\reduct{(\subst[\phi]{x}{y})})}\\
    \reduct{(\ddiamond{\pevolve{\D{x}=\genDE{x}}}{\phi})} &\mequiv \ddiamond{\pevolve{\D{x}=\genDE{x}}}{(\reduct{\phi})}\\
    \reduct{(\ddiamond{\ptest{\ivr}}{\phi})} &\mequiv \reduct{(\ivr \land \phi)}\\
    \reduct{(\ddiamond{\pchoice{\alpha}{\beta}}{\phi})} &\mequiv \reduct{(\ddiamond{\alpha}{\phi} \lor \ddiamond{\beta}{\phi})}\\
    \reduct{(\ddiamond{\alpha;\beta}{\phi})} &\mequiv \reduct{(\ddiamond{\alpha}{\ddiamond{\beta}{\phi}})}\\
    \reduct{(\ddiamond{\prepeat{\alpha}}{\phi})} &\mequiv \lfp{X}{\reduct{(\phi\lor\ddiamond{\alpha}{X(x)})}}\\
    \reduct{(\ddiamond{\pdual{\alpha}}{\phi})} &\mequiv \reduct{(\lnot\ddiamond{\alpha}{\lnot\phi})}\\
    \reduct{(\dbox{\alpha}{\phi})} &\mequiv \reduct{(\ddiamond{\pdual{\alpha}}{\phi})}
  \end{align*}
  \caption{Inductive cases for constructive \emph{expressivity} of \muD.}
  \label{fig:expressive}%
\end{figure}
  It is easy to check that $\reduct{\phi}$ is equivalent to $\phi$, e.g. based on the soundness of the \dGL axioms.
  Note that
  \(\reduct{(\phi\lor\psi)} \mequiv \reduct{\phi} \lor \reduct{\psi}\) is a consequence of the above definitions and the abbreviation \(\phi\lor\psi \mequiv \lnot(\lnot\phi \land \lnot\psi)\).
  The quantifier in the definition of \(\reduct{(\ddiamond{\pupdate{\pumod{x}{\theta}}}{\phi})}\) is not necessary if the substitution of $\theta$ for $x$ is admissible. The variable renaming of fresh variable $y$ for $x$ in $\phi$ with the result $\subst[\phi]{x}{y}$ is always admissible.
  Quantifiers are expressible in \muD using \rref{foot:dynamic-quantifier-abbreviation}.
  Also \(\pevolvein{\D{x}=\genDE{x}}{\ivr}\) is expressible by \rref{lem:pevolvein}.
  The case \(\reduct{(\ddiamond{\prepeat{\alpha}}{\phi})}\) is defined as the least fixpoint of the reduction of \(\phi\lor\ddiamond{\alpha}{X(x)}\), where $x$ are the variables of $\alpha$ using classical short notation \cite{DBLP:conf/lics/Lubarsky89}.
  In particular, \(\reduct{(\ddiamond{\prepeat{\alpha}}{\phi})}\) satisfies
  \(\phi\lor\ddiamond{\alpha}{\reduct{(\ddiamond{\prepeat{\alpha}}{\phi})}} \lbisubjunct \reduct{(\ddiamond{\prepeat{\alpha}}{\phi})}\)
  and \(\reduct{(\ddiamond{\prepeat{\alpha}}{\phi})}\) is the formula with the smallest such interpretation, which is all that these proofs depend on.
  Finally, \muD is differentially expressive, because it includes all formulas of the form \m{\ddiamond{\pevolve{\D{x}=\genDE{x}}}{\phi}}.
\endproofmove

A discrete analog of \rref{lem:expressive/continuous} follows from a (constructive) equi-expressibility result \cite[Theorem 9]{DBLP:conf/lics/Platzer12b} using the \emph{Euler axiom}, which relates properties of differential equations to properties of their Euler discretizations \cite{DBLP:conf/lics/Platzer12b}.

\begin{corollary}[Discrete expressibility] \label{cor:expressive/discrete}%
  The (first-order) discrete $\mu$-calculus over $\reals$ is constructively differentially expressive for \dGL (with the Euler axiom).
\end{corollary}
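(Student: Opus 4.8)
The plan is to derive the corollary from the already-established continuous expressibility (\rref{lem:expressive/continuous}) together with the constructive equi-expressibility of \muD and the first-order discrete $\mu$-calculus over $\reals$ furnished by the Euler axiom \cite[Theorem 9]{DBLP:conf/lics/Platzer12b}. Recall from \rref{def:expressive} that constructive differential expressiveness requires two things: an effective map $\phi\mapsto\reduct{\phi}$ sending each \dGL formula to an equivalent formula of the base logic, and the provability in the calculus of every equivalence $\ddiamond{\pevolve{\D{x}=\genDE{x}}}{G} \lbisubjunct \reduct{(\ddiamond{\pevolve{\D{x}=\genDE{x}}}{G})}$. I would treat these two requirements separately, and the base logic here is closed under first-order connectives as required.

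For expressiveness I would compose two constructive reductions. \rref{lem:expressive/continuous} already gives an effective map from every \dGL formula to an equivalent \muD formula; in particular it reduces repetition $\ddiamond{\prepeat{\alpha}}{\phi}$ to a least fixpoint $\lfp{X}{\cdot}$, and that fixpoint machinery, along with predicates, arithmetic, negation, and conjunction, is common to both \muD and the discrete $\mu$-calculus. The sole operator of \muD not shared by the discrete calculus is the continuous modality $\ddiamond{\pevolve{\D{x}=\genDE{x}}}{\phi}$. I would therefore recurse through the common $\mu$-calculus structure and, at each continuous modality, invoke the constructive equi-expressibility result \cite[Theorem 9]{DBLP:conf/lics/Platzer12b} to replace it by an equivalent formula of the discrete $\mu$-calculus via its Euler discretization. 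Composing this translation with the reduction of \rref{lem:expressive/continuous} yields an effective map carrying each \dGL formula to an equivalent formula of the first-order discrete $\mu$-calculus over $\reals$, establishing constructive expressiveness.

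For the differential part I would observe that the required equivalences $\ddiamond{\pevolve{\D{x}=\genDE{x}}}{G} \lbisubjunct \reduct{(\ddiamond{\pevolve{\D{x}=\genDE{x}}}{G})}$ are exactly the content supplied by the Euler axiom. By the construction above, $\reduct{(\ddiamond{\pevolve{\D{x}=\genDE{x}}}{G})}$ is the discrete $\mu$-calculus formula expressing the corresponding Euler-discretized reachability, and the Euler axiom is precisely the axiom asserting that a differential-equation property and its Euler discretization are provably equivalent in the calculus. Hence provability of these equivalences is immediate once the Euler axiom is available; this is exactly why the corollary is stated ``with the Euler axiom,'' since without it the discrete calculus would be expressive but not \emph{differentially} expressive for \dGL. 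Combined with the expressiveness step, this shows the discrete $\mu$-calculus is constructively differentially expressive for \dGL.

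The main obstacle in this corollary is not the composition itself but resides in the cited equi-expressibility result \cite[Theorem 9]{DBLP:conf/lics/Platzer12b}: the genuine work is to show that the inherently continuous and transfinite reachability encoded by $\ddiamond{\pevolve{\D{x}=\genDE{x}}}{\phi}$ can be captured faithfully, and provably, by the discrete fixpoint machinery through Euler discretization. Granting that result, the only care required here is to verify that the two reductions compose while preserving positivity of the bound set variable in each $\lfp{X}{\phi}$ (so the discrete fixpoints stay well-defined) and that effectiveness is preserved under composition; both are routine.
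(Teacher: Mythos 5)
Your proposal is correct and follows essentially the same route as the paper, which obtains the corollary in one step by combining \rref{lem:expressive/continuous} with the constructive equi-expressibility result of \cite[Theorem 9]{DBLP:conf/lics/Platzer12b}, using the Euler axiom both to replace the continuous modalities of \muD by discrete $\mu$-calculus formulas and to supply provability of the required differential equivalences. Your elaboration of the composition and of why the Euler axiom is exactly what makes the discrete calculus \emph{differentially} (not merely) expressive is consistent with the paper's intent.
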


This aligns the discrete and the continuous side of hybrid games in a constructive provably equivalent way similar to corresponding results about hybrid systems \cite{DBLP:conf/lics/Platzer12b}.
Yet, the interactivity of two-variable fixpoints stays, which turns out to be necessary (\rref{sec:dGL-dL-expressiveness}).
\begin{corollary}[Relative completeness]
  The \dGL calculus is a sound and complete axiomatization of \dGL relative to \muD.
  With the Euler axiom, the \dGL calculus is a sound and complete axiomatization of \dGL relative to the discrete $\mu$-calculus over $\reals$.
\end{corollary}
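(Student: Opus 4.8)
The plan is to obtain both claims as direct instantiations of the schematic relative completeness theorem (\rref{thm:dGL-complete}), whose entire design is to be parametric in the oracle logic \LBase: it establishes sound and complete provability relative to \emph{any} differentially expressive \LBase. Thus, once a concrete logic has been certified differentially expressive, completeness relative to it follows with no further argument. Since the two expressibility results needed are already in hand, the proof reduces to checking that the hypotheses of \rref{thm:dGL-complete} are met by each proposed oracle.

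For the first claim, I would invoke \rref{lem:expressive/continuous}, which shows that \muD is constructively differentially expressive for \dGL. This verifies the three conditions of \rref{def:expressive}: \muD is expressive (every \dGL formula has an equivalent \muD formula via the effective reduction $\reduct{\cdot}$ of \rref{fig:expressive}); the required differential-equation equivalences $\ddiamond{\pevolve{\D{x}=\genDE{x}}}{G} \lbisubjunct \reduct{(\ddiamond{\pevolve{\D{x}=\genDE{x}}}{G})}$ are provable (indeed trivially, since \muD contains $\ddiamond{\pevolve{\D{x}=\genDE{x}}}{\cdot}$ as a primitive and the reduction leaves that modality in place); and \muD is closed under first-order connectives. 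Taking \LBase to be \muD in \rref{thm:dGL-complete} then yields completeness, while soundness transfers from \rref{thm:dGL-sound} as noted in its proof.

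For the second claim, I would instead take \LBase to be the first-order discrete $\mu$-calculus over $\reals$, appealing to \rref{cor:expressive/discrete}, which certifies that logic as constructively differentially expressive \emph{provided} the Euler axiom is available. The only difference from the continuous case is that the differential-equation equivalences are now discharged through the Euler axiom rather than holding by primitivity; this is exactly what \rref{cor:expressive/discrete} (resting on the equi-expressibility of the discrete and continuous sides) supplies. Instantiating \rref{thm:dGL-complete} with this \LBase then gives the stated completeness relative to the discrete $\mu$-calculus.

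I do not expect a genuine obstacle here, as all mathematical content has been absorbed into \rref{thm:dGL-complete}, \rref{lem:expressive/continuous}, and \rref{cor:expressive/discrete}. The one point deserving a word of care is soundness in the discrete case: one must record that adjoining the Euler axiom preserves soundness, i.e.\ that the Euler axiom is itself valid, so that the soundness half of the corollary survives the extension of the calculus; this validity is inherited from the equi-expressibility machinery underlying \rref{cor:expressive/discrete}.
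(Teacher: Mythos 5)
Your proposal is correct and matches the paper's own proof, which simply cites \rref{thm:dGL-complete}, \rref{lem:expressive/continuous}, and \rref{cor:expressive/discrete}; you instantiate the schematic completeness theorem with each certified differentially expressive oracle exactly as the paper intends. The extra remark about soundness of the Euler axiom is a reasonable point of care but does not change the argument.
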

\begin{proof}
Follows from \rref{thm:dGL-complete}, \rref{lem:expressive/continuous}, and \rref{cor:expressive/discrete}.
\end{proof}

An interesting question is whether fragments of \dGL are complete relative to smaller logics, which \rref{thm:dGL-complete} and \rref{lem:expressive/continuous} reduce solely to a study of expressing (two-variable) \muD.
This yields the following hybrid versions of Parikh's completeness results for fragments of game logic \cite{DBLP:conf/focs/Parikh83}.

\begin{corollary}[Relative completeness of $\prepeat{}$-free \dGL]%
  \label{cor:dGL-bounded-complete}%
  The \dGL calculus is a sound and complete axiomatization of $\prepeat{}$-free hybrid games relative to \dL.
\end{corollary}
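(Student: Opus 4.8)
The plan is to instantiate the relative completeness theorem (\rref{thm:dGL-complete}) with the oracle logic taken to be \dL, so that the entire argument reduces to verifying that \dL is differentially expressive (\rref{def:expressive}) for the $\prepeat{}$-free fragment of \dGL. Soundness is inherited directly from \rref{thm:dGL-sound}, since restricting the games to be $\prepeat{}$-free only removes proof obligations. For completeness I would replay the induction of \rref{thm:dGL-complete}, observing that for $\prepeat{}$-free formulas the two critical cases $\ddiamond{\prepeat{\beta}}{G}$ and $\dbox{\prepeat{\beta}}{G}$ -- the only places where the full power of \muD (the least-fixpoint reduction and the \irref{US}-based Calculus-of-Constructions trick) was needed -- are never reached. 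Hence a loop-free oracle such as \dL suffices, provided it is expressive.

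For expressibility I would specialize the reduction $\reduct{(\cdot)}$ of \rref{lem:expressive/continuous} (\rref{fig:expressive}) to games without $\prepeat{}$. The single clause that leaves the hybrid-program fragment and genuinely enters the $\mu$-calculus is $\reduct{(\ddiamond{\prepeat{\alpha}}{\phi})} \mequiv \lfp{X}{\reduct{(\phi\lor\ddiamond{\alpha}{X(x)})}}$; since no $\prepeat{}$ occurs in a $\prepeat{}$-free game, this clause is never invoked. Every remaining clause -- assignment, test, choice, sequential composition, the differential-equation diamond, and the dual $\reduct{(\ddiamond{\pdual{\alpha}}{\phi})} \mequiv \reduct{(\lnot\ddiamond{\alpha}{\lnot\phi})}$ together with $\reduct{(\dbox{\alpha}{\phi})} \mequiv \reduct{(\ddiamond{\pdual{\alpha}}{\phi})}$ -- produces a formula built solely from first-order connectives, quantifiers, and the modality $\ddiamond{\pevolve{\D{x}=\genDE{x}}}{}$ over differential equations, with all duals and boxes eliminated through negation. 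Such a fixpoint-free reduct is a \dL formula, so $\reduct{\phi}$ witnesses $\entails \phi \lbisubjunct \reduct{\phi}$ with $\reduct{\phi}\in\dL$, which establishes expressibility.

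Differential expressiveness then follows because, on the $\prepeat{}$-free fragment, each reduction clause above mirrors an equivalence axiom of the \dGL calculus (\irref{assignd}, \irref{testd}, \irref{choiced}, \irref{composed}, \irref{duald}, \irref{box}) or plain first-order reasoning, so a straightforward structural induction yields the provable equivalence $\infers \phi \lbisubjunct \reduct{\phi}$ rather than mere validity. In particular $\reduct{(\ddiamond{\pevolve{\D{x}=\genDE{x}}}{G})} \mequiv \ddiamond{\pevolve{\D{x}=\genDE{x}}}{(\reduct{G})}$, and the required equivalence $\ddiamond{\pevolve{\D{x}=\genDE{x}}}{G} \lbisubjunct \ddiamond{\pevolve{\D{x}=\genDE{x}}}{(\reduct{G})}$ is derivable by the congruence rule \irref{RE} (itself derived from \irref{M}) from $\infers G \lbisubjunct \reduct{G}$, with no appeal to \irref{evolved} needed. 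With \dL thus shown differentially expressive for the $\prepeat{}$-free fragment, \rref{thm:dGL-complete} delivers the claimed completeness relative to \dL, and this is the hybrid analogue of Parikh's completeness result for the loop-free fragment of game logic.

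The step I expect to require the most care is confirming that the reduct genuinely lands inside \dL rather than in the strictly larger \muD: one must check that dropping the $\prepeat{}$ clause really leaves exactly the fragment generated by first-order connectives, quantifiers, and differential-equation modalities, all of which are expressible by hybrid programs, and that iterating the reduction over nested duals and sequential compositions never reintroduces a fixpoint. Relatedly, one must ensure that the differential-equation equivalences underlying differential expressiveness are provable using only the differential reasoning that \dL and \dGL share, which is immediate here since those equivalences reduce to congruence under the differential-equation modality.
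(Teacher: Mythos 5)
Your proposal is correct and takes essentially the same route as the paper: the paper's proof is the one-line observation that the reduction of \rref{lem:expressive/continuous} never invokes its fixpoint clause on $\prepeat{}$-free games and hence lands in \dL (indeed in the first-order logic of differential equations), after which \rref{thm:dGL-complete} applies with \dL as the oracle. Your additional elaboration of differential expressiveness via \irref{RE} and of why nested duals and compositions never reintroduce a fixpoint is consistent with, and merely spells out, what the paper leaves implicit.
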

\proofmove
\rref{lem:expressive/continuous} reduces to \dL, even the first-order logic of differential equations \cite{DBLP:conf/lics/Platzer12b}, for $\prepeat{}$-free hybrid games.
\endproofmove
\begin{corollary}[Relative completeness of $\pdual{}$-free \dGL] \label{cor:dGL-noninteractive-complete}
  The \dGL calculus is a sound and complete axiomatization of $\pdual{}$-free hybrid games relative to \dL.
\end{corollary}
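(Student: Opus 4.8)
The plan is to instantiate \rref{thm:dGL-complete} with the oracle logic \LBase taken to be \dL, and to verify that \dL is (constructively) differentially expressive for the $\pdual{}$-free fragment. Unlike in \rref{cor:dGL-bounded-complete}, I would \emph{not} route through the \muD reduction of \rref{lem:expressive/continuous}: that reduction sends $\ddiamond{\prepeat{\alpha}}{\phi}$ to a genuine least fixpoint and so escapes \dL. Instead, the guiding observation is that a $\pdual{}$-free hybrid game is single-player, so its winning-region semantics collapses to the ordinary reachability semantics of hybrid systems, and every $\pdual{}$-free \dGL formula already \emph{is} a \dL formula. The expressibility witness is therefore the identity, $\reduct{\phi}\mequiv\phi$.

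Concretely, I would first prove by induction on the structure of $\pdual{}$-free $\alpha$ that $\strategyfor[\alpha]{X}$ coincides with the \dL reachability semantics $\ddiamond{\alpha}{}$, i.e.\ the set of states from which some run of $\alpha$ reaches $X$. Every clause matches the corresponding \dL transition semantics; in particular $\strategyfor[\pchoice{\alpha}{\beta}]{X}=\strategyfor[\alpha]{X}\cup\strategyfor[\beta]{X}$ is nondeterministic choice, and the least fixpoint $\strategyfor[\prepeat{\alpha}]{X}=\capfold\{Z\with X\cup\strategyfor[\alpha]{Z}\subseteq Z\}$ is exactly the reflexive-transitive-closure reachability of $\prepeat{\alpha}$. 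Invoking determinacy (\rref{thm:dGL-determined}) to rewrite $\dstrategyfor[\alpha]{X}=\scomplement{(\strategyfor[\alpha]{\scomplement{X}})}$ then identifies $\dbox{\alpha}{}$ with the \dL box semantics. Consequently the two \dGL modalities over $\pdual{}$-free games are the diamond and box of \dL, so $\reduct{\phi}\mequiv\phi$ witnesses expressiveness (each such $\phi$ being a \dL formula), and since \LBase is closed under the first-order connectives this witness is constructive.

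Differential expressiveness is then immediate: with $\reduct{}$ the identity, each required equivalence $\ddiamond{\pevolve{\D{x}=\genDE{x}}}{G}\lbisubjunct\reduct{(\ddiamond{\pevolve{\D{x}=\genDE{x}}}{G})}$ is a provable reflexivity, while the differential-equation reasoning itself is supplied by the \dL oracle exactly as in the differential-equation cases of \rref{thm:dGL-complete}. Applying \rref{thm:dGL-complete} with this choice of \LBase yields relative completeness, and soundness transfers from \rref{thm:dGL-sound}.

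The main obstacle is the semantic-collapse step, and within it the repetition clause: one must confirm that in the absence of $\pdual{}$ the least-fixpoint winning region $\strategyfor[\prepeat{\alpha}]{X}$ introduces no adversarial quantifier alternation and therefore agrees with the hybrid-systems reflexive-transitive closure, so that no gap opens between the game reading and the \dL reading. Once this identification is in place, the corollary is a direct specialization of \rref{thm:dGL-complete}, with \muD's fixpoint apparatus never exercised beyond what \dL already expresses natively.
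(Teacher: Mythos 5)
Your proposal is correct and follows essentially the same route as the paper: the paper's proof likewise reduces the corollary to the fact that $\pdual{}$-free games have a Scott-continuous semantics (\rref{lem:HP-Scott-continuous}), hence closure ordinal $\omega$, so that $\strategyfor[\prepeat{\alpha}]{X}=\cupfold_{n<\omega}\strategyfor[\alpha^n]{X}$ as in \rref{eq:dual-free-repetition} of \rref{lem:dL<=dGL} and the winning-region semantics collapses to \dL's reachability semantics, after which \rref{thm:dGL-complete} applies. The ``main obstacle'' you single out---that the least fixpoint in the repetition clause agrees with the reflexive-transitive-closure reachability in the absence of $\pdual{}$---is exactly what \rref{lem:HP-Scott-continuous} together with Kleene's fixpoint theorem discharges, so your semantic-collapse induction is precisely the content of \rref{lem:dL<=dGL}.
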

\proofmove
All $\pdual{}$-free loops are Scott-continuous by \rref{lem:HP-Scott-continuous}, so have closure ordinal $\omega$ and are, thus, equivalent to their \dL form, and even expressible in the first-order logic of differential equations by \cite[Theorem 9]{DBLP:conf/lics/Platzer12b}.
\endproofmove
By \rref{cor:dGL-noninteractive-complete}, \dL is relatively complete without the convergence rule that had been used before \cite{DBLP:journals/jar/Platzer08}.
In combination with the first and second relative completeness theorems of \dL \cite{DBLP:conf/lics/Platzer12b}, it follows that the \dGL calculus is a sound and complete axiomatization of $\prepeat{}$-free hybrid games and of $\pdual{}$-free hybrid games relative to the first-order logic of differential equations. When adding the Euler axiom \cite{DBLP:conf/lics/Platzer12b}, both are sound and complete axiomatizations of those classes of hybrid games relative to discrete dynamic logic \cite{DBLP:conf/lics/Platzer12b}.
Similar completeness results for \dGL relative to \dL, and, thus, relative to the first-order logic of differential equations, follow from \rref{thm:dGL-complete} with some more thought, e.g., for the case of hybrid games with winning regions that are finite rank Borel sets.

\subsection{Separating Axioms} \label{sec:separating-axioms}

In order to illustrate how and why reasoning about hybrid games differs from reasoning about hybrid systems, this section identifies separating axioms, i.e.\ axioms of \dL \cite{DBLP:journals/jar/Platzer08,DBLP:conf/lics/Platzer12b} that do not hold in \dGL.
The following result identifies the \emph{axiomatic separation}, i.e.\ all axioms differing in the respective complete axiomatizations of hybrid systems and hybrid games.
It investigates the difference in terms of important classes of modal logics; recall \cite{HughesCresswell96} or \rref{app:separating-axioms}.
\begin{theorem}[Axiomatic separation] \label{thm:separating-axioms}%
  The axiomatic separation of hybrid games compared to hybrid systems is exactly the Kripke axiom K, the loop induction axiom I, Harel's loop convergence axiom C, the Barcan axiom B, the vacuous axiom V, and the normal G\"odel generalization rule G.
  Hence, \dGL is a subregular, sub-Barcan, monotone modal logic without the loop induction loop convergence axioms and vacuity.
\end{theorem}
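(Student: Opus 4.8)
The plan is to prove the claimed set equality in two directions. Direction~(i): each of the Kripke axiom K, the induction axiom I, the convergence axiom C, the Barcan axiom B, the vacuity axiom V, and the necessitation rule G is sound for hybrid systems but \emph{fails} in \dGL, so the separation contains \emph{at least} these six. Direction~(ii): every other axiom and rule of a complete \dL calculus stays \dGL-sound, so the separation contains \emph{no more}. Two facts organize everything. First, by \rref{thm:dGL-sound} all compositional axioms of \rref{fig:dGL} together with \irref{M} and \irref{FP} are \dGL-sound. Second, on the $\pdual{}$-free fragment \dGL and \dL agree (\rref{lem:HP-Scott-continuous}, \rref{cor:dGL-noninteractive-complete}); hence every separating phenomenon must be \emph{caused} by the duality operator, and indeed all counterexamples below are built from $\pdual{}$, directly or through $\dchoice{}{}$ and $\drepeat{}$. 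Determinacy (\rref{thm:dGL-determined}) lets me pass freely between the $\dbox{}{}$ and $\ddiamond{}{}$ readings of each witness.

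For direction~(i) I would compute every winning region directly from \rref{def:HG-semantics}. Vacuity V and G\"odel generalization G both fail by Demon deadlock: for $\alpha\mequiv\pdual{(\ptest{\lfalse})}$ one has $\dstrategyfor[\alpha]{X}=\scomplement{(\dstrategyfor[\ptest{\lfalse}]{\scomplement{X}})}=\emptyset$ for every $X$, so $\dbox{\alpha}{\phi}$ is false in all states although $\alpha$ binds no variable; taking $\phi\mequiv\ltrue$ refutes $\phi\limply\dbox{\alpha}{\phi}$ (V) and the necessitation of the valid $\ltrue$ (G). The Kripke axiom K fails together with the regularity principle $\dbox{\alpha}{\phi}\land\dbox{\alpha}{\psi}\limply\dbox{\alpha}{(\phi\land\psi)}$ that it entails: with the demonic choice $\gamma\mequiv\dchoice{\pupdate{\pumod{x}{0}}}{\pupdate{\pumod{x}{1}}}$ both $\dbox{\gamma}{(x=0)}$ and $\dbox{\gamma}{(x=1)}$ are valid (Demon picks the matching assignment), whereas $\dbox{\gamma}{(x=0\land x=1)}\mequiv\dbox{\gamma}{\lfalse}$ is nowhere true; this simultaneously certifies that \dGL is \emph{subregular}. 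Barcan B fails by the unbounded-repetition witness already isolated in \rref{sec:ClosureOrdinals}, where $\entails\ddiamond{\drepeat{\pupdate{\pumod{y}{y+1}}}}{\lexists{n}{y\leq n}}$ but $\nonentails\lexists{n}{\ddiamond{\drepeat{\pupdate{\pumod{y}{y+1}}}}{y\leq n}}$; here one must state the claim with care, since the \emph{converse} Barcan formula stays derivable (\irref{M} plus first-order logic).

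The loop axioms I and C are the delicate cases, because their \emph{rule} forms survive while their \emph{axiom} forms do not --- precisely because the step from rule to axiom in \dL uses the normality K that \dGL lacks. The induction rule \irref{invind} is derivable from \irref{FP} by \rref{lem:FP-invind}, and a corresponding convergence rule remains valid (even if unnecessary by \rref{thm:dGL-complete}). For the induction \emph{axiom} $\dbox{\prepeat{\alpha}}{(\phi\limply\dbox{\alpha}{\phi})}\limply(\phi\limply\dbox{\prepeat{\alpha}}{\phi})$ I would build a small demonic reachability game, three or four states encoded with $\dchoice{}{}$, $\ptest{}$, and assignments, in which, from a $\phi$-state $s$, Demon can forever keep $\phi\limply\dbox{\alpha}{\phi}$ true by \emph{escaping} into a safe $\lnot\phi$-region where that implication is vacuous, yet cannot keep $\phi$ itself true because the unique $\phi$-successor of $s$ is doomed; then the antecedent holds and the consequent fails at $s$. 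The convergence axiom C is refuted by the Angel-dual of this construction. Since each counterexample exhibits genuine semantic invalidity, none of the six is re-derivable from the surviving \dGL principles.

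For direction~(ii) I would fix a complete axiomatization of \dL \cite{DBLP:conf/lics/Platzer12b} and partition it: its base (first-order axioms, \irref{US}, modus ponens, $\forall$-generalization) and its compositional modality axioms (\irref{assignd}, \irref{evolved}, \irref{testd}, \irref{choiced}, \irref{composed}, and loop unwinding \irref{iterated}) coincide with the shared principles of \rref{fig:dGL} and are \dGL-sound by \rref{thm:dGL-sound}, while the only further \dL principles are exactly K, I, C, B, V, G. Hence the \dL-validities absent from \dGL are precisely these six, and the classification follows: \dGL is \emph{monotone} because \irref{M} is sound, \emph{subregular} because regularity (hence K) fails, \emph{sub-Barcan} because B fails, and it lacks V and the loop induction and convergence axioms. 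I expect the main obstacle to be this exactness (``at most these six'') direction: it forces one to commit to a specific complete \dL axiom system and check the partition exhaustively, and --- more subtly --- to keep axiom forms apart from rule forms throughout, since I and C vanish as axioms exactly while persisting as rules. The escape-region games certifying the failure of I and C are the other nontrivial ingredient.
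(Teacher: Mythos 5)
Your overall architecture is the same as the paper's: refute each of K, I, C, B, V, G by a direct winning-region computation from \rref{def:HG-semantics}, then get exactness by comparing a complete \dL axiomatization against the complete \dGL axiomatization of \rref{thm:dGL-complete}, observing that every shared principle is \dGL-sound by \rref{thm:dGL-sound} and that Harel's convergence survives only as a (now unnecessary) rule. Your witnesses for G, V, K (routed through the failure of regularity, which is legitimate since the sound rule \irref{M} together with K would yield regularity, and which also admits the direct K-instance \m{\phi\mequiv(x=0)}, \m{\psi\mequiv(x=0\land x=1)}), and B are correct and are minor variants of the paper's.

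The genuine gap is in the two cases you yourself flag as nontrivial. For the induction axiom I you only promise ``a small demonic reachability game'' realizing the escape-region mechanism; the paper must actually exhibit one, namely \m{\alpha\mequiv\dchoice{(\pupdate{\pumod{x}{a}};\pupdate{\pumod{a}{0}})}{\pupdate{\pumod{x}{0}}}} with \m{\phi\mequiv(x=1)} from \(x=a=1\). The point your sketch does not secure is that Demon validates the antecedent by playing \m{\pupdate{\pumod{x}{0}}} throughout the loop (escaping into an absorbing \(\lnot\phi\) region) while reserving \m{\pupdate{\pumod{x}{a}};\pupdate{\pumod{a}{0}}} for the \emph{fresh} copy of \(\alpha\) inside the postcondition \(\phi\limply\dbox{\alpha}{\phi}\); without the auxiliary variable \(a\) that makes the unique one-step \(\phi\)-successor ``doomed'' while still being a \(\phi\)-successor, the antecedent and consequent cannot be pulled apart. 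More seriously, your claim that C ``is refuted by the Angel-dual of this construction'' is wrong: C is not the dual of I, and its refutation needs a quantitatively different game. The paper uses \m{\alpha\mequiv\dchoice{\pupdate{\pumod{x}{x-y}};\pupdate{\pumod{y}{0}}}{\pupdate{\pumod{x}{x-1}}}} with \m{\mapply{\var}{v} \mequiv (x\leq v)}: Demon certifies one unit of progress after every round of the outer loop (antecedent) by always taking the branch that preserves \(y=1\), yet defeats the consequent by zeroing \(y\) once, after which \(x\) never decreases again. You would need to supply such a construction explicitly; dualizing the I game does not produce it, and the exactness half of your argument cannot be completed until both counterexamples are in place.
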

The proof of \rref{thm:separating-axioms} is in \rref{app:separating-axioms}, where a simple counterexample for each separating axiom illustrates what makes hybrid games different than hybrid systems.
The difference in axioms is summarized in \rref{fig:separating-axioms}, where $\closureall{}$ is the universal closure with respect to all variables bound in hybrid game $\alpha$.
Besides the axiomatic separation, \rref{fig:separating-axioms} shows additional related axioms or proof rules for illustration purposes.
\newcommand*{\cancel}[2][thick]{\tikz[baseline] \node [strike out,draw=red,anchor=text,inner sep=0pt,text=black,#1]{#2};}%
\begin{figure*}[tbh]
\vspace{-\baselineskip}
\[
\def\arraystretch{2}%
\begin{array}{@{}l@{~}l@{\hspace{0.5cm}}l@{~}l@{}}
\cancel{\text{K}} &
\linferenceRule[impl]
        {\dbox{\alpha}{(\phi\limply\psi)}}
        {(\dbox{\alpha}{\phi}\limply\dbox{\alpha}{\psi})}
&
\text{M}_{\dbox{\cdot}{}} &
\linferenceRule[formula]
        {\phi\limply\psi}
        {\dbox{\alpha}{\phi}\limply\dbox{\alpha}{\psi}}
\\
\cancel{$\overleftarrow{\text{M}}$}&
  \ddiamond{\alpha}{(\phi\lor\psi)}
  \limply
  \ddiamond{\alpha}{\phi} \lor \ddiamond{\alpha}{\psi}
  &
\text{M} & %
  \ddiamond{\alpha}{\phi} \lor \ddiamond{\alpha}{\psi} \limply \ddiamond{\alpha}{(\phi\lor\psi)}
\\
\cancel{\,\text{I}\,} &%
  \dbox{\prepeat{\alpha}}{(\phi\limply\dbox{\alpha}{\phi})} \limply (\phi\limply\dbox{\prepeat{\alpha}}{\phi})
&
\forall\text{I} &%
\closureall
{(\phi\limply\dbox{\alpha}{\phi})} \limply (\phi\limply\dbox{\prepeat{\alpha}}{\phi})
\\
\cancel{\text{C}} &%
  \linferenceRule[impll]
        {\dbox{\prepeat{\alpha}}{\lforall{v{>}0}{(\mapply{\var}{v}\limply\ddiamond{\alpha}{\mapply{\var}{v-1}})}}}
        {\lforall{v}{(\mapply{\var}{v} \limply
            \ddiamond{\prepeat{\alpha}}{\lexists{v{\leq}0}{\mapply{\var}{v}}})}}
      ~~({\m{v\not\in\alpha}})%
\\
\cancel{\text{B}} & %
\linferenceRule[impl]
        {\ddiamond{\alpha}{\lexists{x}{\phi}}}
        {\lexists{x}{\ddiamond{\alpha}{\phi}}}
      \qquad\hspace{1.5cm}({\m{x\not\in\alpha}})
&
\overleftarrow{\text{B}} & %
      {\linferenceRule[impl]
        {\lexists{x}{\ddiamond{\alpha}{\phi}}}
        {\ddiamond{\alpha}{\lexists{x}{\phi}}}
      }{\qquad(\m{x\not\in\alpha})}
\\
\cancel{\text{V}} & %
      {\linferenceRule[impl]
        {\phi}
        {\dbox{\alpha}{\phi}}
      }{\qquad\hspace{0.8cm}(\freevars{\phi}\cap \boundvars{\alpha}=\emptyset)}%
&
\text{VK} & %
      {\linferenceRule[impl]
        {\phi}
        {(\dbox{\alpha}{\ltrue}{\limply}\dbox{\alpha}{\phi})}
      }{~(\freevars{\phi}{\cap} \boundvars{\alpha}{=}\emptyset)}%
\\
\cancel{\text{G}} & %
      \linferenceRule[formula]
        {\phi}
        {\dbox{\alpha}{\phi}}
&
\text{M}_{\dbox{\cdot}{}} &
\linferenceRule[formula]
        {\phi\limply\psi}
        {\dbox{\alpha}{\phi}\limply\dbox{\alpha}{\psi}}
\\
\cancel{\text{R}} & %
      \linferenceRule[formula]
        {\phi_1\land\phi_2\limply\psi}
        {\dbox{\alpha}{\phi_1} \land \dbox{\alpha}{\phi_2} \limply \dbox{\alpha}{\psi}}
&
\text{M}_{\dbox{\cdot}{}} &
\linferenceRule[formula]
        {\phi_1\land\phi_2\limply\psi}
        {\dbox{\alpha}{(\phi_1\land\phi_2)}\limply\dbox{\alpha}{\psi}}
\\
\cancel{\text{FA}} & %
\ddiamond{\prepeat{\alpha}}{\phi} \limply \phi \lor \ddiamond{\prepeat{\alpha}}{(\lnot\phi\land\ddiamond{\alpha}{\phi})}
\end{array}
\]
  \caption{Separating axioms: The axioms and rules on the left are sound for hybrid systems but not for hybrid games. The related axioms and rules on the right are sound for hybrid games.}
  \label{fig:separating-axioms}
\end{figure*}

While explicit counterexamples proving the separation in \rref{thm:separating-axioms} are in \rref{app:separating-axioms}, the sequel explains the intuition for the difference causing unsoundness in hybrid games of the axioms identified in \rref{thm:separating-axioms}.
Kripke's K is unsound (for hybrid games):
even if Demon can play RoboCup so that his robots score a goal every time they pass the ball (just never try) and Demon can also play RoboCup so that his robots always pass the ball (somewhere), that does not mean Demon would have a strategy to always score goals.
The converse monotonicity axiom $\overleftarrow{\text{M}}$ is unsound:
just because Angel \W has a strategy to be close to \E or far away does not mean \W would either have a strategy to always end up close to E or a strategy that is always far away.
The induction axiom I is unsound:
just because Demon has a strategy for his RoboCup robots (e.g. power down) that, no matter how often $\prepeat{\alpha}$ repeats, Demon still has a strategy such that his robots do not run out of battery for just one more control cycle, that does not mean he has a strategy to keep his robots' batteries nonempty all the time.
Harel's convergence rule C is unsound:
even if Demon may have a strategy (e.g. waiting) such that after any number of rounds of the game, he has a strategy to move his robots closer to the goal for one control cycle, he still may not have a strategy to ultimately reach the goal, because that requires many rounds of guaranteed progress not just one.
The Barcan axiom B is unsound:
just because the winner of a RoboCup tournament can be chosen for $x$ after the robot game $\alpha$ does not mean it would be possible to predict this winner $x$ before the game $\alpha$.
By contrast, the converse Barcan axiom $\overleftarrow{\text{B}}$ is sound for hybrid games since, if known before the game $\alpha$, selecting the winner for $x$ can still be postponed until after the game $x$.
The vacuous axiom V, in which no free variable of $\phi$ is bound by $\alpha$, is unsound:
even if $\phi$ does not change its truth-value during $\alpha$ does not mean it would be possible for Demon to reach any final state at all without being tricked into violating the rules of the game along the way.
With an additional assumption (\m{\dbox{\alpha}{\ltrue}}) that Demon has a winning strategy to reach any final state at all (in which
$\ltrue$, i.e.\ true, holds which imposes no condition), the possible vacuous axiom VK is sound.
G\"odel's rule G is unsound:
even if $\phi$ holds in all states, Demon may still fail to win $\dbox{\alpha}{\phi}$ if he loses prematurely since Angel tricks Demon into violating the rules during the hybrid game $\alpha$.
Regularity rule R is unsound:
just because Demon's RoboCup robots have a strategy to focus the robots on strong defense and another strategy to, instead, focus them on strong offense that does not mean he would have a strategy to win RoboCup even if simultaneously strong defense and strong offense together might imply victory, because offensive and defensive strategies are in conflict.
First arrival FA is unsound:
just because Angel's robot has a strategy to ultimately capture Demon's faster robot with less battery does not mean she would either start with capture or would have a strategy to repeat her control cycle so that she exactly captures Demon's robot during the next control cycle, as Demon might save up his energy and speed up just when Angel predicted to catch him.
Having a better battery, Angel will still ultimately win even if Demon sped ahead, but not in the round she thought to be able to predict.

Unlike Hare's convergence axiom, Harel's convergence rule \cite{DBLP:conf/stoc/HarelMP77} is not a separating axiom, because it is sound for \dGL, just unnecessary.
In light of \rref{thm:dGL-closure-lower}, it is questionable whether the convergence rule would be relatively complete for hybrid games, because it is based on the existence of bounds on the repetition count.
The hybrid version of Harel's convergence rule \cite{DBLP:journals/jar/Platzer08} reads as follows (it assumes that $v$ does not occur in $\alpha$):
\[
      {\linferenceRule[formula]
        {\mapply{\var}{v+1}\land v+1>0\limply\ddiamond{\alpha}{\mapply{\var}{v}}}
        {\lexists{v}{\mapply{\var}{v}} \limply
            \ddiamond{\prepeat{\alpha}}{\lexists{v{\leq}0}{\mapply{\var}{v}}}}
      }{}%
\]
If the convergence rule could prove, e.g., \dGL formula \rref{eq:omega2-closure} from \rref{thm:dGL-closure-lower}, then $\mapply{\var}{\cdot}$ would yield a bound on the number of repetitions, which, by the proof of \rref{thm:dGL-closure-lower} does not exist below closure ordinal $\omega\cdot2$.
The premise of the convergence rule makes the bound induced by $\mapply{\var}{v}$ progress by 1 in each iteration.
The postcondition in the conclusion makes it terminate for $v\leq0$.
And the conclusion's antecedent requires a real number for the initial bound.
Thus, the convergence rule only permits bounds below $\omega$, not the required transfinite ordinal $\omega\cdot2$.

These thoughts further suggest a transfinite version of the convergence rule with an extra inductive premise for limit ordinals.
That would be interesting, but is technically more involved than the simple \dGL axiomatization, because it would require multi-sorted quantifiers and proof rules for ordinal arithmetic.

\section{Expressiveness} \label{sec:dGL-dL-expressiveness}

Differential game logic \dGL is a logic for hybrid games whose axiomatic separation to differential dynamic logic \dL for hybrid systems \cite{DBLP:journals/jar/Platzer08,DBLP:conf/lics/Platzer12b} has been characterized in \rref{sec:separating-axioms}.
How does \dGL compare in expressiveness to differential dynamic logic \dL, which is the corresponding logic for hybrid systems?
Hybrid systems are expected to be single-player hybrid games where one of the players never gets to decide.
And, \dL is expected to be a sublogic of \dGL. But what about the converse?
How the expressiveness of \dGL relates to that of \dL is related to classical long-standing open questions for the propositional case \cite{DBLP:journals/anndiscrmath/Parikh85,DBLP:journals/mst/BerwangerGL07}.
Note that even known classical results about expressiveness for the propositional case do not transfer to \dGL, because they hinge on finite state \cite{DBLP:journals/anndiscrmath/Parikh85}.

The notation \(L_1 \leq L_2\) signifies that logic $L_2$ is expressive for logic $L_1$ (\rref{def:expressive}).
Likewise, \(L_1 \mequiv L_2\) signifies equivalent expressiveness, i.e.\ \(L_1 \leq L_2\) and \(L_2 \leq L_1\).
Further, \(L_1 < L_2\) means that $L_1$ is strictly less expressive than $L_2$, i.e.\ \(L_1 \leq L_2\) but not \(L_2 \leq L_1\).

\begin{lemma}[Single-player hybrid games]%
  \label{lem:dL<=dGL}%
  \(\dL \leq \dGL\) by syntactic embedding.
\end{lemma}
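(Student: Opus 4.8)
The plan is to realise the embedding by the identity map. Syntactically, a $\dL$ formula is nothing but a $\pdual{}$-free $\dGL$ formula: the grammars of \rref{def:dGL-HG} and \rref{def:dGL-formula} already contain every hybrid-system construct, and $\dL$ is exactly the fragment that omits $\pdual{}$. I therefore take $\reduct{\phi}\mequiv\phi$ in the sense of \rref{def:expressive}, so that the whole content of the lemma reduces to one semantic claim: on $\pdual{}$-free formulas the winning-region semantics of \rref{def:HG-semantics} and \rref{def:dGL-semantics} agrees with the reachability-relation semantics of $\dL$ \cite{DBLP:conf/lics/Platzer12b}. Writing $\rho(\alpha)$ for the $\dL$ reachability relation of a hybrid program $\alpha$, the goal $\entails\phi\lbisubjunct\reduct{\phi}$ follows once I prove that the two readings of $\imodel{\I}{\phi}$ coincide.

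First I would prove, by structural induction on $\pdual{}$-free $\alpha$, the \emph{diamond correspondence} $\strategyfor[\alpha]{X}=\{s\with (s,t)\in\rho(\alpha)\text{ for some }t\in X\}$ for every $X$ and every $\iget[const]{\I}$. This induction is self-contained, because for $\pdual{}$-free $\alpha$ the clauses of \rref{def:HG-semantics} for Angel's winning region never refer to Demon's winning region. The atomic cases (assignment, continuous evolution, test) are immediate, since the corresponding clauses literally express existential reachability; the cases $\pchoice{\alpha}{\beta}$ and $\alpha;\beta$ use $\rho(\pchoice{\alpha}{\beta})=\rho(\alpha)\cup\rho(\beta)$ and $\rho(\alpha;\beta)=\rho(\alpha)\circ\rho(\beta)$ together with the fact that the relational preimage distributes over union and composition, matching the $\cup$ clause and the nesting clause for $\strategyfor[\cdot]{\cdot}$ under the induction hypothesis.

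The only case needing real work is repetition, where I must identify the least fixpoint $\strategyfor[\prepeat{\alpha}]{X}$ with the preimage of $X$ under $\rho(\alpha)^{*}=\rho(\prepeat{\alpha})$. Here the decisive tool is \rref{lem:HP-Scott-continuous}: since $\alpha$ is $\pdual{}$-free, $\strategyfor[\alpha]{\cdot}$ is Scott-continuous, so its closure ordinal is $\omega$ and $\strategyfor[\prepeat{\alpha}]{X}=\cupfold_{n<\omega}\inflopstrat[n][\alpha]{X}$. By the induction hypothesis and the homomorphy of the iterates (\rref{lem:inflop-inductive-homomorphic}), $\inflopstrat[n][\alpha]{X}$ is exactly the set of states from which $X$ is reachable in at most $n$ rounds of $\alpha$, so their $\omega$-union is the preimage of $X$ under $\rho(\alpha)^{*}$. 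This collapse of the transfinite fixpoint of \rref{def:HG-semantics} to finite reachability is precisely what makes hybrid systems a special case, and it is the single place where I expect the argument to require care.

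Finally I would obtain the \emph{box correspondence} for free by duality rather than by a second fixpoint computation: on the $\dGL$ side, $\dbox{\alpha}{\phi}\lbisubjunct\lnot\ddiamond{\alpha}{\lnot\phi}$ holds by axiom \irref{box}, i.e.\ by determinacy (\rref{thm:dGL-determined}); on the $\dL$ side the relational box is by definition $\lnot\ddiamond{\alpha}{\lnot\cdot}$. Hence the diamond correspondence transfers verbatim to $\dbox{}{}$. A routine outer induction on the structure of the $\dL$ formula $\phi$ (propositional connectives, quantifiers, and the two modalities) then shows that $\imodel{\I}{\phi}$ computed in $\dGL$ equals $\imodel{\I}{\phi}$ computed in $\dL$, giving $\entails\phi\lbisubjunct\reduct{\phi}$ with $\reduct{\phi}\mequiv\phi$, and therefore $\dL\leq\dGL$ by syntactic embedding.
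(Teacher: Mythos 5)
Your proposal is correct and takes essentially the same route as the paper: identity embedding, then the crucial observation that Scott-continuity of the $\pdual{}$-free semantics (\rref{lem:HP-Scott-continuous}) plus Kleene's fixpoint theorem collapses the least fixpoint for $\prepeat{\alpha}$ to the $\omega$-union $\cupfold_{n<\omega}\strategyfor[\alpha^n]{X}$, which matches the reflexive-transitive-closure reading of \dL. The paper compresses the remaining structural comparison into one sentence, whereas you spell out the induction and dispatch the box modality via determinacy (\rref{thm:dGL-determined}); that is just a more explicit rendering of the same argument.
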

\begin{proof}
Hybrid systems form single-player hybrid games, i.e.\ $\pdual{}$-free hybrid games.
The identity function is a syntactic embedding of \dL into \dGL, which preserves the semantics as follows.
With \rref{lem:HP-Scott-continuous}, Kleene's fixpoint theorem implies that $\omega$ is the closure ordinal for $\pdual{}$-free hybrid games $\alpha$.
Hence, for $\pdual{}$-free $\alpha$, a simple induction shows
\begin{equation}
  \strategyfor[\prepeat{\alpha}]{X} = \inflopstrat[\omega][\alpha]{X} = \cupfold_{n<\omega} \inflopstrat[n][\alpha]{X} = \cupfold_{n<\omega} \strategyfor[\alpha^n]{X}
  \label{eq:dual-free-repetition}
\end{equation}
where $\alpha^n$ is the $n$-fold sequential composition of $\alpha$ given by \(\alpha^0 \mequiv\, \ptest{\ltrue}\) and \(\alpha^{n+1} \mequiv \alpha;\alpha^n\).
The semantics of $\pdual{}$-free \dGL agrees with that defined for \dL originally \cite{DBLP:journals/jar/Platzer08,DBLP:conf/lics/Platzer12b} by a simple comparison using \rref{eq:dual-free-repetition} for the crucial case $\prepeat{\alpha}$.
\end{proof}

What about the converse?
Is the logic \dGL truly new or could it have been expressed in \dL?
Unlike \dL, \dGL is meant for hybrid games and makes it more convenient to refer directly to questions about hybrid games.\footnote{
Even if a logic is not strictly more expressive but ``only'' more convenient, it is still often strongly preferable. Program logics and their cousins, for example, are used widely, even though first-order integer arithmetic would theoretically suffice \cite{DBLP:conf/stoc/HarelMP77,DBLP:journals/iandc/HarelK84,Harel_et_al_2000}.
}
Does \dGL provide features strictly necessary for hybrid games that \dL is missing?
Finitely bounded hybrid games are expressible in \dL by \rref{cor:dGL-bounded-complete}.
What about other hybrid games?
Both possible outcomes are interesting.
If \(\dL \mequiv \dGL\), then \rref{thm:dGL-complete} implies that \dGL is complete relative to \dL and relative to the smaller logics that \dL is complete for \cite{DBLP:conf/lics/Platzer12b}.
If \(\dL < \dGL\), instead, then \dGL is a provably more expressive logic with features that are strictly necessary for hybrid games.
The answer takes some preparations but it also characterizes the general expressiveness of \dL and \dGL as a byproduct.

Let \(\pair{y_0,\dots}{y_n}\) denote a $\reals$-G\"odel encoding, i.e.\ a bijective function pairing \m{(n{+}1)}-tuples of real numbers $y_0,y_1,\dots,y_n$ into a single real, \(\pair{y_0,\dots}{y_n}\), that, along with its inverse, is definable in \FOD \cite[Lemma 4]{DBLP:journals/jar/Platzer08}.
\FOD is the \emph{first-order logic of differential equations} \cite{DBLP:journals/jar/Platzer08}, i.e.\ the first-order fragment of \dL and \dGL where all hybrid games $\alpha$ are of the form \m{\pevolve{\D{x}=\genDE{x}}}.
By \rref{lem:dL<=dGL}, \FOD is a sublogic of \dGL and, thus, $\reals$-G\"odel encodings are definable in \dGL.
\emph{(Rich-test) regular dynamic logic} (\DL) \cite{Harel_et_al_2000,DBLP:conf/lics/Platzer12b} over $\reals$ is the fragment of \dL (and by \rref{lem:dL<=dGL} of \dGL) without $\pdual{}$ and without differential equations.
Both \FOD \cite[Lemma 4]{DBLP:journals/jar/Platzer08} and \DL \cite[Theorem 9]{DBLP:conf/lics/Platzer12b} can define $\reals$-G\"odel encodings.
\emph{Acceptable structures} are structures in which elementary $\reals$-G\"odel encodings are definable \cite{Moschovakis74}.

The open \emph{recursive game quantifier} $\Game$ of length $\omega$ applied to formula \(\mapply{\varphi}{x,y}\) is
\begin{equation}
\Game y\, \mapply{\varphi}{x,y} \mdefequiv
\lforall{y_0}{\lexists{y_1}{\lforall{y_2}{\lexists{y_3}{\ldots \lorfold_{n<\omega} \mapply{\varphi}{x,\pair{y_0,\dots}{y_n}}}}}}
\label{eq:Game-quantifier}
\end{equation}
The semantics of this infinitary formula with its $\omega$ many quantifiers and its infinitary disjunction of length $\omega$ is defined by a Gale-Stewart \citeyear{GaleS53} game in which two players alternate in choosing values for the $\omega$ many variables $y_{2i}$ (for player $\forall$) and $y_{2i+1}$ (for player $\exists$). 
Player $\exists$ wins if \(\mapply{\varphi}{x,\pair{y_0,\dots}{y_n}}\) holds for some $n<\omega$, i.e.\ \(\mapply{\varphi}{x,y}\) holds when $y$ has the value \(\pair{y_0,\dots}{y_n}\) that is a $\reals$-G\"odel encoding of the tuple \((y_0,\dots,y_n)\).
If player $\exists$ has a winning strategy, the infinitary disjunction \(\lorfold_{n<\omega} \mapply{\varphi}{x,\pair{y_0,\dots}{y_n}}\) is satisfied; see \cite{Moschovakis74,Vaananen} for details.%
\begin{lemma}[Game quantifier]%
  \label{lem:dGL-Game-quantifier}%
  Recursive game quantifier $\Game$ is definable in \dGL.
\end{lemma}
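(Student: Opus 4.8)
The plan is to realise the Gale--Stewart game behind $\Game y\,\mapply{\varphi}{x,y}$ literally as a hybrid game, building the infinite play one entry at a time. Since $\reals$-Gödel encodings and their inverses are \FOD-definable, hence \dGL-definable (\rref{lem:dL<=dGL}), a finite sequence of reals can be stored in a single fresh variable $s$; write $\mathrm{app}(s,z)$ for the operation that appends $z$ to the sequence coded by $s$. Using a parity bit $b$ to record whose turn it is, I would let Demon (player $\forall$) choose the even-indexed entries and Angel (player $\exists$) the odd-indexed ones, realising Angel's free choice of a real by the nondeterministic assignment $\prandom{z}$ and Demon's by its dual $\pdual{(\prandom{z})}$. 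Concretely I define
\[
  \Game y\,\mapply{\varphi}{x,y} \;\mdefequiv\; \ddiamond{\pupdate{\pumod{b}{0}};\, \pupdate{\pumod{s}{e}};\, \prepeat{\gamma}}{\mapply{\varphi}{x,s}},
\]
where $e$ codes the empty sequence and one round of play is
\[
  \gamma \;\mequiv\; \big((\ptest{b{=}0};\, \pdual{(\prandom{z})}) \,\cup\, (\ptest{b{=}1};\, \prandom{z})\big);\; \pupdate{\pumod{s}{\mathrm{app}(s,z)}};\; \pupdate{\pumod{b}{1-b}}.
\]
The tests on $b$ force whichever player is to move to take the matching branch (taking the other branch fails a test and loses), so the entries are produced in the alternating order $y_0,y_1,y_2,\dots$ prescribed by \rref{eq:Game-quantifier}. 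Because each iteration contributes exactly one entry, a loop boundary occurs after every single move, so Angel may stop at every prefix length $n$ and the postcondition $\mapply{\varphi}{x,s}$ then tests $\mapply{\varphi}{x,\pair{y_0,\dots}{y_n}}$, matching the infinitary disjunction. A small caveat: as terms are restricted to polynomials, $\pupdate{\pumod{s}{\mathrm{app}(s,z)}}$ is not a literal term assignment but is implemented as a relational assignment --- nondeterministically reassign $s$ and guard it by the \FOD-definable graph of append (after saving the old code in a fresh variable) --- which is a \dGL formula by \rref{lem:dL<=dGL}.

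The correctness claim is that Angel has a winning strategy for $\mapply{\varphi}{x,s}$ in $\prepeat{\gamma}$ exactly from the states from which player $\exists$ wins the Gale--Stewart game, and I would prove this at the level of fixpoints rather than by unfolding to $\omega$. First, a direct semantic computation using the encoding lemma and the semantics of $\prandom{z}$, $\pdual{}$ and $\ptest{}$ shows that the one-step operator $\strategyfor[\gamma]{\cdot}$ performs exactly one Gale--Stewart move: on $b{=}0$ it maps $Z$ to the set of positions from which \emph{every} $\forall$-move lands in $Z$, and on $b{=}1$ to the set from which \emph{some} $\exists$-move lands in $Z$. By \rref{def:HG-semantics}, $\strategyfor[\prepeat{\gamma}]{G}$ is the least fixpoint $\lfp{Z}{(G\cup\strategyfor[\gamma]{Z})}$ of a monotone operator (\rref{lem:monotone}). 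On the other side, the $\exists$-winning region of the \emph{open} Gale--Stewart game --- whose winning condition ``$\mapply{\varphi}{x,\cdot}$ holds at some finite stage'' is open for $\exists$ --- is characterised as the least fixpoint of the very same one-step operator, i.e.\ as the set of positions from which $\exists$ can force the open set in a well-founded number of rounds. Since both sides are least fixpoints of the same operator, they coincide; instantiating $G \mdefeq \imodel{\I}{\mapply{\varphi}{x,s}}$ and reading both sides as formulas in the free parameter $x$ after the initialisation $b\mapsto 0,\ s\mapsto e$ yields the stated equivalence.

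The main obstacle is precisely the identification of the $\exists$-winning region of the infinite open game with the \emph{least} fixpoint, and its match to the least-fixpoint semantics of $\prepeat{}$. One cannot shortcut this by replacing $\strategyfor[\prepeat{\gamma}]{G}$ with the $\omega$-iterate $\inflopstrat[\omega][\gamma]{G}$: there is in general no uniform finite bound on the number of rounds Angel needs (the stage at which $\varphi$ first holds depends on Demon's moves), so the closure ordinal here may exceed $\omega$, exactly as in \rref{thm:dGL-closure-lower}. The correct reconciliation is that the well-founded (least-fixpoint) repetition of \dGL captures ``winning in finitely many but unboundedly many rounds'', which is exactly the open $\exists$-winning condition --- the same phenomenon that makes the filibuster formulas of \rref{sec:Determinacy} false. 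Thus the transfinite iteration $\inflopstrat[\kappa][\gamma]{G}$ of \rref{thm:KnasterTarski} is matched, rank for rank, to the well-founded rank of $\exists$'s winning strategy in the open game, and I expect the cleanest write-up to carry out this matching by transfinite induction on $\kappa$, with the base and limit cases immediate and the successor case using the one-step analysis above.
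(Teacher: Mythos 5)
Your construction is correct and takes essentially the same route as the paper's proof: both G\"odel-encode the growing play into a single variable, realise each player's arbitrary choice of a real by a nondeterministic assignment built from differential equations (dualised for the opponent's moves), and rely on the least-fixpoint semantics of Angel's loop to capture exactly the well-founded open winning condition of the Gale--Stewart game. The differences are presentational only: you use a parity bit with one move per loop iteration where the paper packs a Demon move followed by an Angel move into each loop body (and correspondingly normalises $\varphi$ to check odd indices), and you spell out the matching of the two least fixpoints of the same one-step operator --- including why the $\omega$-iterate would not suffice --- which the paper states only as a brief remark.
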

\begin{proof}
Let \(\mapply{\varphi}{x,y}\) a \dGL formula, which, to simplify notation, is assumed to check the sequence that $y$ encodes only at odd indices $n$.
Then \(\Game y\, \mapply{\varphi}{x,y}\) is definable in \dGL:
\begin{equation}
  \big\langle{\pupdate{\pumod{y}{\tupleempty}};\prepeat{\big(
    \devolve{\D{z}=1};\devolve{\D{z}=-1}; \pupdate{\pumod{y}{\pair{y}{z}}};
    \pevolve{\D{z}=1};\pevolve{\D{z}=-1}; \pupdate{\pumod{y}{\pair{y}{z}}}
  \big)}\big\rangle}
  {\,\mapply{\varphi}{x,y}}
  \label{eq:dGL-Game-quantifier}
\end{equation}
This \dGL formula uses
\(\pair{y_0,y_1,y_2,\dots}{y_n}\) reordered as \(\pair{\dots\pair{\pair{\pair{\tupleempty}{y_0}}{y_1}}{y_2}}{\dots y_n}\) by a recursive permutation starting from the empty tuple encoding $\tupleempty$ for simplicity reasons.
Angel and Demon alternate differential equations for $z$ in \rref{eq:dGL-Game-quantifier} that get successively paired into $y$ by the pairing assignment \(\pupdate{\pumod{y}{\pair{y}{z}}}\), which is definable \cite[Lemma 4]{DBLP:journals/jar/Platzer08}.
This alternation of differential equations corresponds to the alternation of quantifiers in $\Game$.
The number of actual alternations played in \rref{eq:dGL-Game-quantifier} can be exactly any arbitrary \(n<\omega\), because the semantics of \(\ddiamond{\prepeat{\alpha}}{}\) is a least fixpoint, so well-founded.
In each round, Demon first changes $z$ to an arbitrary value by evolving along \(\devolve{\D{z}=1}\) and then \(\devolve{\D{z}=-1}\) for a suitable amount of time.
Subsequently, Angel changes $z$ to an arbitrary value, and both values of $z$ are paired into $y$ using the $\reals$-G\"odel encoding.
\end{proof}
Note that \rref{eq:dGL-Game-quantifier} equivalently defines \rref{eq:Game-quantifier}, even though the \rref{eq:dGL-Game-quantifier}  is a finite \dGL formula while \rref{eq:Game-quantifier} is an infinite formula in an infinitary logic augmented with the game quantifier \cite{Vaananen}, so \rref{eq:dGL-Game-quantifier} is infinitely more concise.
The closed recursive game quantifier \(\lnot\Game y\, \lnot \mapply{\phi}{x,y}\) is definable in \dGL by duality as well, noting that open as well as closed Gale-Stewart games are determined \cite{GaleS53}.
Finally observe that \rref{eq:dGL-Game-quantifier} would not define \rref{eq:Game-quantifier} in the (weaker) advance notice semantics (\rref{app:alternative-semantics}), which corresponds to swapping the quantifier alternation with \(\lorfold_{n<\omega}\) to the finitary:
\[
\lorfold_{n<\omega} \lforall{y_0}{\lexists{y_1}{\lforall{y_2}{\lexists{y_3}{\ldots \mapply{\varphi}{x,\pair{y_0,\dots}{y_n}}}}}}
\]

With this preparation, \dGL can be proved to be strictly more expressive than \dL, which means that hybrid games are fundamentally more expressive than hybrid systems.
In passing, the expressiveness of \dGL and \dL are characterized in terms of inductive and first-order definability, respectively, over acceptable reals.

\begin{theorem}[Expressive power]%
  \label{thm:dL<dGL}%
  \(\dL < \dGL\).
\end{theorem}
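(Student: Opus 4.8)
The plan is to split $\dL < \dGL$ into the two assertions $\dL \leq \dGL$ and $\dGL \not\leq \dL$. The first is exactly \rref{lem:dL<=dGL}, so the entire content lies in the strict separation $\dGL \not\leq \dL$. I would obtain this by pinning down the expressive power of each logic over the acceptable structure of the reals, showing that \dL is confined to \emph{first-order} (elementary) definability while \dGL reaches full \emph{inductive} definability, and then invoking the classical strictness of the latter over the former. This also yields, as a byproduct, the characterizations of \dL and \dGL promised in the theorem statement.

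First I would characterize \dL, showing $\dL \equiv \FOD$. The only operator of \dL that is not already first-order is repetition, and I would rewrite $\ddiamond{\prepeat{\alpha}}{\phi}$ as the assertion that there is a finite run $s_0,\dots,s_n$ of $\alpha$ with $s_0$ the current state, consecutive states related by $\alpha$, and $s_n$ satisfying $\phi$. Using the $\reals$-G\"odel encoding $\pair{y_0,\dots}{y_n}$, which is \FOD-definable by \cite[Lemma~4]{DBLP:journals/jar/Platzer08}, the whole sequence together with its length $n$ codes into a single real, so the quantification over runs becomes one first-order existential real quantifier with a first-order matrix. By induction on the formula structure this yields that every \dL formula is equivalent to a \FOD formula, so \dL defines precisely the relations that are elementary over the acceptable reals. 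Conversely I would characterize \dGL via \rref{lem:expressive/continuous}: every \dGL formula is \muD-equivalent, and the fixpoint operator $\lfp{X}{\phi}$ of \muD is expressed by $\ddiamond{\prepeat{\alpha}}{}$, so $\dGL \equiv \muD$, and \muD over the acceptable reals defines exactly the positive-elementary, i.e.\ inductively definable, relations.

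The separation then comes from a single distinguishing relation. By \rref{lem:dGL-Game-quantifier} the recursive game quantifier $\Game$, and hence its closed dual $\lnot\Game y\,\lnot\mapply{\varphi}{x,y}$, is definable in \dGL, and this defines a complete inductive relation $I$ over the reals (the canonical $\Pi^1_1$-complete set in the effective descriptive set theory of an acceptable structure). I would then invoke the classical boundedness theorem for inductive definitions: an elementary set has bounded closure ordinal, whereas the complete inductive set $I$ captured by $\Game$ does not, so $I$ is not elementary. Finishing by contradiction: if $\dGL \leq \dL$ held, then $I$ would be \dL-definable, hence \FOD-definable, hence elementary, contradicting that $I$ is properly inductive. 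Therefore $\dGL \not\leq \dL$, and with \rref{lem:dL<=dGL} this gives $\dL < \dGL$.

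The main obstacle will be the \dL-to-\FOD collapse together with the (citation-level but genuinely delicate) classical fact that inductive strictly exceeds elementary definability. I must check that the G\"odel coding of runs stays first-order, so that the collapse does not secretly smuggle inductive power in through the length quantifier, and that the game quantifier really lands on a set whose closure ordinal witnesses non-elementariness. This is precisely where the transfinite closure ordinals of \dGL from \rref{thm:dGL-closure-lower}, which reach $\geq\dGLordinal$, must be reconciled against the $\omega$ closure ordinal of $\pdual{}$-free loops established in \rref{eq:dual-free-repetition}; that mismatch in closure ordinals is exactly what powers the separation and keeps \dL on the elementary side of the divide.
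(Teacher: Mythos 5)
Your proposal is correct and follows essentially the same route as the paper: $\dL\leq\dGL$ from \rref{lem:dL<=dGL}, the collapse of \dL to first-order (elementary) definability over the acceptable reals via $\reals$-G\"odel coding, the definability of the recursive game quantifier in \dGL (\rref{lem:dGL-Game-quantifier}) yielding all inductively definable relations, and Moschovakis' strictness of inductive over elementary definability to conclude $\dGL\not\leq\dL$. The only difference is presentational: you unpack the cited facts ($\dL\mequiv\FOD$ via run coding, and the strictness via boundedness of closure ordinals) where the paper simply invokes \cite[Lemma 4]{DBLP:journals/jar/Platzer08}, \cite[Theorems 3 and 4]{DBLP:journals/iandc/HarelK84}, and \cite[Theorem 5B.2]{Moschovakis74}.
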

\begin{proof}
By \rref{lem:dL<=dGL}, it only remains to refute \(\dGL \leq \dL\).
$\reals$-G\"odel encodings etc.\ are elementarily definable in \FOD \cite[Lemma 4]{DBLP:journals/jar/Platzer08}, thus, also in \DL over $\reals$ \cite[Theorem 9]{DBLP:conf/lics/Platzer12b}.
This makes $\reals$ an acceptable structure \cite{Moschovakis74} when augmented with the corresponding definitions from \FOD or \DL over $\reals$.
Further, \(\dL \mequiv \FOD\) \cite{DBLP:journals/jar/Platzer08} and \(\dL \mequiv \DL\) over $\reals$ \cite[Theorem 9]{DBLP:conf/lics/Platzer12b}.
On acceptable structures, \DL defines exactly all first-order definable relations \cite[Theorems 3 and 4]{DBLP:journals/iandc/HarelK84}.
On acceptable structures, the open recursive game quantifier \(\Game y\, \mapply{\varphi}{x,y}\) for first-order formulas \(\mapply{\varphi}{x,y}\) exactly defines all (positive first-order) inductively definable relations \cite[Theorem 5C.2]{DBLP:journals/pams/Moschovakis72,Moschovakis74}.\footnote{
The game quantifier in \cite{DBLP:journals/pams/Moschovakis72}
starts with $\exists y_1$, which is a difference easily overcome.
}
Game quantifier $\Game$ is definable in \dGL by \rref{lem:dGL-Game-quantifier}, and so are all inductive relations.
In acceptable structures, not all inductively definable relations are first-order definable \cite[Theorem 5B.2]{Moschovakis74}. %
Thus, \dGL defines an inductive relation that \DL cannot define over $\reals$, so neither can \dL.
Hence, \(\dL \mequiv \DL < \dGL\) over $\reals$.
\end{proof}

Thus, hybrid games can characterize relations that hybrid systems cannot, because \dGL defines all inductive relations over (augmented) $\reals$, while \dL defines exactly all first-order definable relations.
The proof of \rref{thm:dGL-closure-lower} implies that $\omega_1^{\text{HG}}$ exceeds all order types of all inductive well-orders, because all inductive relations can be characterized in \dGL.
All closure ordinals of inductive relations occur as order types of some inductive well-order, because the staging order of inductive definitions is well-founded \cite[Theorems 3A.3, 3C.1]{Moschovakis74}.
So $\omega_1^{\text{HG}}$ equals the closure ordinal of the underlying structure.

The game quantifier and its characterization in the proof of \rref{lem:dGL-Game-quantifier} along with the differential equation characterization of G\"odel encodings \cite[Lemma 4]{DBLP:journals/jar/Platzer08} implies the existence of a smaller syntactic fragment of \dGL that is differentially expressive, so that \dGL is complete relative to this fragment of \dGL by \rref{thm:dGL-complete}.
By \rref{eq:Game-quantifier}, alternating differential equations in a single loop are the dominant feature of this fragment.
The only modification to the proof of \rref{lem:expressive/continuous} is the case of \(\reduct{(\ddiamond{\prepeat{\alpha}}{\phi})}\) which then uses \rref{eq:Game-quantifier} with a (definable) formula \(\mapply{\varphi}{x,\pair{y_0,\dots}{y_n}}\) that simply checks whether the decision sequence \(y_0,\dots,y_n\) gives a valid play of hybrid game $\prepeat{\alpha}$ in which Angel wins.
The fact that $\Game$ assumes strict alternation of the players is easily overcome by choosing $\varphi$ to be independent of $y_i$ when the player for its quantifier does not get to choose at step $i$ in $\prepeat{\alpha}$.
The actions can be chosen, e.g., as discussed in \rref{app:operational-HG-semantics}.

\section{Related Work} \label{sec:RelatedWork}

Games and logic have been shown to interact fruitfully in many ways \cite{GaleS53,Ehrenfeucht61,DBLP:conf/focs/Parikh83,DBLP:journals/anndiscrmath/Parikh85,Aumann95,HintikkaS97,Stirling01,DBLP:journals/jacm/AlurHK02,DBLP:journals/sLogica/PaulyP03a,AptGradel,Vaananen}.
The present article focuses on using logic to specify and verify properties of hybrid games, inspired by Parikh's propositional game logic for finite-state discrete games \cite{DBLP:conf/focs/Parikh83,DBLP:journals/anndiscrmath/Parikh85,DBLP:journals/sLogica/PaulyP03a}.

Parikh's game logic generalizes (propositional discrete) dynamic logic to discrete games played on a finite state space, and subsumes $\Delta$PDL and CTL$^*$ \cite{DBLP:journals/sLogica/PaulyP03a}.
After more than two decades, its expressiveness has only begun to be understood. It has been shown that the alternation hierarchy in propositional game logic is strict and encodes parity games that span the full alternation hierarchy of the (propositional) modal $\mu$-calculus \cite{DBLP:journals/sLogica/Berwanger03} and that, being in the two variable fragment, it is less expressive than the (propositional) modal $\mu$-calculus \cite{DBLP:journals/mst/BerwangerGL07}.
Another influential propositional modal logic, ATL$^*$ has been used for model checking finite-state systems \cite{DBLP:journals/jacm/AlurHK02} and is related to propositional game logic \cite{DBLP:conf/icla/BerwangerP09}.
Applications and relations of game logic, ATL$^*$ \cite{DBLP:journals/jacm/AlurHK02}, and  strategy logics with explicit strategies \cite{DBLP:journals/iandc/ChatterjeeHP10,DBLP:conf/concur/MogaveroMPV12} are discussed in the literature \cite{DBLP:journals/jacm/AlurHK02,DBLP:journals/sLogica/PaulyP03a,DBLP:conf/icla/BerwangerP09,DBLP:journals/iandc/ChatterjeeHP10,DBLP:conf/concur/MogaveroMPV12,DBLP:journals/aamas/BullingJ14}.
Completeness of the ATL fragment of ATL$^*$ has been considered \cite{DBLP:journals/tcs/GorankoD06} as well as its expressiveness and complexity \cite{DBLP:journals/lmcs/LaroussinieMO08}.
But logical investigations of ATL-type logics are scarce according to a recent survey \cite{DBLP:journals/aamas/BullingJ14} with more detailed comparisons and progress on the satisfiability problem for ATL with imperfect information.
These logics for the propositional case of finite-state discrete games are interesting, but it is not clear how their decision procedures should be generalized to the highly undecidable domain of hybrid games with differential equations, uncountable choices, and higher closure ordinals.
The logic \dGL shows how such hybrid games can be proved, enjoys compositionality, completeness, and comes with a rich theory.

Differential games have been studied with many different notions of solutions  \cite{Isaacs:DiffGames,Friedman,Petrosjan93,Bressan10}. They are of interest when actions are solely in continuous time.
The present article considers the complementary model of hybrid games where the underlying system is that of a hybrid system with interacting discrete and continuous dynamics, but the game actions are chosen at discrete instants of time, even if their outcomes take effect in continuous time.

Hybrid games provide a complementary perspective on differential games, just like hybrid systems provide a complementary perspective on continuous dynamical systems.
Differential games formalize various notions of adversarial control on variables for a single differential equation \cite{Isaacs:DiffGames,Friedman,Petrosjan93}, including solutions based on a non-anticipatory measurable input to an integral interpretation of the differential equations \cite{Friedman}, joint limits for $\delta\to0$ of lower and upper limits of $\delta$-anticipatory or $\delta$-delayed strategies \cite{Petrosjan93}, and Pareto-optimal, Nash, or Stackelberg equilibria, whose computation requires solving PDEs that quickly become ill-posed (already for feedback Nash equilibria except in very special cases); see Bressan \citeyear{Bressan10} for an overview.
Hybrid games, instead, distinguish discrete versus continuous parts of the dynamics, which simplifies the concepts, because easier pieces are involved, and, simultaneously, have been argued to make other aspects like delays in decisions and the integration of computer-decision into continuous physics more realistic \cite{DBLP:journals/tac/TomlinPS:98,DBLP:journals/IEEE/TomlinLS00,DBLP:journals/corr/abs-0911-4833,DBLP:journals/tcs/VladimerouPVD11,DBLP:conf/ecc/PrandiniHP01,DBLP:conf/cade/QueselP12}.
The situation is similar to hybrid systems, which provide a complementary perspective on continuous dynamical systems \cite{DBLP:conf/hybrid/NerodeK92a,DBLP:journals/tcs/AlurCHHHNOSY95,DBLP:journals/tac/BranickyBM98} that can model more complicated systems as a combination of simpler concepts \cite{DBLP:conf/lics/Platzer12b} and can model computational effects more realistically.

Some reachability aspects of games for hybrid systems have been studied before.
A game view on hybrid systems verification has been proposed following a Hamilton-Jacobi-Bellman PDE formulation \cite{DBLP:journals/pieee/TomlinMBO03,DBLP:journals/tac/MitchellBT05}, with subsequent extensions by Gao et al.\ \citeyear{DBLP:journals/tac/GaoLQ07}.
Their primary focus is on adversarial choices in the continuous dynamics not on interactions with the discrete dynamics or on interactive game play.
Similar observations apply to the viability theory approach to differential games, which gives powerful answers when the differential game and its winning conditions satisfy a number of conditions \cite{Cardaliaguet2007}.

WCTL properties of STORMED hybrid games, which are restricted to evolve linearly in one ``direction'' all the time, have been shown to be decidable using bisimulation quotients \cite{DBLP:journals/tcs/VladimerouPVD11}.
STORMED hybrid games generalize o-minimal hybrid games which have been shown to be decidable before \cite{DBLP:journals/corr/abs-0911-4833}.
Timed games \cite{DBLP:journals/corr/abs-1011-0688} as well as initialized rectangular hybrid games are known to be decidable \cite{DBLP:conf/concur/HenzingerHM99}, which is limited to the case where all evolution domains and jump constraints are bounded rectangles independent of the previous state of the system and when the controller can only either disable transitions or decide when to take transitions, not both \cite{DBLP:conf/concur/HenzingerHM99}.
Many applications do not fall into these decidable classes \cite{DBLP:conf/cade/QueselP12}, so that a study of more general hybrid games is called for.
The results in this article have implications for such reachability analyses. They show, for example, that reachability computations and backwards induction for hybrid games require highly transfinite closure ordinals $\geq\dGLordinal$.
The completeness proof further exactly characterizes the challenging cases in hybrid games verification.

This article takes a complementary view and studies logics and proofs for hybrid games instead of searching for decidable fragments using bisimulation quotients \cite{DBLP:conf/concur/HenzingerHM99,DBLP:journals/corr/abs-0911-4833,DBLP:journals/tcs/VladimerouPVD11}, which cannot generally exist.
It provides a proof-based and compositional verification technique for more general hybrid games with nonlinear dynamics.
This article's notion of hybrid games is more flexible, because it allows arbitrary nested hybrid game choices rather than one fixed pattern of interaction such as the game of a discrete controller against a continuous plant considered in related work.
This results in \dGL's ability to express more general logical formulas with the flexibility expected from a logic and programming language, including arbitrarily nested game operators and nested modalities, which leads to a rich logical theory.

There is more than one way how logic can help to understand games of hybrid systems.
Concurrent work has shown that games can also be added as separate constructs on top of unmodified differential dynamic logic \cite{DBLP:conf/cade/QueselP12}, focusing on the special case of advance notice semantics (\rref{app:alternative-semantics}).
The present article follows an entirely different principle.
Instead of leaving differential dynamic logic untouched and adding several separate game constructs on top of full hybrid systems reachability modalities \cite{DBLP:conf/cade/QueselP12}, the logic \dGL becomes a proper game logic by adding a single operator $\pdual{}$ for adversariality into the system dynamics.
The logic \dGL results in a much simplified but nevertheless more general logic with a simpler and more general semantics and simpler and more general proof calculus.
The present article studies a Hilbert calculus and focuses on fundamental logical properties and theory.
See \cite{DBLP:conf/cade/QueselP12} for practical aspects like a very challenging robotic factory automation case study that translates to \dGL.
Since \dGL is a gentle extension with the single operator $\pdual{}$, it is more elegant and significantly easier to implement.
What is more difficult in \dGL in comparison to that fragment \cite{DBLP:conf/cade/QueselP12}, however, is the need to carefully identify which axioms are no longer sound for games, which has been pursued in \rref{sec:separating-axioms}.

The logic \dGL presented here has similarities with stochastic differential dynamic logic (\SdL) \cite{DBLP:conf/cade/Platzer11}, because both may be used to verify properties of the hybrid system dynamics with partially uncertain behavior.
Both approaches do, however, address uncertainty in fundamentally different ways.
\SdL takes a probabilistic perspective on uncertainty in the system dynamics.
The \dGL approach put forth in this paper, instead, takes an adversarial perspective on uncertainty.
Both views on how to handle uncertain behavior are useful but serve quite different purposes, depending on the nature of the system analysis question at hand.
A probabilistic understanding of uncertainty can be superior whenever good information is available about the distribution of choices made by the environment and other agents.
Whenever that is not possible, adversarial views may be more appropriate, since they do not lead to the inadequate biases that arbitrary probabilistic assumptions would impose.
Adversarial dynamics is also called for in cases of true competition, like in RoboCup.

\section{Conclusions and Future Work} \label{sec:Conclusion}

This article introduced differential game logic (\dGL) for hybrid games that combine discrete, continuous, and adversarial dynamics.
Just like hybrid games unify hybrid systems with discrete games, \dGL unifies logic of hybrid systems with Parikh's propositional game logic of finite-state discrete games.
Hybrid games are challenging, since computing their winning regions may require closure ordinals ${\geq}\omega_1^{\text{CK}}$.
The logic \dGL for hybrid games is fundamentally more expressive than the corresponding logic \dL for hybrid systems, because it defines all inductive relations over the augmented structure of $\reals$ rather than exactly the first-order definables.
Nevertheless, \dGL has a simple modal semantics and a simple proof calculus, which is proved to be a sound and complete axiomatization of hybrid games relative to any (differentially) expressive logic.

The completeness proof is constructive with minimal coding, thereby exactly characterizing all difficult parts of hybrid games proving.
The proof identifies an efficient fixpoint-style proof technique, which can be considered a modal analogue of characterizations in the Calculus of Constructions \cite{DBLP:journals/iandc/CoquandH88}, and relates to hybrid game versions of influential views of understanding program invariants as fixpoints \cite{DBLP:conf/popl/CousotC77,Clarke79}.
Relative completeness shows that \dGL has all axioms and proof rules for dealing with hybrid games and only the base games of differential equations themselves are difficult.
The study of (fragments of) \dGL which are complete for smaller logics is interesting future work.
By the schematic completeness result, this reduces solely to questions of expressiveness, which give rise to interesting questions in descriptive set theory.

It is intriguing to observe the overwhelming impact of the innocent addition of a duality operator. Yet, it is also reassuring to find that logical robustness makes logical foundations continue to work despite the formidable extra challenges of hybrid games.
To wit, this article contrasted hybrid games with hybrid systems in terms of their analytic complexity, axiomatizations, and expressiveness.

The \dGL axiomatization is strikingly similar to the calculus for stochastic differential dynamic logic \SdL \cite{DBLP:conf/cade/Platzer11}, despite their fundamentally different semantical presuppositions (adversarial nondeterminism versus stochasticity), which indicates the existence of a deeper logical connection relating stochastic and adversarial uncertainty despite their different mathematical basis (fixpoints, closure ordinals, acceptable structures, inductive definability, and game theory versus stochastic processes, martingales, Markov times, and infinitesimal generators).
Because of the axiomatic similarity, the rich theory of \dGL may shed light on the logical theory of stochastic hybrid systems, which so far remained elusive.

The logic of hybrid games opens up many directions for future work, including the study of computationally bounded winning strategies, e.g., only strategies that are constructible with small closure ordinals, or with finite rank Borel winning regions, as well as a study of constructive \dGL to retain the winning strategies as explicit proof terms.
Yet, challenges abound, given the ability of \dGL to define closed elementary games won by a player for whom no hyperelementary quasiwinning strategies exist, which follows from \rref{thm:dL<dGL} by \cite[Chapter 7]{Moschovakis74}. %

Draws, coalitions, rewards, and payoffs different from $\pm1$ can be expressed easily in \dGL using extra variables, but it may be useful to include direct syntactical support.
\rref{thm:dL<dGL} shows that all inductively definable game concepts are expressible in \dGL. Rather than including direct support for each, \dGL focuses on the most fundamental aspects of hybrid games for reasons of simplicity and elegance.
Concurrent games and their equivalent sequential imperfect information games are interesting but are challenging even in the discrete case, because imperfect information leads to Henkin quantifiers.
By \rref{thm:dL<dGL}, the challenge is not to add concurrent games but rather to sustain \dGL's compositional verification principles.
The logic \dGL presented here can be augmented with differential games as a new kind of atomic games \cite{DBLP:journals/corr/Platzer15:dGI}.
Thanks to its compositional semantics, this results in a modular construction, but is not pursued in this article, because it requires a separate body of mathematics.
Combining \dGL with axioms for differential equations \cite{DBLP:journals/logcom/Platzer10,DBLP:conf/lics/Platzer12b} already provides a way of handling hybrid games with nonlinear differential equations, differential-algebraic inequalities and differential equations with input.

\appendix

\section{Example \dGL Proofs} \label{app:ExampleProofs}

The completeness proof suggests the use of iteration axiom \irref{iterated} and \irref{US} to prove $\ddiamond{\prepeat{\alpha}}{}$ properties.
The following examples illustrate how this works in practice.
Observe how logic programming saturation with widening quickly proves the resulting arithmetic.
\begin{example}[Non-game system] \label{ex:dGL-proof-example1}
The simple non-game \dGL formula
\[
x\geq0 \limply \ddiamond{\prepeat{(\pupdate{\pumod{x}{x-1}})}}{0\leq x<1}
\]
is provable, shown in \rref{fig:dGL-proof-example1},
\newcommand{\orgfo}{\ddiamond{\prepeat{\alpha}}{0{\leq}x{<}1}}%
where $\orgfo$ is short for
\(\ddiamond{\prepeat{(\pupdate{\pumod{x}{x-1}})}}{(0\leq x<1)}\).
The \irref{MP} use in \rref{fig:dGL-proof-example1} is Hilbert-style, i.e.\ combines the two lines above by modus ponus.%
\begin{figure*}[tbhp]
\begin{minipage}{\textwidth}
\begin{sequentdeduction}[array]
\linfer[MP]
{\linfer[iterated+gena]
{
\linfer[US]
{\linfer[assignd]
  {\linfer[RCFp]
    {\lclosea}
    {\lforall{x}{(0\leq x<1 \lor p(x-1) \limply p(x))} \limply (x\geq0 \limply p(x))}
  }
  {\lforall{x}{(0\leq x<1 \lor \ddiamond{\pupdate{\pumod{x}{x-1}}}{p(x)} \limply p(x))} \limply (x\geq0 \limply p(x))}
}
{\lforall{x}{(0\leq x<1 \lor \ddiamond{\pupdate{\pumod{x}{x-1}}}{\orgfo} \limply \orgfo)} \limply (x\geq0 \limply \orgfo)}
}
{\lforall{x}{(0\leq x<1 \lor \ddiamond{\pupdate{\pumod{x}{x-1}}}{\orgfo} \limply \orgfo)}}
}
{x\geq0 \limply \orgfo}
\end{sequentdeduction}
\end{minipage}
\caption{\dGL Angel proof for \rref{ex:dGL-proof-example1} using technique from completeness proof}
\label{fig:dGL-proof-example1}
\end{figure*}
\end{example}

\begin{example}[Choice game] \label{ex:dGL-proof-example2}
The \dGL formula
\[
x=1\land a=1 \limply \ddiamond{\prepeat{(\dchoice{\pupdate{\pumod{x}{a}};\pupdate{\pumod{a}{0}}}{\pupdate{\pumod{x}{0}}})}}{x\neq1}
\]
which comes from \rref{eq:advance-notice-ex} on p.\,\pageref{eq:advance-notice-ex} is provable as shown in \rref{fig:dGL-proof-example2},
\newcommand{\orgfo}{\ddiamond{\prepeat{(\dchoice{\beta}{\gamma})}}{x\neq1}}%
where $\dchoice{\beta}{\gamma}$ is short for
\(\dchoice{\pupdate{\pumod{x}{a}};\pupdate{\pumod{a}{0}}}{\pupdate{\pumod{x}{0}}}\) and $\orgfo$ short for \(\ddiamond{\prepeat{(\dchoice{\pupdate{\pumod{x}{a}};\pupdate{\pumod{a}{0}}}{\pupdate{\pumod{x}{0}}})}}{x\neq1}\):
\begin{figure*}[tbhp]
\begin{sequentdeduction}[array]
\linfer[RCFp]
{\linfer[iterated+gena+MP]
{\linfer[US]
{\linfer[duald+choiced]
  {\linfer[composed+assignd]
    {\linfer[RCFp]
      {\lclosea}
      {\lforall{x}{(x\neq1 \lor p(a,0)\land p(0,a) \limply p(x,a))} \limply (\ltrue \limply p(x,a))}
    }
    {\lforall{x}{(x\neq1 \lor \ddiamond{\beta}{p(x,a)}\land\ddiamond{\gamma}{p(x,a)} \limply p(x,a))} \limply (\ltrue \limply p(x,a))}
  }
  {\lforall{x}{(x\neq1 \lor \ddiamond{\dchoice{\beta}{\gamma}}{p(x,a)} \limply p(x,a))} \limply (\ltrue \limply p(x,a))}
}
{\lforall{x}{(x\neq1 \lor \ddiamond{\dchoice{\beta}{\gamma}}{\orgfo} \limply \orgfo)} \limply (\ltrue \limply \orgfo)}
}
{\ltrue \limply \orgfo}
}
{x=1\land a=1 \limply \orgfo}
\end{sequentdeduction}
\caption{\dGL Angel proof for \rref{ex:dGL-proof-example2} using technique from completeness proof}
\label{fig:dGL-proof-example2}
\end{figure*}
\end{example}

\begin{example}[Hybrid game] \label{ex:dGL-proof-example3}
The \dGL formula
\[
\ddiamond{\prepeat{(\pchoice{\pupdate{\pumod{x}{1}};\devolve{\D{x}=1}}{\pupdate{\pumod{x}{x-1}}})}}{0\leq x<1}
\]
from \rref{eq:omega-strategic-ex} on p.\,\pageref{eq:omega-strategic-ex} is provable as shown in \rref{fig:dGL-proof-example3},
\newcommand{\orgfo}{\ddiamond{\prepeat{(\pchoice{\beta}{\gamma})}}{0{\leq}x{<}1}}%
where the notation $\orgfo$ is short for
\(\ddiamond{\prepeat{(\pchoice{\pupdate{\pumod{x}{1}};\devolve{\D{x}=1}}{\pupdate{\pumod{x}{x-1}}})}}{(0\leq x<1)}\):
\begin{figure*}[tbhp]
\begin{minipage}{\textwidth}
\begin{sequentdeduction}[array]
\linfer[iterated+gena+MP]
{
\linfer[US]
{\linfer[choiced]
  {\linfer[composed+duald]
  {\linfer[evolved]
    {\linfer[assignd]
      {\linfer[RCFp]
        {\lclosea}
        {\lforall{x}{(0\leq x<1 \lor \lforall{t{\geq}0}{p(1+t)} \lor p(x-1) \limply p(x))} \limply (\ltrue \limply p(x))}
        }
        {\lforall{x}{(0\leq x<1 \lor \ddiamond{\pupdate{\pumod{x}{1}}}{\lnot\lexists{t{\geq}0}{\ddiamond{\pupdate{\pumod{x}{x+t}}}{\lnot p(x)}}} \lor p(x-1) \limply p(x))} \limply (\ltrue \limply p(x))}
      }
      {\lforall{x}{(0\leq x<1 \lor \ddiamond{\pupdate{\pumod{x}{1}}}{\lnot\ddiamond{\pevolve{\D{x}=1}}{\lnot p(x)}} \lor p(x-1) \limply p(x))} \limply (\ltrue \limply p(x))}
    }
    {\lforall{x}{(0\leq x<1 \lor \ddiamond{\beta}{p(x)} \lor \ddiamond{\gamma}{p(x)} \limply p(x))} \limply (\ltrue \limply p(x))}
  }
  {\lforall{x}{(0\leq x<1 \lor \ddiamond{\pchoice{\beta}{\gamma}}{p(x)} \limply p(x))} \limply (\ltrue \limply p(x))}
}
{\lforall{x}{(0{\leq} x{<}1 {\lor}\ddiamond{\pchoice{\beta}{\gamma}}{\orgfo} {\limply} \orgfo)} {\limply} (\ltrue {\limply}\orgfo)}
}
{\ltrue \limply \orgfo}
\end{sequentdeduction}
\end{minipage}
\caption{\dGL Angel proof for \rref{ex:dGL-proof-example3} using technique from completeness proof}
\label{fig:dGL-proof-example3}
\end{figure*}
Here and in \rref{fig:dGL-proof-example2}, the \irref{gena+iterated+MP} steps conclude as in \rref{fig:dGL-proof-example1}.
The proof step \irref{evolved} uses that \(t\mapsto x+t\) is the solution of the differential equation, so the subsequent use of \irref{assignd} substitutes 1 in for $x$ to obtain \(t\mapsto 1+t\).
Recall that the winning regions for formula \rref{eq:omega-strategic-ex} need ${>}\omega$ iterations to converge.
It is still provable easily.
A variation of this proof shows \dGL formula \rref{eq:ex-explosive-clock} from p.\,\pageref{eq:ex-explosive-clock}, where the handling of the nonlinear differential equation is a bit more complicated.
\end{example}
A variation of \rref{ex:dGL-proof-example3} also proves \dGL formula \rref{eq:omega2-closure} from the proof of \rref{thm:dGL-closure-lower}, whose closure ordinal is $\omega\cdot2$.

\section{Proof of Separating Axioms} \label{app:separating-axioms}

This section proves \rref{thm:separating-axioms} with an emphasis on simple counterexamples for each separating axiom to identify the logical essence of the informal explanations shown in \rref{sec:separating-axioms}.

\paragraph{Subnormal Modal Logic}
Unlike \dL, \dGL is not a normal modal logic  \cite{HughesCresswell96}.
Axiom K, the modal modus ponens from normal modal logic \cite{HughesCresswell96}, dynamic logic \cite{DBLP:conf/focs/Pratt76}, and differential dynamic logic \cite{DBLP:conf/lics/Platzer12b}, i.e.\
\[
\cancel{\text{K}}~~
\linferenceRule[impl]
        {\dbox{\alpha}{(\phi\limply\psi)}}
        {(\dbox{\alpha}{\phi}\limply\dbox{\alpha}{\psi})}
\displaywidowpenalty=10000
\]
\begin{figure}[htbp]
  \centering
    \tikzstyle{box}+=[minimum width=0.5cm,minimum height=0.5cm]
  \begin{tikzpicture}[grow'=down,every node/.style={box},every edge/.style={boxa},level distance=1.2cm]
    \node {xy}
      child {node[boxwon] {00}}
      child {node[diawon] {10}}
      ;
     \node[action,vblue] at (+0,-2.4) {$\dbox{\alpha}{(x=1\limply y=1)}$};
    \node at (3,0) {xy}
      child {node[diawon] {00}}
      child {node[boxwon] {10}}
      ;
     \node[action,vblue] at (+3,-2.4) {$\dbox{\alpha}{x=1}$};
    \node at (6,0) {xy}
      child {node[diawon] {00}}
      child {node[diawon] {10}}
      ;
     \node[action,vred] at (+6,-2.4) {$\dbox{\alpha}{y=1}$};
  \end{tikzpicture}
  \caption{Game trees for counterexample to axiom K using \(\alpha\mequiv(\dchoice{\pupdate{\pumod{x}{1}}}{\pupdate{\pumod{x}{0}}});\pupdate{\pumod{y}{0}}\).}
  \label{fig:K-cex}
\end{figure}%

\noindent
is not sound for \dGL as witnessed using the choice 
\(\alpha\mequiv(\dchoice{\pupdate{\pumod{x}{1}}}{\pupdate{\pumod{x}{0}}});\pupdate{\pumod{y}{0}}\)
and
\(\phi\mequiv x=1\), \(\psi\mequiv y=1\); see \rref{fig:K-cex}.
The global version of K, i.e.\ the implicative version of G\"odel's generalization rule is still sound and derives with \irref{duald} and \irref{box} from \irref{M} using $\alpha\equiv\pdual{\beta}$
\[
\text{M}_{\dbox{\cdot}{}}~~
\linferenceRule[formula]
        {\phi\limply\psi}
        {\dbox{\beta}{\phi}\limply\dbox{\beta}{\psi}}
\]
The normal G\"odel generalization rule G, i.e.\
\[
\cancel{\text{G}}~~
      \linferenceRule[formula]
        {\phi}
        {\dbox{\alpha}{\phi}}
\]
however, is not sound for \dGL as witnessed by the choice \(\alpha\mequiv\pdual{(\ptest{\lfalse})}, \phi\mequiv\ltrue\).

The vacuous axiom V, which expresses that formulas do not change their truth-value along a program if their free variables are not bound, is sound for dynamic logics and differential dynamic logic \dL when no free variable of $\phi$ is bound by $\alpha$ \cite{DBLP:conf/lics/Platzer12b}:
\[
\cancel{\text{V}}~~ %
      {\linferenceRule[impl]
        {\phi}
        {\dbox{\alpha}{\phi}}
      }{\qquad(\freevars{\phi}\cap \boundvars{\alpha}=\emptyset)}%
\]
but the vacuous axiom is not sound for \dGL as witnessed by the choice \(\phi \mequiv x=0\) and \(\alpha \mequiv \pupdate{\pumod{y}{0}}; \pdual{(\ptest{y=1})}\).
With an additional assumption \m{\dbox{\alpha}{\ltrue}} expressing that the game can be played to a final state at all, the possible vacuous axiom \irref{VK} is sound for \dGL:
\[
\cinferenceRule[VK|VK]{vacuous possible $\dbox{}{}$}
      {\linferenceRule[impl]
        {\phi}
        {(\dbox{\alpha}{\ltrue} \limply \dbox{\alpha}{\phi})}
      }{\freevars{\phi}\cap \boundvars{\alpha}=\emptyset}%
\]
If Demon can always finish the game (\(\dbox{\alpha}{\ltrue}\)) then $\phi$ will continue to hold if it was true initially since $\alpha$ only changes bound variables and $\phi$ only depends on its free variables.

The closest counterpart to G that is sound for hybrid games is $\text{M}_{\dbox{\cdot}{}}$ and the closest counterpart to V that is sound is \irref{VK}.
Both require the extra assumption \(\dbox{\alpha}{\ltrue}\), which is only trivial for hybrid systems not for hybrid games.

\paragraph{Subregular Modal Logic}
Regular modal logics are monotone modal logics \cite{Chellas} that are weaker than normal modal logics.
But the regular modal generalization rule \cite{Chellas}, i.e.
\[
\cancel{\text{R}}~~
      \linferenceRule[formula]
        {\phi_1\land\phi_2\limply\psi}
        {\dbox{\alpha}{\phi_1} \land \dbox{\alpha}{\phi_2} \limply \dbox{\alpha}{\psi}}
\]
is not sound for \dGL either as witnessed by the choice 
\(\alpha\mequiv(\dchoice{\pupdate{\pumod{x}{1}}}{\pupdate{\pumod{x}{0}}});\pupdate{\pumod{y}{0}}\), \(\phi_1\mequiv x=1, \phi_2\mequiv x=y, \psi\mequiv x=1\land x=y\); see \rref{fig:Reg-cex}.
\begin{figure}[htbp]
  \centering
    \tikzstyle{box}+=[minimum width=0.5cm,minimum height=0.5cm]
  \begin{tikzpicture}[grow'=down,every node/.style={box},every edge/.style={boxa},level distance=1.2cm]
    \node {xy}
      child {node[diawon] {00}}
      child {node[boxwon] {10}}
      ;
     \node[action,vblue] at (+0,-2.4) {$\dbox{\alpha}{x=1}$};
    \node at (3,0) {xy}
      child {node[boxwon] {00}}
      child {node[diawon] {10}}
      ;
     \node[action,vblue] at (+3,-2.4) {$\dbox{\alpha}{x=y}$};
    \node at (6,0) {xy}
      child {node[diawon] {00}}
      child {node[diawon] {10}}
      ;
     \node[action,vred] at (+6,-2.4) {$\dbox{\alpha}{x=1\land x=y}$};
  \end{tikzpicture}
  \caption{Game trees for counterexample to regular modal rule using \(\alpha\mequiv(\dchoice{\pupdate{\pumod{x}{1}}}{\pupdate{\pumod{x}{0}}});\pupdate{\pumod{y}{0}}\).}
  \label{fig:Reg-cex}
\end{figure}%

\paragraph{Monotone Modal Logic}
The axiom that is closest to K but still sound for \dGL is a monotonicity axiom.
This axiom is sound for \dGL, yet already included in the monotonicity rule \irref{M}:
\begin{lemma}[{\cite[Theorem 8.13]{Chellas}}] 
  In the presence of rule \irref{RE} from p.\,\pageref{ir:RE}, rule \irref{M} is interderivable with axiom M:
  \upshape%
  \[
  \dinferenceRule[Maxiom|M]{monotonicity axiom}
  {
  \ddiamond{\alpha}{\phi} \lor \ddiamond{\alpha}{\psi} \limply \ddiamond{\alpha}{(\phi\lor\psi)}
  }{}
  \]
\end{lemma}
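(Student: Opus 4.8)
The plan is to establish interderivability in the two directions, each of which reduces to a short propositional manipulation layered on the inference machinery already available. Both directions are entirely standard monotone-modal-logic bookkeeping, so I expect no genuine obstacle; the only subtlety worth flagging is exactly where rule \irref{RE} is indispensable.

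First I would derive axiom \irref{Maxiom} from rule \irref{M}. Since $\phi\limply\phi\lor\psi$ and $\psi\limply\phi\lor\psi$ are propositional tautologies, two applications of rule \irref{M} yield $\ddiamond{\alpha}{\phi}\limply\ddiamond{\alpha}{(\phi\lor\psi)}$ and $\ddiamond{\alpha}{\psi}\limply\ddiamond{\alpha}{(\phi\lor\psi)}$. Combining these two implications by a case analysis on the disjunctive antecedent produces exactly $\ddiamond{\alpha}{\phi}\lor\ddiamond{\alpha}{\psi}\limply\ddiamond{\alpha}{(\phi\lor\psi)}$, which is axiom \irref{Maxiom}. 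This direction does not even require \irref{RE}.

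For the converse I would derive rule \irref{M} from axiom \irref{Maxiom} together with \irref{RE}. Suppose $\phi\limply\psi$ is provable. Then propositionally $(\phi\lor\psi)\lbisubjunct\psi$, so rule \irref{RE} delivers the congruence $\ddiamond{\alpha}{(\phi\lor\psi)}\lbisubjunct\ddiamond{\alpha}{\psi}$. The relevant instance of axiom \irref{Maxiom} is $\ddiamond{\alpha}{\phi}\lor\ddiamond{\alpha}{\psi}\limply\ddiamond{\alpha}{(\phi\lor\psi)}$, from which weakening the disjunctive antecedent gives $\ddiamond{\alpha}{\phi}\limply\ddiamond{\alpha}{(\phi\lor\psi)}$. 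Chaining this implication with the \irref{RE}-derived equivalence then yields $\ddiamond{\alpha}{\phi}\limply\ddiamond{\alpha}{\psi}$, which is the conclusion of rule \irref{M} under the hypothesis $\phi\limply\psi$.

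The main point requiring care is precisely this second direction: without congruence one cannot pass from the $\phi\lor\psi$ form occurring in axiom \irref{Maxiom} back to the target $\psi$, so \irref{RE} is genuinely used to rewrite $\ddiamond{\alpha}{(\phi\lor\psi)}$ into $\ddiamond{\alpha}{\psi}$. This is why the statement is phrased relative to the availability of \irref{RE}, and it is consistent with the observation recorded earlier in the soundness proof that \irref{RE} itself derives from \irref{M}, so that whenever rule \irref{M} is present the hypothesis on \irref{RE} is automatically met.
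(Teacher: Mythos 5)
Your proposal is correct and follows essentially the same route as the paper's proof: deriving axiom M from rule \irref{M} via the two tautologies $\phi\limply\phi\lor\psi$ and $\psi\limply\phi\lor\psi$ plus propositional recombination, and deriving rule \irref{M} from axiom M by using \irref{RE} on the propositional equivalence $(\phi\lor\psi)\lbisubjunct\psi$. Your added remark on exactly where \irref{RE} is indispensable is accurate but does not change the argument.
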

\begin{proof}
Axiom M derives from rule \irref{M}:
From \(\phi\limply\phi\lor\psi\), \irref{M} derives \(\ddiamond{\alpha}{\phi}\limply\ddiamond{\alpha}{(\phi\lor\psi)}\).
From \(\psi\limply\phi\lor\psi\), \irref{M} derives \(\ddiamond{\alpha}{\psi}\limply\ddiamond{\alpha}{(\phi\lor\psi)}\), from which propositional logic yields \(\ddiamond{\alpha}{\phi}\lor\ddiamond{\alpha}{\psi}\limply\ddiamond{\alpha}{(\phi\lor\psi)}\).

Conversely, rule \irref{M} derives from axiom M and rule \irref{RE}:
From \(\phi\limply\psi\) propositional logic derives \(\phi\lor\psi\lbisubjunct\psi\), from which \irref{RE} derives \(\ddiamond{\alpha}{(\phi\lor\psi)}\lbisubjunct\ddiamond{\alpha}{\psi}\).
From axiom M, propositional logic, thus, derives \(\ddiamond{\alpha}{\phi}\limply\ddiamond{\alpha}{\psi}\).
\end{proof}
The converse of axiom M is sound for \dL but not for \dGL, however, as witnessed by \(\alpha\mequiv\dchoice{\pupdate{\pumod{x}{1}}}{\pupdate{\pumod{x}{0}}}\), \(\phi\mequiv x=1, \psi\mequiv x=0\); see \rref{fig:Minverse-cex}:
  \[
  \cancel{$\overleftarrow{\text{M}}$}~~
  \ddiamond{\alpha}{(\phi\lor\psi)}
  \limply
  \ddiamond{\alpha}{\phi} \lor \ddiamond{\alpha}{\psi}
  \]
\begin{figure}[htbp]
  \centering
    \tikzstyle{box}+=[minimum width=0.5cm,minimum height=0.5cm]
  \begin{tikzpicture}[grow'=down,every node/.style={box},every edge/.style={boxa},level distance=1.2cm]
    \node {x}
      child {node[diawon] {0}}
      child {node[diawon] {1}}
      ;
     \node[action,vred] at (+0,-2.4) {$\ddiamond{\alpha}{(x=1\lor x=0)}$};
    \node at (3,0) {x}
      child {node[boxwon] {0}}
      child {node[diawon] {1}}
      ;
     \node[action,vblue] at (+3,-2.4) {$\ddiamond{\alpha}{x=1}$};
    \node at (6,0) {x}
      child {node[diawon] {0}}
      child {node[boxwon] {1}}
      ;
     \node[action,vblue] at (+6,-2.4) {$\ddiamond{\alpha}{x=0}$};
  \end{tikzpicture}
  \caption{Game trees for counterexample to converse monotone axiom using \(\alpha\mequiv\dchoice{\pupdate{\pumod{x}{1}}}{\pupdate{\pumod{x}{0}}}\).}
  \label{fig:Minverse-cex}
\end{figure}%

\noindent
The presence of the regular congruence rule \irref{RE} and the fact that \m{\dbox{\alpha}{\phi} \lbisubjunct \lnot\ddiamond{\alpha}{\lnot\phi}} by determinacy (\rref{thm:dGL-determined}) still make \dGL a classical modal logic \cite{Chellas}.
Rule \irref{M} even makes \dGL a monotone modal logic  \cite{Chellas}.

\paragraph{Sub-Barcan}
The most important axioms about the interaction of quantifiers and modalities in first-order modal logic are the Barcan and converse Barcan axioms \cite{DBLP:journals/jsyml/Barcan46}, which, together, characterize constant domain in normal first-order modal logics \cite{HughesCresswell96}.
The Barcan axiom B, which characterizes anti-monotone domains in first-order modal logic \cite{HughesCresswell96}, is sound for constant-domain first-order dynamic logic and for differential dynamic logic \dL when $x$ does not occur in $\alpha$ \cite{DBLP:conf/lics/Platzer12b}:
\[
\cancel{\text{B}}~~
\linferenceRule[impl]
        {\ddiamond{\alpha}{\lexists{x}{\phi}}}
        {\lexists{x}{\ddiamond{\alpha}{\phi}}}
      \qquad({\m{x\not\in\alpha}})
\]
but the Barcan axiom is not sound for \dGL as witnessed by the choice \(\alpha\mequiv\drepeat{\pupdate{\pumod{y}{y+1}}}\) or \(\alpha\mequiv\devolve{\D{y}=1}\) and \(\phi\mequiv (x\geq y)\).
The equivalent Barcan formula
\[
\cancel{\text{B}}~~
\linferenceRule[impl]
        {\lforall{x}{\dbox{\alpha}{\phi}}}
        {\dbox{\alpha}{\lforall{x}{\phi}}}
      \qquad({\m{x\not\in\alpha}})
\]
is not sound for \dGL as witnessed by the choice \(\alpha\mequiv\drepeat{\pupdate{\pumod{y}{y+1}}}\) or \(\alpha\mequiv\devolve{\D{y}=1}\) and \(\phi\mequiv y\geq x\).
The converse Barcan formula of first-order modal logic, which characterizes monotone domains \cite{HughesCresswell96}, is sound for \dGL and can be derived when $x$ does not occur in $\alpha$ (see \rref{foot:converseBarcan} on p.\,\pageref{foot:converseBarcan}):
\[
      \dinferenceRule[cB|$\overleftarrow{\text{B}}$]{converse Barcan$\ddiamond{}{}\lexists{}$}
      {\linferenceRule[impl]
        {\lexists{x}{\ddiamond{\alpha}{\phi}}}
        {\ddiamond{\alpha}{\lexists{x}{\phi}}}
      }{\m{x\not\in\alpha}}
\]

\paragraph{No Induction Axiom}

The induction axiom
\begin{equation}
\cancel{\,\text{I}\,}~~
  \dbox{\prepeat{\alpha}}{(\phi\limply\dbox{\alpha}{\phi})} \limply (\phi\limply\dbox{\prepeat{\alpha}}{\phi})
  \label{eq:induction-axiom}
\end{equation}
holds for \dL, but, unlike the induction rule \irref{invind}, does not hold for \dGL as witnessed by
\(\alpha\mequiv(\dchoice{(\pupdate{\pumod{x}{a}};\pupdate{\pumod{a}{0}})}{\pupdate{\pumod{x}{0}}})\) and \(\phi\mequiv (x=1)\); see \rref{fig:cex-I}.
\begin{figure}[ptbh]
    \tikzstyle{box}+=[minimum width=0.5cm,minimum height=0.5cm]
  \begin{tikzpicture}[grow'=down]
    \tikzstyle{level 1}=[sibling distance=+36mm]
    \tikzstyle{level 2}=[sibling distance=+22mm]
    \tikzstyle{level 3}=[sibling distance=+12mm]
    \tikzstyle{level 4}=[sibling distance=+7mm]
    \tikzstyle{level 5}=[sibling distance=+6mm]
    \tikzstyle{level 6}=[level distance=1.2cm]
    \node[diamond] (PO) {51}
      child[pdia] {node[box] {51}
        child[pbox] {node[diamond] (01d) {01}
          child[pdia] {node[box] {01}
            child[pbox] {node[diamond] (back01d) {01}}
            child[pbox] {node[diamond] {10}
              child[pdia] {node[box] {10} 
                child[pbox] {node[diamond,diawon] {00}}
                child[pbox] {node[diamond,diawon] {00}}
                edge from parent node[action] {repeat}
              }
              child[pdia] {node[box,diawon] (10bagain) {10} 
              edge from parent node[action] {stop}
              }
              edge from parent node[action] {stop}
            }
            edge from parent node[action] {repeat}
          }
          child[pdia] {node[box,boxwonstrategy] {01} edge from parent node[action] {stop}}
        }
        child[pbox] {node[diamond] {10}
          child[pdia] {node[box] {01} edge from parent node[action] {repeat}}
          child[pdia] {node[box] (10b) {10} 
            child[pbox] {node[box,diawon] {00}}
            child[pbox] {node[box,diawon] {00}}
            edge from parent node[action] {stop}
          }
        }
        edge from parent node[action] {repeat}
      }
      child[pdia] {node[box] {51}
        child[pbox] {node[box,diawon] {01}}
        child[pbox] {node[box,boxwonstrategy] {10}}
        edge from parent node[action] {stop}
      }
      ;
      \draw[backedge] (01d) to[bend left=60] (back01d);
      \draw[backedge] (10bagain) to (10b);
      \begin{pgfonlayer}{background}
        \draw[endgame] (PO-2.north west) -- (PO-2-2.north west) -- (PO-2-2.south west) -- (PO-2-1.south east) -- (PO-2-1.north east) -- (PO-2.north east) -- cycle;
        \draw[endgame] (10b.north west) -- (10b-2.north west) -- (10b-2.south west) -- (10b-1.south east) -- (10b-1.north east) -- (10b.north east) -- cycle;
      \end{pgfonlayer}
  \end{tikzpicture}%
  \hfill
  \begin{tikzpicture}[grow'=down]
    \tikzstyle{level 2}=[sibling distance=+22mm]
    \tikzstyle{level 3}=[sibling distance=+12mm]
    \tikzstyle{level 4}=[sibling distance=+7mm]
    \tikzstyle{level 5}=[sibling distance=+6mm]
    \tikzstyle{level 6}=[level distance=1.2cm]
    \node[diamond] (PO) {51}
      child[pdia] {node[box] {51}
        child[pbox] {node[diamond] (01) {01}
          child[pdia] {node[box] {01}
            child[pbox] {node[diamond,diawon] (back01) {01}}
            child[pbox] {node[diamond] {10}
              child[pdia] {node[box] {10}
                child[pbox] {node[diamond,diawon] {00}}
                child[pbox] {node[diamond,diawon] {00}}
                edge from parent node[action] {repeat}
              }
            child[pdia] {node[box,boxwon] {10} edge from parent node[action] {stop}}
          }
            edge from parent node[action] {repeat}
          }
          child[pdia] {node[box,diawonstrategy] {01} edge from parent node[action] {stop}}
        }
        child[pbox] {node[diamond] {10}
          child[pdia] {node[box] {10}
            child[pbox] {node[diamond,diawonstrategy] {00}}
            child[pbox] {node[diamond,diawonstrategy] {00}}
            edge from parent node[action] {repeat}
          }
          child[pdia] {node[box,boxwon] {10} edge from parent node[action] {stop}}
        }
        edge from parent node[action] {repeat}
      }
      child[pdia] {node[box,boxwon] {51}
        edge from parent node[action] {stop}
      }
      ;
      \draw[backedge] (01) to[bend left=80] (back01);
  \end{tikzpicture}
  \caption{Game trees for counterexample to induction axiom (notation: $x,a$) with game
  \m{\alpha\mequiv\dchoice{(\pupdate{\pumod{x}{a}};\pupdate{\pumod{a}{0}})}{\pupdate{\pumod{x}{0}}}}.
    \textbf{(left)} \m{\dbox{\prepeat{\alpha}}{(x=1\limply\dbox{\alpha}{x=1})}} is true by the strategy ``if Angel chose stop, choose $\pupdate{\pumod{x}{a}};\pupdate{\pumod{a}{0}}$, otherwise always choose $\pupdate{\pumod{x}{0}}$''
    \textbf{(right)} \(\dbox{\prepeat{\alpha}}{x=1}\) is false by the strategy ``repeat once and repeat once more if $x=1$, then stop.''
  If a winning state can be reached by a winning strategy, the mark is enclosed in a circle \usebox{\tmpdiawonstrategy} or \usebox{\tmpboxwonstrategy}, respectively.
  }
    \label{fig:cex-I}
\end{figure}%
The failure of the induction axiom in the counterexample for \rref{eq:induction-axiom} hinges on the fact that Angel is free to decide whether to repeat $\alpha$ after each round depending on the state.
This would be different for an advance notice semantics for $\prepeat{\alpha}$; see \rref{app:alternative-semantics}.
By a variation of the soundness argument for \irref{FP} or the semantic deduction theorem applied to the \irref{invind} rule, it can be shown, however, that a variation of the induction axiom is still sound if the induction rule \irref{invind} is translated into an axiom using the universal closure, denoted $\closureall{}$, with respect to all variables bound in $\alpha$:
\[
\dinferenceRule[allI|\usebox{\closurallI}]{universally closed induction axiom}
{
\closureall{(\phi\limply\dbox{\alpha}{\phi})} \limply (\phi\limply\dbox{\prepeat{\alpha}}{\phi})
}{}
\]

Universal closures do not rescue the first arrival axiom, a dual of induction axiom I:
\[
\cancel{\text{FA}}~~
\ddiamond{\prepeat{\alpha}}{\phi} \limply \phi \lor \ddiamond{\prepeat{\alpha}}{(\lnot\phi\land\ddiamond{\alpha}{\phi})}
\]
This axiom holds for \dL.
It expresses that, if $\phi$ holds after a repetition of $\alpha$, then it either holds right away or $\alpha$ can be repeated so that $\phi$ does not hold yet but can hold after one more repetition \cite{DBLP:journals/sLogica/PaulyP03a}.
This axiom does not hold, however, for \dGL as witnessed by
\(\alpha\mequiv((\dchoice{\pupdate{\pumod{x}{x-y}}}{\pupdate{\pumod{x}{0}}});\pupdate{\pumod{y}{x}})\) and \(\phi\mequiv (x=0)\), since two iterations surely yield $x=0$, but one iteration may or may not yield $x=0$, depending on Demon's choice; see \rref{fig:no-first-arrival}.
Observe how the failure of the first arrival axiom in \dGL relates to the impossibility of predicting precise enough repetition counts in hybrid games (recall corresponding discussions for \rref{thm:dGL-closure-lower}, \rref{sec:dGL-complete}, and \rref{app:alternative-semantics}).

\begin{figure}[pt!bh]
    \tikzstyle{box}+=[minimum width=0.5cm,minimum height=0.5cm]
  \begin{tikzpicture}[grow'=down]
  \begin{scope}
    \tikzstyle{level 2}=[sibling distance=+22mm]
    \tikzstyle{level 3}=[sibling distance=+12mm]
    \tikzstyle{level 4}=[sibling distance=+7mm]
    \tikzstyle{level 5}=[sibling distance=+6mm]
    \node[diamond] (PO) {53}
      child[pdia] {node[box] {53}
        child[pbox] {node[diamond] (00) {00}
          child[pdia] {node[box] {00}
            child[pbox] {node[diamond,diawon] (back00) {00}}
            child[pbox] {node[diamond,diawon] {00}}
            edge from parent node[action] {repeat}
          }
          child[pdia] {node[diamond,diawonstrategy] {00} edge from parent node[action] {stop}}
        }
        child[pbox] {node[diamond] {22}
          child[pdia] {node[box] {22}
            child[pbox] {node[diamond,diawonstrategy] {00}}
            child[pbox] {node[diamond,diawonstrategy] {00}}
            edge from parent node[action] {repeat}
          }
          child[pdia] {node[diamond,boxwon] {22} edge from parent node[action] {stop}}
        }
        edge from parent node[action] {repeat}
      }
      child[pdia] {node[diamond,boxwon] {53}
        edge from parent node[action] {stop}
      }
      ;
  \end{scope}
  \begin{scope}[xshift=8.5cm]
    \tikzstyle{level 1}=[sibling distance=+46mm,set style={{edge from parent}+=[diamonda]},set style={{every node}+=[or]}]
    \tikzstyle{level 2}=[sibling distance=+22mm,set style={{edge from parent}+=[boxa]},set style={{every node}+=[and]}]
    \tikzstyle{level 3}=[sibling distance=+16mm,set style={{edge from parent}+=[diamonda]},set style={{every node}+=[or]}]
    \tikzstyle{level 4}=[sibling distance=+8mm,set style={{edge from parent}+=[boxa],set style={{every node}+=[and]}}]
    \tikzstyle{level 5}=[sibling distance=+5mm,set style={{edge from parent}+=[diamonda],set style={{every node}+=[or]}},level distance=1.2cm]
    \tikzstyle{level 6}=[sibling distance=+3mm,set style={{edge from parent}+=[boxa],set style={{every node}+=[and]}}]
    \node[diamond] (PO) {53}
      child {node[box] {53}
        child {node[diamond,boxwonstrategy] {00}}
        child {node[diamond] {22}
          child {node[box] {22} 
            child {node[diamond] {00} child {node[box,boxwon] {00}}}
            child {node[diamond] {00} child {node[box,boxwon] {00}}}
            edge from parent node[action] {repeat}
          }
          child {node[box] {22}
            child {node[diamond,diawon] {00}}
            child {node[diamond,diawon] {00}}
          edge from parent node[action] {stop}
          }
        }
      edge from parent node[action] {repeat}
      }
      child {node [box] {53} 
        child {node[diamond,diawon] {00}}
        child {node[diamond,boxwonstrategy] {22}}
        edge from parent node[action] {stop}
      }
      ;
      \begin{pgfonlayer}{background}
        \draw[endgame] (PO-2.north west) -- (PO-2-2.north west) -- (PO-2-2.west) -- (PO-2-2.south) -- (PO-2-1.south) -- (PO-2-1.east) -- (PO-2-1.north east) -- (PO-2.north east) -- cycle;
        \draw[endgame] (PO-1-2-2.north west) -- (PO-1-2-2-2.west) -- (PO-1-2-2-2.south) -- (PO-1-2-2-1.south) -- (PO-1-2-2-1.east) -- (PO-1-2-2.north east) -- cycle;
      \end{pgfonlayer}
    \end{scope}
  \end{tikzpicture}%
  \caption{Game trees for counterexample to first arrival axiom with game
  \m{\alpha\mequiv(\dchoice{\pupdate{\pumod{x}{x-y}}}{\pupdate{\pumod{x}{0}}});\pupdate{\pumod{y}{x}}}
  (notation: $x,y$).
    \textbf{(left)} \m{\ddiamond{\prepeat{\alpha}}{x=0}} is true no matter what Demon chooses
    \textbf{(right)} \m{\ddiamond{\prepeat{\alpha}}{(x\neq0\land\ddiamond{\alpha}{x=0})}} is false,
    because stop can be defeated by \m{\pupdate{\pumod{x}{x-y}}} and repeat can be defeated by \m{\pupdate{\pumod{x}{0}}}.
    }%
    \label{fig:no-first-arrival}%
\end{figure}%

The hybrid systems axiom version \cite{DBLP:conf/lics/Platzer12b} of Harel's convergence rule \cite{DBLP:conf/stoc/HarelMP77}, in which $v$ does not occur in $\alpha$ (written $v\not\in\alpha$),
\[
\cancel{\text{C}}~~
  \linferenceRule[impl]
        {\dbox{\prepeat{\alpha}}{\lforall{v{>}0}{(\mapply{\var}{v}\limply\ddiamond{\alpha}{\mapply{\var}{v-1}})}}}
        {\lforall{v}{(\mapply{\var}{v} \limply
            \ddiamond{\prepeat{\alpha}}{\lexists{v{\leq}0}{\mapply{\var}{v}}})}}
      \quad({\m{v\not\in\alpha}})%
\]
holds for \dL, but not for \dGL as witnessed by
\m{\alpha \mequiv (\dchoice{\pupdate{\pumod{x}{x-y}};\pupdate{\pumod{y}{0}}}{\pupdate{\pumod{x}{x-1}}})} and
\m{\mapply{\var}{v} \mequiv (x\leq v)}.
In a state where \(y=1,x\geq2\), \m{\dbox{\prepeat{\alpha}}{\lforall{v{>}0}{(\mapply{\var}{v}\limply\ddiamond{\alpha}{\mapply{\var}{v-1}})}}} is true by the strategy ``always choose \m{\pupdate{\pumod{x}{x-1}}}'' for Demon for \m{\dbox{\prepeat{\alpha}}{}} and arbitrary strategies for Angel for the nested \m{\ddiamond{\alpha}{}}.
Yet, \m{\ddiamond{\prepeat{\alpha}}{\lexists{v{\leq}0}{\mapply{\var}{v}}}} is false by the strategy ``always choose \m{\pupdate{\pumod{x}{x-y}};\pupdate{\pumod{y}{0}}}'', because $x$ will no longer change after the first iteration then.
The hybrid version of Harel's convergence rule is sound but unnecessary (\rref{sec:separating-axioms}).

This completes the proof of \rref{thm:separating-axioms} by inspecting the complete axiomatization of \dGL from \rref{thm:dGL-complete} compared to the complete axiomatization of hybrid systems \cite{DBLP:conf/lics/Platzer12b}.
Each axiom of hybrid systems has been considered and either continues to hold for hybrid games (\rref{thm:dGL-sound}) or has been refuted with a counterexample (K, I, C, B, V, G) or continues to hold but is unnecessary for completeness (Harels' convergence rule).
A few additional axioms and rules that are not part of the hybrid systems axiomatization have been considered for illustration purposes, because they are closely related and highlight interesting aspects of the axiomatic similarity (axiom \irref{Maxiom}, \irref{allI}, \irref{cB}, \irref{VK}, $\text{M}_{\dbox{\cdot}{}}$) or difference ($\overleftarrow{\text{M}}$, R, FA) between hybrid systems and hybrid games.

\section{Operational Game Semantics} \label{app:operational-HG-semantics}

\newcommand{\aleft}{\mathfrak{l}}%
\newcommand{\aright}{\mathfrak{r}}%
\newcommand{\astop}{\mathfrak{s}}%
\newcommand{\arepeat}{\mathfrak{g}}%
\newcommand{\adual}{\mathfrak{d}}%
\newcommand{\aappend}{\text{\textasciicircum}}%
\newcommand*{\aevolvein}[3]{\pevolvein{#1}{#2}@#3}%
\newcommand{\astutter}{\mathsf{f}}%
\newcommand{\gameplay}[2][]{\mathsf{g}(#1)(#2)}%
\newcommand{\igameplay}[2][]{\gameplay[#1]{\iportray{#2}}}%
\newcommand{\play}[2][]{\lfloor#1\rfloor_{#2}}%
\newcommand{\iplay}[2][]{\play[#1]{\iportray{#2}}}%

\newcommand*{\fulltree}[1]{}%

In order to relate the intuition of interactive game play to the denotational semantics of hybrid games, this section shows an operational semantics for hybrid games that is more complicated than the modal semantics from \rref{sec:dGL-semantics} but makes strategies explicit and directly reflects the intuition how hybrid games are played interactively.
The modal semantics is beneficial, because it is simpler.
The results in this section are not needed in the rest of the paper and play an informative role.
The operational semantics formalizes the intuition behind the game tree in \rref{fig:nondetermined} and relates to standard notions in game theory and descriptive set theory.
\rref{thm:equivalent-semantics} below proves that the operational game semantics is equivalent to the modal semantics from \rref{sec:dGL-semantics}.
The (denotational) modal semantics is much simpler but the operational semantics makes winning strategies explicit.
As the set of actions $A$ for a hybrid game choose:
\begin{multline*}
\{\aleft,\aright,\astop,\arepeat,\adual\}
\cup \{(\pupdate{\pumod{x}{\theta}}) \with x~\text{variable},~\theta~\text{term}\}
\cup \{\ptest{\ivr} \with \ivr~\text{formula}\}
\\\cup \{(\aevolvein{\D{x}=\theta}{\ivr}{r}) \with x~\text{variable},~\theta~\text{term}, \ivr~\text{formula}, r\in\reals_{\geq0}\}
\end{multline*}
For game $\pchoice{\alpha}{\beta}$, action $\aleft$ decides to descend left into $\alpha$, $\aright$ is the action of descending right into $\beta$.
In game $\prepeat{\alpha}$, action $\astop$ decides to stop repeating, action $\arepeat$ decides to go back and repeat.
Action $\adual$ starts and ends a dual game for $\pdual{\alpha}$.
The other actions represent the actions for atomic games: assignment actions, continuous evolution actions (in which time $r$ is the critical decision), and test actions.

The operational game semantics uses standard notions from descriptive set theory \cite{Kechris}.
The set of finite sequences of actions is denoted by $A^{(\naturals)}$, the set of countably infinite sequences by $A^\naturals$.
The empty sequence of actions is $()$.
The concatenation, $s\aappend t$, of sequences $s,t\in A^{(\naturals)}$ is defined as \((s_1,\dots,s_n,t_1,\dots,t_m)\) if  \(s=(s_1,\dots,s_n)\) and \(t=(t_1,\dots,t_m)\).
For an $a\in A$, write \(a\aappend t\) for \((a)\aappend t\) and write \(t\aappend a\) for \(t\aappend (a)\).
For a set $S\subseteq A^{(\naturals)}$,  write \(S\aappend t\) for \(\{s\aappend t \with s\in S\}\) and $t\aappend S$ for \(\{t\aappend s \with s\in S\}\).
The state $\iplay[t]{\I}$ reached by \emph{playing} a sequence of actions $t\in A^{(\naturals)}$ from a state $\iget[state]{\I}$ in interpretation $\iget[const]{\I}$ is inductively defined by applying the actions sequentially, i.e.\ as follows:
\begin{enumerate}
\item \(\iplay[\pupdate{\pumod{x}{\theta}}]{\I} = \modif{\iget[state]{\I}}{x}{\ivaluation{\I}{\theta}}\)
\item \(\iplay[\aevolvein{\D{x}=\genDE{x}}{\ivr}{r}]{\I} = \varphi(r)\) for the unique
      \m{\varphi:[0,r]\to\linterpretations{\Sigma}{V}}
      differentiable, 
      \m{\varphi(0)=\iget[state]{\I}},
      \m{\D[t]{\,\varphi(t)(x)} (\zeta) =       %
      \ivaluation{\iconcat[state=\varphi(\zeta)]{\I}}{\theta}}
      and
      \m{\varphi(\zeta)\in\imodel{\I}{\ivr}}
      for all $\zeta\leq r$.
      Note that \(\iplay[\aevolvein{\D{x}=\genDE{x}}{\ivr}{r}]{\I}\) is not defined if no such $\varphi$ of duration $r$ exists.
\item \(\iplay[\ptest{\ivr}]{\I} =
\begin{cases}
\iportray{\I} &\text{if}~\iportray{\I}\in\imodel{\I}{\ivr}\\
\text{not defined} &\text{otherwise}
\end{cases}\)
\item \(\iplay[\aleft]{\I} = \iplay[\aright]{\I} = \iplay[\astop]{\I} = \iplay[\arepeat]{\I} = \iplay[\adual]{\I} = \iplay[()]{\I} = \iportray{\I}\)
\item \(\iplay[a\aappend t]{\I} = \play[t]{(\iplay[a]{\I})}\) for $a\in A$ and $t\in A^{(\naturals)}$
\end{enumerate}
A \emph{tree} is a set $T\subseteq A^{(\naturals)}$ that is closed under prefixes, that is, whenever $t\in T$ and $s$ is a prefix of $t$ (i.e.\ $t=s\aappend r$ for some $r\in A^{(\naturals)}$), then $s\in T$.
A node \m{t\in T} is a successor of node \m{s\in T} iff $t=s\aappend a$ for some $a\in A$.
Denote by \(\leaf(T)\) the set of all leaves of $T$, i.e.\ nodes $t\in T$ that have no successor in $T$.
\begin{definition}[Operational game semantics] \label{def:HG-operational-semantics}
The \emph{operational game semantics} of hybrid game $\alpha$ is, for each state $\iget[state]{\I}$ of each interpretation $\iget[const]{\I}$, a tree \m{\igameplay[\alpha]{\I}\subseteq A^{(\naturals)}} defined as follows (see \rref{fig:gameplay} for a schematic illustration):%
\begin{figure*}[btp]%
  \centering
  \vspace{-2\baselineskip}
  \begin{tikzpicture}[grow'=down,every node/.style={someone,minimum width=+6mm,minimum height=+6mm,inner sep=+0pt},>=stealth']
    \tikzstyle{diamond}+=[minimum width=6mm,minimum height=+6mm,inner sep=+0pt]
    \tikzstyle{diamonda}+=[->]
    \tikzstyle{subgamea}+=[->]
    \tikzstyle{level 1}=[sibling distance=+10mm]
    \node[diamond,label={[gamelabel,above=-4pt]{$\pupdate{\pumod{x}{\theta}}$}}] at (-4.5,+0) {$\iportray{\I}$}
      child[pdia] {node[inner sep=+0] {$\modif{\iget[state]{\I}}{x}{\ivaluation{\I}{\theta}}$} edge from parent node[action] {$\pupdate{\pumod{x}{\theta}}$}};
    \node[diamond,label={[gamelabel,above=-14pt]{$\pevolvein{\D{x}={\theta}}{\ivr}$}}] at (-2,+0) {$\iportray{\I}$}
      child[pdia] {node[inner sep=+0] {$\varphi(r)$} edge from parent node[action] {$r$}}
      child[pdia] {node[inner sep=+0] {$\varphi(t)$} edge from parent node[action] {$t$}}
      child[pdia] {node[inner sep=+0] {$\varphi(0)$} edge from parent node[action] {$0$}};
    \node[diamond,label={[gamelabel]{$\ptest{\ivr}$}}] at (+0.5,+0) {$\iportray{\I}$} %
      child[pdia] {node {$\iportray{\I}$} edge from parent node[action] {$\ptest{\ivr}$} node[action,below,sloped] {$\imodels{\I}{\ivr}$}};
   \begin{scope}[set style={{every node}+=[solid,black,minimum width=6mm,inner sep=0pt]}]
    \tikzstyle{level 1}=[sibling distance=+28mm]   
    \tikzstyle{level 2}=[sibling distance=+9mm]   
    \node[diamond,label={[gamelabel,above=-4pt]{$\pchoice{\alpha}{\beta}$}}]  at (-2,-3.1) {$\iportray{\I}$}
      child {node {$\iportray{\I}$} 
        child {node {$t_\kappa$} edge from parent [subgamea] node[subgame] {$\beta$}}
        child {node {$t_j$} edge from parent [subgamea]  node[subgame] {$\beta$}}
        child {node {$t_1$} edge from parent [subgamea] node[subgame] {$\beta$}}
        edge from parent [diamonda] node[diamonda,action] {right}}
      child {node {$\iportray{\I}$} 
        child {node {$s_\lambda$} edge from parent [subgamea] node[subgame] {$\alpha$}}
        child {node {$s_i$} edge from parent [subgamea]  node[subgame] {$\alpha$}}
        child {node {$s_1$} edge from parent [subgamea] node[subgame] {$\alpha$}}
        edge from parent [diamonda] node[diamonda,action] {left}};
    \end{scope}
   \begin{scope}[set style={{every node}+=[minimum width=+6mm,inner sep=+0pt]}]
    \tikzstyle{level 1}=[sibling distance=20mm]   
    \tikzstyle{level 2}=[sibling distance=8mm]   
    \node[label={[gamelabel,above=-4pt]{$\alpha;\beta$}}]  at (-2,-7.5) {$\iportray{\I}$}   %
      child {node {$t_\lambda$} 
        child {node {$r_\lambda^{\lambda_1}$} edge from parent [subgamea] node[subgame] {$\beta$}}
        child {node {$r_\lambda^j$} edge from parent [subgamea]  node[subgame] {$\beta$}}
        child {node {$r_\lambda^1$} edge from parent [subgamea] node[subgame] {$\beta$}}
        edge from parent [subgamea] node[subgame] {$\alpha$}}
      child {node {$t_i$}
        child {node {$r_i^{\lambda_i}$} edge from parent [subgamea] node[subgame] {$\beta$}}
        child {node {$r_i^1$} edge from parent [subgamea] node[subgame] {$\beta$}}
        edge from parent [subgamea]  node[subgame] {$\alpha$}}
      child {node {$t_1$} 
        child {node {$r_1^{\lambda_1}$} edge from parent [subgamea] node[subgame] {$\beta$}}
        child {node {$r_1^j$} edge from parent [subgamea]  node[subgame] {$\beta$}}
        child {node {$r_1^1$} edge from parent [subgamea] node[subgame] {$\beta$}}
        edge from parent [subgamea] node[subgame] {$\alpha$}};
    \end{scope}
  \begin{scope}[grow'=down]
    \tikzstyle{someone}+=[minimum width=+5mm,minimum height=+5mm,inner sep=+0pt]
    \tikzstyle{level 2}=[sibling distance=+29mm]
    \tikzstyle{level 3}=[sibling distance=+15mm]
    \tikzstyle{level 4}=[sibling distance=+14mm]
    \tikzstyle{level 5}=[sibling distance=+6mm]
    \tikzstyle{action}+=[diamonda]
    \node[diamond,label={[gamelabel]{$\prepeat{\alpha}$}}] at (+3,+0) {$\iportray{\I}$}
      child[pdia] {node[someone] {$\iportray{\I}$}
        child[psubgame] {node[diamond] {}
          child[pdia] {node[someone] {}
            child[psubgame] {node[diamond] {}
              child[pdia] {node[someone] {}
                child[psubgame] {node[diamond,etc] {} edge from parent node[subgame] {$\alpha$}}
                child[psubgame] {node[diamond,etc] {} edge from parent node[subgame] {$\alpha$}}
                edge from parent node[action] {repeat}
              }
            child[pdia] {node[diamond] {} edge from parent node[action] {stop}}
            edge from parent node[subgame] {$\alpha$}}
            child[psubgame] {node[diamond] {}
              child[pdia] {node[someone] {}
                child[psubgame] {node[diamond,etc] {} edge from parent node[subgame] {$\alpha$}}
                child[psubgame] {node[diamond,etc] {} edge from parent node[subgame] {$\alpha$}}
                edge from parent node[action] {repeat}
              }
            child[pdia] {node[diamond] {} edge from parent node[action] {stop}}
            edge from parent node[subgame] {$\alpha$}
          }
            edge from parent node[action] {repeat}
          }
          child[pdia] {node[diamond] {} edge from parent node[action] {stop}}
          edge from parent node[subgame] {$\alpha$}
        }
        child[psubgame] {node[diamond] {}
          child[pdia] {node[someone] {}
            child[psubgame] {node[diamond] {}
              child[pdia] {node[someone] {}
                child[psubgame] {node[diamond,etc] {} edge from parent node[subgame] {$\alpha$}}
                child[psubgame] {node[diamond,etc] {} edge from parent node[subgame] {$\alpha$}}
                edge from parent node[action] {repeat}
              }
            child[pdia] {node[diamond] {} edge from parent node[action] {stop}}
            edge from parent node[subgame] {$\alpha$}}
            child[psubgame] {node[diamond] {}
              child[pdia] {node[someone] {}
                child[psubgame] {node[diamond,etc] {} edge from parent node[subgame] {$\alpha$}}
                child[psubgame] {node[diamond,etc] {} edge from parent node[subgame] {$\alpha$}}
                edge from parent node[action] {repeat}
              }
            child[pdia] {node[diamond] {} edge from parent node[action] {stop}}
            edge from parent node[subgame] {$\alpha$}}
            edge from parent node[action] {repeat}
          }
          child[pdia] {node[diamond] {} edge from parent node[action] {stop}}
          edge from parent node[subgame] {$\alpha$}
        }
        edge from parent node[action] {repeat}
      }
      child[pdia] {node[diamond] {$\iportray{\I}$}
        edge from parent node[action] {stop}
      }
      ;
    \end{scope}
   \begin{scope}[set style={{every node}+=[solid,black,minimum width=6mm,inner sep=0pt]},yshift=-12cm]
    \tikzstyle{level 1}=[sibling distance=+28mm]   
    \tikzstyle{level 2}=[sibling distance=+9mm]   
    \node[diamond,label={[gamelabel]{$\alpha$}}] (alpha) at (-3,+0)
     {$\iportray{\I}$}
      child {node[box] {$t_0$} 
        child {node[diamond] {$t_\kappa$} edge from parent [subgamea]}
        child {node[diamond] {$t_j$} edge from parent [subgamea]}
        child {node[diamond] {$t_1$} edge from parent [subgamea]}
        edge from parent [subgamea]}
      child {node[box] {$s_0$} 
        child {node[diamond] {$s_\lambda$} edge from parent [subgamea]}
        child {node[diamond] {$s_i$} edge from parent [subgamea]}
        child {node[diamond] {$s_1$} edge from parent [subgamea]}
        edge from parent [subgamea]};
    \node[box,label={[gamelabel]{$\pdual{\alpha}$}}] (alphad) at (3,+0)
     {$\iportray{\I}$}
      child {node[diamond] {$t_0$} 
        child {node[box] {$t_\kappa$} edge from parent [subgamea]}
        child {node[box] {$t_j$} edge from parent [subgamea]}
        child {node[box] {$t_1$} edge from parent [subgamea]}
        edge from parent [subgamea]}
      child {node[diamond] {$s_0$} 
        child {node[box] {$s_\lambda$} edge from parent [subgamea]}
        child {node[box] {$s_i$} edge from parent [subgamea]}
        child {node[box] {$s_1$} edge from parent [subgamea]}
        edge from parent [subgamea]};
      \draw[decorate,decoration={coil,aspect=0},->] (-0.5,0) -- node[above,draw=none]{$\pdual{}$} (0.5,0);
    \end{scope}
  \end{tikzpicture}
  \caption{Operational game semantics for hybrid games of \dGL}
  \label{fig:gameplay}
\end{figure*}%

\begin{enumerate}
\item \(\igameplay[\pupdate{\pumod{x}{\theta}}]{\I} = \{\fulltree{(),}(\pupdate{\pumod{x}{\theta}})\}\)
\item \(\igameplay[\pevolvein{\D{x}=\genDE{x}}{\ivr}]{\I} = \{\fulltree{(),}(\aevolvein{\D{x}=\genDE{x}}{\ivr}{r}) \with 
      r\in\reals, r\geq0, \varphi(0)=\iget[state]{\I}\)
      for some (differentiable)
      \m{\varphi:[0,r]\to\linterpretations{\Sigma}{V}}
      such that
      \m{\D[t]{\,\varphi(t)(x)} (\zeta) =       %
      \ivaluation{\iconcat[state=\varphi(\zeta)]{\I}}{\theta}}
      and
      \m{\varphi(\zeta)\in\imodel{\I}{\ivr}}
      for all $\zeta\leq r\}$
\item \(\igameplay[\ptest{\ivr}]{\I} = \{\fulltree{(),}(\ptest{\ivr})\}\)
\item \(\igameplay[\pchoice{\alpha}{\beta}]{\I} = 
\fulltree{\{(),(\aleft),(\aright)\}\cup{}}\aleft\aappend\igameplay[\alpha]{\I} \cup \aright\aappend\igameplay[\beta]{\I}\)
\item \(\displaystyle\igameplay[\alpha;\beta]{\I} = \igameplay[\alpha]{\I} \cup \cupfold_{t\in\leaf(\igameplay[\alpha]{\I})}{\gameplay[\beta]{\iplay[t]{\I}}}\)
\item
\(\displaystyle\igameplay[\prepeat{\alpha}]{\I} = \cupfold_{n<\omega} f^n(\{\fulltree{(),}(\astop),(\arepeat)\})\)
\\where $f^n$ is the $n$-fold composition of the function\\
\(f(Z) \mdefeq Z \cup \cupfold_{t\aappend\arepeat\in\leaf(Z)} 
t\aappend\arepeat\aappend\gameplay[\alpha]{\iplay[t\aappend\arepeat]{\I}}\aappend\{\fulltree{(),}(\astop),(\arepeat)\}\)

\item \(\igameplay[\pdual{\alpha}]{\I} = \fulltree{\{(),(\adual)\}\cup{}} 
\fulltree{\adual\aappend\igameplay[\alpha]{\I} \cup{}}
\adual\aappend\igameplay[\alpha]{\I}\aappend\adual\)
\end{enumerate}
\end{definition}
Note the implicit closure under prefixes in the definition of \(\igameplay[\alpha]{\I}\) for readability.
For example, \(\igameplay[\pdual{\alpha}]{\I} = \adual\aappend\igameplay[\alpha]{\I}\aappend\adual\) means \(\igameplay[\pdual{\alpha}]{\I} = \{(),(\adual)\}\cup 
\adual\aappend\igameplay[\alpha]{\I} \cup
\adual\aappend\igameplay[\alpha]{\I}\aappend\adual\).

Angel gets to choose which action to take at node \m{t\in\igameplay[\alpha]{\I}} if $t$ has an even number of occurrences of $\adual$, otherwise Demon gets to choose.
In the former case \emph{Angel acts at $t$}, in the latter \emph{Demon acts at $t$}.
Thus, at every $t$, exactly one of the players acts at $t$.
If the player who acts at $t$ is deadlocked, then that player loses immediately.
A player who acts at \m{t\in\igameplay[\alpha]{\I}} is \emph{deadlocked} at $t$ if \m{t\not\in\leaf(\igameplay[\alpha]{\I})} and no successor $s$ is enabled, i.e.\ $\iplay[s]{\I}$ is not defined.
This can happen if the last action in $s$ has a condition that is not satisfied like $\ptest{x\geq0}$ or \m{\pevolvein{\D{x}=\theta}{x\geq0}} at a state where $x<0$.
Note that the player who acts at \m{t\in\igameplay[\prepeat{\alpha}]{\I}} cannot choose $\arepeat$ infinitely often for that loop because $n<\omega$.

The players use Markov strategies, i.e.\ their choices only depend on the current state of the system and they have no additional information about the strategy of the other player.
A \emph{strategy for Angel} from initial state $\iportray{\I}$ is a nonempty subtree \m{\sigma\subseteq\igameplay[\alpha]{\I}} that accepts all of Demon's actions at nodes $t$ where Demon acts and selects a unique Angel action when Angel acts at $t$:
\begin{enumerate}
\item for all $t\in\sigma$ at which Demon acts, $t\aappend a\in\sigma$ for all $a\in A$ such that $t\aappend a\in\igameplay[\alpha]{\I}$.
\item for all $t\in\sigma$ at which Angel acts, if \(t\not\in\leaf(\igameplay[\alpha]{\I})\), then there is a unique $a\in A$ with $t\aappend a\in\sigma$.
\end{enumerate}
Strategies for Demon are defined accordingly, with ``Angel'' and ``Demon'' swapped.
The action sequence \(\sigma\oplus\tau\) played from state $\iget[state]{\I}$ in interpretation $\iget[const]{\I}$ when Angel plays strategy $\sigma$ and Demon plays strategy $\tau$ from $\iportray{\I}$ is defined as the sequence \((a_1,\dots,a_n)\in A^{(\naturals)}$ of maximal length such that
\[
  a_{n+1} :=
  \begin{cases}
  a &\text{if Angel acts at $(a_1,\dots,a_n)$ and $(a_1,\dots,a_n)\aappend a\in\sigma$}\\
  a &\text{if Demon acts at $(a_1,\dots,a_n)$ and $(a_1,\dots,a_n)\aappend a\in\tau$}\\
  \text{not defined} &\text{otherwise}
  \end{cases}
\]
By definition of strategies, \(\sigma\oplus\tau\) is unique.
A \emph{winning strategy for Angel} for winning condition $X\subseteq\linterpretations{\Sigma}{V}$ from state $\iget[state]{\I}$ in interpretation $\iget[const]{\I}$ is a strategy \m{\sigma\subseteq\igameplay[\alpha]{\I}} for Angel from $\iportray{\I}$ such that, for all strategies \m{\tau\subseteq\igameplay[\alpha]{\I}} for Demon from $\iportray{\I}$: Demon deadlocks or \m{\iplay[\sigma\oplus\tau]{\I}\in X}.
A \emph{winning strategy for Demon} for (Demon's) winning condition $X\subseteq\linterpretations{\Sigma}{V}$ from state $\iget[state]{\I}$ in interpretation $\iget[const]{\I}$ is a strategy \m{\tau\subseteq\igameplay[\alpha]{\I}} for Demon from $\iportray{\I}$ such that, for all strategies \m{\sigma\subseteq\igameplay[\alpha]{\I}} for Angel from $\iportray{\I}$: Angel deadlocks or \m{\iplay[\sigma\oplus\tau]{\I}\in X}.

The denotational modal semantics (\rref{sec:dGL-semantics}) is equivalent to the operational semantics:

\begin{theorem}[Equivalent semantics] \label{thm:equivalent-semantics}%
  The modal semantics of \dGL is equivalent to the operational game-tree semantics of \dGL, i.e.\
  for each hybrid game $\alpha$, each initial state $\iportray{\I}$ in each interpretation $\iget[const]{\I}$, and each winning condition $X\subseteq\linterpretations{\Sigma}{V}$:
  \begin{align*}
  \iportray{\I}\in\strategyfor[\alpha]{X} &\Longleftrightarrow
  \text{there is a winning strategy \(\sigma\subseteq\igameplay[\alpha]{\I}\)}~%
  \text{for Angel to achieve $X$ from $\iportray{\I}$}\\
  \iportray{\I}\in\dstrategyfor[\alpha]{\scomplement{X}} &\Longleftrightarrow
  \text{there is a winning strategy \(\tau\subseteq\igameplay[\alpha]{\I}\)}~%
  \text{for Demon to achieve $\scomplement{X}$ from $\iportray{\I}$}
  \end{align*}
\end{theorem}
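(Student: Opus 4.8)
The plan is to prove the two equivalences \emph{simultaneously} by structural induction on the hybrid game $\alpha$: for every $\alpha$, state $\iportray{\I}$, interpretation $\iget[const]{\I}$, and winning condition $Y\subseteq\linterpretations{\Sigma}{V}$, establish both that $\iportray{\I}\in\strategyfor[\alpha]{Y}$ iff Angel has a winning strategy $\sigma\subseteq\igameplay[\alpha]{\I}$ to achieve $Y$, and that $\iportray{\I}\in\dstrategyfor[\alpha]{Y}$ iff Demon has a winning strategy $\tau\subseteq\igameplay[\alpha]{\I}$ to achieve $Y$; the theorem's second line is then the instance $Y\mdefeq\scomplement{X}$. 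Proving both halves together is essential, because $\pdual{}$ swaps the players, so the dual case must feed the Demon induction hypothesis into the Angel conclusion and conversely. Throughout I would invoke determinacy (\rref{thm:dGL-determined}), whose identity $\scomplement{\strategyfor[\alpha]{\scomplement{Y}}}=\dstrategyfor[\alpha]{Y}$ keeps the two halves aligned, monotonicity (\rref{lem:monotone}), and the Knaster--Tarski iteration of \rref{thm:KnasterTarski} to expand $\strategyfor[\prepeat{\alpha}]{Y}$ and $\dstrategyfor[\prepeat{\alpha}]{Y}$ as a least, respectively greatest, fixpoint reachable through the iterates $\inflopstrat[\kappa][\alpha]{Y}$ of \rref{sec:ClosureOrdinals}.

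The atomic cases ($\pupdate{\pumod{x}{\theta}}$, $\ptest{\ivr}$, $\pevolvein{\D{x}=\genDE{x}}{\ivr}$) follow by directly comparing \rref{def:HG-semantics} against the one- or two-node game trees of \rref{def:HG-operational-semantics}, reading off that a deadlocked test or an unsatisfiable evolution domain is exactly a lost game for the player who acts there. For $\pchoice{\alpha}{\beta}$, Angel acts at the root and the subtrees $\aleft\aappend\igameplay[\alpha]{\I}$ and $\aright\aappend\igameplay[\beta]{\I}$ preserve the parity of $\adual$, so the induction hypothesis yields an Angel winning strategy iff $\iportray{\I}\in\strategyfor[\alpha]{Y}\cup\strategyfor[\beta]{Y}$, matching \rref{def:HG-semantics}, while the Demon half uses the dual intersection. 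For $\alpha;\beta$, the gluing at the leaves of $\igameplay[\alpha]{\I}$ lets me apply the induction hypothesis for $\alpha$ with the derived winning condition $\strategyfor[\beta]{Y}$ and then, at each reached leaf state, the induction hypothesis for $\beta$ with $Y$; composing the corresponding subtree strategies gives the equivalence with $\strategyfor[\alpha]{\strategyfor[\beta]{Y}}$. The dual case $\pdual{\alpha}$ is where the simultaneous induction pays off: since $\igameplay[\pdual{\alpha}]{\I}=\adual\aappend\igameplay[\alpha]{\I}\aappend\adual$ toggles the $\adual$-parity, an Angel (respectively Demon) winning strategy in $\pdual{\alpha}$ is precisely a Demon (respectively Angel) winning strategy in $\alpha$, so the induction hypothesis together with determinacy delivers $\strategyfor[\pdual{\alpha}]{Y}=\scomplement{\strategyfor[\alpha]{\scomplement{Y}}}=\dstrategyfor[\alpha]{Y}$.

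The main obstacle is the repetition $\prepeat{\alpha}$, where the \emph{finite} plays of \rref{def:HG-operational-semantics} (each branch uses $\arepeat$ only finitely often, as $n<\omega$) must be reconciled with the possibly transfinite closure ordinals of \rref{thm:dGL-closure-lower}. For Angel I would use that $\strategyfor[\prepeat{\alpha}]{Y}=\cupfold_{\kappa}\inflopstrat[\kappa][\alpha]{Y}$ is a least fixpoint and assign to each $\iportray{\I}\in\strategyfor[\prepeat{\alpha}]{Y}$ the rank equal to the least ordinal $\kappa$ with $\iportray{\I}\in\inflopstrat[\kappa][\alpha]{Y}$; a limit rank is impossible, so the rank is $0$ (then $\iportray{\I}\in Y$ and Angel stops to win) or a successor $\iota+1$ (then $\iportray{\I}\in\strategyfor[\alpha]{\inflopstrat[\iota][\alpha]{Y}}$, and the $\alpha$-induction hypothesis supplies Angel a one-round strategy into the strictly lower rank $\iota$). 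Because ordinals are well-ordered, this memoryless rank-decreasing strategy cannot repeat forever and must halt in $Y$, giving the $\Longrightarrow$ direction; conversely, a winning Angel strategy has a well-founded tree of $\arepeat$-choices, and transfinite induction along that tree places $\iportray{\I}$ into some iterate. For Demon, $\dstrategyfor[\prepeat{\alpha}]{Y}$ is the greatest fixpoint $Z^\ast=Y\cap\dstrategyfor[\alpha]{Z^\ast}$, so Demon maintains the invariant ``stay in $Z^\ast$'': at each round $Z^\ast\subseteq Y$ guarantees a win if Angel stops, whereas $Z^\ast\subseteq\dstrategyfor[\alpha]{Z^\ast}$ and the $\alpha$-induction hypothesis let Demon answer a repetition and return into $Z^\ast$; since Angel must stop after finitely many rounds in a state of $Y$, Demon wins, and the converse reads a winning Demon strategy as a post-fixpoint contained in $Z^\ast$.

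As a byproduct this yields \emph{operational} determinacy --- from each state exactly one player has a winning strategy for complementary conditions --- in agreement with the denotational determinacy of \rref{thm:dGL-determined}, and without any appeal to Borel determinacy: the well-founded rank argument and the coinductive invariant of the two repetition cases carry the entire weight of the proof.
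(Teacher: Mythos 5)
Your proposal is correct, and for most cases (atomic games, choice, composition, dual, and the overall simultaneous-induction architecture with determinacy gluing the Angel and Demon halves together) it coincides with the paper's proof. Where you genuinely diverge is the crucial repetition case. For the inclusion \(\strategyfor[\prepeat{\alpha}]{X}\subseteq W\) (with $W$ the operational winning region), the paper never touches ordinals: it simply verifies that $W$ is a pre-fixpoint, i.e.\ \(X\cup\strategyfor[\alpha]{W}\subseteq W\), and concludes by leastness of \(\strategyfor[\prepeat{\alpha}]{X}\). You instead unfold the least fixpoint into its Knaster--Tarski iterates \(\inflopstrat[\kappa][\alpha]{X}\), assign each state its least ordinal rank, observe that the rank is never a nonzero limit, and build a memoryless rank-decreasing strategy whose termination is guaranteed by well-foundedness of the ordinals. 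Your route is longer but more informative: it actually exhibits the winning strategy round by round and explains \emph{why} it halts despite the transfinite closure ordinals of \rref{thm:dGL-closure-lower}, whereas the paper's pre-fixpoint argument certifies existence without ever constructing the ranks. For the converse inclusion the paper passes to the complement, characterizes \(\dstrategyfor[\prepeat{\alpha}]{\scomplement{X}}\) as the greatest post-fixpoint, and shows every post-fixpoint lies in \(\scomplement{W}\) by handing Demon an invariant-maintaining strategy (your Demon argument is essentially this); you additionally propose a direct argument for Angel by transfinite induction on the well-founded tree of a winning strategy, which avoids the detour through the complement but requires the observation that the reachable part of a winning strategy tree is well-founded because every completed play uses $\arepeat$ only finitely often. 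Both routes are sound; the rank/tree-induction pair is the more constructive reading, the pre-/post-fixpoint pair the more economical one.
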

\begin{proof}
Proceed by simultaneous induction on the structure of $\alpha$ and prove equivalence. As part of the equivalence proof, construct a winning strategy $\sigma$ achieving $X$ using that \(\iportray{\I}\in\strategyfor[\alpha]{X}\).
The simultaneous induction steps for $\dstrategyfor[\alpha]{\scomplement{X}}$ are simple dualities.
It is easy to see that Angel and Demon cannot both have a winning strategy from the same state $\iget[state]{\I}$ for complementary winning conditions $X$ and $\scomplement{X}$ in the same game \(\igameplay[\alpha]{\I}\).
\rref{thm:dGL-determined} implies \(\dstrategyfor[\alpha]{\scomplement{X}}=\scomplement{\strategyfor[\alpha]{X}}\).
\begin{enumerate}
\item \(\iportray{\I}\in\strategyfor[\pupdate{\pumod{x}{\theta}}]{X} \mbisubjunct \modif{\iget[state]{\I}}{x}{\ivaluation{\I}{\theta}} \in X\)
\(\mbisubjunct \iplay[\sigma\oplus\tau]{\I}=\iplay[\pupdate{\pumod{x}{\theta}}]{\I} = \modif{\iget[state]{\I}}{x}{\ivaluation{\I}{\theta}} \in X\),
using \(\sigma\mdefeq \{(\pupdate{\pumod{x}{\theta}})\} = \igameplay[\pupdate{\pumod{x}{\theta}}]{\I}\).
The converse direction follows, because the strategy $\sigma$ follows the only permitted strategy.

\item \(\iportray{\I}\in\strategyfor[\pevolvein{\D{x}=\genDE{x}}{\ivr}]{X} \mbisubjunct s=\varphi(0), 
      \varphi(r)\in X\)
      for some $r\in\reals$ and some (differentiable)
      \m{\varphi:[0,r]\to\linterpretations{\Sigma}{V}}
      such that
      \m{\D[t]{\,\varphi(t)(x)} (\zeta) =       %
      \ivaluation{\iconcat[state=\varphi(\zeta)]{\I}}{\theta}}
      and
      \m{\varphi(\zeta)\in\imodel{\I}{\ivr}}
      for all $\zeta\leq r$
      \(\mbisubjunct \iplay[\sigma\oplus\tau]{\I}=\iplay[\aevolvein{\D{x}=\theta}{\ivr}{r}]{\I} = \varphi(r) \in X\),
using \(\sigma\mdefeq \{(\aevolvein{\D{x}=\theta}{\ivr}{r})\} \subseteq \igameplay[\pevolvein{\D{x}=\genDE{x}}{\ivr}]{\I}\).
The converse direction follows, since this $\sigma$ has the only permitted form for a strategy and different values of $r$ that lead to $X$ are equally useful.

\item \(\iportray{\I}\in\strategyfor[\ptest{\ivr}]{X} = \imodel{\I}{\ivr}\cap X\)
\(\mbisubjunct \iplay[\sigma\oplus\tau]{\I}=\iplay[\ptest{\ivr}]{\I}=\iportray{\I}\in X\),
with \(\iget[state]{\I} \in \imodel{\I}{\ivr}\) using \(\sigma\mdefeq \{(\ptest{\ivr})\} = \igameplay[\ptest{\ivr}]{\I}\).
The converse direction uses that this $\sigma$ is the only permitted strategy and it deadlocks exactly if \(\iget[state]{\I} \not\in \imodel{\I}{\ivr}\).

\item \(\iportray{\I}\in\strategyfor[\pchoice{\alpha}{\beta}]{X} = \strategyfor[\alpha]{X}\cup\strategyfor[\beta]{X}\)
\(\mbisubjunct \iportray{\I}\in\strategyfor[\alpha]{X}\) or \(\iportray{\I}\in\strategyfor[\beta]{X}\).
By induction hypothesis, this is equivalent to:
there is a winning strategy \(\sigma_\alpha\subseteq\igameplay[\alpha]{\I}\) for Angel for $X$ from $\iportray{\I}$
or
there is a winning strategy \(\sigma_\beta\subseteq\igameplay[\beta]{\I}\) for Angel for $X$ from $\iportray{\I}$.
This is equivalent to \(\sigma\subseteq\igameplay[\pchoice{\alpha}{\beta}]{\I}\) being a winning strategy for Angel for $X$ from $\iportray{\I}$,
using \(\sigma\mdefeq \fulltree{\{(\aleft)\} \cup{}} \aleft\aappend\sigma_\alpha\) or \(\sigma\mdefeq \fulltree{\{(\aright)\} \cup{}} \aright\aappend\sigma_\beta\), respectively.

\item \(\iportray{\I}\in\strategyfor[\alpha;\beta]{X} = \strategyfor[\alpha]{\strategyfor[\beta]{X}}\).
By induction hypothesis, this is equivalent to the existence of a strategy \(\sigma_\alpha\subseteq\igameplay[\alpha]{\I}\) for Angel such that for all strategies \m{\tau\subseteq\igameplay[\alpha]{\I}} for Demon: \m{\iplay[\sigma_\alpha\oplus\tau]{\I}\in \strategyfor[\beta]{X}}.
By induction hypothesis, \m{\iplay[\sigma_\alpha\oplus\tau]{\I}\in \strategyfor[\beta]{X}} is equivalent to the existence of a winning strategy $\sigma_\tau$ for Angel (which depends on the state $\iplay[\sigma_\alpha\oplus\tau]{\I}$ that the previous $\alpha$ game led to) with winning condition $X$ from \(\iplay[\sigma_\alpha\oplus\tau]{\I}\).
This is equivalent to \(\sigma\subseteq\igameplay[\alpha;\beta]{\I}\) being a winning strategy for Angel for $X$ from $\iportray{\I}$,
using
\begin{equation}
\sigma\mdefeq \sigma_\alpha \cup \cupfold (\sigma_\alpha\oplus\tau)\aappend\sigma_\tau
\label{eq:equivalent-semantics-compose}
\end{equation}
The union is over all leaves \m{\sigma_\alpha\oplus\tau \in\leaf(\igameplay[\alpha]{\I})} for which the game is not won by a player yet.
Note that $\sigma$ is a winning strategy for $X$, because, for all plays for which the game is decided during $\alpha$, the strategy $\sigma_\alpha$ already wins the game. For the others, $\sigma_\tau$ wins the game from the respective state \m{\iplay[\sigma_\alpha\oplus\tau]{\I}} that (when $\alpha$ terminates) was reached by the actions \m{\sigma_\alpha\oplus\tau} according to the strategy $\tau$ that Demon was observed to have played during $\alpha$.
The converse direction uses that strategies do not depend on moves that have not been played yet and that all strategies can be factorized by prefixes of what has actually been played to be coerced into the form \rref{eq:equivalent-semantics-compose}.

\item
Both inclusions of the case $\prepeat{\alpha}$ are proved separately.
If $W$ denotes the set of states from which Angel has a winning strategy in $\igameplay[\prepeat{\alpha}]{\I}$ to achieve $X$, then
need to show that \m{\strategyfor[\prepeat{\alpha}]{X}=W}.
For \m{\strategyfor[\prepeat{\alpha}]{X}\subseteq W}, it is enough to show that $W$ is a pre-fixpoint, i.e.\ \(X\cup\strategyfor[\alpha]{W}\subseteq W\), because \m{\strategyfor[\prepeat{\alpha}]{X}} is the least (pre-)fixpoint.
Consider a \(\iget[state]{\I}\in X\cup\strategyfor[\alpha]{W}\).
If \(\iget[state]{\I}\in X\) then \(\iget[state]{\I}\in W\) with the winning strategy \(\sigma\mdefeq\{(\astop)\}\) for Angel to achieve $X$ in $\prepeat{\alpha}$ from $\iportray{\I}$.
Otherwise, \(\iget[state]{\I}\in \strategyfor[\alpha]{W}\) implies, by induction hypothesis, that there is a winning strategy $\sigma_\alpha\subseteq\igameplay[\alpha]{\I}$ for Angel in $\alpha$ to achieve $W$ from $\iportray{\I}$.
By definition of $W$, Angel has a winning strategy in $\igameplay[\prepeat{\alpha}]{\I}$ to achieve $X$ from all states reached after playing $\alpha$ from $\iportray{\I}$ according to $\sigma_\alpha$, i.e.\ \(\iplay[\sigma_\alpha\oplus\tau]{\I} \in W\) for all strategies $\tau$ of Demon.
As in the case for $\alpha;\beta$, composing $\sigma_\alpha$ with the respective (state-dependent) winning strategies $\sigma_\tau$ for all possible resulting states (which are all in $W$) corresponding to the respective possible strategies $\tau$ that Demon could play during the first $\alpha$, thus leads to a winning strategy of the form
\[\sigma\mdefeq \fulltree{\{(\arepeat)\} \cup{}} \arepeat\aappend\sigma_\alpha \cup \cupfold \arepeat\aappend(\sigma_\alpha\oplus\tau)\aappend\sigma_\tau\]
for Angel to achieve $X$ in $\prepeat{\alpha}$ from $\iportray{\I}$, where the union is over all leaves \m{\sigma_\alpha\oplus\tau \in\leaf(\igameplay[\alpha]{\I})} in all strategies $\tau$ of Demon for which the game is not won by a player yet during the first $\alpha$.

The converse inclusion \m{\strategyfor[\prepeat{\alpha}]{X}\supseteq W} is equivalent to \m{\scomplement{\strategyfor[\prepeat{\alpha}]{X}}\subseteq\scomplement{W}}.
For this, recall
\(\scomplement{\strategyfor[\prepeat{\alpha}]{X}}=\dstrategyfor[\prepeat{\alpha}]{\scomplement{X}} = \cupfold\{Z\subseteq\linterpretations{\Sigma}{V} \with Z\subseteq \scomplement{X}\cap\dstrategyfor[\alpha]{Z}\}\)
by \rref{thm:dGL-determined}.
Thus, since \(\scomplement{\strategyfor[\prepeat{\alpha}]{X}}\) is a greatest (post-)fixpoint, it is enough to show $Z\subseteq \scomplement{W}$ for all $Z$ with \(Z\subseteq\scomplement{X}\cap\dstrategyfor[\alpha]{Z}\).
Since, \(Z\subseteq\dstrategyfor[\alpha]{Z}\), Demon has a winning strategy in $\alpha$ to achieve $Z$ from all \(\iget[state]{\I}\in Z\), by induction hypothesis.
By composing the respective winning strategies for Demon, obtain a winning strategy $\tau$ for Demon to achieve $Z$ in $\prepeat{\alpha}$ for \emph{any} arbitrary number of repetitions that Angel chooses (recall that Angel cannot choose to repeat $\prepeat{\alpha}$ infinitely often to win).
Since \(Z\subseteq\scomplement{X}\), Angel cannot have a winning strategy to achieve $X$ in $\prepeat{\alpha}$ from any $\iget[state]{\I}\in Z$ by \rref{thm:dGL-determined}.
Thus, \(Z\subseteq \scomplement{W}\).

\item \(\iportray{\I}\in\strategyfor[\pdual{\alpha}]{X} = \scomplement{\strategyfor[\alpha]{\scomplement{X}}}\).
\(\mbisubjunct \iportray{\I}\not\in\strategyfor[\alpha]{\scomplement{X}}\).
By induction hypothesis, this is equivalent to:
there is no winning strategy \(\sigma\subseteq\igameplay[\alpha]{\I}\) for Angel winning $\scomplement{X}$ in $\alpha$ from $\iportray{\I}$.
Since \(\strategyfor[\pdual{\alpha}]{X} = \dstrategyfor[\alpha]{X}\) by \rref{thm:dGL-determined}, this is equivalent to: 
there is a winning strategy \(\tau\subseteq\igameplay[\alpha]{\I}\) for Demon winning $X$ in $\alpha$ from $\iportray{\I}$.
Since the nodes where Angel acts swap with the nodes where Demon acts when moving from $\alpha$ to $\pdual{\alpha}$, this is equivalent to:
there is a winning strategy \(\sigma\subseteq\igameplay[\pdual{\alpha}]{\I}\) for Angel winning $X$ in $\pdual{\alpha}$ from $\iportray{\I}$ using
\(\sigma \mdefeq \fulltree{\{(\adual)\} \cup \adual\aappend\tau \cup{}} \adual\aappend\tau\aappend\adual\).
The converse direction uses that all strategies permitted for $\pdual{\alpha}$ begin and end with $\adual$.
\qedhere
\end{enumerate}
\end{proof}

\section{Alternative Semantics} \label{app:alternative-semantics}

To elaborate why the \dGL semantics is both natural and general, this section briefly considers alternative choices for the semantics, focusing on the role of repetition in the context of hybrid games.
It turns out that alternative semantics require prior bounds of repetitions of ${<}\omega$ (\rref{app:advance-notice-semantics}) and $\omega$ (\rref{app:omega-semantics}), respectively.

\subsection{Advance Notice Semantics} \label{app:advance-notice-semantics}
One alternative semantics is the \emph{advance notice semantics} for $\prepeat{\alpha}$, which requires the players to announce the number of times that game $\alpha$ will be repeated when the loop begins.
The advance notice semantics defines \(\strategyfor[\prepeat{\alpha}]{X}\) as \(\cupfold_{n<\omega}\strategyfor[\alpha^n]{X}\) where $\alpha^{n+1}\mequiv\alpha^n;\alpha$ and $\alpha^0\mequiv\,\ptest{\ltrue}$ and defines \(\dstrategyfor[\prepeat{\alpha}]{X}\) as \(\capfold_{n<\omega}\dstrategyfor[\alpha^n]{X}\).
When playing $\prepeat{\alpha}$, Angel, thus, announces to Demon how many repetitions $n$ are going to be played when the game $\prepeat{\alpha}$ begins and Demon announces how often to repeat $\drepeat{\alpha}$.
This advance notice makes it easier for Demon to win loops $\prepeat{\alpha}$ and easier for Angel to win loops $\drepeat{\alpha}$, because the opponent announces an important feature of their strategy immediately as opposed to revealing whether or not to repeat the game once more one iteration at a time as in \rref{def:HG-semantics}.
Angel announces the number $n<\omega$ of repetitions when $\prepeat{\alpha}$ starts.

In hybrid systems, the advance notice semantics and the least fixpoint semantics are equivalent (\rref{lem:HP-Scott-continuous}), but the advance notice semantics and \dGL's least fixpoint semantics are different for hybrid games.
The following formula is valid in \dGL (see \rref{fig:cex-advance}), but would not be valid in the advance notice semantics:
\begin{equation}
  x=1\land a=1\limply\ddiamond{\prepeat{(\dchoice{(\pupdate{\pumod{x}{a}};\pupdate{\pumod{a}{0}})}{\pupdate{\pumod{x}{0}}})}}{x\neq1}
  \label{eq:advance-notice-ex}
\end{equation}
\begin{figure}[tbhp]
    \tikzstyle{box}+=[minimum width=0.5cm,minimum height=0.5cm]
  \begin{tikzpicture}[grow'=down]
    \tikzstyle{level 2}=[sibling distance=+22mm]
    \tikzstyle{level 3}=[sibling distance=+12mm]
    \tikzstyle{level 4}=[sibling distance=+7mm]
    \tikzstyle{level 5}=[sibling distance=+6mm]
    \tikzstyle{level 6}=[level distance=1.2cm]
    \node[diamond] (PO) at (0,0) {11}
      child[pdia] {node[box] {11}
        child[pbox] {node[diamond] (01) {01}
          child[pdia] {node[box] {01}
            child[pbox] {node[diamond,diawon] (back01) {01}}
            child[pbox] {node[diamond] {10}
              child[pdia] {node[box] {10}
                child[pbox] {node[diamond,diawon] {00}}
                child[pbox] {node[diamond,diawon] {00}}
                edge from parent node[action] {repeat}
              }
            child[pdia] {node[box,boxwon] {10} edge from parent node[action] {stop}}
          }
            edge from parent node[action] {repeat}
          }
          child[pdia] {node[box,diawonstrategy] {01} edge from parent node[action] {stop}}
        }
        child[pbox] {node[diamond] {10}
          child[pdia] {node[box] {10}
            child[pbox] {node[diamond,diawonstrategy] {00}}
            child[pbox] {node[diamond,diawonstrategy] {00}}
            edge from parent node[action] {repeat}
          }
          child[pdia] {node[box,boxwon] {10} edge from parent node[action] {stop}}
        }
        edge from parent node[action] {repeat}
      }
      child[pdia] {node[box,boxwon] {11}
        edge from parent node[action] {stop}
      }
      ;
      \draw[backedge] (01) to[bend left=80] (back01);
    \tikzstyle{action}=[straight action,left]
    \tikzstyle{level 1}=[sibling distance=+32mm]
    \tikzstyle{level 1}=[sibling distance=+18mm]
    \tikzstyle{level 2}=[sibling distance=+24mm]
    \tikzstyle{level 3}=[sibling distance=+12mm]
    \tikzstyle{level 4}=[sibling distance=+7mm]
    \tikzstyle{level 5}=[sibling distance=+6mm]
    \tikzstyle{level 6}=[level distance=1.2cm]
    \node[diamond] (PO) at (7.5,0) {11}
      child[pdia,level 3/.style={sibling distance=+12mm},level 4/.style={sibling distance=+6mm}] {node[box] (3) {11}
        child[pbox] {node[box] {01}
          child[pbox] {node[box] {01}
            child[pbox] {node[box,diawon] {01}}
            child[pbox] {node[box,boxwonstrategy] {10}}
          }
          child[pbox] {node[box] {10}
            child[pbox] {node[box,diawon] {00}}
            child[pbox] {node[box,diawon] {00}}
          }
        }
        child[pbox] {node[box] {10}
          child[pbox] {node[box] {00}
            child[pbox] {node[box,diawon] {00}}
            child[pbox] {node[box,diawon] {00}}
          }
          child[pbox] {node[box] {00}
            child[pbox] {node[box,diawon] {00}}
            child[pbox] {node[box,diawon] {00}}
          }
        }
        edge from parent node[action] {3}
      }
      child[pdia,level 2/.style={sibling distance=+12mm},level 3/.style={sibling distance=+6mm}] {node[box] (2) at ($(3)+(-3.3,0)$) {11}
        child[pbox] {node[box] {01}
          child[pbox] {node[box,diawon] {01}}
          child[pbox] {node[box,boxwonstrategy] {10}}
        }
        child[pbox] {node[box] {10}
          child[pbox] {node[box,diawon] {00}}
          child[pbox] {node[box,diawon] {00}}
        }
        edge from parent node[action] {2}
      }
      child[pdia,level 2/.style={sibling distance=+6mm}] {node[box] (1) at ($(2)+(-1.5,0)$) {11}
        child[pbox] {node[box,diawon] {01}}
        child[pbox] {node[box,boxwonstrategy] {10}}
        edge from parent node[action] {1}
      }
      child[pdia] {node[box,boxwonstrategy] (0) at ($(1)+(-1.2,0)$) {11}
        edge from parent node[action] {0}
      }
      ;
      \node[diamonda] at ($(3)+(0,0.8)$) {\dots};
  \end{tikzpicture}
  \caption{Game trees for \(x=1\land a=1\limply\ddiamond{\prepeat{\alpha}}{x\neq1}\) with game
  \m{\alpha\mequiv\dchoice{(\pupdate{\pumod{x}{a}};\pupdate{\pumod{a}{0}})}{\pupdate{\pumod{x}{0}}}} (notation: $x,a$).
   \textbf{(left)} valid in \dGL by strategy ``repeat once and repeat once more if $x=1$, then stop''
   \textbf{(right)} false in advance notice semantics by the strategy ``$n-1$ choices of \m{\pupdate{\pumod{x}{0}}} followed by \m{\pupdate{\pumod{x}{a}};\pupdate{\pumod{a}{0}}} once'', where $n$ is the number of repetitions Angel announced
  }
    \label{fig:cex-advance}
\end{figure}%

\noindent
If, in the advance notice semantics, Angel announces that she has chosen $n$ repetitions of the game, then Demon wins (for $a\neq0$) by choosing the \m{\pupdate{\pumod{x}{0}}} option $n-1$ times followed by one choice of \m{\pupdate{\pumod{x}{a}};\pupdate{\pumod{a}{0}}} in the last repetition.
This strategy would not work in the \dGL semantics, because Angel is free to decide whether to repeat $\prepeat{\alpha}$ after each repetition based on the resulting state of the game.

Conversely, the dual formula would be valid in the advance notice semantics but is not valid in \dGL:
\begin{equation*}
x=1\land a=1\limply\dbox{\prepeat{(\dchoice{(\pupdate{\pumod{x}{a}};\pupdate{\pumod{a}{0}})}{\pupdate{\pumod{x}{0}}})}}{x=1}
\label{eq:advance-noticeb-ex}
\end{equation*}
The \dGL semantics is more general, because it gives the player in charge of repetition more control as the state can be inspected before deciding on whether to repeat again.
Advance notice semantics, instead, only allows the choice of a fixed number of repetitions.
The advance notice games can be expressed easily in \dGL by having the players choose a counter $c$ before the loop that decreases to 0 during the repetition.
The advance notice semantics can be expressed in \dGL, e.g., for \rref{eq:advance-notice-ex} as
\begin{equation*}
\hspace{-0.3cm}
x=1\land a=1\limply\langle\pupdate{\pumod{c}{0}};\prepeat{\pupdate{\pumod{c}{c+1}}};\prepeat{((\dchoice{(\pupdate{\pumod{x}{a}};\pupdate{\pumod{a}{0}})}{\pupdate{\pumod{x}{0}}});\pupdate{\pumod{c}{c-1}})};\ptest{c=0}\rangle{x\neq1}
\end{equation*}

\subsection{$\omega$-Strategic Semantics} \label{app:omega-semantics}

Another alternative choice for the semantics would have been to allow only arbitrary finite iterations of the strategy function for computing the winning region by using the \emph{$\omega$-strategic semantics}, which defines \(\strategyfor[\prepeat{\alpha}]{X}\) as \(\inflop[\omega]{X} = \cupfold_{n<\omega}\inflop[n]{X}\) along with a corresponding definition for \(\dstrategyfor[\prepeat{\alpha}]{X}\).
Like the \dGL semantics, but quite unlike the advance notice semantics, the $\omega$-strategic semantics does not require Angel to disclose how often she is going to repeat when playing $\prepeat{\alpha}$.
Similarly, Demon does not have to announce how often to repeat when playing $\drepeat{\alpha}$.
Nevertheless, the semantics are different.
The $\omega$-strategic semantics would make the following valid \dGL formula invalid:
\begin{equation}
\ddiamond{\prepeat{(\pchoice{\pupdate{\pumod{x}{1}}; \devolve{\D{x}=1}}{\pupdate{\pumod{x}{x-1}}})}}{\, (0\leq x<1)}
\label{eq:omega-strategic-ex}
\end{equation}
By a simple variation of the argument in the proof of \rref{thm:dGL-closure-lower}, \m{\inflop[\omega]{[0,1)}=[0,\infty)}, because \m{\inflop[n]{[0,1)}=[0,n)} for all $n\in\naturals$.
Yet, this $\omega$-level of iteration of the strategy function for winning regions misses out on the perfectly reasonable winning strategy ``first choose \(\pupdate{\pumod{x}{1}}; \devolve{\D{x}=1}\) and then always choose \(\pupdate{\pumod{x}{x-1}}\) until stopping at \(0\leq x<1\)''.
The existence of this winning strategy is only found at the level \(\inflop[\omega+1]{[0,1)}=\strategyfor[\alpha]{[0,\infty)}=\reals\).
Even though any particular use of the winning strategy in game play uses only some finite number of repetitions of the loop, the argument why it will always work requires $>\omega$ many iterations of $\strategyfor[\alpha]{\cdot}$, because Demon can change $x$ to an arbitrarily big value, so that $\omega$ many iterations of $\strategyfor[\alpha]{\cdot}$ are needed to conclude that Angel has a winning strategy for all positive values of $x$.
There is no smaller upper bound on the number of iterations it takes Angel to win, in particular Angel cannot promise $\omega$ as a bound on the repetition count, which is what the $\omega$-semantics would require her to do.
But strategies do converge after $\omega+1$ iterations.
According to \rref{thm:dGL-closure-lower}, similar shortcomings would apply for a semantics that cuts winning region iteration of at higher transfinite ordinals below $\omega_1^{\text{HG}}$.

The \dGL semantics is also more general, because, by \rref{thm:dGL-closure-lower}, its closure ordinal is ${\geq}\dGLordinal$, in contrast to the $\omega$-semantics, which has closure ordinal $\omega$ by construction.
The same observation shows a fundamental difference between the \dGL semantics and the advance notice semantics, which has closure ordinal ${\leq}\omega$.

\section{Proof of Higher Closure Ordinals} \label{app:dGL-closure-lower}

This section illustrates that closure ordinals are not a simple function of the syntactic structure, because minor syntactic variations lead to vastly different closure ordinals.

\begin{proof}[of \rref{thm:dGL-closure-lower}]
In this proof, proceed in stages of increasing difficulty.
That the closure ordinal is ${\geq}\omega\cdot2$ has already been shown on p.\,\pageref{thm:dGL-closure-lower}.
Now prove the bounds ${\geq}\omega^2$ and finally ${\geq}\omega^\omega$.
\renewcommand{\inflop}[2][]{\inflopstrat[#1]{\alpha}{#2}}%
In order to see that the closure ordinal is at least $\omega^2$ even for a single nesting layer of dual and loop, follow a similar argument using more variables.
Consider the family of formulas (for some $N\in\naturals$) of the form
\[
\ddiamond{\prepeat{\big(
  \underbrace{
 \pchoice{\pchoice{\pchoice
  {\pupdate{\pumod{x_N}{x_N-1}};\devolve{\D{x_{N-1}}=1}}
  {\dots\newline}}
  {\pupdate{\pumod{x_2}{x_2-1}};\devolve{\D{x_1}=1}}}
  {\pupdate{\pumod{x_1}{x_1-1}}}
  }_\alpha
  \big)}}{\landfold_{i=1}^N x_i<0}
\]
The winning regions for this \dGL formula stabilizes after $\omega\cdot N$ iterations, because $\omega$ many iterations are necessary to show that \emph{all} $x_1$ can be reduced to $(-\infty,0)$ by choosing the last action sufficiently often, whereas another $\omega$ many iterations are needed to show that $x_2$ can then be reduced to $(-\infty,0)$ by choosing the second-to-last action sufficiently often, increasing $x_1$ arbitrarily under Demon's control, which can still be won because this adversarial increase in $x_1$ can be compensated for by the first part of the winning strategy.
The vector space of variables $(x_N,\dots,x_1)$ is used in that order.
It is easy to see that
\(\inflop[\omega]{(-\infty,0)^N}=\cupfold_{n<\omega}\inflop[n]{(-\infty,0)^N}=(-\infty,0)^{N-1}\times\reals\),
because \(\inflop[n+1]{(-\infty,0)}=(-\infty,0)^{N-1}\times(-\infty,n)\) holds for all $n\in\naturals,n$ by a simple inductive argument:
\begin{align*}
  \inflop[1]{(-\infty,0)^N} &= (-\infty,0)^N\\
  \inflop[n+1]{(-\infty,0)^N} &= (-\infty,0)^N\cup\strategyfor[\alpha]{\inflop[n]{(-\infty,0)^N}}
  \\&= (-\infty,0)^N\cup\strategyfor[\alpha]{(-\infty,0)^{N-1}\times(-\infty,n-1)}
  = (-\infty,0)^{N-1}\times(-\infty,n)
\end{align*}
Inductively,
\m{\inflop[\omega\cdot(k+1)]{(-\infty,0)^N}=\cupfold_{n<\omega}\inflop[\omega\cdot k+n]{(-\infty,0)^N}=(-\infty,0)^{N-k-1}\times\reals^{k+1}}
holds,
because \m{\inflop[\omega\cdot k+n+1]{(-\infty,0)}=(-\infty,0)^{N-k-1}\times(-\infty,n)\times\reals^k} holds for all $n\in\naturals$ by a simple inductive argument:
\begin{align*}
  \inflop[\omega\cdot k+n+1]{(-\infty,0)^N} &= (-\infty,0)^N\cup\strategyfor[\alpha]{\inflop[\omega\cdot k+n]{(-\infty,0)^N}}
  \\&= (-\infty,0)^N\cup \strategyfor[\alpha]{(-\infty,0)^{N-k-1}\times(-\infty,n-1)\times\reals^k}
  \\&= (-\infty,0)^{N-k-1}\times(-\infty,n)\times\reals^k
\end{align*}
Consequently,
\m{\strategyfor[\prepeat{\alpha}]{(-\infty,0)^N} = \inflop[\omega\cdot N]{(-\infty,0)^N} \neq \inflop[\omega\cdot(N-1)+n]{(-\infty,0)^N}}, which then makes $\omega\cdot N$ the closure ordinal for $\alpha$.
Since hybrid games $\alpha$ of the above form can be considered with arbitrarily big $N\in\naturals$, the common closure ordinal has to be ${\geq}\omega\cdot N$ for all $N\in\naturals$, i.e.\ it has to be ${\geq}\omega^2$.

In order to see that the closure ordinal is at least $\omega^\omega$, follow an argument expanding on the previous case.
Consider the family of formulas (for some $N\in\naturals$) of the form
\[
\ddiamond{\prepeat{\underbrace{\big(
 \pchoice{\pchoice{\pchoice
  {\ptest{x_{N{-}1}{<}0}{;}\devolve{\D{x_{N{-}1}}{=}1}{;}\pupdate{\pumod{x_N}{x_N{-}1}}}
  {\sdots}}
  {\ptest{x_1{<}0}{;}\devolve{\D{x_1}{=}1}}{;}\pupdate{\pumod{x_2}{x_2{-}1}}}
  {\pupdate{\pumod{x_1}{x_1{-}1}}}
 \big)}_{\alpha}
  }}{\landfold_{i=1}^N x_i{<}0}
\]
The winning region for this ``clockwork $\omega$'' formula stabilizes after $\omega^N$ iterations, $\omega$ many iterations are necessary to show that \emph{all} $x_1$ can be reduced to $(-\infty,0)$ by choosing the last action sufficiently often, whereas another $\omega$ many iterations are needed to show that $x_2$ can then be reduced to $(-\infty,0)$ by choosing the second-to-last action sufficiently often in case $x_1$ has already been reduced to $(-\infty,0)$.
Every time the second-to-last action is chosen, however, Demon increases $x_1$ arbitrarily, which again takes $\omega$ many steps of the last action to understand how $x_1$ can again be reduced to $(-\infty,0)$ before the second-to-last action can be chosen again to decrease $x_2$ further.
This phenomenon that $\omega$ many actions on $x_{i-1}$ are needed before $x_i$ can be decreased by 1 holds for all $i$ recursively.
Note that in any particular game play, Demon can only increase $x_i$ by some finite amount.
But Angel does not have a finite bound on that increment, so she will first have to convince herself that she has a winning strategy that could tolerate any change in $x_i$, which takes $\omega$ many iterations of the previous argument.

The vector space of variables $(x_N,\dots,x_1)$ is used in that order.
For $b_N,\dots,b_1\in\naturals\cup\{\infty\}$, use the short hand notation
\[
b_N\dots b_2 b_1 \mdefeq (-\infty,b_N)\times\dots\times(-\infty,b_2)\times(-\infty,b_1)
\]
and write $b_i^n$ for $(-\infty,b_i)^n$ in that context.
Let $\vec{b}=(b_N,\dots,b_1)$.
Then prove that \m{\mforall{n\in\naturals}{\mforallr{j\in\naturals,j>0}{}}}
\begin{align*}
  \inflop[\omega^j(n+1)]{b_N\sdots b_j \sdots b_1} &{=} {b_N\sdots  (b_{j+1}+n) \infty^j} &&\text{if}~b_N \sdots b_j<\infty, j>0
  \\
  \inflop[\omega^j(n+1)]{b_N\sdots b_{j+1}\infty^j} &{=} {b_N\sdots (b_{j+1}+n+1) \infty^j} &&\text{if}~b_N \sdots b_{j+1}{<}\infty,b_j{=}\infty{=}\sdots b_1
  \\
  \inflop[\omega^j(n+1)]{b_N\sdots b_{k+1}\infty^{k-j}\infty^j} &{=} {b_N\sdots (b_{k+1}{+}1)1^{k{-}j{-}1}(n{+}1) \infty^j}\hspace*{-8pt} %
  &&\text{if}~b_N \sdots b_{k+1}{<}\infty,b_k{=}\infty\ignore{= b_1},k{>}j
  \\&\quad \cup \vec{b} 
\end{align*}
by induction on the lexicographical order of $j$ and $n$.
Let \textcircled{1}, \textcircled{2},  \textcircled{3} denote the if conditions on the right, respectively.
Note that, in the case \textcircled{3}, there are some subordinate cases which do not need to be tracked in this analysis, because they are strategic dead ends.
IH is short for induction hypothesis.

The base case $j=0,n=0$ is vacuous for \textcircled{1} and can be checked easily for \textcircled{2}.
\begin{align*}
  \inflop[\omega^01]{b_N\sdots b_1\infty^0} &= \inflop[1]{b_N\sdots b_1} = {b_N\sdots (b_{1}+1)} = {b_N\sdots (b_{1}+1) \infty^0}\\
  \inflop[\omega^0(n+1)]{b_N\sdots b_1\infty^0} &= \vec{b}\cup\inflop{\inflop[n]{b_N\sdots b_1}} = \vec{b}\cup\inflop{b_N\sdots (b_{1}+n)} = {b_N\sdots (b_{1}+n+1)}
\end{align*}
For \textcircled{3}, the case $j=0$ holds only after an extra offset $k$, however:
\begin{align*}
  \inflop[1]{b_N\sdots b_{k+1}\infty^k} &= \vec{b}\cup {b_N\sdots (b_{k+1}+1)0\infty^{k-1}}\\
  \inflop[n+1]{b_N\sdots b_{k+1}\infty^k} &= \inflop[n]{b_N\sdots b_{k+1}\infty^k} \cup {b_N\sdots (b_{k+1}+1)1^n0\infty^{k-n-1}} \quad\text{for}~n<k\\
  \inflop[k+n+1]{b_N\sdots b_{k+1}\infty^k} &= \inflop[k+n]{b_N\sdots b_{k+1}\infty^k} \cup {b_N\sdots (b_{k+1}+1)1^{k-1}(n+1)}
\end{align*}
Instead, prove base case $j=1,n=0$, as the extra offset $k$ has been overcome at $\omega$:
\begin{align*}
  \inflop[\omega^11]{b_N\sdots b_1} &= \cupfold_{n<\omega} \inflop[\omega^0(n+1)]{b_N\sdots b_1\infty^0}
  = \cupfold_{n<\omega} {b_N\sdots (b_1{+}n{+}1)} = {b_N\sdots b_2 \infty} &&\text{if \textcircled{1}}\\
  \inflop[\omega^11]{b_N\sdots b_2\infty} &= \cupfold_{n<\omega} \inflop[\omega^0(n+1)]{b_N\sdots b_2\infty^1}
  =  b_N\sdots (b_2{+}1)\infty &&\text{if \textcircled{2}}\\
  \inflop[\omega^11]{b_N\sdots b_{k{+}1}\infty^k} &= \cupfold_{n<\omega} \inflop[\omega^0(n{+}1)]{b_N\sdots b_{k{+}1}\infty^k} = \cupfold_{n<\omega} {b_N\sdots (b_{k{+}1}{+}1)1^{k-1}(n{+}1)} \cup \vec{b}
  \hspace*{-0.6cm}
  \\& = {b_N\sdots (b_{k+1}+1)1^{k-1}\infty} \cup \vec{b}
  &&\text{if \textcircled{3}}
\end{align*}
In case \textcircled{3}, there are some subordinate cases $\cup\vec{b}$ coming from mixed occurrences $b_N\sdots(b_{k+1}+1)^i0\infty^{k-i-1}$, but do not need to be tracked, because they are strategic dead ends.
By construction of $\alpha$, no counter can be changed without resetting all smaller variables to 0 first as indicated.

$j\curvearrowright j+1,n=0$: For the step from $j$ to $j+1$ prove the case $n=0$ as follows.
\begin{align*}
  & \inflop[\omega^{j+1}\cdot(0+1)]{b_N\sdots b_j \sdots b_1} 
  = \inflop[\omega^j\cdot\omega]{b_N\sdots b_j \sdots b_1}
  = \cupfold_{n<\omega}\inflop[\omega^j\cdot(n+1)]{b_N\sdots b_j \sdots b_1}\\
  &
  \stackrel{IH}=
  \begin{cases}
  \cupfold_{n<\omega} {b_N\sdots (b_{j+1}+n) \infty^j} &\text{if \textcircled{1}}\\
  \cupfold_{n<\omega} {b_N\sdots (b_{j+1}+n+1) \infty^j} &\text{if \textcircled{2}}\\
  \cupfold_{n<\omega} {b_N\sdots (b_{k+1}+1)1^{k-j-1}(n+1) \infty^j \cup \vec{b}} &\text{if \textcircled{3}}
  \end{cases}
  \\
  &
  \stackrel{IH}=
  \begin{cases}
  {b_N\sdots b_{j+2} \infty^{j+1}} &\text{if}~b_N,\sdots,b_j<\infty\\
  {b_N\sdots b_{j+2} \infty^{j+1}} &\text{if}~b_N,\sdots,b_{j+1}<\infty\\
  {b_N\sdots (b_{j+2}+1) \infty^{j+1}} &\text{if}~b_N,\sdots,b_{j+2}<\infty,b_{j+1}=\infty,k=j+1\\
  {b_N\sdots (b_{k+1}+1)1^{k-j-2}1\infty^{j+1}} \cup \vec{b} %
  &\text{if}~b_N,\sdots,b_{k+1}<\infty,b_k=\infty,k>j+1
  \end{cases}
\end{align*}
$n\curvearrowright n+1$:
Within a level $j$, prove the step from $n$ to $n+1$ as follows.
If $n=0$, then \(\inflop[\omega^j(n+1)]{b_N\sdots b_j \sdots b_1}=\inflop[\omega^j]{b_N\sdots b_j \sdots b_1}\) already has the property by induction hypothesis.
Otherwise $n>0$, which allows to conclude:
\begin{align*}
  & \inflop[\omega^j(n+1)]{b_N\sdots b_j \sdots b_1} 
  = \inflop[\omega^jn+\omega^j]{b_N\sdots b_j \sdots b_1} 
  \stackrel{\text{\rref{lem:inflop-inductive-homomorphic}}}{=} \inflop[\omega^j]{\inflop[\omega^jn]{b_N\sdots b_j \sdots b_1}}\\
  &
  \stackrel{IH}=
  \begin{cases}
  \inflop[\omega^j]{b_N\sdots (b_{j+1}+n-1) \infty^j} &\text{if \textcircled{1}}\\ %
  \inflop[\omega^j]{b_N\sdots (b_{j+1}+n) \infty^j} &\text{if \textcircled{2}}\\
  \inflop[\omega^j]{b_N\sdots (b_{k+1}+1)1^{k-j-1}n \infty^j \cup \vec{b}
  } 
  &\text{if \textcircled{3}}
  \end{cases}
  \\
  &
  \stackrel{IH}=
  \begin{cases}
  {b_N\sdots (b_j+n) \infty^j} &\text{if \textcircled{1}}\\
  {b_N\sdots (b_j+n+1) \infty^j} &\text{if \textcircled{2}}\\
  {b_N\sdots (b_{k+1}+1)1^{k-j-1}(n+1) \infty^j} \cup \vec{b} %
  &\text{if \textcircled{3}}
  \end{cases}
\end{align*}
Consequently,
\m{\strategyfor[\prepeat{\alpha}]{(-\infty,0)^N} = \inflop[\omega^N]{(-\infty,0)^N} =\reals^N\neq \inflop[\omega^{N-1}\cdot n]{(-\infty,0)^N}} for all $n\in\naturals$,
which makes $\omega^N$ the closure ordinal for $\alpha$.
Since hybrid games $\alpha$ of the above form can be considered with arbitrarily big $N\in\naturals$, the common closure ordinal has to be $\geq\omega^N$ for all $N\in\naturals$, i.e.\ it has to be $\geq\omega^\omega$.
\end{proof}

\begin{acks}
I thank Stephen Brookes, Frank Pfenning, James Cummings, and Erik Zawadzki for helpful discussions and I appreciate the helpful comments by the reviewers.
\end{acks}

\bibliographystyle{ACM-Reference-Format-Journals}
\bibliography{dGL}

\received{August 2014}{February 2015}{July 2015}

\end{document}